\newtheorem{thm}{Theorem}
\newtheorem{lem}{Lemma}
\newtheorem{prop}{Proposition}
\newtheorem{cor}{Corollary}
\theoremstyle{definition}
\newtheorem{defn}{Definition}
\definecolor{NYUviolet}{HTML}{57068c} 	
\definecolor{NYUlight}{HTML}{8900e1} 	
\definecolor{NYUdark}{HTML}{330662} 	
\definecolor{NYUnight}{HTML}{220337} 	
\def\BibTeX{{\rm B\kern-.05em{\sc i\kern-.025em b}\kern-.08em
    T\kern-.1667em\lower.7ex\hbox{E}\kern-.125emX}}
\newenvironment{proofachievableWm}{%
   \proof}{\endproof}
\newenvironment{proofconverseWm}{%
  \proof}{\endproof}
\newenvironment{proofconversenoiseless}{%
  \proof}{\endproof}
 \newenvironment{proofachievablenoiseless}{%
   \proof}{\endproof}
\newenvironment{proofconverseW1}{%
  \proof}{\endproof}
\newcommand{\blue}[1]{{\color{blue}#1}}
\begin{document}

\title{Database Matching Under Noisy Synchronization Errors\\\thanks{This research was presented in part at the 2021 IEEE International Symposium on Information Theory (ISIT), the 2022 Asilomar Conference on Signals, Systems, and Computers and the 2022 \& 2023 IEEE Information Theory Workshops (ITW). It has also been in part submitted for conference publication. This work is supported in part by National Science Foundation grants 1815821 and 2148293, and NYU WIRELESS Industrial Affiliates.}} 

\author{Serhat Bakirtas, Elza Erkip\\
Dept. of Electrical and Computer Engineering, New York University\\
\{serhat.bakirtas\},\{elza\}@nyu.edu
}

\maketitle

\begin{abstract}
The re-identification or de-anonymization of users from anonymized data through matching with publicly available correlated user data has raised privacy concerns, leading to the complementary measure of obfuscation in addition to anonymization. Recent research provides a fundamental understanding of the conditions under which privacy attacks, in the form of database matching, are successful in the presence of obfuscation. Motivated by synchronization errors stemming from the sampling of time-indexed databases, this paper presents a unified framework considering both obfuscation and synchronization errors and investigates the matching of databases under noisy entry repetitions. By investigating different structures for the repetition pattern, replica detection and seeded deletion detection algorithms are devised and sufficient and necessary conditions for successful matching are derived. Finally, the impacts of some variations of the underlying assumptions, such as the adversarial deletion model, seedless database matching, and zero-rate regime, on the results are discussed. Overall, our results provide insights into the privacy-preserving publication of anonymized and obfuscated time-indexed data as well as the closely related problem of the capacity of synchronization channels.
\end{abstract}
\begin{IEEEkeywords}
dataset, database, matching, de-anonymization, alignment, recovery, data, privacy, synchronization
\end{IEEEkeywords}

\section{Introduction}
\label{sec:intro}

\IEEEPARstart{W}{ith} the exponential boom in smart devices and the growing popularity of big data, companies and institutions have been gathering more and more personal data from users which is then either published or sold for research or commercial purposes. Although the published data is typically \emph{anonymized}, \emph{i.e.,} explicit identifiers of the users, such as names and dates of birth are removed, there has been a growing concern over potential privacy leakage from anonymized data, approached from legal~\cite{ohm2009broken} and corporate~\cite{bigdata} points of view. These concerns are also articulated in the respective literature through successful practical de-anonymization attacks on real data~\cite{naini2015you,datta2012provable,narayanan2008robust,sweeney1997weaving,takbiri2018matching,wondracek2010practical,su2017anonymizing,shusterman2019robust,gulmezoglu2017perfweb,bilge2009all,srivatsa2012deanonymizing,cheng2010you,kinsella2011m,kim2016inferring,de2013unique}. \emph{Obfuscation}, which refers to the deliberate addition of noise to the database entries, has been suggested as an additional measure to protect privacy~\cite{sweeney1997weaving}. While extremely valuable, this line of work does not provide a fundamental and rigorous understanding of the conditions under which anonymized and obfuscated databases are prone to privacy attacks.

In the light of the above practical privacy attacks on databases, several groups initiated rigorous analyses of the graph matching problem~\cite{erdos1960evolution,babai1980random,janson2011random,czajka2008improved,yartseva2013performance,pedarsani2013bayesian,fiori2013robust,lyzinski2014seeded,onaran2016optimal,cullina2016improved}. Correlated graph matching has applications beyond privacy, such as image processing~\cite{sanfeliu2002graph}, computer vision~\cite{galstyan2021optimal}, single-cell biological data alignment~\cite{zhu2021robust,tran2020benchmark} and DNA sequencing, which is shown to be equivalent to matching bipartite graphs~\cite{blazewicz2002dna}. Matching of correlated databases, also equivalent to bipartite graph matching, has also been investigated from information-theoretic~\cite{cullina,shirani8849392,dai2019database,bakirtas2021database,bakirtas2022seeded,noiselesslonger} and statistical~\cite{kunisky2022strong} perspectives. In \cite{cullina}, Cullina \emph{et al.} introduced \textit{cycle mutual information} as a correlation metric and derived sufficient conditions for successful matching and a converse result using perfect recovery as the error criterion. In~\cite{shirani8849392}, Shirani \emph{et al.} considered a pair of anonymized and obfuscated databases and drew analogies between database matching and channel decoding. By doing so, they derived necessary and sufficient conditions on the \emph{database growth rate} for reliable matching, in the presence of noise on the database entries. In~\cite{dai2019database}, Dai \emph{et al.} considered the matching of a pair of databases with joint Gaussian attributes with perfect recovery constraint. Similarly, in~\cite{kunisky2022strong}, Kunisky and Niles-Weed considered the same problem from the statistical perspective in different regimes of database size and under several recovery criteria. In~\cite{zeynepdetecting2022}, Kahraman and Nazer investigated the necessary and the sufficient conditions for detecting whether two Gaussian databases are correlated. More recently, motivated by the need for aligning single-cell data obtained from multiple biological sources/experiments~\cite{zhu2021robust,tran2020benchmark}, in~\cite{chen2022one} Chen \emph{et al.} investigated the matching of two noisy databases which are the noisy observations of a single underlying database under the fractional-error criterion, where the noise is assumed to be the Gaussian. They proposed a data-driven approach and analytically derived minimax lower bounds for successful matching.

\begin{figure}[t]
\iftoggle{singlecolumn}{
\centerline{\includegraphics[page=2,width=0.75\textwidth,trim={0 13.5cm 2cm 0},clip]{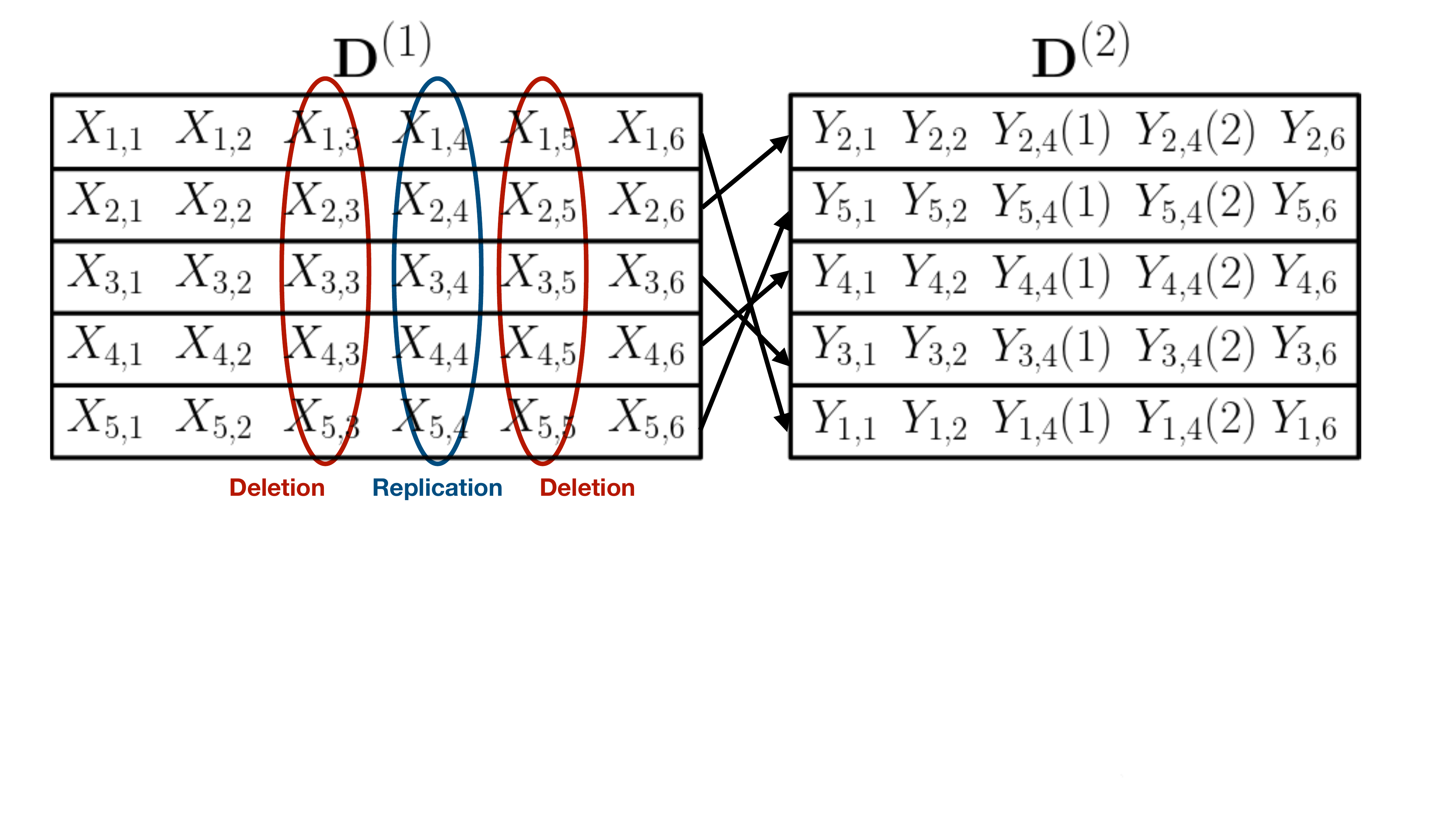}}
}{
\centerline{\includegraphics[page=2,width=0.5\textwidth,trim={0 13.5cm 2cm 0},clip]{Figures/intro.pdf}}
}
\caption{An illustrative example of database matching under identical repetition, where each row experiences the same synchronization error. The columns circled in red are deleted whereas the fourth column, which is circled in blue, is repeated twice, \emph{i.e.,} replicated. For each $(i,j)$, $Y_{i,j}$ is the noisy observation of $X_{i,j}$. Furthermore, for each $i$, $Y_{i,4}(1)$ and $Y_{i,4}(2)$ are noisy replicas of $X_{i,4}$. Our goal is to estimate the row permutation $\sigma_n$ which is in this example given as; $\sigma_n(1)=5$, $\sigma_n(2)=1$, $\sigma_n(3)=4$,
$\sigma_n(4)=3$ and $\sigma_n(5)=2$, by matching the rows of $\mathbf{X}$ and $\mathbf{Y}$. Here the $i$\textsuperscript{th} row of $\mathbf{X}$ corresponds to the $\sigma_n(i)$\textsuperscript{th} row of $\mathbf{Y}$.}
\label{fig:intro}
\end{figure}

Motivated by the synchronization errors in the sampling of time-indexed datasets, in this paper, we present a unified generalized framework of the database matching problem under noisy synchronization errors with near-exact recovery criterion. Specifically, we investigate the matching of Markov databases under arbitrary noise and synchronization errors. Our goal is to investigate necessary and sufficient conditions on the database growth rate~\cite{shirani8849392} for the successful matching of database rows. The generalized Markov database model captures correlations of the attributes (columns), where synchronization errors, in the form of random entry deletions and replications, are followed by noise. As such, this paper generalizes the aforementioned work on database matching under only noise. Our setting is illustrated in Figure~\ref{fig:intro}.

We consider two extreme regimes regarding the nature of synchronization errors, as results derived for these corner cases provide insights into the intermediate regime. To this end, first, we focus on the \emph{identical repetition} setting where the repetition pattern is constant across rows. In other words, in this setting, deletions and replications only take place columnwise. We consider a two-phase matching scheme, where we first infer the underlying repetition structure by using permutation-invariant features of columns. This is followed by the matching phase which relies on the known replica and deletion locations. We show that as long as the databases are not independent, in the first phase, replicas can be found with high probability through a series of hypothesis tests on the Hamming distances between columns. Furthermore, assuming \emph{seed} rows whose identities are known in both databases~\cite{shirani2017seeded,fishkind2019seeded} we show that if the seed size $\Lambda_n$ grows double-logarithmically with the number of rows $m_n$, where $n$ denotes the column size, deletion locations can also be extracted. In the absence of noise, seeds are not needed and column histograms can be used to detect both replicas and deletions. Once the repetition (including deletions and replications) locations are identified, in the second phase, we propose a joint typicality-based row matching scheme to derive sufficient conditions on the database growth rate for successful matching. Finally, we prove a tight converse result through a modified version of Fano's inequality, completely characterizing the matching capacity when the repetition pattern is constant across the rows. 

Next, we focus on the other extreme, namely the \emph{independent repetition} setting where the repetition pattern is independent in each row and there is no underlying repetition structure across rows. Under probabilistic side information on the deletion locations, we propose a row-matching scheme and derive an achievable database growth rate. This, together with an outer bound obtained through Fano's inequality, provides upper and lower bounds on the matching capacity in the independent repetition setting. Comparing the bounds in the two extremes, we show that the matching capacity is lower and hence matching is more difficult under the independent repetition model. Finally, based on these two extreme models, we state bounds on the matching capacity for any intermediate repetition structure.

We also discuss the adversarial repetition model, where we assume that synchronization errors, in the form of column deletions, are chosen by a constrained adversary whose goal is to hinder the matching of databases, where the constraint is in the form of a fractional column deletion budget which naturally provides a trade-off between utility and privacy. Since this adversarial model forces us to focus on the worst-case scenario and in turn, prohibits the use of typicality and Fano's inequality, we propose an exact sequence matching and perform a more careful analysis of the worst-case error, focusing on the Hamming distances between the rows (users) of the databases, as is the case in the adversarial channel literature~\cite{bassily2014causal}. Under the identical repetition model, we completely characterize the adversarial matching capacity.

In addition to the characterization of the matching capacity under various assumptions, our results provide sufficient conditions on the number and the size for column histograms to be asymptotically unique. Since histograms naturally show up frequently in information theory, probability theory, and statistics, this result could be of independent interest. In addition, our novel matching scheme in the independent repetition case can be directly converted to a decoding strategy for input-constrained noisy synchronization channels, a well-investigated model in the information theory literature~\cite{gallager1961sequential,5629489,6915855,9056064}.

\subsection{Paper Organization}
\label{subsec:organization}
The organization of this paper is as follows: Section~\ref{sec:problemformulation} contains the problem formulation and the preliminaries. In Section~\ref{sec:matchingcapacityWm}, our main results on the matching capacity under the identical repetition model are presented. Section~\ref{sec:matchingcapacityW1} contains our main results on the matching capacity under the independent repetition assumption. In Section~\ref{sec:discussion}, we discuss the underlying model assumptions and investigate how variations on these assumptions impact some of the results. Finally, in Section~\ref{sec:conclusion} the results and ongoing work are discussed.

\subsection{Notations}
\label{subsec:notations}
In this paper, we use the following notations:
\begin{itemize}
    \item $[n]$ denotes the set of integers $\{1,...,n\}$.
    \item Matrices are denoted with uppercase bold letters. For a matrix $\mathbf{X}$, $X_{i,j}$ denotes the $(i,j)$\textsuperscript{th} entry.
    \item $a^n$ denotes a row vector consisting of scalars $a_1,\dots,a_n$.
    \item Random variables are denoted by uppercase letters while their realizations are denoted by lowercase ones.
    \item The indicator of event $E$ is denoted by $\mathbbm{1}_E$.
    \item $H$ and $H_b$ denote the Shannon entropy and the binary entropy functions~\cite[Chapter 2]{cover2006elements}, respectively.
    \item $O$, $o$, $\Theta$, $\omega$ and $\Omega$ denote the standard asymptotic growth notations~\cite[Chapter 3]{cormen2022introduction}.

    \item $D_{KL}(p_X\|q_X)$ denotes the Kullback-Leibler divergence~\cite[Chapter 2.3]{cover2006elements} between the probability distributions $p_X$ and $q_X$. For scalars $p,q\in(0,1)$, $D(p\|q)$ denotes the Kullback-Leibler divergence between two Bernoulli distributions with respective parameters $p$ and $q$. More formally,
     \iftoggle{singlecolumn}{
    \begin{align}
    D(p\| q) &= (1-p) \log\frac{1-p}{1-q}+ p\log\frac{p}{q}
    \end{align} }
    {
    \begin{align}
    D(p\| q) &= (1-p) \log\frac{1-p}{1-q}+ p\log\frac{p}{q}
    \end{align} 
    }
    \item The logarithms, unless stated explicitly, are in base $2$.
\end{itemize}

\section{Problem Formulation \& Preliminaries}
\label{sec:problemformulation}

\subsection{Problem Formulation}
We use the following definitions, some of which are similar to~\cite{shirani8849392,bakirtas2021database,noiselesslonger}, to formally describe our problem. 

\begin{defn}{\textbf{(Unlabeled Markov Database)}}\label{defn:markovdb}
An ${(m_n,n,\mathbf{P})}$ \emph{unlabeled Markov database} is a randomly generated ${m_n\times n}$ matrix ${\mathbf{X}=\{X_{i,j}\in\mathfrak{X}:i\in[m_n],j\in[n]\}}$ whose rows are \emph{i.i.d.} and follow a first-order stationary Markov process defined over the alphabet ${\mathfrak{X}=\{1,\dots,|\mathfrak{X}|\}}$ with probability transition matrix $\mathbf{P}$ such that
\iftoggle{singlecolumn}{
\begin{gather}
    \mathbf{P} = \gamma \mathbf{I} + (1-\gamma) \mathbf{U}\label{eq:markovtransitionmatrix}\\
    U_{i,j} = u_j>0, \: \forall (i,j)\in \mathfrak{X}^2\\
    \sum\limits_{j\in\mathfrak{X}} u_j =1\\
    \gamma \in [0,1)
\end{gather}
}{
\begin{gather}
    \mathbf{P} = \gamma \mathbf{I} + (1-\gamma) \mathbf{U}\label{eq:markovtransitionmatrix}\\
    U_{i,j} = u_j>0, \: \forall (i,j)\in \mathfrak{X}^2\\
    \sum\limits_{j\in\mathfrak{X}} u_j =1\\
    \gamma \in [0,1)
\end{gather}
}
where $\mathbf{I}$ is the identity matrix. It is assumed that ${X_{i,1}\overset{\text{i.i.d.}}{\sim}\pi=[u_1,\dots,u_{|\mathfrak{X}|}]}$, $i=1,\dots,m_n$,  where $\pi$ is the stationary distribution associated with $\mathbf{P}$.
\end{defn}

Note that Definition~\ref{defn:markovdb} yields the following $n$-letter probability model for row generation:
\begin{align}
    \Pr(X^n=x^n)&= u_{x_1}\prod\limits_{j=2}^n \left[(1-\gamma)u_{x_j} + \gamma \mathbbm{1}_{[x_j=x_{j-1}]} \right], \hspace{2em} \forall x^n\in \mathfrak{X}^n
\end{align}

Observe that, the parameter $\gamma$ determines the correlation among the columns of $\mathbf{X}$. Specifically, $\gamma=0$ corresponds to the case where $X_{i,j}$ are \emph{i.i.d.}

In our work, we are mainly interested in two extreme cases of the repetition pattern:
\begin{itemize}
    \item Every row of $\mathbf{X}$ experiences the same repetition pattern which we call \emph{identical repetition}.
    \item Rows of $\mathbf{X}$ experience \emph{i.i.d.} repetition patterns which we call \emph{independent repetition}.
\end{itemize}
The formal definitions of these two scenarios are provided in Definitions~\ref{defn:labeleddbidenticalrepetition}-\ref{defn:labeleddbindependentrepetition} where the main difference comes from the repetition pattern $S^n$ (Definition~\ref{defn:labeleddbidenticalrepetition}) and repetition matrix (Definition~\ref{defn:labeleddbindependentrepetition}).

\begin{defn}{\textbf{(Labeled Repeated Database under Identical Repetition)}}\label{defn:labeleddbidenticalrepetition}
Let $\mathbf{X}$ be an ${(m_n,n,\mathbf{P})}$ unlabeled Markov database, $S^n$ be vector of length $n$ with $S_{j}$ being \emph{i.i.d.} entries drawn from a discrete probability distribution $p_S$ with a finite integer support ${\{0,\dots,s_{\max}\}}$, $\sigma_n$ be a uniform permutation of $[m_n]$ with $\mathbf{X}$, $S^n$ and $\sigma_n$ independently chosen. Also, let $p_{Y|X}$ be a conditional probability distribution with both $X$ and $Y$ taking values from $\mathfrak{X}$. Given $\mathbf{X}$, $S^n$ and $p_{Y|X}$, the random matrix $\mathbf{Y}$ is called the \emph{labeled repeated database under ıdentical repetition} if the $i$\textsuperscript{th} row $X^n_{i}$ of $\mathbf{X}$ and the $\sigma_n(i)$\textsuperscript{th} row $Y^{K_n}_{\sigma_n(i)}=[Y_{\sigma_n(i),1},\dots,Y_{\sigma_n(i),K_{n}}]$ of $\mathbf{Y}$ have the following relation:
\begin{align}
    \Pr(Y^{K_{n}}_{\sigma_n(i)} &= y^{K_n}|X_i^n=x^n)\notag\\
    &= \prod\limits_{j:S_j\neq 0} \Pr((Y_{\sigma_n(i),K_{j-1}+1},\dots,Y_{\sigma_n(i),K_{j}})=(y_{K_{j-1}+1},\dots,y_{K_{j}})|X_{i,j}=x_j)\label{eq:correlateddbident1}\\
    &= \prod\limits_{j:S_j\neq 0} \prod\limits_{s=1}^{S_{j}} p_{Y|X}(y_{K_{j-1}+s}|x_j)\label{eq:correlateddbident2}
\end{align}
where
\begin{align}
    K_{j}&\triangleq \sum\limits_{t=1}^j S_{t}
\end{align}
Here $S^n$ and $\sigma_n$ are called the \emph{repetition pattern} and \emph{labeling function}, respectively.

Note that $S_{j}$ indicates the times $X_{i,j}$ is repeated (including deletions and replications). When $S_{j}=0$, $X_{i,j}$ is said to be \emph{deleted} (repeated zero times) and when $S_{j}>1$, $X_{i,j}$ is said to be \emph{replicated} $S_{j}$ times (repeated $S_{j}$ times). $\delta\triangleq p_S(0)$ is called the \emph{deletion probability}.

The respective rows $X_{i_1}^n$ and $Y_{i_2}^{K_n}$ of $\mathbf{X}$ and $\mathbf{Y}$ are said to be \emph{matching rows}, if ${\sigma_n(i_1)=i_2}$.
\end{defn}

\begin{defn}{\textbf{(Labeled Repeated Database under Independent Repetition)}}\label{defn:labeleddbindependentrepetition}
Let $\mathbf{X}$ be an ${(m_n,n,\mathbf{P})}$ unlabeled Markov database, $\mathbf{S}$ be an $m_n\times n$ matrix with $S_{i,j}$ \emph{i.i.d.} from a discrete probability distribution $p_S$ with a finite integer support ${\{0,\dots,s_{\max}\}}$, $\sigma_n$ be a uniform permutation of $[m_n]$ with $\mathbf{X}$, $\mathbf{S}$ and $\sigma_n$ independently chosen. Also, let $p_{Y|X}$ be a conditional probability distribution with both $X$ and $Y$ taking values from $\mathfrak{X}$. Given $\mathbf{X}$, $\mathbf{S}$ and $p_{Y|X}$, the random matrix $\mathbf{Y}$ is called the \emph{labeled repeated database under independent repetition} if the $i$\textsuperscript{th} row $X^n_{i}$ of $\mathbf{X}$ and the $\sigma_n(i)$\textsuperscript{th} row $Y^{K_{\sigma_n(i),n}}_{\sigma_n(i)}=[Y_{\sigma_n(i),1},\dots,Y_{\sigma_n(i),K_{\sigma_n(i),n}}]$ of $\mathbf{Y}$ have the following relation:
\begin{align}
    &\Pr(Y^{K_{\sigma_n(i),n}}_{\sigma_n(i)} = y^{K_{\sigma_n(i),n}}|X_i^n=x^n) \notag \\
    &= \prod\limits_{j:S_{\sigma_n(i),j}\neq 0} \Pr((Y_{\sigma_n(i),K_{\sigma_n(i),j-1}+1},\dots,Y_{\sigma_n(i),K_{\sigma_n(i),j}})=(y_{K_{\sigma_n(i),j-1}+1},\dots,y_{K_{\sigma_n(i),j}})|X_{i,j}=x_j)\label{eq:correlateddbindep1}\\
    &= \prod\limits_{j:S_{\sigma_n(i),j}\neq 0} \prod\limits_{s=1}^{S_{\sigma_n(i),j}} p_{Y|X}(y_{K_{\sigma_n(i),j-1}+s}|x_j)\label{eq:correlateddbindep2}
\end{align}
where
\begin{align}
    K_{i,j}&\triangleq \sum\limits_{t=1}^j S_{i,t}
\end{align}
Here $\mathbf{S}$ and $\sigma_n$ are called the \emph{repetition matrix} and \emph{labeling function}, respectively.

Note that $S_{\sigma_n(i),j}$ indicates the times $X_{i,j}$ is repeated (including deletions and replications). When $S_{\sigma_n(i),j}=0$, $X_{i,j}$ is said to be \emph{deleted} (repeated zero times) and when $S_{\sigma_n(i),j}>1$, $X_{i,j}$ is said to be \emph{replicated} $S_{\sigma_n(i),j}$ times (repeated $S_{\sigma_n(i),j}$ times). $\delta\triangleq p_S(0)$ is called the \emph{deletion probability}.

The respective rows $X_{i_1}^n$ and $Y_{i_2}^{K_{i_2,n}}$ of $\mathbf{X}$ and $\mathbf{Y}$ are said to be \emph{matching rows}, if ${\sigma_n(i_1)=i_2}$.
\end{defn}

In our model, the labeled repeated database $\mathbf{Y}$ is obtained by permuting the rows of the unlabeled Markov database $\mathbf{X}$ with the uniform permutation $\sigma_n$ followed by repetition based on the repetition pattern $S^n$ (Definition~\ref{defn:labeleddbidenticalrepetition}) or repetition matrix $\mathbf{S}$ (Definition~\ref{defn:labeleddbindependentrepetition}) and introduction of noise through $p_{Y|X}$. The relationship between $\mathbf{X}$ and $\mathbf{Y}$, as described in Definitions~\ref{defn:labeleddbidenticalrepetition}-\ref{defn:labeleddbindependentrepetition}, is illustrated in Figure~\ref{fig:dmc}. As we formalize later, the goal is to recover the labeling function $\sigma_n$ based on the observations of $\mathbf{X}$ and $\mathbf{Y}$.

Equations \eqref{eq:correlateddbident1}-\eqref{eq:correlateddbident2} (resp. \eqref{eq:correlateddbindep1}-\eqref{eq:correlateddbindep2}) state that we can treat $Y_{\sigma_n(i),j}$ as the output of the discrete memoryless channel (DMC) $p_{Y|X}$ with input sequence consisting of $S_{j}$ (resp. $S_{\sigma_n(i),j}$) copies of $X_{i,j}$ concatenated together. We stress that $p_{Y|X}$ is a general model, capturing any distortion and noise on the database entries, though we refer to this as \say{noise} in this paper.

We will observe that these two models pose different challenges to matching and in turn necessitate different solutions with different implications.

In most of the paper, we assume a random repetition pattern as in Definitions~\ref{defn:labeleddbidenticalrepetition}-\ref{defn:labeleddbindependentrepetition}. In Section~\ref{subsec:adversarialrepetition}, we will discuss the effects of an adversarial worst-case repetition pattern. Note that in this paper, we assume that $p_{X,Y}$ and $p_S$ are available during the matching. For a study of distribution-agnostic database matching, see~\cite{bakirtas2023distribution}.

As discussed in Section~\ref{sec:matchingcapacityWm}, inferring the repetition pattern, particularly deletions, is a difficult task. Therefore, for the identical repetition pattern, we assume the availability of \emph{seeds} to help with the inference of the underlying repetition pattern, similar to database matching~\cite{bakirtas2021database} and graph matching \cite{shirani2017seeded, fishkind2019seeded} settings.

\begin{defn}{\textbf{(Seeds)}}
\label{defn:seeds}
A \emph{seed} is a pair of matching rows whose labels and entries are known universally. A \emph{batch of $\Lambda_n$ seeds} $(\mathbf{G}^{(1)},\mathbf{G}^{(2)})$ is a batch of $\Lambda_n$ correctly-matched row pairs. Here $\mathbf{G}^{(1)}\in\mathfrak{X}^{\Lambda_n\times n}$ has the same row generation process as $\mathbf{X}$, ($\mathbf{G}^{(1)}$,$\mathbf{G}^{(2)}$) have the same relation as $(\mathbf{X}$,$\mathbf{Y})$, as described in Definition~\ref{defn:labeleddbidenticalrepetition} with the same noise distribution $p_{Y|X}$ and repetition pattern $S^n$. $\Lambda_n$ is called the \emph{seed size}.
\end{defn}

Note that in Definition~\ref{defn:seeds}, for notational convenience, the seeds are assumed to be additional to the databases.

Throughout Section~\ref{sec:matchingcapacityWm}, we assume a double logarithmic seed size $\Lambda_n=\Omega(\log\log m_n)$. We will discuss the effects of not having seeds in Section~\ref{subsec:seedlessWm}.

In the independent repetition setting, the seeds offer no additional information, as the repetition pattern is independent in each row. Instead, we assume that the locations of some deleted entries are revealed. This is formalized in the following definition:

\begin{defn}{\textbf{(Partial Deletion Location Information)}} \label{defn:partialdellocinfo}
For a labeled repeated database under independent repetition (Definition~\ref{defn:labeleddbindependentrepetition}), the \emph{partial deletion location information} $\mathbf{A}$ is an ${m_n\times n}$ random matrix, with the following conditional distribution on repetition matrix $\mathbf{S}$:
\begin{align}
    \Pr(A_{i,j}= 1|\mathbf{S}) &= \alpha \mathbbm{1}_{[S_{i,j}=0]}
\end{align}
where ${A_{i,j}=1}$ corresponds to $X_{\sigma_n^{-1}(i),j}$ being revealed as deleted and ${A_{i,j}=0}$ corresponds to either $X_{\sigma_n^{-1}(i),j}$ not being deleted or not being revealed after deletion. The parameter ${\alpha\in[0,1]}$ is called the \emph{deletion detection probability}.
\end{defn}
Definition~\ref{defn:partialdellocinfo} states that the location of each deleted entry is revealed with probability $\alpha$. Since the entries of $\mathbf{S}$ are i.i.d. and $\mathbf{S}$ and $\mathbf{X}$ are independent, each deleted column is revealed independently of the other columns of $\mathbf{S}$ and $\mathbf{X}$. Furthermore, since $S_{i,j}$ are drawn \emph{i.i.d.}, so are $A_{i,j}$.
\begin{figure}[t]
\iftoggle{singlecolumn}{
\centerline{\includegraphics[width=0.7\textwidth]{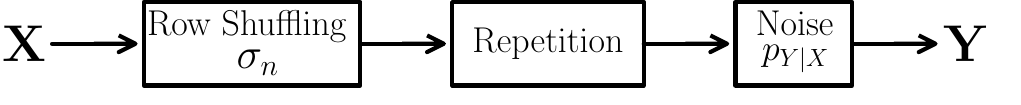}}
}{
\centerline{\includegraphics[width=0.5\textwidth]{Figures/DMCdiagram2.pdf}}
}
\caption{Relation between the unlabeled database $\mathbf{X}$ and the correlated repeated database $\mathbf{Y}$. Repetition is represented by the repetition pattern $S^n$ (Definition~\ref{defn:labeleddbidenticalrepetition}) or the repetition matrix $\mathbf{S}$ (Definition~\ref{defn:labeleddbindependentrepetition}).}
\label{fig:dmc}
\end{figure}

\begin{sloppypar}
\begin{defn}{\textbf{(Successful Matching Scheme)}}\label{defn:matchingschemeidentical}
In the identical (resp. independent) repetition setting, a \emph{matching scheme} is a sequence of mappings ${\phi_n: (\mathbf{X},\mathbf{Y},\mathbf{G}^{(1)},\mathbf{G}^{(2)})\mapsto \hat{\sigma}_n }$ (resp. ${\phi_n: (\mathbf{X},\mathbf{Y},\mathbf{A})\mapsto \hat{\sigma}_n }$) where $\mathbf{X}$ is the unlabeled Markov database, $\mathbf{Y}$ is the labeled repeated database, $(\mathbf{G}^{(1)},\mathbf{G}^{(2)})$ are seeds (resp. $\mathbf{A}$ is the partial deletion location information) and $\hat{\sigma}_n$ is the estimate of the correct labeling function $\sigma_n$. The scheme $\phi_n$ is \emph{successful} if 
\iftoggle{singlecolumn}{
\begin{align}
\Pr\left(\hat{\sigma}_n(J)\neq\sigma_n(J)\right)&\to 0  \text{ as }n\to\infty \label{eq:proberroridentical}
\end{align}}
{
\begin{align}
\Pr\left(\hat{\sigma}_n(J)\neq\sigma_n(J)\right)&\to 0  \text{ as }n\to\infty \label{eq:proberroridentical}
\end{align}
}
where the index $J$ is drawn uniformly from $[m_n]$.
\end{defn}
\end{sloppypar}

Observe that the performance criterion considered in Definition~\ref{defn:matchingschemeidentical} allows a sublinear fraction of the rows to be mismatched. This near-perfect performance criterion allows us to utilize communication and information-theoretic tools and work with arbitrary distributions whereas as far as we are aware the prior work considering the perfect recovery criterion mainly focuses on one specific distribution. This success criterion is also known as \emph{near-perfect} or \emph{almost-perfect recovery}~\cite{kunisky2022strong}. Other success definitions include \emph{perfect recovery}~\cite{cullina,dai2019database,kunisky2022strong}, where \emph{all} rows have to be perfectly aligned, and \emph{weak-recovery} or \emph{linear-error}~\cite{kunisky2022strong} where a constant fraction of the rows is allowed to be mismatched. For an extensive comparison of the Gaussian database alignment results under these different performance criteria, please see~\cite{kunisky2022strong}.

We stress that in database matching, the relationship between the row size $m_n$, the column size $n$, and the database distribution parameters are of interest~\cite{kunisky2022strong,zeynepdetecting2022,tamir2022joint}. Note that for fixed column size $n$, as the row size $m_n$ increases, matching becomes harder. This is because for a given column size $n$, as the row size $m_n$ increases, so does the probability of mismatch as a result of having a larger candidate row set. Furthermore, as stated in~\cite[Theorem 1.2]{kunisky2022strong}, for distributions with parameters constant in $n$ and $m_n$, the regime of interest is the logarithmic regime where $n\sim \log m_n$. Thus, we utilize the \emph{database growth rate} introduced in~\cite{shirani8849392} to characterize the relationship between the row size $m_n$ and the column size $n$.  

\begin{defn}\label{defn:dbgrowthrate}{\textbf{(Database Growth Rate)}}
The \emph{database growth rate} $R$ of an $(m_n,n,\mathbf{P})$ unlabeled Markov database is defined as 
\iftoggle{singlecolumn}{
\begin{align}
    R&=\lim\limits_{n\to\infty} \frac{1}{n}\log m_n
\end{align}
}{
\begin{align}
    R&=\lim\limits_{n\to\infty} \frac{1}{n}\log m_n
\end{align}
}
\end{defn}

In Sections~\ref{sec:matchingcapacityWm} and \ref{sec:matchingcapacityW1}, we assume that the database growth rate $R$ is positive and $m_n = 2^{nR}$ for notational simplicity. We will discuss the zero-rate regime $R=0$ in Section~\ref{subsec:zerorateWm}.

\begin{defn}{\textbf{(Achievable Database Growth Rate)}}\label{defn:achievable}
Consider a sequence of ${(m_n,n,\mathbf{P})}$ unlabeled Markov databases, a repetition probability distribution $p_S$, a noise distribution $p_{Y|X}$ and the resulting sequence of labeled repeated databases under identical (resp. independent) repetition. For a seed size $\Lambda_n$ (resp. a deletion detection probability $\alpha$), a database growth rate $R$ is said to be \emph{achievable} if there exists a successful matching scheme when the unlabeled database has a growth rate $R$.
\end{defn}

\begin{defn}{\textbf{(Matching Capacity)}}\label{defn:matchingcapacity}
Under identical (resp. independent) repetition, the \emph{matching capacity} $C$ is the supremum of the set of all achievable rates corresponding to a probability transition matrix $\mathbf{P}$, repetition probability distribution $p_S$, noise distribution $p_{Y|X}$, and seed size $\Lambda_n$ (resp. a deletion detection probability $\alpha$).
\end{defn}

In this paper, our goal is to characterize the matching capacity under the two extreme repetition structures, namely identical repetition and independent repetition, respectively, by providing database matching schemes as well as upper bounds on all achievable database growth rates.

\subsection{Preliminaries}
For the sake of completeness, we present below some of the classical information-theoretic definitions and results, most of which are borrowed from~\cite{shirani8849392,cover2006elements}, that will be used throughout this paper.

\begin{defn}{\textbf{(Entropy Rate)}} For the discrete random process $\mathcal{X}$ characterized by $p_{X^n}$, with $n\in\mathbb{N}$, the entropy rate is defined as:
\begin{align}
    H(\mathcal{X}) &\triangleq \lim\limits_{n\to\infty} \mathbb{E}[-\log p_{X_{n+1}|X^n}(X_{n+1}|X^n)].
\end{align}
when the limit exists.

\end{defn}

\begin{defn}{\textbf{(Typicality)}} The $\epsilon$-typical set associated with the discrete random process $\mathcal{X}$ is defined as
\begin{align}
    A_\epsilon^{(n)}(X) &\triangleq \left\{x^n: \left|-\frac{1}{n}\log p_{X^n}(x^n)-H(\mathcal{X})\right|\le \epsilon\right\}
\end{align}
 where $H(\mathcal{X})$ is the entropy rate of $\mathcal{X}$.
\end{defn}

\begin{defn}{\textbf{(Joint Typicality)}} The $\epsilon$-typical set associated with the discrete random processes $(\mathcal{X},\mathcal{Y})$ is defined as
\begin{align}
    A_\epsilon^{(n)}(X,Y) &\triangleq \left\{(x^n,y^n): \left|-\frac{1}{n}\log p_{X^n,Y^n}(x^n,y^n)-H(\mathcal{X},\mathcal{Y})\right|\le \epsilon\right\}
\end{align}
where $H(\mathcal{X},\mathcal{Y})$ is the entropy rate of $(\mathcal{X},\mathcal{Y})$.
\end{defn}

\begin{lem}{\textbf{(Generalized AEP~\cite[Theorem 1]{barron1985strong})}}\label{lem:AEP}
    For the discrete stationary random process $\mathcal{X}$ characterized by $p_{X^n}$, with $n\in\mathbb{N}$, we have
    \begin{align}
        -\frac{1}{n}\log p_{X^n}(x^n)&\overset{a.s.}{\to}H(\mathcal{X}).
    \end{align}
\end{lem}

Along with standard information-theoretical arguments, Lemma~\ref{lem:AEP} leads to the following:
\begin{prop}{\textbf{(Typicality)}}
    For a $\epsilon$-typical sequence $x^n\in A_{\epsilon}^{(n)}(X)$ we have
    \begin{align}
        2^{-n(H(\mathcal{X})+\epsilon)}&\le p_{X^n}(x^n) \le 2^{-n(H(\mathcal{X})-\epsilon)}
    \end{align}
    Furthermore, 
    \begin{align}
        2^{n(H(\mathcal{X})-\epsilon)}&\le |A_{\epsilon}^{(n)}(X)| \le 2^{n(H(\mathcal{X})+\epsilon)}
    \end{align}
    for large $n$.
\end{prop}

\begin{prop}{\textbf{(Joint Typicality)}}
    For a $\epsilon$-typical sequence pair $(x^n,y^n)\in A_{\epsilon}^{(n)}(X,Y)$ we have
    \begin{align}
        2^{-n(H(\mathcal{X},\mathcal{Y})+\epsilon)}&\le p_{X^n,Y^n}(x^n) \le 2^{-n(H(\mathcal{X},\mathcal{Y})-\epsilon)}
    \end{align}
    Furthermore, 
    \begin{align}
        2^{n(H(\mathcal{X},\mathcal{Y})-\epsilon)}&\le |A_{\epsilon}^{(n)}(X,Y)| \le 2^{n(H(\mathcal{X},\mathcal{Y})+\epsilon)}
    \end{align}
    for large $n$.
\end{prop}

\begin{prop}{\textbf{(Joint AEP)}}\label{prop:jointAEP}
    Consider a correlated pair of stochastic processes $(\mathcal{X},\mathcal{Y})$ characterized by $p_{X^n,Y^n}$, with $n\in\mathbb{N}$. Let $\Tilde{X}^n$ and $\Tilde{Y}^n$ be generated according to the marginal distributions $p_{X^n}$ and $p_{Y}^n$, independently. Then, the following holds:
    \begin{align}
        \Pr((\Tilde{X}^n,\Tilde{Y}^n)\in A_\epsilon^{(n)}(X,Y))&\le  2^{-n(I(X;Y)-3\epsilon)}
    \end{align}
    where $I(X;Y)\triangleq H(\mathcal{X})+H(\mathcal{Y})-H(\mathcal{X},\mathcal{Y})$ is the mutual information rate. Furthermore, 
    \begin{align}
         \Pr((\Tilde{X}^n,\Tilde{Y}^n)\in A_\epsilon^{(n)}(X,Y))&\ge  (1-\epsilon)2^{-n(I(X;Y)+3\epsilon)}
    \end{align}
    for large $n$.
\end{prop}

\section{Matching Capacity For Identical Repetition}
\label{sec:matchingcapacityWm}
In this section, we present the matching capacity $C$ for an identical repetition pattern with seed size $\Lambda_n=\Omega(\log\log m_n)$. We will show that when $\Lambda_n=\Omega(\log\log m_n)$, the repetition pattern, including the deletion locations, can be inferred.

We state the main result of this section in Theorem~\ref{thm:mainresultWm} and prove its achievability by proposing a three-step approach: \emph{i)} noisy replica detection and \emph{ii)} deletion detection using seeds, followed by \emph{iii)} row matching. Then, we prove the converse part. Finally, we focus on the noiseless setting as a special case where we prove that we can devise a new detection algorithm specific to the noiseless model which renders the seeds obsolete.

\begin{thm}{\textbf{(Matching Capacity for Identical Repetition})}\label{thm:mainresultWm}
Consider a probability transition matrix $\mathbf{P}$, a column repetition distribution $p_S$ with an identical repetition pattern, and a noise distribution $p_{Y|X}$. Then, for a seed size ${\Lambda_n=\Omega(\log\log m_n)}$, the matching capacity is
\iftoggle{singlecolumn}{
\begin{align}
    C &= \lim\limits_{n\to\infty} \frac{I(X^n;Y^{K_n},S^n)}{n}\label{eq:matchingcap}
\end{align}
}{
\begin{align}
    C &= \lim\limits_{n\to\infty} \frac{I(X^n;Y^{K_n},S^n)}{n}\label{eq:matchingcap}
\end{align}
}
where $X^n$ is a Markov chain with probability transition matrix $\mathbf{P}$ and stationary distribution $\pi$, $S_i\overset{\text{iid}}{\sim} p_S$ and 
\begin{align}
    \Pr(Y^{K_n} = y^{K_n}|X^n=x^n)
    &= \prod\limits_{j:S_j\neq 0} \Pr((Y_{K_{j-1}+1},\dots,Y_{K_j})=(y_{K_{j-1}+1},\dots,y_{K_j})|X_{j}=x_j)\\
    &= \prod\limits_{j:S_j\neq 0} \prod\limits_{s=1}^{S_j} p_{Y|X}(y_{K_{j-1}+s}|x_j)
\end{align}
where $K_{j}\triangleq \sum\limits_{t=1}^j S_{t}$.
\end{thm}
Because of the independence of $X^n$ and $S^n$, \eqref{eq:matchingcap} can also be represented as 
\iftoggle{singlecolumn}{
\begin{align}
    C &= \lim\limits_{n\to\infty} \frac{I(X^n;Y^{K_n}|S^n)}{n}.
\end{align}
}{
\begin{align}
    C &= \lim\limits_{n\to\infty} \frac{I(X^n;Y^{K_n}|S^n)}{n}.
\end{align}
}
Hence, Theorem~\ref{thm:mainresultWm} states that although the repetition pattern $S^n$ is not known apriori, for a seed size $\Lambda_n=\Omega(\log\log m_n)$, we can achieve a database growth rate as if we knew $S^n$. Since the utility of seeds increases with the seed size $\Lambda_n$, we will focus on $\Lambda_n=\Theta(\log\log m_n)$, which we show is sufficient to achieve the matching capacity.

    Even though the specific Markov row generation process, assumed in Definition~\ref{defn:markovdb}, does not show up in~\eqref{eq:matchingcap}, it plays a significant role in the estimation of the repetition pattern $S^n$, as can be seen in Appendices~\ref{proof:noisyreplicadetection}-\ref{proof:histogram}.

\begin{cor}{\textbf{(Matching Capacity for Identical Repetition with I.I.D. Database Entries)}}
    When $\gamma=0$, resulting in an \emph{i.i.d.} database distribution $p_X(x)=u_x$, $\forall x\in\mathfrak{X}$, the matching capacity is 
    \begin{align}
        C &= I(X;Y^S|S)
    \end{align}
    where $S\sim p_S$ and ${Y^S=Y_1,\dots,Y_S}$ such that
\begin{align}
    \Pr(Y^S=y_1,\dots,y_S|X=x)&=\begin{cases}
        \prod\limits_{i=1}^S p_{Y|X}(y_i|x),&\text{if }s>0\\
        \mathbbm{1}_{[y^S = E]},&\text{if }s=0
    \end{cases}
\end{align}
and $E$ denotes the empty string.
\end{cor}

The rest of this section is on the proof of Theorem~\ref{thm:mainresultWm}. In Section~\ref{subsec:replicadetection}, we discuss our noisy replica detection algorithm which does not utilize the seeds and prove its asymptotic performance. In Section~\ref{subsec:seededdeletiondetection}, we introduce a deletion detection algorithm that uses seeds and derive a seed size sufficient for an asymptotic performance guarantee. Then, in Section~\ref{subsec:matchingschemeWm}, we combine these two algorithms and prove the achievability of Theorem~\ref{thm:mainresultWm} by proposing a typicality-based matching scheme for rows, which is performed once replicas and deletions are detected. In Section~\ref{subsec:converseWm}, we prove the converse part of Theorem~\ref{thm:mainresultWm}. Finally, in Section~\ref{subsec:noiselessWm}, we focus on the special case of no noise on the repeated entries and provide a single repetition (replica and deletion) detection algorithm that does not require any seeds.

Note that when the two databases are independent, Theorem~\ref{thm:mainresultWm} states that the matching capacity becomes zero, hence our results trivially hold. As a result, throughout this section, we assume that the two databases are not independent. 

\subsection{Noisy Replica Detection}\label{subsec:replicadetection}
We propose to detect the replicas by extracting permutation-invariant features of the columns of $\mathbf{Y}$. Our algorithm only considers the columns of $\mathbf{Y}$ and as such, can only detect replicas, not deletions. Note that our replica detection algorithm does not require any seeds unlike seeded deletion detection discussed in Section~\ref{subsec:seededdeletiondetection}.

Our proposed replica detection algorithm (Algorithm~\ref{alg:noisyreplicadetection}) adopts the \emph{Hamming distance between consecutive columns} of $\mathbf{Y}$ as a permutation-invariant feature of the columns. The permutation-invariance allows us to perform replica detection on $\mathbf{Y}$ with no prior information on $\sigma_n$.

Let $K_n$ denote the number of columns of $\mathbf{Y}$, $C^{m_n}_j$ denote the $j$\textsuperscript{th} column of $\mathbf{Y}$, $j=1,\dots,K_n$. The replica detection algorithm works as follows: We first compute the Hamming distances $H_j$ between consecutive columns $C^{m_n}_j$ and $C^{m_n}_{j+1}$, for $j\in[K_n-1]$. More formally,
\begin{align}
    H_j & \triangleq \sum\limits_{t=1}^{m_n} \mathbbm{1}_{[Y_{t,j+1}\neq Y_{t,j}]},\hspace{2em} \forall j\in[K_n-1]\label{eq:RHD}
    \end{align}
For some average Hamming distance threshold $\tau\in(0,1)$ chosen based on $\mathbf{P}$ and $p_{Y|X}$ (See Appendix~\ref{proof:noisyreplicadetection}), the algorithm decides that $C^{m_n}_{j}$ and $C^{m_n}_{j+1}$ are replicas only if $H_j<m_n \tau$, and correspond to distinct columns of $\mathbf{X}$ otherwise. In the following lemma, we show that Algorithm~\ref{alg:noisyreplicadetection} can infer the replicas with high probability. Observe that the runtime of Algorithm~\ref{alg:noisyreplicadetection} is $O(m_n n)$, the computational bottleneck being the computation of $\{H_j\}_{j=1}^{K_n-1}$.

\begin{algorithm}[t]
\caption{Noisy Replica Detection Algorithm}\label{alg:noisyreplicadetection}
\Input{$(\mathbf{Y},\mathbf{P},p_{Y|X})$}
\Output{isReplica}
$H\gets $ RunningHammingDist($\mathbf{Y}$)\Comment*[r]{Eq.~\eqref{eq:RHD}}
$\tau \gets $ThresholdSelection$(\mathbf{P},p_{Y|X})$\Comment*[r]{Threshold selection. See Appendix~\ref{proof:noisyreplicadetection}.}
isReplica $\gets \varnothing$\;

\For{$j = 1$ \KwTo \textup{columnSize(}$\mathbf{Y}$\textup{)}$-1$}{
  \eIf{$H[j]\le \tau$ $*$ \textup{rowSize(}$\mathbf{Y}$\textup{)}}{
    isReplica$[j] \gets$ TRUE\;
  }{
      isReplica$[j] \gets $ FALSE\;
    }
  
}
\end{algorithm}

\begin{lem}{\textbf{(Noisy Replica Detection)}}\label{lem:noisyreplicadetection}
Let $F_j$ denote the event that the Hamming distance-based algorithm described above fails to infer the correct replica relationship between the columns $C^{m_n}_{j}$ and $C^{m_n}_{j+1}$ of $\mathbf{Y}$, $j=1,\dots,K_n-1$. The total probability of replica detection error of Algorithm~\ref{alg:noisyreplicadetection} diminishes as $n\to\infty$, that is
\iftoggle{singlecolumn}{
\begin{align}
    \Pr(\bigcup\limits_{j=1}^{K_n-1} F_j)&\to 0\text{ as }n\to\infty. \label{eq:replicadetection}
\end{align}
}{
\begin{align}
    \Pr(\bigcup\limits_{j=1}^{K_n-1} E_j)&\to 0\text{ as }n\to\infty. \label{eq:replicadetection}
\end{align}
}
\end{lem}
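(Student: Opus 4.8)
The plan is to analyze the Hamming distance between two consecutive columns $C^{m_n}_j$ and $C^{m_n}_{j+1}$ of $\mathbf{D}^{(2)}$ under the two possible hypotheses and show that they separate around the threshold $m_n\tau$. Because the repetition pattern is identical across all $m_n$ rows, any two consecutive columns of $\mathbf{D}^{(2)}$ are either noisy replicas of the \emph{same} column of $\mathbf{D}^{(1)}$, or noisy observations of two \emph{distinct} (and, by the Markov structure, adjacent) columns of $\mathbf{D}^{(1)}$. In the first case, for each row $i$ the two entries are generated by passing the same symbol $X_{i,\cdot}$ independently through $p_{Y|X}$ twice, so the per-row disagreement probability is some constant $p_{\text{rep}}$ determined purely by $p_{Y|X}$ (and the stationary distribution $\pi$). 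In the second case, the two entries come from distinct source symbols that are correlated through $\mathbf{P}=\gamma\mathbf{I}+(1-\gamma)\mathbf{U}$, each then passed through the noise channel, giving a different per-row disagreement probability $p_{\text{dist}}$. The first thing I would do is compute these two constants explicitly and verify the strict inequality $p_{\text{rep}}<p_{\text{dist}}$, which is exactly where the hypothesis $\gamma\neq 0$ (databases not independent) is needed; one then picks $\tau\in(p_{\text{rep}},p_{\text{dist}})$.

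Next I would note that, conditioned on which hypothesis holds and on the underlying source column(s), the $m_n$ per-row disagreement indicators are i.i.d. Bernoulli (rows of $\mathbf{D}^{(1)}$ are i.i.d., noise is independent across entries, and the repetition pattern is fixed across rows), so $d_H(C^{m_n}_j,C^{m_n}_{j+1})$ is a sum of $m_n$ i.i.d. Bernoulli variables with mean either $p_{\text{rep}}$ or $p_{\text{dist}}$. A single-column error $E_j$ occurs only if this empirical average crosses to the wrong side of $\tau$, i.e. exceeds $\tau$ under the replica hypothesis or falls below $\tau$ under the distinct hypothesis. A Chernoff–Hoeffding bound then gives
\begin{align}
\Pr(E_j) &\le 2^{-m_n\, c}
\end{align}
for a positive constant $c=\min\{D(\tau\|p_{\text{rep}}),\,D(\tau\|p_{\text{dist}})\}>0$ depending only on $\tau,p_{\text{rep}},p_{\text{dist}}$, which explains the appearance of the Bernoulli KL-divergence $D(p\|q)$ in the notation.

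Finally I would close via a union bound over the at most $K_n-1$ consecutive pairs. Since $K_n=\sum_{j=1}^n S_j\le s_{\max} n$ is at most linear in $n$ while $m_n\sim 2^{nR}$ is exponential in $n$, we get
\begin{align}
\Pr\!\left(\bigcup_{j=1}^{K_n-1} E_j\right) &\le (K_n-1)\,2^{-m_n c}\;\le\; s_{\max}\, n\, 2^{-c\,2^{nR}}\;\to\;0,
\end{align}
which is the claim. The main obstacle, and the only genuinely delicate point, is the first step: carefully computing $p_{\text{rep}}$ and $p_{\text{dist}}$ from $\mathbf{P}$ and $p_{Y|X}$ and proving the strict separation $p_{\text{rep}}<p_{\text{dist}}$ in full generality over the alphabet $\mathfrak{X}$ and arbitrary noise channel, since the distinct-column disagreement probability must be averaged over the joint law of adjacent Markov symbols and one must rule out degenerate channels; the concentration and union-bound steps are then routine given that separation. (I would relegate the explicit threshold computation to the referenced Appendix~\ref{proof:noisyreplicadetection}.)
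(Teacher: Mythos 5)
Your skeleton is exactly the paper's: model $d_H(C^{m_n}_j,C^{m_n}_{j+1})$ as Binomial$(m_n,p_1)$ under the replica hypothesis and Binomial$(m_n,p_0)$ under the distinct hypothesis, establish $p_1<p_0$, threshold at $\tau\in(p_1,p_0)$, apply Chernoff to each pair, and union-bound over the $K_n-1=O(n)$ pairs against $m_n\sim 2^{nR}$ (the paper notes $m_n=\omega(\log n)$ already suffices). The concentration and union-bound steps are fine as you wrote them.

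The one genuine problem is where you locate the separation $p_{\text{rep}}<p_{\text{dist}}$. You attribute it to ``$\gamma\neq 0$ (databases not independent),'' but these are two different conditions and neither part of your parenthetical is right. The parameter $\gamma$ controls correlation \emph{among the columns of} $\mathbf{D}^{(1)}$, not between the two databases; the paper allows $\gamma=0$ throughout (Definition~\ref{defn:markovdb} has $\gamma\in[0,1)$), and the separation holds for every such $\gamma$. What actually drives $p_1<p_0$ is non-degeneracy of the observation channel: the paper writes $p_0=(1-\gamma)p_0'+\gamma p_1'$ and $p_1=p_1'$, reducing to the i.i.d.-column quantities $p_0',p_1'$, and then shows $p_0'-p_1'=\sum_y\psi(y)\ge 0$ with equality if and only if $p_{Y|X}(y|x)=p_Y(y)$ for all $(x,y)$ --- i.e., if and only if $\mathbf{D}^{(1)}$ and $\mathbf{D}^{(2)}$ are independent. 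A sanity check that your attribution would mislead you: with a completely randomizing channel $p_{Y|X}=p_Y$ one gets $p_0=p_1$ for \emph{every} $\gamma$, while with a noiseless channel and $\gamma=0$ one gets $p_1=0<p_0=1-\sum_i u_i^2$. Since you yourself flag the separation as the only delicate step, proving it from the wrong hypothesis is the gap to fix; the rest of your argument goes through as stated once the separation is established from $p_{Y|X}\neq p_Y$.
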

\begin{proof}
See Appendix~\ref{proof:noisyreplicadetection}.
\end{proof}

\subsection{Deletion Detection Using Seeds}\label{subsec:seededdeletiondetection}
The replica detection algorithm discussed in Section~\ref{subsec:replicadetection} only uses $\mathbf{Y}$ and infers only the replicas, not the deletions. We next propose a deletion detection algorithm that uses seeds.

Let $\smash{(\mathbf{G}^{(1)},\mathbf{G}^{(2)})}$ be a batch of $\smash{\Lambda_n=\Theta(\log\log m_n)}$ seeds with the identical repetition pattern $S^n$ as $(\mathbf{X},\mathbf{Y})$. Our deletion detection algorithm (Algorithm~\ref{alg:seededdeletiondetection}) works as follows: After finding the replicas as in Section~\ref{subsec:replicadetection}, we discard all extra copies, keeping only the original entry in a replica run with $S_j>1$ from $\mathbf{G}^{(2)}$, to obtain $\smash{\tilde{\mathbf{G}}^{(2)}}$, whose column size is denoted by $\hat{K}_n$. At this step, we only have deletions. 

Next, for each index pair $(i,j)\in[n]\times[\hat{K}_n]$, we compute the Hamming distance $L_{i,j}$ between the $i$\textsuperscript{th} column $\smash{G_i^{(1)}}$ of $\smash{\mathbf{G}^{(1)}}$ and the $j$\textsuperscript{th} column $\smash{G_j^{(2)}}$ of $\smash{\tilde{\mathbf{G}}^{(2)}}$. More formally, we compute
\iftoggle{singlecolumn}{
\begin{align}
    L_{i,j} &\triangleq \sum_{t=1}^{\Lambda_n} \mathbbm{1}_{\left[{G}^{(1)}_{t,i}\neq \tilde{{G}}^{(2)}_{t,j}\right]}.
\end{align}
}{
\begin{align}
    L_{i,j} &\triangleq \sum_{t=1}^{\Lambda_n} \mathbbm{1}_{\left[{G}^{(1)}_{t,i}\neq \tilde{{G}}^{(2)}_{t,j}\right]}.
\end{align}
}
Then, for each index $i\in[n]$, the algorithm decides $\smash{G_i^{(1)}}$ is retained (not deleted) only if there exists a column $\smash{G_j^{(2)}}$ in $\smash{\tilde{\mathbf{G}}^{(2)}}$ with $\smash{L_{i,j}\le \Lambda_n \bar{\tau}}$, for some average Hamming distance threshold $\bar{\tau}\in(0,1)$ chosen based on $\mathbf{P}$ and $p_{Y|X}$ (See Appendix~\ref{proof:seededdeletiondetection}). In this case, we assign $\hat{I}_i=0$, where $\hat{I}_i$ is the indicator of $\smash{G^{(1)}_i}$ being inferred as deleted. Otherwise, the algorithm decides $\smash{G_i^{(1)}}$ is deleted, assigning $\hat{I}_i= 1$. At the end of this procedure, the algorithm outputs an estimate $\hat{I}^n=(\hat{I}_1,\dots,\hat{I}_n)$ of the true deletion pattern $I^n_{\text{del}}=(I_1,\dots,I_n)$. Here, for each $i\in[n]$ we have 

\iftoggle{singlecolumn}{
\begin{align}
    I_i&\triangleq \mathbbm{1}_{[S_i=0]}\\
    \hat{I}_i &\triangleq \mathbbm{1}_{\left[\exists j\in [\hat{K}_n]:\: L_{i,j}\le \Lambda_n \bar{\tau}\right]}
\end{align}
}{
\begin{align}
    I_i&\triangleq \mathbbm{1}_{[S_i=0]}\\
    \hat{I}_i &\triangleq \mathbbm{1}_{\left[\exists j\in [\hat{K}_n]:\: L_{i,j}\le \Lambda_n \bar{\tau}\right]}
\end{align}
}

Note that such a Hamming distance-based strategy depends on pairs of matching entries in a pair of seed rows in $\smash{\mathbf{G}^{(1)}}$ and $\smash{\tilde{\mathbf{G}}^{(2)}}$ having a higher probability of being equal than non-matching entries. More formally, WLOG, let $S_j\neq 0$ and $\tilde{X}_{i,j}$ and $\tilde{Y}_{i,j}$ denote the respective $(i,j)$\textsuperscript{th} entries of $\mathbf{G}^{(1)}$ and $\tilde{\mathbf{G}}^{(2)}$. Given a matching pair ${(\tilde{X}_{i,j},\tilde{Y}_{i,j})}$ of entries and any non-matching pair ${(\tilde{X}_{i,l},\tilde{Y}_{i,j})}$, $l\neq j$ we need
\iftoggle{singlecolumn}{
\begin{align}
    \Pr(\tilde{Y}_{i,j}\neq \tilde{X}_{i,j})<\Pr(\tilde{Y}_{i,j}\neq \tilde{X}_{i,l})\label{eq:conditiondeletiondetection}
\end{align}
}{
\begin{align}
    \Pr(\tilde{Y}_{i,j}\neq \tilde{X}_{i,j})<\Pr(\tilde{Y}_{i,j}\neq \tilde{X}_{i,l})\label{eq:conditiondeletiondetection}
\end{align}
}
which may not be true in general. 

For example, suppose we have a binary uniform \emph{i.i.d.} distribution, \emph{i.e.,} ${\mathfrak{X}=\{0,1\}}$ with $\gamma=0$ and ${u_1=\nicefrac{1}{2}}$ (recall Definition~\ref{defn:markovdb}). Further assume that $p_{Y|X}$ follows BSC($q$), \emph{i.e.} ${p_{Y|X}(x|x)=1-q}$, ${x=0,1}$. Note that when ${q>\nicefrac{1}{2}}$, equation~\eqref{eq:conditiondeletiondetection} is not satisfied. However, in this example, we can flip the labels in $Y$ by applying the bijective remapping ${\Phi=\left(\begin{smallmatrix}
0 & 1\\
1 & 0
\end{smallmatrix}\right)}$ to $Y$ in order to satisfy equation~\eqref{eq:conditiondeletiondetection}. 
\begin{sloppypar}
Thus, as long as such a permutation ${\Phi}$ of $\mathfrak{X}$ satisfying equation~\eqref{eq:conditiondeletiondetection} exists, we can use Algorithm~\ref{alg:seededdeletiondetection}.
Now, suppose that such a mapping $\Phi$ exists. We apply $\Phi$ to the entries of $\smash{\tilde{\mathbf{G}}^{(2)}}$ to construct $\smash{\tilde{\mathbf{G}}_{\Phi}^{(2)}}$. Then, our deletion detection algorithm follows the above steps computing $L_{i,j}(\Phi)$ for each index pair $(i,j)\in[n]\times[\hat{K}_n]$ and outputs the deletion pattern estimate ${\hat{I}^n(\Phi)=(\hat{I}_1(\Phi),\dots,\hat{I}_n(\Phi))}$ where

\iftoggle{singlecolumn}{
\begin{align}
L_{i,j}(\Phi) &\triangleq \sum_{t=1}^{\Lambda_n} \mathbbm{1}_{\left[{G}^{(1)}_{t,i}\neq \tilde{{G}}^{(2)}_{\Phi_{t,j}}\right]}.\label{eq:XHD} \\
    \hat{I}_i(\Phi) &\triangleq \mathbbm{1}_{\left[\exists j\in [\hat{K}_n]:\: L_{i,j}(\Phi)\le \Lambda_n \bar{\tau}\right]}
\end{align}
}{
\begin{align}
L_{i,j}(\Phi) &\triangleq \sum_{t=1}^{\Lambda_n} \mathbbm{1}_{\left[{G}^{(1)}_{t,i}\neq \tilde{{G}}^{(2)}_{\Phi_{t,j}}\right]}.\label{eq:XHD} \\
    \hat{I}_i(\Phi) &\triangleq \mathbbm{1}_{\left[\exists j\in [\hat{K}_n]:\: L_{i,j}(\Phi)\le \Lambda_n \bar{\tau}\right]}
\end{align}}
and $G_j^{(2)}(\Phi)$ is the $j$\textsuperscript{th} column of 
$\smash{\tilde{\mathbf{G}}_{\Phi}^{(2)}}$. Note that the runtime of Algorithm~\ref{alg:seededdeletiondetection} is $O(n^2 \Lambda_n)$, the computational bottleneck being the computation of $\mathbf{L}(\Phi)$.
\end{sloppypar}
\begin{sloppypar}
The following lemma states that such a bijective mapping $\Phi$ always exists and for a seed size ${\Lambda_n=\Theta(\log n)=\Theta(\log\log m_n)}$, this algorithm can infer the deletion locations with high probability.
\end{sloppypar}

\begin{algorithm}[t]
\caption{Seeded Deletion Detection Algorithm}\label{alg:seededdeletiondetection}
\Input{$(\mathbf{G}^{(1)},\mathbf{G}^{(2)},\mathbf{P},p_{Y|X},\textup{isReplica})$}
\Output{isDeleted}
$\mathfrak{S}(\mathfrak{X})\gets$ SymmetryGroup($\mathfrak{X}$)\;
\For{$s \gets 1$ \KwTo $|\mathfrak{X}|!$}{
$\Phi\gets\mathfrak{S}(\mathfrak{X})[s]$\Comment*[r]{Pick a remapping.} 
\blue{/* isUseful checks if $\Phi$ satisfies ~\eqref{eq:conditiondeletiondetection}.*/}\\
 \If{\textup{isUseful(}$\Phi,\mathbf{P},p_{Y|X}$\textup{)}}{
    break\Comment*[r]{Move on with $\Phi$.}
  }
}
$\tilde{\mathbf{G}}^{(2)} \gets $ExtraReplicaRemoval($\mathbf{G}^{(2)}$,isReplica)\Comment*[r]{Remove extra copies.}
$\tilde{\mathbf{G}}^{(2)}_\Phi \gets$Remap($\tilde{\mathbf{G}}^{(2)},\Phi$)\Comment*[r]{Apply remapping $\Phi$.}
$\mathbf{L}(\Phi)\gets $ ComputeHammingDist($\mathbf{G}^{(1)},\tilde{\mathbf{G}}^{(2)}_\Phi$)\Comment*[r]{Eq.~\eqref{eq:XHD}}
$\bar{\tau} \gets $ThresholdSelection2$(\mathbf{P},p_{Y|X})$\Comment*[r]{Threshold selection. See Apprendix~\ref{proof:seededdeletiondetection}.}

\For{$i = 1$ \KwTo  \textup{columnSize($\mathbf{G}^{(1)}$)}}{
\For{$j = 1$ \KwTo  \textup{columnSize($\tilde{\mathbf{G}}^{(2)}_\Phi$)}}{
  \eIf{$\mathbf{L}(\Phi)[i][j]\le \bar{\tau}$ $*$ \textup{rowSize($\mathbf{G}^{(1)}$)} }{
    isDeleted$[i] \gets$ FALSE\;
    break\;
  }{
      isDeleted$[j] \gets $ TRUE\;
    }
  
}
}
\end{algorithm}
\begin{lem}{\textbf{(Seeded Deletion Detection)}}\label{lem:seededdeletiondetection}
For a repetition pattern ${S}^n$, let ${I_\text{del}=\{j\in[n]|S_j=0\}}$. Then there exists a bijective mapping $\Phi$ such that equation~\eqref{eq:conditiondeletiondetection} holds after the remapping. In addition, for a seed size $\Lambda_n=\Theta(\log n)$, using Algorithm~\ref{alg:seededdeletiondetection}, we have 
\iftoggle{singlecolumn}{
 \vspace{-0.5em}
\begin{align}
    \Pr\left(\hat{I}(\Phi)=I_\text{del}\right)&\to 1\text{ as }n\to\infty.
\end{align}
}{
 \vspace{-0.5em}
\begin{align}
    \Pr\left(\hat{I}(\Phi)=I_\text{del}\right)&\to 1\text{ as }n\to\infty.
\end{align}
}
\end{lem}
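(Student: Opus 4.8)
The plan is to prove the two assertions in turn: the \emph{existence} of a relabeling $\sigma$ under which aligned seed entries disagree strictly less often than misaligned ones, and the \emph{concentration} guaranteeing that $\Lambda_n=\Theta(\log n)$ seeds suffice to recover the deletion pattern. Throughout I would condition on the success of the replica-detection step (Lemma~\ref{lem:noisyreplicadetection}), which with high probability lets us collapse the seed replicas correctly and thus reduce $\mathbf{G}^{(2)}$ to $\tilde{\mathbf{G}}^{(2)}$ containing \emph{only} deletions, so that every misaligned column pair we compare sits at Markov distance at least $1$.

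\emph{Existence of $\sigma$ and the separation.} I would first put the per-coordinate disagreement probabilities in closed form using $\mathbf{P}=\gamma\mathbf{I}+(1-\gamma)\mathbf{U}$. Write $c_0=\Pr(\sigma(Y)=X)$ for an aligned (same Markov-time) pair, $\pi_Y(y)=\sum_x\pi(x)p_{Y|X}(y|x)$, and $c_\infty=\sum_a\pi(a)\sum_y\pi_Y(y)\mathbbm{1}_{[\sigma(y)=a]}$ for the fully independent ``chance-agreement'' probability. Since $\mathbf{P}^k=\gamma^k\mathbf{I}+(1-\gamma^k)\mathbf{U}$, a pair at Markov distance $k$ agrees with probability $\gamma^k c_0+(1-\gamma^k)c_\infty$, so the misalignment disagreement probability $P_{\mathrm{mis},k}=1-\gamma^k c_0-(1-\gamma^k)c_\infty$ is \emph{monotonically increasing} in $k$ whenever $c_0>c_\infty$. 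Hence the hardest (smallest) misalignment gap is the adjacent case $k=1$, and the desired inequality $P_{\mathrm{match}}=1-c_0<P_{\mathrm{mis},1}$ reduces exactly to $\gamma(c_0-c_\infty)<c_0-c_\infty$, i.e.\ to $\gamma<1$. Thus the whole lemma hinges on exhibiting a bijection with $c_0>c_\infty$. This I would obtain by an averaging argument: averaging $c_0(\sigma)-c_\infty(\sigma)=\sum_y\pi(\sigma(y))\big[p_{Y|X}(y|\sigma(y))-\pi_Y(y)\big]$ over all $|\mathfrak{X}|!$ bijections yields exactly $0$, because for a uniformly random $\sigma$ each coordinate's two expectations both equal $\pi_Y(y)/|\mathfrak{X}|$. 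Therefore, unless every bijection gives gap zero --- which corresponds to $p_{Y|X}(\cdot|x)$ being independent of $x$, the independent case excluded by the standing assumption --- some bijection must give a strictly positive gap, and we fix such a $\sigma$ together with a threshold $\bar\tau\in(P_{\mathrm{match}},P_{\mathrm{mis},1})$.

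\emph{Concentration and union bound.} On the de-replicated seeds each $d_H(C_i^{(1)},C_j^{(2)}(\sigma))$ is a sum of $\Lambda_n$ independent Bernoulli indicators, since the seed rows are i.i.d., with mean $P_{\mathrm{match}}$ for the true match of a retained column and mean at least $P_{\mathrm{mis},1}>\bar\tau$ for every misaligned pair (using the monotonicity above to identify $k=1$ as binding). A Chernoff/Hoeffding bound then controls the two error types. A retained column is wrongly flagged deleted only if its true match exceeds the threshold, with probability at most $e^{-\Lambda_n D(\bar\tau\|P_{\mathrm{match}})}$; a deleted column is wrongly declared retained only if one of the $\hat K_n\le n$ misaligned columns falls below the threshold, with probability at most $n\,e^{-\Lambda_n D(\bar\tau\|P_{\mathrm{mis},1})}$ by a union bound. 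A further union bound over the at most $n$ indices $i$ gives total error at most $n\,e^{-\Lambda_n D(\bar\tau\|P_{\mathrm{match}})}+n^2\,e^{-\Lambda_n D(\bar\tau\|P_{\mathrm{mis},1})}$. With $D_{\min}=\min\{D(\bar\tau\|P_{\mathrm{match}}),D(\bar\tau\|P_{\mathrm{mis},1})\}>0$, this vanishes once $\Lambda_n D_{\min}-2\log n\to\infty$, so $\Lambda_n=\Theta(\log n)$ with a large enough constant suffices; since $R>0$ forces $n=\Theta(\log m_n)$ and hence $\log n=\Theta(\log\log m_n)$, this is precisely $\Lambda_n=\Theta(\log\log m_n)$.

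The main obstacle is the existence half: one must show that \emph{some} bijection beats chance agreement uniformly against the \emph{worst} (adjacent) misaligned column rather than merely against a far-apart, nearly-independent one. The averaging argument settles existence of a $\sigma$ with $c_0>c_\infty$, while the special structure $\mathbf{P}=\gamma\mathbf{I}+(1-\gamma)\mathbf{U}$ is exactly what makes $P_{\mathrm{mis},k}$ monotone, pins the binding case at $k=1$, and lets $\gamma<1$ preserve the strict separation $P_{\mathrm{match}}<P_{\mathrm{mis},1}$. A secondary technical point is carefully conditioning on Lemma~\ref{lem:noisyreplicadetection} so that, after collapsing replicas in the seeds, all misaligned comparisons in $\tilde{\mathbf{G}}^{(2)}$ are genuinely at Markov distance at least $1$; the remaining concentration and seed-size bookkeeping are routine.
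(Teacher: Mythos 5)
Your overall architecture matches the paper's: you reduce the Markov case to the \emph{i.i.d.} gap via $\mathbf{P}^k=\gamma^k\mathbf{I}+(1-\gamma^k)\mathbf{U}$ (so that the binding misaligned comparison is at distance $k=1$ and $\gamma<1$ preserves the strict separation), you prove $\sum_\sigma\bigl(c_0(\sigma)-c_\infty(\sigma)\bigr)=0$ by averaging over bijections (the paper does the same computation by counting the $(|\mathfrak{X}|-1)!$ permutations fixing each assignment), and your Chernoff-plus-double-union-bound bookkeeping giving $n\,e^{-\Lambda_n D(\bar\tau\|P_{\mathrm{match}})}+n^2\,e^{-\Lambda_n D(\bar\tau\|P_{\mathrm{mis},1})}$ and the requirement $\Lambda_n=\Omega(\log n)$ is exactly the paper's. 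All of that is sound.

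The gap is in the one sentence you pass over fastest: \say{unless every bijection gives gap zero --- which corresponds to $p_{Y|X}(\cdot|x)$ being independent of $x$.} The direction you need is that $c_0(\sigma)=c_\infty(\sigma)$ for \emph{every} bijection $\sigma$ forces $p_{Y|X}(y|x)=p_Y(y)$ for all $(x,y)$; the converse is trivial but useless here. Without this implication, the averaging argument leaves open the possibility that the correlation of $(X,Y)$ is arranged so that every relabeling ties with chance, and then no threshold test of this form works. This implication is precisely where the paper spends most of Appendix~B: it writes out the zero-gap condition for the identity, the transpositions $\sigma_{i\text{-}j}$, $\sigma_{i\text{-}k}$ and the $3$-cycles $\sigma_{i\text{-}j\text{-}k}$, and solves the resulting linear system to conclude $p_{W,Y}(i,j)=p_W(i)p_Y(j)$. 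The claim is true and your route can be completed --- e.g.\ by noting that with $M_{y,a}\triangleq\pi(a)\bigl[p_{Y|X}(y|a)-p_Y(y)\bigr]$ the hypothesis says every generalized diagonal sum $\sum_y M_{y,\sigma(y)}$ vanishes, which (comparing $\sigma$ to $\sigma$ composed with a transposition) forces $M_{i,a}+M_{j,b}=M_{i,b}+M_{j,a}$ for all $i,j,a,b$, hence $M_{y,a}=r_y+s_a$, and the vanishing row and column sums of $M$ then give $M\equiv0$ and so independence --- but as written the proposal asserts rather than proves the crux of the existence half.
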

\begin{proof}
See Appendix~\ref{proof:seededdeletiondetection}.
\end{proof}

    We stress that the remapping $\Phi$ is utilized only on $\mathbf{G}^{(2)}$ to detect the deletions, and is not applied to $\mathbf{Y}$ during the matching process.

\subsection{Row Matching Scheme and Achievability}\label{subsec:matchingschemeWm}

Let $S^n$ be the underlying column repetition pattern and $K_n\triangleq\sum_{j=1}^n S_j$ be the number of columns in $\mathbf{Y}$. The matching scheme (Algorithm~\ref{alg:identicalrepetitionmatching}) we propose follows these steps:
\begin{enumerate}[label=\textbf{ \arabic*)},leftmargin=1.3\parindent]
\item Perform replica detection as in Section~\ref{subsec:replicadetection}. The probability of error in this step is denoted by $\rho_n$.
\item Perform deletion detection using seeds as in Section~\ref{subsec:seededdeletiondetection}. The probability of error is denoted by $\mu_n$. At this step, we have an estimate $\hat{S}^n$ of $S^n$.
\item Using $\hat{S}^n$, place markers between the noisy replica runs of different columns to obtain $\tilde{\mathbf{Y}}$. If a run has length 0, \emph{i.e.} deleted, introduce a column consisting of erasure symbol $\ast\notin\mathfrak{X}$. Note that provided that the detection algorithms in Steps~1 and 2 have performed correctly, there are exactly $n$ such runs, where the $j$\textsuperscript{th} run in $\tilde{\mathbf{Y}}$ corresponds to the noisy copies of the $j$\textsuperscript{th} column of $\sigma_n\circ\mathbf{X}$ if $S_j\neq 0$, and an erasure column otherwise. 
\begin{figure}[t]
\centerline{\includegraphics[width=0.6\textwidth]{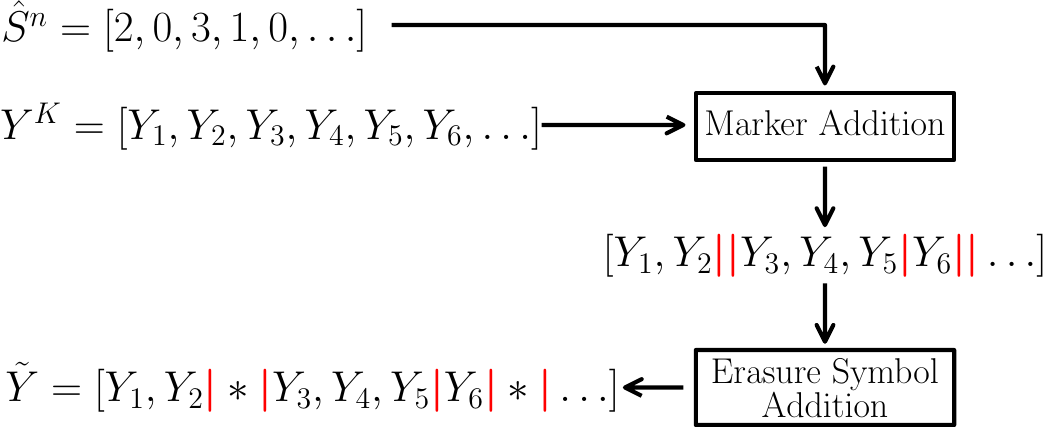}}
\caption{An example of the construction of $\tilde{\mathbf{Y}}$, as described in Step~3 of the proof of Theorem~\ref{thm:mainresultWm} in Section~\ref{subsec:matchingschemeWm}, illustrated over a pair of rows $X^n$ of $\mathbf{X}$ and $Y^K$ of $\mathbf{Y}$. After these steps, in Step~4 we check the joint typicality of the rows $X^n$ of $\mathbf{X}$ and $\tilde{Y}$ of $\tilde{\mathbf{Y}}$.}
\label{fig:marker}
\end{figure}
\item Fix $\epsilon>0$. Match the $l$\textsuperscript{th} row $Y^{K_n}_{l}$ of $\tilde{\mathbf{Y}}$ with the $i$\textsuperscript{th} row $X^n_i$ of $\mathbf{X}$ if $X_i^n$ is the only row of $\mathbf{X}$ jointly $\epsilon$-typical with $Y^{K_n}_l$ according to $p_{X^n,Y^{K_n},S^n}$,
where $S_i\overset{\text{iid}}{\sim} p_S$ and ${Y^{K_n}=Y^{S_1}_1,\dots,Y^{S_n}_n}$ such that
\iftoggle{singlecolumn}{
 \begin{align}
    p_{X^n,{Y}^K|S^n}(x^n,y^k|s^n)&=p_{X^n}(x^n) \prod\limits_{i: s_i>0}\left(\prod\limits_{j=1}^{s_i} p_{Y|X}((y^{s_i})_j|x_i)\right)
       \prod\limits_{i: s_i=0} \mathbbm{1}_{[y^{s_i} = \ast]} 
\end{align}
}{
 \begin{align}
    p&_{X^n,{Y}^K|S^n}(x^n,y^k|s^n)\notag\\&=p_{X^n}(x^n) \prod\limits_{i: s_i>0}\left(\prod\limits_{j=1}^{s_i} p_{Y|X}((y^{s_i})_j|x_i)\right)\notag\\
    &\hspace{4.35em}   \prod\limits_{i: s_i=0} \mathbbm{1}_{[y^{s_i} = \ast]} 
\end{align}
}
with $y^k=y^{s_1}\dots y^{s_n}$.
Assign $\hat\sigma_n(i)=l$. If there is no such jointly typical row, or there is more than one, declare an error.
\end{enumerate}
The runtime of Algorithm~\ref{alg:identicalrepetitionmatching} is $O(m_n^2 n)$ due to the typicality check (each $O(n)$) for all row pairs $(X^n_i,Y_j^{K_n})$ $(i,j)\in[m_n]^2$.

The column discarding and the marker addition as described in Steps~3-4, are illustrated in Figure~\ref{fig:marker}.

We are now ready to prove the achievability of Theorem~\ref{thm:mainresultWm}.

\begin{algorithm}[t]
\caption{Typicality-Based Matching Scheme (Identical Repetition)}\label{alg:identicalrepetitionmatching}
\Input{$(\mathbf{X},\mathbf{Y},\mathbf{P},p_{Y|X},p_S,\epsilon,\mathbf{G}^{(1)},\mathbf{G}^{(2)})$}
\Output{$\hat{\sigma}_n$}
\textup{isReplica}$\gets$ Algorithm~\ref{alg:noisyreplicadetection}($\mathbf{Y},\mathbf{P},p_{Y|X}$)\Comment*[r]{Step 1.}
\textup{isDeleted}$\gets$ Algorithm~\ref{alg:seededdeletiondetection}($\mathbf{G}^{(1)},\mathbf{G}^{(2)},\mathbf{P},p_{Y|X}$,\textup{isReplica})\Comment*[r]{Step 2.}
$\hat{S}^n \gets $ EstimateRepetitionPattern(isReplica,isDeleted)\;
$\tilde{\mathbf{Y}}\gets$ MarkerAddition($\mathbf{Y},\hat{S}^n$)\Comment*[r]{Step 3.}

\For{$i = 1$ \KwTo  \textup{rowSize($\mathbf{X}$)}}{
\textup{count}$\gets 0$\;
\For{$j = 1$ \KwTo  \textup{rowSize($\tilde{\mathbf{Y}}$)}}{
  \If{\textup{isJointlyTypical(}$\mathbf{X}[i][:],\mathbf{Y}[j][:],\mathbf{P},p_{Y|X},p_S,\epsilon$\textup{)}}{
    $\hat{\sigma}_n[i] \gets$ j\;
    count$\gets$ count + 1\;
  }
  
}
\blue{/* count = 0: no row in $\tilde{\mathbf{Y}}$ jointly typical with $\mathbf{X}[i][:]$. ERROR! */}\\
\blue{/* count $>$ 1: multiple rows in $\tilde{\mathbf{Y}}$ jointly typical with $\mathbf{X}[i][:]$. ERROR! */}\\
\If{\textup{count} $\neq 1$}{
$\hat{\sigma}_n[i] \gets 0$\Comment*[r]{Matching error.}
}
}

\end{algorithm}

\begin{proofachievableWm}

From the union bound and Proposition~\ref{prop:jointAEP}, the total probability of error of this scheme (as in~\eqref{eq:proberroridentical}) can be bounded for large $n$ as follows
\iftoggle{singlecolumn}{
\begin{align}
    P_e 
    &\le 2^{n R}  2^{-n(\bar{I}(X;Y^S,S)-3 \epsilon)}+\epsilon+\rho_n+\mu_n\label{eq:perrunion}
\end{align}
}{
\begin{align}
    P_e 
    &\le 2^{n R}  2^{-n(\bar{I}(X;Y^S,S)-3 \epsilon)}+\epsilon+\rho_n+\mu_n\label{eq:perrunion}
\end{align}
}
where $\bar{I}(X;Y^S,S)$ is the mutual information rate~\cite{graymutualinforate} defined as 
\iftoggle{singlecolumn}{
\begin{align}
   \bar{I}(X;Y^S,S)&\triangleq\lim\limits_{n\to\infty} \frac{1}{n} I(X^n;Y^{K_n},S^n)
\end{align}
}{
\begin{align}
   \bar{I}(X;Y^S,S)&\triangleq\lim\limits_{n\to\infty} \frac{1}{n} I(X^n;Y^{K_n},S^n)
\end{align}
}
Note that since $m_n$ is exponential in $n$, from Lemma~\ref{lem:noisyreplicadetection} we have $\rho_n\to0$. Furthermore, since $\Lambda_n=\Theta(\log n)$, from Lemma~\ref{lem:seededdeletiondetection} we have $\mu_n\to0$ as $n\to\infty$. Thus $P_e\le \epsilon$ as $n\to\infty$ if 
\iftoggle{singlecolumn}{
\begin{align}
    R&<\lim\limits_{n\to\infty} \frac{1}{n} I(X^n;Y^{K_n},S^n)
\end{align}
}{
\begin{align}
    R&<\lim\limits_{n\to\infty} \frac{1}{n} I(X^n;Y^{K_n},S^n)
\end{align}
}
concluding the proof of the achievability part.
\end{proofachievableWm}

\subsection{Converse}
\label{subsec:converseWm}
In this subsection, we prove that the database growth rate achieved in Theorem~\ref{thm:mainresultWm} is in fact tight using a genie-aided proof where the column repetition pattern $S^n$ is known. Since the rows are \emph{i.i.d.} conditioned on the repetition pattern $S^n$, the seeds $(\mathbf{G}^{(1)},\mathbf{G}^{(2)})$ do not offer any additional information when $S^n$ is given. Thus, the genie-aided proof holds for any seed size $\Lambda_n$.

\begin{proofconverseWm}
While Theorem~\ref{thm:mainresultWm} is stated for $\Lambda_n=\Omega(\log\log m_n)$, in the converse we assume any seed size $\Lambda_n$. We prove the converse using the modified Fano's inequality presented in~\cite{shirani8849392}.
Let $R$ be the database growth rate and $P_e$ be the probability that the scheme is unsuccessful for a uniformly selected row pair. More formally,
\iftoggle{singlecolumn}{
\begin{align}
   P_e&\triangleq \Pr\left(\sigma_n(J)\neq\hat{\sigma}_n(J)\right),\hspace{1em} J\sim\text{Unif}([m_n])
\end{align}
}{
\begin{align}
   P_e&\triangleq \Pr\left(\sigma_n(J)\neq\hat{\sigma}_n(J)\right),\hspace{1em} J\sim\text{Unif}([m_n])
\end{align}
}
Suppose $P_e\to0$ as $n\to\infty$. Furthermore, let $S^n$ be the repetition pattern and $K_n=\sum_{j=1}^n S_j$. Since $\sigma_n$ is a uniform permutation, from Fano's inequality, we have
\iftoggle{singlecolumn}{
\begin{align}
    H(\sigma_n)&\le 1+m_n P_e\log m_n+I(\sigma_n;\mathbf{X},\mathbf{Y},\mathbf{G}^{(1)},\mathbf{G}^{(2)},S^n)\label{eq:converseWmfirst}
\end{align}
}{
\begin{align}
    H(\sigma_n)\le & 1+m_n P_e\log m_n\notag\\
    &+I(\sigma_n;\mathbf{X},\mathbf{Y},\mathbf{G}^{(1)},\mathbf{G}^{(2)},S^n)\label{eq:converseWmfirst}
\end{align}
}

From the independence of $\mathbf{Y}$, $S^n$, $(\mathbf{G}^{(1)},\mathbf{G}^{(2)})$ and $\sigma_n$, we get
\iftoggle{singlecolumn}{
\begin{align}
    I(\sigma_n;\mathbf{X},\mathbf{Y},\mathbf{G}^{(1)},\mathbf{G}^{(2)},S^n)
    &= I(\sigma_n;\mathbf{X}|\mathbf{Y},\mathbf{G}^{(1)},\mathbf{G}^{(2)},S^n)\\
    &\le I(\sigma_n,\mathbf{Y},\mathbf{G}^{(1)},\mathbf{G}^{(2)},S^n;\mathbf{X})\\
    &\le I(\sigma_n,\mathbf{Y},S^n;\mathbf{X})\label{eq:converseassumeS}\\
    &= I(\sigma_n,\mathbf{Y};\mathbf{X}|S^n)\\
    &=\sum\limits_{i=1}^{m_n} I(X_i^n;Y_{\sigma_n(i)}^{K_n}|S^n) \label{eq:converseWmindeprows}\\
    &= m_n  I(X^n;Y^{K_n}|S^n)\label{eq:converseWmidrows}\\
    &= m_n  I(X^n;Y^{K_n},S^n)\label{eq:converseWmidrows2}
\end{align}
}{
\begin{align}
    I(\sigma_n;\mathbf{X}&,\mathbf{Y},\mathbf{G}^{(1)},\mathbf{G}^{(2)},S^n)
    \notag\\&= I(\sigma_n;\mathbf{X}|\mathbf{Y},\mathbf{G}^{(1)},\mathbf{G}^{(2)},S^n)\\
    &\le I(\sigma_n,\mathbf{Y},\mathbf{G}^{(1)},\mathbf{G}^{(2)},S^n;\mathbf{X})\\
    &\le I(\sigma_n,\mathbf{Y},S^n;\mathbf{X})\label{eq:converseassumeS}\\
    &= I(\sigma_n,\mathbf{Y};\mathbf{X}|S^n)\\
    &=\sum\limits_{i=1}^{m_n} I(X_i^n;Y_{\sigma_n(i)}^{K_n}|S^n) \label{eq:converseWmindeprows}\\
    &= m_n  I(X^n;Y^{K_n}|S^n)\label{eq:converseWmidrows}\\
    &= m_n  I(X^n;Y^{K_n},S^n)\label{eq:converseWmidrows2}
\end{align}
}
where \eqref{eq:converseassumeS} follows from the fact that given the repetition pattern $S^n$, the seeds $(\mathbf{G}^{(1)},\mathbf{G}^{(2)})$ do not offer any additional information on $\sigma_n$. Equation~\eqref{eq:converseWmindeprows} follows from the conditional independence of the non-matching rows given $S^n$. Equation~\eqref{eq:converseWmidrows} follows from the fact that the matching rows are identically distributed conditioned on the repetition pattern ${S}^n$. Finally, \eqref{eq:converseWmidrows2} follows from the independence of $X^n$ and $S^n$.

Note that from Stirling's approximation~\cite[Chapter 3.2]{cormen2022introduction} and the uniformity of $\sigma_n$, we get
\iftoggle{singlecolumn}{
\begin{align}
    H(\sigma_n)&= \log m_n!\\
    &= m_n\log m_n - m_n \log e + O(\log m_n)
\end{align}
\begin{align}
\lim\limits_{n\to\infty}\frac{1}{m_n n}H(\sigma_n)&=\lim\limits_{n\to\infty}\frac{1}{m_n n} \left[m_n\log m_n - m_n \log e + O(\log m_n)\right]\\
        &= \lim\limits_{n\to\infty}\frac{1}{n}\log m_n\\
        &= R \label{eq:converseWmlast}
\end{align}
}{
\begin{align}
    H(\sigma_n)&= \log m_n!\\
    &= m_n\log m_n - m_n \log e + O(\log m_n)
\end{align}
\begin{align}
\lim\limits_{n\to\infty}\frac{1}{m_n n}H(\sigma_n)&=\lim\limits_{n\to\infty}\frac{1}{m_n n} [m_n\log m_n \notag\\&\hspace{2.5em} - m_n \log e + O(\log m_n)]\\
        &= \lim\limits_{n\to\infty}\frac{1}{n}\log m_n\\
        &= R \label{eq:converseWmlast}
\end{align}
}
Finally, from \eqref{eq:converseWmfirst}-\eqref{eq:converseWmlast} we obtain
\iftoggle{singlecolumn}{
\begin{align}
    R&= \lim\limits_{n\to\infty}\frac{1}{m_n n}H(\sigma_n)\\
    &\le \lim\limits_{n\to\infty}\left[ \frac{1}{m_n n}+P_e R+\frac{1}{n} I(X^n;Y^{K_n},S^n)\right]\\
    &= \lim\limits_{n\to\infty}\frac{I(X^n;Y^{K_n},S^n)}{n} \label{eqn:converselast}
\end{align}
}{
\begin{align}
    R&= \lim\limits_{n\to\infty}\frac{1}{m_n n}H(\sigma_n)\\
    &\le \lim\limits_{n\to\infty}\left[ \frac{1}{m_n n}+P_e R+\frac{1}{n} I(X^n;Y^{K_n},S^n)\right]\\
    &= \lim\limits_{n\to\infty}\frac{I(X^n;Y^{K_n},S^n)}{n} \label{eqn:converselast}
\end{align}
}
where \eqref{eqn:converselast} follows from the fact that $P_e\to 0$ as $n\to\infty$.
\end{proofconverseWm}

\subsection{Noiseless Setting}
\label{subsec:noiselessWm}
Lemmas~\ref{lem:noisyreplicadetection} and \ref{lem:seededdeletiondetection} state that given a seed size $\Lambda_n$ double logarithmic with the row size $m_n$, the repetition pattern can be inferred through the aforementioned replica and deletion detection algorithms for any noise distribution $p_{Y|X}$. Thus, the results of Section~\ref{subsec:replicadetection} through Section~\ref{subsec:matchingschemeWm} trivially apply to the noiseless setting where
\iftoggle{singlecolumn}{
\begin{align}
    p_{Y|X}(y|x) &=\mathbbm{1}_{[y=x]}\:\forall (x,y)\in\mathfrak{X}^2.
\end{align}
}{
\begin{align}
    p_{Y|X}(y|x) &=\mathbbm{1}_{[y=x]}\:\forall (x,y)\in\mathfrak{X}^2.
\end{align}
}

We note that when there is no noise, the capacity expression of Theorem~\ref{thm:mainresultWm} (Equation \ref{eq:matchingcap}) can be further simplified as
\iftoggle{singlecolumn}{
\begin{align}
    C &= (1-\delta)^2 \sum\limits_{r=0}^\infty \delta^r H(X_0|X_{-r-1}).
\end{align}
}{
\begin{align}
    C &= (1-\delta)^2 \sum\limits_{r=0}^\infty \delta^r H(X_0|X_{-r-1}).
\end{align}
}

In this subsection, we show that in the noiseless setting, seeds can be made obsolete by the use of a novel detection algorithm.
In other words, in the noiseless setting, we show that Theorem~\ref{thm:mainresultWm} can be extended to any seed size $\Lambda_n$. 

\begin{thm}{\textbf{(Noiseless Matching Capacity for Identical Repetition})}\label{thm:noiselesscapacityWm}
Consider a probability transition matrix $\mathbf{P}$ and a repetition probability distribution $p_S$. Suppose there is no noise, \emph{i.e.,}
\iftoggle{singlecolumn}{
\begin{align}
    p_{Y|X}(y|x) &=\mathbbm{1}_{[y=x]}\:\forall (x,y)\in\mathfrak{X}^2.
\end{align}
}{
\begin{align}
    p_{Y|X}(y|x) &=\mathbbm{1}_{[y=x]}\:\forall (x,y)\in\mathfrak{X}^2.
\end{align}
}
Then, the matching capacity under identical repetition is
\iftoggle{singlecolumn}{
\begin{align}
    C &= (1-\delta)^2 \sum\limits_{r=0}^\infty \delta^r H(X_0|X_{-r-1})\label{eq:noiselesscapacity}
\end{align}
}{
\begin{align}
    C &= (1-\delta)^2 \sum\limits_{r=0}^\infty \delta^r H(X_0|X_{-r-1})\label{eq:noiselesscapacity}
\end{align}
}
for any seed size $\Lambda_n$. Here $\delta\triangleq p_S(0)$ is the deletion probability and $H(X_0|X_{-r-1})$ is the conditional entropy associated with the probability transition matrix
\iftoggle{singlecolumn}{
\begin{align}
    \mathbf{P}^{r+1}&=\gamma^{r+1} \mathbf{I}+(1- \gamma^{r+1}) \mathbf{U} \label{eq:Ppower}
\end{align}
}{
\begin{align}
    \mathbf{P}^{r+1}&=\gamma^{r+1} \mathbf{I}+(1- \gamma^{r+1}) \mathbf{U} \label{eq:Ppower}
\end{align}
}
The capacity can further be simplified as
\iftoggle{singlecolumn}{
\begin{align}
    C &= \frac{(1-\delta)(1-\gamma)}{(1-\gamma\delta)} [H(\pi)+\sum\limits_{i\in\mathfrak{X}} u_i^2\log u_i]- (1-\delta)^2 \sum\limits_{r=0}^\infty \delta^r \sum\limits_{i\in \mathfrak{X}} u_i \eta_{r,i} \log \eta_{r,i}\label{eq:thm2eval}
\end{align}
}{
\begin{align}
    C&= \frac{(1-\delta)(1-\gamma)}{(1-\gamma\delta)} [H(\pi)+\sum\limits_{i\in\mathfrak{X}} u_i^2\log u_i]\notag\\&\hspace{1.75em}- (1-\delta)^2 \sum\limits_{r=0}^\infty \delta^r \sum\limits_{i\in \mathfrak{X}} u_i \eta_{r,i} \log \eta_{r,i}\label{eq:thm2eval}
\end{align}
}
where
\iftoggle{singlecolumn}{
\begin{align}
    \eta_{r,i}\triangleq (1-u_i) \gamma^{r+1}+ u_i.
\end{align}
}{
\begin{align}
    \eta_{r,i}\triangleq (1-u_i) \gamma^{r+1}+ u_i.
\end{align}
}
\end{thm}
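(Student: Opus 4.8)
The plan is to establish Theorem~\ref{thm:noiselesscapacityWm} in three stages: inherit the converse from Theorem~\ref{thm:mainresultWm}, prove a \emph{seedless} achievability tailored to the noiseless channel, and then evaluate the resulting mutual-information rate in closed form. For the converse I would simply reuse the genie-aided argument of Theorem~\ref{thm:mainresultWm}: that bound reveals $S^n$ to the scheme and thereby renders the seeds and the partial deletion information $\mathbf{A}$ informationally useless, so it already holds verbatim for \emph{any} seed size $\Lambda_n$ and any $\alpha\ge0$, yielding $C(\infty,\alpha)\le\lim_{n\to\infty}\frac1n I(X^n;Y^{K_n},S^n)$ without a seed assumption. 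The whole difficulty is therefore on the achievability and the evaluation sides.

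For achievability with no seeds, the idea is to recover $S^n$ exactly from $(\mathbf{D}^{(1)},\mathbf{D}^{(2)})$ alone and then reuse the typicality matching of Theorem~\ref{thm:mainresultWm}. Replica detection is immediate: with $p_{Y|X}(y|x)=\mathbbm{1}_{[y=x]}$ the columns arising from one retained column of $\mathbf{D}^{(1)}$ are \emph{identical}, so grouping maximal runs of equal consecutive columns of $\mathbf{D}^{(2)}$ recovers the replication counts (this also follows from Lemma~\ref{lem:noisyreplicadetection}, whose hypothesis $\tau\in(0,1)$ is met trivially since replicas sit at Hamming distance $0$). The new ingredient is seedless \emph{deletion} detection via the symbol histogram of each column, a permutation-invariant statistic: I would declare a column of $\mathbf{D}^{(1)}$ retained iff its histogram matches that of a representative of some run in $\mathbf{D}^{(2)}$. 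Matching columns are equal up to the permutation $\boldsymbol{\Theta}_n$, so their histograms agree exactly and no retained column is ever missed; the test can only err if two distinct columns of $\mathbf{D}^{(1)}$ share a histogram. The crux is thus an asymptotic histogram-uniqueness lemma: since the rows are i.i.d., $H_i-H_{i'}=\sum_{k=1}^{m_n}(e_{X_{k,i}}-e_{X_{k,i'}})$ is a sum of $m_n$ i.i.d. lattice increments, so a local central limit theorem gives $\Pr(H_i=H_{i'})=O(m_n^{-(|\mathfrak{X}|-1)/2})$, and a union bound over the $O(n^2)$ column pairs with $m_n\sim2^{nR}$ sends the total collision probability to $0$. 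I expect this local-CLT step to be the main obstacle, in particular checking non-degeneracy of the increment lattice and keeping the estimate uniform as $\gamma\uparrow1$, where adjacent columns are nearly equal and the effective number of nonzero increments shrinks to order $m_n(1-\gamma)$. Once $S^n$ is recovered with high probability, the error analysis \eqref{eq:perrunion} carries over and $\lim_{n\to\infty}\frac1n I(X^n;Y^{K_n},S^n)$ is achievable for every $\Lambda_n$ and $\alpha$.

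It then remains to evaluate this rate. In the noiseless case $Y^{K_n}$ is, given $S^n$, a deterministic function of $X^n$ whose replicas carry no new information, so $I(X^n;Y^{K_n},S^n)=I(X^n;Y^{K_n}\mid S^n)=\mathbb{E}_{S^n}\!\left[H(X_{\mathcal R})\right]$, where $\mathcal R=\{j:S_j\ge1\}$ retains each index i.i.d. with probability $1-\delta$. Writing the retained indices as $j_1<\cdots<j_L$ and applying the chain rule with the Markov property, $H(X_{\mathcal R})=H(X_{j_1})+\sum_{\ell\ge2}H(X_{j_\ell}\mid X_{j_{\ell-1}})$, where each term depends only on the gap $j_\ell-j_{\ell-1}$. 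A given retained index has its preceding retained neighbour at gap $r+1$ exactly when the $r$ intermediate indices are all deleted, an event of probability $(1-\delta)^2\delta^r$ contributing $H(X_0\mid X_{-r-1})$; summing over $r$ and discarding the $O(1/n)$ boundary term yields the claimed $(1-\delta)^2\sum_{r=0}^{\infty}\delta^r H(X_0\mid X_{-r-1})$ of \eqref{eq:noiselesscapacity}.

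Finally, to reach the explicit form \eqref{eq:thm2eval}, I would use $\mathbf{U}^2=\mathbf{U}$ to verify \eqref{eq:Ppower}, so that the $(r+1)$-step chain stays in state $i$ with probability $\eta_{r,i}=(1-u_i)\gamma^{r+1}+u_i$ and moves to $j\neq i$ with probability $(1-\gamma^{r+1})u_j$. Expanding $H(X_0\mid X_{-r-1})=\sum_i u_i\,H(\text{row } i \text{ of } \mathbf{P}^{r+1})$ separates each row entropy into a $-\eta_{r,i}\log\eta_{r,i}$ term and an off-diagonal part proportional to $1-\gamma^{r+1}$; carrying out the geometric sums such as $\sum_r\delta^r\gamma^{r+1}=\gamma/(1-\gamma\delta)$ collapses the off-diagonal contribution into the prefactor $\frac{(1-\delta)(1-\gamma)}{1-\gamma\delta}\big[H(\pi)+\sum_i u_i^2\log u_i\big]$, while the diagonal terms persist as the residual series $-(1-\delta)^2\sum_r\delta^r\sum_i u_i\,\eta_{r,i}\log\eta_{r,i}$. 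This last step is routine bookkeeping of geometric series, and I would relegate its details to an appendix.
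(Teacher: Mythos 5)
Your proposal is correct and follows the paper's strategy at the structural level: inherit the converse from Theorem~\ref{thm:mainresultWm}, detect the repetition pattern seedlessly via a permutation-invariant column statistic (the histogram), then run typicality matching against the induced erasure channel and evaluate the mutual-information rate. The differences are in the technical execution of two steps. For histogram uniqueness, you work with the full $|\mathfrak{X}|$-ary histogram and invoke a multivariate local CLT to get $\Pr(H_i=H_{i'})=O(m_n^{-(|\mathfrak{X}|-1)/2})$; the paper instead \emph{collapses} the database to a binary alphabet and proves Lemma~\ref{lem:histogram} by elementary Stirling/Chernoff/Pinsker estimates on (correlated) binomials, settling for the weaker requirement $m_n=\omega(n^4)$ — which is all that is needed since $m_n$ is exponential in $n$. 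Your route gives the sharper exponent (it is essentially what the paper itself does later in Lemma~\ref{lem:histogramuncollapsed} and Proposition~\ref{prop:uniformhistogram}, but only for the i.i.d.\ case), at the cost of having to verify non-degeneracy of the increment lattice for the Markov-correlated pair $(X_{k,i},X_{k,i'})$; this does hold since $\Pr(X_i=a,X_{i'}=b)=u_a[(1-\gamma^{|i-i'|})u_b+\gamma^{|i-i'|}\delta_{ab}]>0$ for all $a\neq b$, and your worry about uniformity as $\gamma\uparrow 1$ is moot because $\gamma$ is a fixed model parameter, not a sequence in $n$. For the evaluation, you derive $(1-\delta)^2\sum_r\delta^r H(X_0|X_{-r-1})$ from first principles via the chain rule and the geometric gap distribution of retained indices, whereas the paper simply cites the erasure-channel mutual-information-rate formula of Li and Tan; your derivation is a clean self-contained substitute. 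One caution on the final closed form: when you expand the off-diagonal row entropies you will pick up a term proportional to $(1-\gamma^{r+1})\log(1-\gamma^{r+1})\,(1-\sum_i u_i^2)$ that is not literally "proportional to $1-\gamma^{r+1}$" and does not obviously collapse into the stated prefactor, so that bookkeeping deserves to be written out rather than relegated to routine.
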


\begin{cor}{\textbf{(Noiseless Matching Capacity for Identical Repetition with I.I.D. Database Entries)}}
    When $\gamma=0$, resulting in an \emph{i.i.d.} database distribution $p_X(x)=u_x$, $\forall x\in\mathfrak{X}$, the matching capacity in the noiseless setting is 
    \begin{align}
        C&= (1-\delta) H(X)
    \end{align}
    where $H(X)=H(\pi)$ is the entropy of the stationary distribution $\pi=[u_1,\dots,u_{|\mathfrak{X}|}]$.
\end{cor}

Observe that the RHS of~\eqref{eq:noiselesscapacity} is the mutual information rate for an erasure channel with erasure probability $\delta$ with first-order Markov $(\mathbf{P})$ inputs, as stated in~\cite[Corollary II.2]{li2014input}. Thus, Theorem~\ref{thm:noiselesscapacityWm} states that we can achieve the erasure bound which assumes prior knowledge of the column repetition pattern.

The proof of Theorem~\ref{thm:noiselesscapacityWm} hinges on the observation that in the noiseless setting deletion and replica detection can be performed without seeds. Inspired by the idea of extracting permutation-invariant features as done in Section~\ref{subsec:replicadetection}, our noiseless repetition detection algorithm uses the histogram (and equivalently the type) of each column of $\mathbf{X}$ and $\mathbf{Y}$ as the permutation-invariant feature. Our repetition detection algorithm works as follows: First, for tractability, we \say{collapse} the Markov chain into a binary-valued one. We pick a symbol $x$ from the alphabet $\mathfrak{X}$, WLOG $x=1$, and define the \emph{collapsed} databases $\tilde{\mathbf{X}}$ and $\tilde{\mathbf{Y}}$ as follows:

\iftoggle{singlecolumn}{
\begin{align}
       \tilde{\mathbf{M}}_{i,j} &= \begin{cases}
    1 & \text{if } {\mathbf{M}}_{i,j} = 1\\
    2 & \text{if } {\mathbf{M}}_{i,j} \neq 1
    \end{cases}, \: \forall (i,j),\: \mathbf{M}\in\{\mathbf{X},\mathbf{Y}\} \label{eq:collapsedb}
\end{align}
}{
\begin{align}
    \tilde{\mathbf{M}}_{i,j} &= \begin{cases}
    1 & \text{if } {\mathbf{M}}_{i,j} = 1\\
    2 & \text{if } {\mathbf{M}}_{i,j} \neq 1
    \end{cases}, \: \forall (i,j),\: \mathbf{M}\in\{\mathbf{X},\mathbf{Y}\} \label{eq:collapsedb}
\end{align}
}

Next, we construct the collapsed histogram vectors $\tilde{{H}}^{(1),n}$ and $\tilde{{H}}^{(2),{K_n}}$ as

\iftoggle{singlecolumn}{
\begin{align}
    \tilde{H}_j^{(1)}&=\sum\limits_{i=1}^{m_n} \mathbbm{1}_{\left[\tilde{X}_{i,j}=2 \right]},\:
    \forall j\in [n]
    \label{eq:histogramdefn}\\
    \tilde{H}_j^{(2)}&=\sum\limits_{i=1}^{m_n} \mathbbm{1}_{\left[\tilde{Y}_{i,j}=2 \right]},\:
    \forall j\in [K_n]\label{eq:histogramdefn2}
\end{align}
}{
\begin{align}
    \tilde{H}_j^{(1)}&=\sum\limits_{i=1}^{m_n} \mathbbm{1}_{\left[\tilde{X}_{i,j}=2 \right]},\:
    \forall j\in [n]
    \label{eq:histogramdefn}\\
    \tilde{H}_j^{(2)}&=\sum\limits_{i=1}^{m_n} \mathbbm{1}_{\left[\tilde{Y}_{i,j}=2 \right]},\:
    \forall j\in [K_n]\label{eq:histogramdefn2}
\end{align}
}

Then, the algorithm declares the $j$\textsuperscript{th} column deleted if $\tilde{H}^{(1)}_j$ is absent in $\tilde{{H}}^{(2),{K_n}}$ and declares the $j$\textsuperscript{th} column replicated $s$ times if $\tilde{H}^{(1)}_j$ is present $s\ge 1$ times in $\tilde{{H}}^{(2),{K_n}}$.

Note that as long as column histograms $\tilde{H}^{(1)}_j$ of the collapsed database $\tilde{\mathbf{X}}$ are unique, this detection process is error-free.

The following lemma provides conditions for the asymptotic uniqueness of column histograms ${\tilde{H}_j^{(1)}}$, ${j\in[n]}$.

\begin{lem}{\textbf{(Asymptotic Uniqueness of the Column Histograms)}}\label{lem:histogram}
Let $\tilde{H}^{(1)}_j$ denote the histogram of the $j$\textsuperscript{th} column of $\tilde{\mathbf{X}}$, as in~\eqref{eq:histogramdefn}.
Then, for $m_n=\omega(n^4)$, we have
\iftoggle{singlecolumn}{
\begin{align}
    \Pr\left(\exists i,j\in [n],\: i\neq j,\tilde{H}^{(1)}_i=\tilde{H}^{(1)}_j\right)\to 0 \text{ as }n\to \infty.
\end{align}
}{
\begin{align}
    \Pr\left(\exists i,j\in [n],\: i\neq j,\tilde{H}^{(1)}_i=\tilde{H}^{(1)}_j\right)\to 0 \text{ as }n\to \infty.
\end{align}
}
\end{lem}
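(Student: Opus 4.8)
The plan is to show that the $n$ column histograms $\tilde{H}^{(1)}_1,\dots,\tilde{H}^{(1)}_n$ of the collapsed database are distinct with high probability by a union bound over all $\binom{n}{2}$ pairs, reducing the problem to controlling $\Pr(\tilde{H}^{(1)}_i=\tilde{H}^{(1)}_j)$ for a single pair $i\neq j$. Since $\tilde{H}^{(1)}_j=\sum_{t=1}^{m_n}\mathbbm{1}_{[\tilde{D}^{(1)}_{t,j}=2]}$ is a sum over the $m_n$ i.i.d. rows, and within each row the two entries $(\tilde{D}^{(1)}_{t,i},\tilde{D}^{(1)}_{t,j})$ are correlated through the Markov chain but rows are independent, the difference $\tilde{H}^{(1)}_i-\tilde{H}^{(1)}_j=\sum_{t=1}^{m_n} Z_t$ where $Z_t\triangleq\mathbbm{1}_{[\tilde{D}^{(1)}_{t,i}=2]}-\mathbbm{1}_{[\tilde{D}^{(1)}_{t,j}=2]}$ is a sum of $m_n$ i.i.d. terms taking values in $\{-1,0,+1\}$. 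I would first verify the non-degenerate case $\Pr(Z_t=1)=\Pr(Z_t=-1)>0$, which holds because $i\neq j$ and the chain is non-degenerate ($\gamma<1$, $u_j>0$) so the marginal collapsed-bit probabilities at columns $i$ and $j$ differ or at least the pair is not almost-surely equal; symmetry of the two marginals is not needed, only that a single step can flip one bit without the other.

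The key estimate is an anti-concentration bound: the probability that a sum of $m_n$ i.i.d. bounded lattice variables hits any particular value (here $0$) decays like $\Theta(1/\sqrt{m_n})$. I would invoke a local central limit theorem or the classical Littlewood--Offord/Erd\H{o}s-type bound, giving
\begin{align}
\Pr\left(\tilde{H}^{(1)}_i=\tilde{H}^{(1)}_j\right)=\Pr\left(\sum_{t=1}^{m_n} Z_t=0\right)=O\!\left(\frac{1}{\sqrt{m_n}}\right).
\end{align}
Combining this with the union bound yields
\begin{align}
\Pr\left(\exists\, i\neq j:\ \tilde{H}^{(1)}_i=\tilde{H}^{(1)}_j\right)\le \binom{n}{2}\,O\!\left(\frac{1}{\sqrt{m_n}}\right)=O\!\left(\frac{n^2}{\sqrt{m_n}}\right),
\end{align}
which tends to $0$ precisely when $m_n=\omega(n^4)$, matching the hypothesis of the lemma. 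This is the whole thrust: the $\sqrt{m_n}$ in the denominator, squared against the $n^2$ pairs, is what forces the fourth-power threshold.

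The main obstacle I anticipate is making the single-pair anti-concentration bound fully rigorous given the Markov correlation \emph{within} a row. The cleanest route is to note that rows are i.i.d., so $\{Z_t\}$ are i.i.d., and to show $Z_t$ is genuinely supported on more than the single point $0$ — i.e. $\Pr(Z_t\neq 0)$ is bounded below by a constant independent of $n$. For two fixed columns $i,j$, the joint law of $(\tilde{D}^{(1)}_{t,i},\tilde{D}^{(1)}_{t,j})$ is obtained from the $(|i-j|)$-step transition matrix $\mathbf{P}^{|i-j|}$ collapsed to the binary alphabet; I would show that for all separations the probability of the two collapsed bits disagreeing is bounded away from $0$, using the explicit form $\mathbf{P}^{k}=\gamma^{k}\mathbf{I}+(1-\gamma^{k})\mathbf{U}$ from \eqref{eq:Ppower}, so that the off-diagonal mass never vanishes. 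Once $\Pr(Z_t\neq 0)\ge c>0$ uniformly, the variance of $\sum Z_t$ grows linearly in $m_n$ and the lattice structure (span $1$) lets a local CLT deliver the $O(1/\sqrt{m_n})$ point-mass bound uniformly over the pair $(i,j)$; I would then conclude as above. A secondary technical point is that the worst case over separations $|i-j|$ must be handled uniformly, but since $\gamma^{|i-j|}\in(0,1)$ stays bounded away from the degenerate endpoints for the relevant contribution, a single constant $c$ suffices and the union bound goes through unchanged.
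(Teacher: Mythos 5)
Your proposal is correct and reaches the same $m_n=\omega(n^4)$ threshold, but it takes a genuinely different route from the paper. You reduce the pairwise collision probability to an anti-concentration statement for the i.i.d.\ row-wise differences $Z_t=\mathbbm{1}_{[\tilde{D}^{(1)}_{t,i}=2]}-\mathbbm{1}_{[\tilde{D}^{(1)}_{t,j}=2]}\in\{-1,0,1\}$, invoking a local CLT / concentration-function bound (Kolmogorov--Rogozin or Esseen) to get $\Pr(\sum_t Z_t=0)=O(1/\sqrt{m_n})$ once $\Pr(Z_t\neq 0)\ge c>0$ uniformly; the latter follows cleanly from $\mathbf{P}^k=\gamma^k\mathbf{I}+(1-\gamma^k)\mathbf{U}$, which gives disagreement probability $2u_1(1-u_1)(1-\gamma^k)\ge 2u_1(1-u_1)(1-\gamma)$ for all $k\ge 1$. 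The paper instead computes $\Pr(\tilde{H}^{(1)}_1=\tilde{H}^{(1)}_{s+1})$ explicitly by conditioning on the value $r$, writing the conditional law as a sum of two independent binomials, and then grinding through Stirling's approximation, a typical/atypical split of the sum over $r$, and Pinsker's inequality, arriving at $O(n^2 m_n^{-1/2}V_n^2)$ for a slowly growing $V_n$. Your argument is shorter, loses no logarithmic-type factor (you get $O(n^2 m_n^{-1/2})$ outright), and isolates the real mechanism --- a lattice random walk of $m_n$ nondegenerate i.i.d.\ steps puts mass $O(m_n^{-1/2})$ on any single point --- at the cost of importing a concentration-function inequality that the paper's more elementary (if laborious) computation avoids. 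One small slip: your parenthetical that ``the marginal collapsed-bit probabilities at columns $i$ and $j$ differ'' is false --- by stationarity both marginals equal $\tilde{\pi}$ --- but as you yourself note this is irrelevant, since the concentration bound needs only $\Pr(Z_t\neq 0)\ge c$, not any asymmetry or zero-mean condition.
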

\begin{proof}
See Appendix~\ref{proof:histogram}.
\end{proof}

When the databases are not collapsed, the order relation given in Lemma~\ref{lem:histogram} can be tightened. See Section~\ref{subsec:zerorateWm} for more details.

Note that by Definition~\ref{defn:dbgrowthrate}, the row size $m_n$ is exponential in the column size $n$ and the order relation of Lemma~\ref{lem:histogram} is automatically satisfied. 

Next, we present the proof of the achievability part of Theorem~\ref{thm:noiselesscapacityWm} via Algorithm~\ref{alg:noiselessidenticalrepetition}.

\begin{algorithm}[H]
\begin{singlespace}
\caption{Typicality-Based Matching Scheme (Identical Repetition, Noiseless Setting)}\label{alg:noiselessidenticalrepetition}
\Input{$(\mathbf{X},\mathbf{Y},\mathbf{P},p_S,\epsilon)$}
\Output{$\hat{\sigma}_n$}
$(\tilde{\mathbf{X}},\tilde{\mathbf{Y}})\gets$ CollapseDatabases($\mathbf{X},\mathbf{Y}$)\Comment*[r]{Eq.~\eqref{eq:collapsedb}.}
$(\tilde{\mathbf{H}}^{(1)},\tilde{\mathbf{H}}^{(2)})\gets$ ColumnHistograms($\tilde{\mathbf{X}},\tilde{\mathbf{Y}}$)\Comment*[r]{Eq.~\eqref{eq:histogramdefn}.}
\blue{/* Histogram-based repetition detection */}\\
\For{$i=1$ \KwTo \textup{columnSize($\tilde{\mathbf{H}}^{(1)}$)}}{
count$\gets 0$\;
\For{$j=1$ \KwTo \textup{columnSize($\tilde{\mathbf{H}}^{(2)}$)}}{
\If{$\tilde{\mathbf{H}}^{(2)}[:][j]=\tilde{\mathbf{H}}^{(1)}[:][i]$}{
count$\gets$ count + 1\;
}
}
$\hat{S}[i]\gets $ count\;
}
\blue{/* Erasure symbol addition \& Extra replica removal */}\\
\For{$j=1$ \KwTo \textup{columnSize(}$\mathbf{X}$\textup{)}}{
\eIf{$\hat{S}[j]=0$}{
$\bar{\mathbf{Y}}[:][j]\gets\ast$\;
}{
$\bar{\mathbf{Y}}[:][j]\gets \mathbf{X}[:][j]$\;
}
}

\blue{/* Typicality matching w.r.t. erasure channel */}\\
\For{$i = 1$ \KwTo  \textup{rowSize($\mathbf{X}$)}}{
\textup{count}$\gets 0$\;
\For{$j = 1$ \KwTo  \textup{rowSize($\mathbf{Y}$)}}{
  \If{\textup{isJointlyTypical2(}$\mathbf{X}[i][:],\bar{\mathbf{Y}}[j][:],\mathbf{P},p_S,\epsilon$\textup{)}}{
    $\hat{\sigma}_n[i] \gets$ j\;
    count$\gets$ count + 1\;
  }
  
}
\blue{/* count = 0: no row in $\bar{\mathbf{Y}}$ jointly typical with $\mathbf{X}[i][:]$. ERROR! */}\\
\blue{/* count $>$ 1: multiple rows in $\bar{\mathbf{Y}}$ jointly typical with $\mathbf{X}[i][:]$. ERROR! */}\\
\If{\textup{count} $\neq 1$}{
$\hat{\sigma}_n[i] \gets 0$\Comment*[r]{Matching error.}
}
}
\end{singlespace}
\end{algorithm}

\begin{proofachievablenoiseless}
Let $S^n$ be the underlying repetition pattern and ${K_n}\triangleq\sum_{j=1}^n S_j$ be the number of columns in $\mathbf{Y}$. Our matching scheme consists of the following steps:
\begin{enumerate}[label=\textbf{\arabic*)},leftmargin=1.3\parindent]
    \item Construct the collapsed histogram vectors $\tilde{{H}}^{(1),n}$ and $\tilde{{H}}^{(2),{K_n}}$ as in~\eqref{eq:histogramdefn}.
\item Check the uniqueness of the entries $\tilde{H}^{(1)}_j$ $j\in[n]$ of $\tilde{{H}}^{(1),n}$. If there are at least two that are identical, declare a \emph{detection error} whose probability is denoted by $\mu_n$. Otherwise, proceed with Step~3.
\item If $\tilde{H}^{(1)}_j$ is absent in $\tilde{{H}}^{(2),{K_n}}$, declare it deleted, assigning $\hat{S}_j=0$. Note that, conditioned on the uniqueness of the column histograms $\tilde{H}^{(1)}_j$ $\forall j\in[n]$, this step is error-free.
\item If $\tilde{H}^{(1)}_j$ is present $s\ge 1$ times in $\tilde{{H}}^{(2),{K_n}}$ , assign $\hat{S}_j=s$. Again, if there is no detection error in Step~2, this step is error-free. Note that at the end of this step, provided there are no detection errors, we recover $S^n$, \emph{i.e.}, $\hat{{S}}^n={S}^n$.
\item Based on $\hat{{S}}^n$, $\mathbf{X}$ and $\mathbf{Y}$, construct $\bar{\mathbf{Y}}$ as the following:
\begin{itemize}
    \item If $\hat{S}_j = 0$, the $j$\textsuperscript{th} column of $\bar{\mathbf{Y}}$ is a column consisting of erasure symbol $\ast\notin\mathfrak{X}$.
    \item If $\hat{S}_j \ge 1$, the $j$\textsuperscript{th} column of $\bar{\mathbf{Y}}$ is the $j$\textsuperscript{th} column of $\mathbf{X}$.
\end{itemize}
Note that after the removal of the additional replicas and the introduction of the erasure symbols, $\bar{\mathbf{Y}}$ has $n$ columns.
\item Fix $\epsilon>0$. Let $q_{\bar{Y}|X}$ be the probability transition matrix of an erasure channel with erasure probability $\delta$, that is $\forall (x,\bar{y})\in\mathfrak{X}\times(\mathfrak{X}\cup \{\ast\})$
\iftoggle{singlecolumn}{
\begin{align}
    q_{\bar{Y}|X}(\bar{y}|x) &= \begin{cases}
    1-\delta &\text{if }\bar{y}=x\\
    \delta &\text{if }\bar{y}=\ast
    \end{cases}. \label{eq:erasure}
\end{align}
}{
\begin{align}
    q_{\bar{Y}|X}(\bar{y}|x) &= \begin{cases}
    1-\delta &\text{if }\bar{y}=x\\
    \delta &\text{if }\bar{y}=\ast
    \end{cases}. \label{eq:erasure}
\end{align}
}
We consider the input to the memoryless erasure channel as the $i$\textsuperscript{th} row $X^n_i$ of $\mathbf{X}$. The output $\bar{Y}^n$ is the matching row of $\bar{\mathbf{Y}}$. For our row matching algorithm, we match the $l$\textsuperscript{th} row $\bar{{Y}}^n_{l}$ of $\bar{\mathbf{Y}}$ with the $i$\textsuperscript{th} row $X^n_i$ of $\mathbf{X}$, if $X^n_i$ is the only row of $\mathbf{X}$ jointly $\epsilon$-typical~\cite[Chapter 3]{cover2006elements} with $\bar{{Y}}^n_l$ with respect to $p_{X^n,Y^n}$, where
\iftoggle{singlecolumn}{
\begin{align}
    p_{X^n,\bar{Y}^n}(x^n,\bar{y}^n) &= p_{X^n}(x^n) \prod\limits_{j=1}^n q_{Y|X}(\bar{y}_j|x_j)\label{eq:markovinput}
\end{align}
}{
\begin{align}
    p_{X^n,\bar{Y}^n}(x^n,\bar{y}^n) &= p_{X^n}(x^n) \prod\limits_{j=1}^n q_{Y|X}(\bar{y}_j|x_j)\label{eq:markovinput}
\end{align}
}
where $X^n$ denotes the Markov chain of length $n$ with probability transition matrix $\mathbf{P}$. This results in $\hat\sigma_n(i)=l$. 
Otherwise, declare \emph{collision error}.
\end{enumerate}

Similar to~\eqref{eq:perrunion}, from the union bound and Proposition~\ref{prop:jointAEP}, the total probability of error of this scheme can be bounded for large $n$ as follows
\iftoggle{singlecolumn}{
\begin{align}
    P_e &\le \mu_n + \epsilon + 2^{n(R-\bar{I}(X;\bar{Y})+3\epsilon)}
\end{align}
}{
\begin{align}
    P_e &\le \mu_n + \epsilon + 2^{n(R-\bar{I}(X;\bar{Y})+3\epsilon)}
\end{align}
}

Since $m_n$ is exponential in $n$, by Lemma~\ref{lem:histogram}, ${\mu_n\to0}$ as ${n\to\infty}$. Thus
\iftoggle{singlecolumn}{
\begin{align}
    P_e&< 3 \epsilon \text{ as }n\to\infty
\end{align}
}{
\begin{align}
    P_e&< 3 \epsilon \text{ as }n\to\infty
\end{align}
}
if 
$R<\bar{I}(X;\bar{Y})-3\epsilon$. Thus, we can argue that any database growth rate $R$ satisfying
\iftoggle{singlecolumn}{
\begin{align}
    R&<\bar{I}(X;\bar{Y})\label{eq:achievable}
\end{align}
}{
\begin{align}
    R&<\bar{I}(X;\bar{Y})\label{eq:achievable}
\end{align}
}
is achievable, by taking $\epsilon$ small enough. From~\cite[Corollary II.2]{li2014input} we have
\iftoggle{singlecolumn}{
\begin{align}
    \bar{I}(X;\bar{Y})&=(1-\delta)^2 \sum\limits_{r=0}^\infty \delta^r H(X_0|X_{-r-1}) \label{eq:MIrate}
\end{align}
}{
\begin{align}
    \bar{I}(X;\bar{Y})&=(1-\delta)^2 \sum\limits_{r=0}^\infty \delta^r H(X_0|X_{-r-1}) \label{eq:MIrate}
\end{align}
}
where $H(X_0|X_{-r-1})$ is the conditional entropy associated with the probability transition matrix $\mathbf{P}^{r+1}$. 

Now, we argue that \eqref{eq:Ppower} can be proven via induction on $r$ by taking \eqref{eq:markovtransitionmatrix} as a base case and observing that $\mathbf{U}^2 = \mathbf{U}$. Finally, plugging $\pi$ and $\mathbf{P}^{r+1}$ directly into~\cite[Theorem 4.2.4]{cover2006elements} yields \eqref{eq:thm2eval}, concluding the achievability part of the proof.
\end{proofachievablenoiseless}

Next, we move on to prove the converse part of Theorem~\ref{thm:noiselesscapacityWm}.
\begin{proofconversenoiseless}
Since the converse part of Theorem~\ref{thm:mainresultWm} holds for any seed size $\Lambda_n$, in the noiseless setting, we trivially have
\iftoggle{singlecolumn}{
\begin{align}
    C&\le \lim\limits_{n\to\infty} \frac{I(X^n;Y^{K_n},S^n)}{n}.
\end{align}
}{
\begin{align}
    C&\le \lim\limits_{n\to\infty} \frac{I(X^n;Y^{K_n},S^n)}{n}.
\end{align}
}

Next, note that there is a bijective mapping between $(Y^{K_n},S^n)$ and $(\bar{{Y}}^n,{S}^n)$. Therefore, we have
\iftoggle{singlecolumn}{
\begin{align}
    I({X}^n;{Y}^{K_n},{S}^n) &= I({X}^n;\bar{{Y}}^n,{S}^n)\label{eq:noadditionalinfo1}\\
    &= I({X}^n;\bar{{Y}}^n) + I({X}^n;{S}^n|\bar{{Y}}^n)\\
    &= I({X}^n;\bar{{Y}}^n) \label{eq:noadditionalinfo2}
\end{align}
}{
\begin{align}
    I({X}^n;{Y}^{K_n},{S}^n) &= I({X}^n;\bar{{Y}}^n,{S}^n)\label{eq:noadditionalinfo1}\\
    &= I({X}^n;\bar{{Y}}^n) + I({X}^n;{S}^n|\bar{{Y}}^n)\\
    &= I({X}^n;\bar{{Y}}^n) \label{eq:noadditionalinfo2}
\end{align}
}
where \eqref{eq:noadditionalinfo2} follows from the independence of ${S}^n$ and ${X}^n$ conditioned on $\bar{{Y}}^n$. This is because since $\bar{{Y}}^n$ is stripped of all extra replicas, from $(X^n,\bar{{Y}}^n)$ we can only infer the zeros of $S^n$, which is already known through $\bar{{Y}}^n$ via erasure symbols. Thus, we have
\iftoggle{singlecolumn}{
\begin{align}
    C &\le \bar{I}(X;\bar{Y})
\end{align}
}{
\begin{align}
    C &\le \bar{I}(X;\bar{Y})
\end{align}
}
where $\bar{I}(X;\bar{Y})$ is defined in~\eqref{eq:MIrate}, concluding the proof of the converse part.
\end{proofconversenoiseless}

The runtimes of the histogram-based detection algorithm and the typicality-based matching algorithm (Algorithm~\ref{alg:noiselessidenticalrepetition}) are $O(m_n n)$ and $O(m_n^2 n)$, respectively.

\section{Matching Capacity For Independent Repetition}
\label{sec:matchingcapacityW1}

In this section, we investigate the upper and the lower bounds on the matching capacity $C$ for independent repetition, where we assume a repetition pattern that is independent across all rows. For tractability, we focus on the special case where $\gamma=0$, resulting in an \emph{i.i.d.} database distribution $p_X(x)=u_x$, $\forall x\in\mathfrak{X}$.

We state our main result on the matching capacity for independent repetition in the following theorem: 
\begin{thm}{\textbf{(Matching Capacity Bounds for Independent Repetition})}\label{thm:mainresultW1}
Consider a probability transition matrix $\mathbf{P}$ with $\gamma=0$, a noise distribution $p_{Y|X}$ and a repetition distribution $p_S$.  Then the matching capacity satisfies
\iftoggle{singlecolumn}{
\begin{align}
    C&\ge\left[\frac{\mathbb{E}[S]}{s_{\max}} H(X)-(1-\alpha\delta)H_b\left( \frac{\mathbb{E}[S]}{(1-\alpha\delta)s_{\max}}\right)-\mathbb{E}[S]H(X|Y)\right]^+\label{eq:rowwiseachievable}\\
    C&\le \inf\limits_{n\ge 1}\frac{1}{n} I({X}^n;{Y}^{K_n},{A}^n)\label{eq:rowwiseconverse}
\end{align}
}{
\begin{align}
    C&\ge\Big[\frac{\mathbb{E}[S]}{s_{\max}} H(X)-\mathbb{E}[S]H(X|Y)\notag\\&\hspace{2em}-(1-\alpha\delta)H_b\left( \frac{\mathbb{E}[S]}{(1-\alpha\delta)s_{\max}}\right)\Big]^+\label{eq:rowwiseachievable}\\
    C&\le \inf\limits_{n\ge 1}\frac{1}{n} I({X}^n;{Y}^{K_n},{A}^n)\label{eq:rowwiseconverse}
\end{align}
}
where
\begin{align}
    \Pr(Y^{K_n} = y^{K_n}|X^n=x^n)
    &= \prod\limits_{j:S_j\neq 0} \Pr((Y_{K_{j-1}+1},\dots,Y_{K_j})=(y_{K_{j-1}+1},\dots,y_{K_j})|X_{j}=x_j)\\
    &= \prod\limits_{j:S_j\neq 0} \prod\limits_{s=1}^{S_j} p_{Y|X}(y_{K_{j-1}+s}|x_j),
\end{align}
 $K_{j}\triangleq \sum\limits_{t=1}^j S_{t}$
, and $\delta$ and $\alpha$ are the deletion and the deletion detection probabilities, respectively and $s_{\max}\triangleq \max \mathrm{supp}(p_S)$.
Furthermore, for repetition distributions with $\frac{1}{s_{\max}}\mathbb{E}[S]\ge \frac{1-\alpha\delta}{|\mathfrak{X}|}$, the lower bound in equation~\eqref{eq:rowwiseachievable} can be tightened as
\iftoggle{singlecolumn}{
\begin{align}
     C\ge\Big[(1-\alpha\delta) H(X)-\Big(1&-\alpha\delta-\frac{\mathbb{E}[S]}{s_{\max}} \Big)\min\{H(X),\log(|\mathfrak{X}|-1)\}\notag\\
     &-(1-\alpha\delta)H_b\left( \frac{\mathbb{E}[S]}{(1-\alpha\delta)s_{\max}}\right)-\mathbb{E}[S]H(X|Y)\Big]^+\label{eq:rowwiseachievable2}
\end{align}
}{
\begin{align}
     C&\ge\Big[(1-\alpha\delta) H(X)-\mathbb{E}[S]H(X|Y)\notag\\
     &-\Big(1-\alpha\delta-\frac{\mathbb{E}[S]}{s_{\max}} \Big)\min\{H(X),\log(|\mathfrak{X}|-1)\}\notag\\
     &-(1-\alpha\delta)H_b\left( \frac{\mathbb{E}[S]}{(1-\alpha\delta)s_{\max}}\right)\Big]^+\label{eq:rowwiseachievable2}
\end{align}
}
\end{thm}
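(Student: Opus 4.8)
The plan is to prove the converse and the two achievability bounds separately. For the converse I would reuse the genie-aided Fano argument from the converse of Theorem~\ref{thm:mainresultWm}, but reveal only the available side information $A^n$ rather than the full pattern $S^n$ (which, unlike the identical-repetition case, cannot be inferred per row when $d_{\text{rep}}=0$). Since both the rows of $\mathbf{D}^{(1)}$ and their repetition patterns are independent across rows, Fano's inequality together with the uniformity of $\boldsymbol{\Theta}_n$ and the conditional independence of non-matching rows factorizes the mutual-information term into $m_n$ identical contributions $I(X^n;Y^{K_n},A^n)$; dividing by $m_n n$ and invoking Stirling's approximation exactly as in \eqref{eq:converseWmlast} gives $R\le \tfrac1n I(X^n;Y^{K_n},A^n)$ for the actual column size $n\to\infty$. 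To upgrade this to the infimum form, I would exploit that the per-row channel is memoryless across columns ($X$ i.i.d.\ for $\gamma=0$, $S$ i.i.d.): revealing to the receiver the boundary $K_{n_1}=\sum_{j\le n_1}S_j$ between two consecutive column-blocks can only increase mutual information and, because that boundary is independent of $X^n$, splits it additively, yielding the subadditivity $I(X^{n_1+n_2};\cdot)\le I(X^{n_1};\cdot)+I(X^{n_2};\cdot)$. Fekete's lemma then converts the limit into $\inf_{n\ge1}\tfrac1n I(X^n;Y^{K_n},A^n)$, matching \eqref{eq:rowwiseconverse}.

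For achievability I would recast row matching as decoding over the per-row deletion/replication-plus-noise channel, treating the $2^{nR}$ rows of $\mathbf{D}^{(1)}$ as a random i.i.d.\ codebook, the observed row $Y^{K_n}$ of $\mathbf{D}^{(2)}$ as the channel output, and $A^n$ as decoder side information. The matching rule declares $X^n_i$ the match of a given output row if it is the \emph{unique} row jointly $\epsilon$-typical with that output under \emph{some} repetition pattern compatible with $A^n$ (honoring the revealed deletions, the support bound $S_j\le s_{\max}$, and the observed length $K_n$). By the law of large numbers and the generalized AEP~\cite[Proposition 3]{shirani8849392} the true row is typical under the true pattern with high probability, so the dominant error event is that an independent wrong row is jointly typical with the output under some compatible pattern.

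The heart of the argument is the ensuing union bound, taken over the $2^{nR}$ wrong rows and over the set of compatible patterns. I would bound the number of compatible patterns by a binomial count of order $2^{n(1-\alpha\delta)H_b(\mathbb{E}[S]/((1-\alpha\delta)s_{\max}))}$ --- choosing which $\tfrac{\mathbb{E}[S]}{s_{\max}}n$ of the $(1-\alpha\delta)n$ not-known-deleted columns carry the output mass --- which is exactly the $H_b$ penalty in \eqref{eq:rowwiseachievable}; the probability that a fixed wrong row matches a fixed hypothesized output decays on the order of $2^{-n(\frac{\mathbb{E}[S]}{s_{\max}}H(X)-\mathbb{E}[S]H(X|Y))}$, where $\tfrac{\mathbb{E}[S]}{s_{\max}}H(X)$ is the effective delivered entropy and the deliberately loose per-copy term $\mathbb{E}[S]H(X|Y)$ upper-bounds the confusion caused by noise across the $\mathbb{E}[S]n$ expected observations (pooling correlated copies would only improve it). Requiring the product of the codebook size, the pattern count, and the per-pair probability to vanish yields \eqref{eq:rowwiseachievable} after taking $\epsilon\to0$, and the $[\cdot]^+$ simply reflects that a non-positive bound is vacuous.

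For the tightened bound \eqref{eq:rowwiseachievable2} under $\tfrac{\mathbb{E}[S]}{s_{\max}}\ge\tfrac{1-\alpha\delta}{|\mathfrak{X}|}$, I would sharpen the confusable-sequence count by distinguishing, within each hypothesized pattern, output positions a wrong row must match against positions declared unfilled: a wrong symbol landing on an unfilled position is constrained to the remaining $|\mathfrak{X}|-1$ letters and thus contributes only $\log(|\mathfrak{X}|-1)$ instead of the full $H(X)$, which replaces $\tfrac{\mathbb{E}[S]}{s_{\max}}H(X)$ by $(1-\alpha\delta)H(X)-(1-\alpha\delta-\tfrac{\mathbb{E}[S]}{s_{\max}})\min\{H(X),\log(|\mathfrak{X}|-1)\}$; the stated regime is precisely the range in which this refined count dominates the crude one. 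The main obstacle throughout is this confusable-pattern enumeration: one must control the error exponent \emph{uniformly} over the exponentially many, data-dependent alignments permitted by $A^n$ and $K_n$, showing that after the union bound no wrong row survives while the true (row, pattern) pair remains typical. Unlike the identical-repetition proof, where detecting $S^n$ reduced matching to a fixed memoryless channel, here the alignment is never resolved, and it is exactly this averaging over candidate repetition patterns that produces the combinatorial $H_b$ and alphabet-dependent correction terms.
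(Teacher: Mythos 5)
Your proposal follows essentially the same route as the paper: the converse is the identical Fano-plus-marker-subadditivity-plus-Fekete argument, and your achievability union bound over wrong rows and compatible alignments reproduces exactly the paper's accounting (the paper phrases the alignment enumeration by \say{stretching} each row by $s_{\max}$ and counting supersequences via the Chv\'atal-type function $F(n,k,|\mathfrak{X}|)$, which yields precisely your binomial pattern count, the per-pattern exponent $\frac{\mathbb{E}[S]}{s_{\max}}H(X)-\mathbb{E}[S]H(X|Y)$, and the $(|\mathfrak{X}|-1)$-ary refinement). The one imprecision is that the condition $\frac{\mathbb{E}[S]}{s_{\max}}\ge\frac{1-\alpha\delta}{|\mathfrak{X}|}$ is what makes the supersequence-count upper bound \emph{valid} (a largest-binomial-term condition in the sum defining $F$), not the range where the refined count dominates the crude one; the theorem then takes the better of the two bounds via the $\min\{H(X),\log(|\mathfrak{X}|-1)\}$ term.
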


We note that the upper bound given in Theorem~\ref{thm:mainresultW1} (equation~\eqref{eq:rowwiseconverse}) is an infimum over the column size $n$. Therefore, its evaluation for any $n\in\mathbb{N}$ yields an upper bound on the matching capacity.

With independent repetition, we cannot perform repetition detection as in Section~\ref{sec:matchingcapacityWm}, and hence we are restricted to using a single-step rowwise matching scheme as done in~\cite{shirani8849392}. This builds an analogy between database matching and channel decoding. In particular, our approach to database matching for independent repetition is related to decoding in the noisy synchronization channel~\cite{cheraghchi2020overview}.

We stress that there are several important differences between the database matching problem and the synchronization channel literature: \emph{i)} In database matching the database distribution is fixed and cannot be designed or optimized, whereas in channel coding the main goal is to optimize the input distribution to find the channel capacity \emph{ii)} The synchronization channel literature mostly focuses on code design with few works, such as~\cite{diggavi1603788}, focusing on random codebook arguments for only a few types of synchronization errors such as deletion~\cite{diggavi1603788} and duplication~\cite{drinea2007improved} and finally \emph{iii)} Our database matching result provides an achievability argument for all repetition distributions with finite support, whereas the synchronization channel literature mainly focuses on some families of repetition distributions. As a result, for input-constrained noisy synchronization channels, our generalized random codebook argument, presented in Section~\ref{subsec:achievabilityW1}, is novel and might be of independent interest.

In Section~\ref{subsec:achievabilityW1}, we prove the achievability part of Theorem~\ref{thm:mainresultW1} (equation~\eqref{eq:rowwiseachievable}) by proposing a rowwise matching scheme. Then, in Section~\ref{subsec:converseW1} we prove the converse part (equation~\eqref{eq:rowwiseconverse}). Then, we present strictly tighter upper bounds for a special case with only deletions, \emph{i.e.,} when $s_{\max}=1$.

\subsection{Row Matching Scheme and Achievability}
\label{subsec:achievabilityW1}
To prove the achievability, we consider the following matching scheme, also given in Algorithm~\ref{alg:rowwisematchingscheme}:

\begin{algorithm}[t]
\caption{Typicality-Based Matching Scheme (Independent Repetition)}\label{alg:rowwisematchingscheme}
\Input{$(\mathbf{X},\mathbf{Y},\mathbf{A},p_X,p_{Y|X},p_S,\epsilon)$}
\Output{$\hat{\sigma}_n$}

\For{$j = 1$ \KwTo  \textup{rowSize($\mathbf{Y}$)}}{
\textup{count}$\gets 0$\;
\blue{/* Remove revealed deleted columns */}\\
\For{$i=1$ \KwTo \textup{columnSize($\mathbf{A}$)}}{
\eIf{$\mathbf{A}[j][i]=0$}{
$\bar{\mathbf{X}}[:][i] \gets \mathbf{X}[:][i]$\;
}{
$\bar{\mathbf{X}}[:][i] \gets []$\;
}
}
\blue{/* Stretch $\bar{\mathbf{X}}$ $s_{\max}$ times */}\\
\For{$i=1$ \KwTo \textup{columnSize($\bar{\mathbf{X}}$)}}{
$\tilde{\mathbf{X}}[:][(i-1)s_{\max}+1:i s_{\max}]\gets \bar{\mathbf{X}}[:][i]$\;
}

\blue{/* Typical subsequence check (See Appendix~\ref{proof:achievabilityW1}). */}\\
\For{$i=1$ \KwTo \textup{rowSize($\tilde{\mathbf{X}}$)}}{
\If{\textup{isTypicalSubsequence(}$\tilde{\mathbf{X}}[i][:],\mathbf{Y}[j][:],\mathbf{P},p_{Y|X},p_S,\epsilon$\textup{)}}{
    $\hat{\sigma}^{-1}_n[j] \gets$ i\;
    count$\gets$ count + 1\;
  }
}

\blue{/* count = 0: $\mathbf{Y}[j][:]$ is not a typical subsequence of any $\tilde{\mathbf{X}}[i][:]$. ERROR! */}\\
\blue{/* count $>$ 1: $\mathbf{Y}[j][:]$ is a typical subsequence of multiple $\tilde{\mathbf{X}}[i][:]$. ERROR! */}\\
\If{\textup{count} $\neq 1$}{
$\hat{\sigma}^{-1}_n[j] \gets 0$\Comment*[r]{Matching error.}
}

}

\end{algorithm}

\begin{enumerate}[label=\textbf{\arabic*)},leftmargin=1.3\parindent]
    \item Given the $i$\textsuperscript{th} row $Y_i^{K_n}$ of $\mathbf{Y}$ and the corresponding row $A_i^n$ of the partial deletion location information $\mathbf{A}$, we discard the $j$\textsuperscript{th} column of $\mathbf{X}$ if $A_{i,j}=1$, $\forall j\in[n]$ to obtain $\bar{\mathbf{X}}$ since it does not offer any additional information due to the independent nature of the database entries.
    \item We convert the problem into a deletion-only one by elementwise repeating all the columns of $\bar{\mathbf{X}}$ $s_{\max}$ times, which we call \say{stretching by $s_{\max}$}, to obtain $\tilde{\mathbf{X}}$. At this step, $Y_i^{K_n}$ can be seen as the output of the noisy deletion channel where the $\sigma_n^{-1}(i)$\textsuperscript{th} row of $\tilde{\mathbf{X}}$ is the input.
    \item We perform a generalized version of the decoding algorithm introduced in~\cite{bakirtas2021database} for the noiseless deletions with deletion detection probability. Note that the latter itself is an extension of the one proposed in~\cite{diggavi1603788}. 
\end{enumerate} 

Observe that Algorithm~\ref{alg:rowwisematchingscheme} has a runtime of $O(m_n^2 n)$, similar to Algorithms~\ref{alg:identicalrepetitionmatching}-\ref{alg:noiselessidenticalrepetition}.

The full proof of the achievability part (equations~\eqref{eq:rowwiseachievable} and \eqref{eq:rowwiseachievable2}) via the matching scheme described above can be found in Appendix~\ref{proof:achievabilityW1}.

\begin{figure}[t]
\centerline{\includegraphics[width=0.6\textwidth]{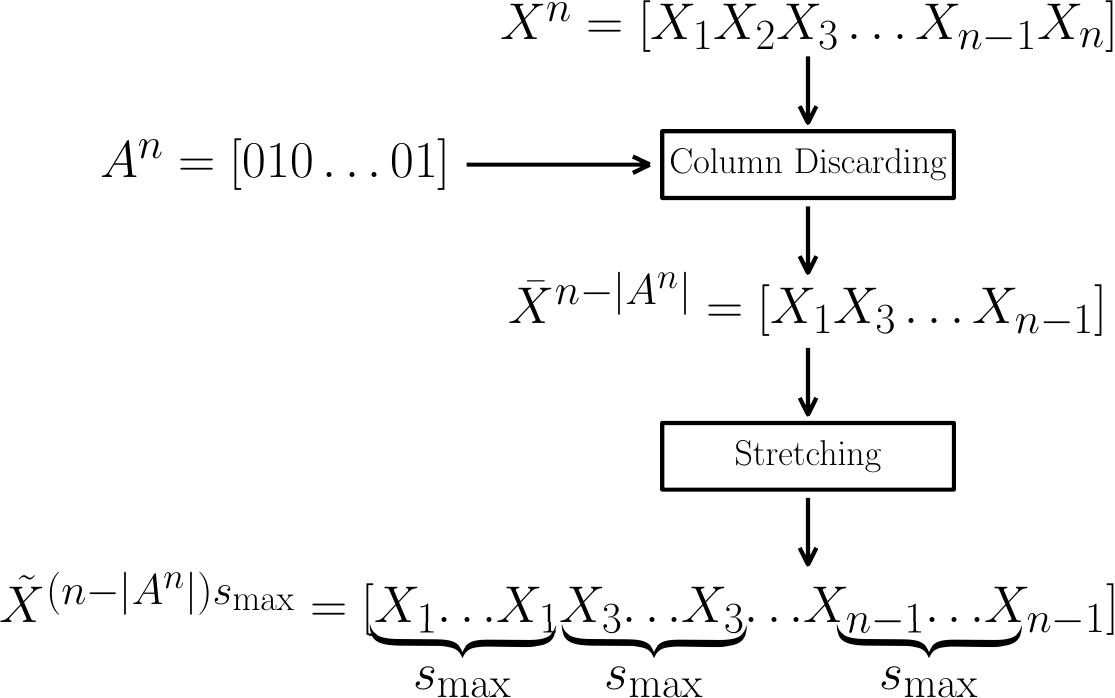}}
\caption{An illustrative example of the column discarding and the stretching of ${X}^n$ into $\tilde{{X}}^{(n-|{A}^n|) s_{\max}}$, for a given the deletion detection pattern ${A}^n$. First, we discard each know deleted element known from $X^n$ to obtain $\bar{{X}}^{n-|{A}^n|}$. Then, each element of $\bar{{X}}^{(n-|{A}^n|)}$ is repeated $s_{\max}$ times to obtain $\tilde{{X}}^{(n-|{A}^n|) s_{\max}}$.}
\label{fig:elemwiserep}
\end{figure}

An illustrative example of the \say{stretching} is given in Figure~\ref{fig:elemwiserep}. The idea behind this stretching is that since each entry can be repeated at most $s_{\max}$ times when we stretch $X^n$ $s_{\max}$ times to obtain $\tilde{X}^{n s_{\max}}$, the output of the synchronization channel (before the noise $p_{Y|X}$) is guaranteed to be a subsequence of $\tilde{X}^{n s_{\max}}$. This way, we can convert the general noisy synchronization problem into a noisy deletion-only problem. We note that when $s_{\max}$ becomes large compared to the alphabet size $|\mathfrak{X}|$, the lower bound given in~\eqref{eq:rowwiseachievable} goes to zero, even when $p_S(s_{\max})$ is very small.

Note that for any repetition structure, including the ones not considered in this work, one can simply ignore the underlying structure and apply the matching scheme described above. Therefore the achievable rate of Theorem~\ref{thm:mainresultW1} (equation~\eqref{eq:rowwiseachievable}) is achievable for any repetition structure.

\subsection{Converse}
\label{subsec:converseW1}
In this subsection, we prove the converse part of Theorem~\ref{thm:mainresultW1} and evaluate the given upper bound for some special cases. First, we observe that by following the genie argument provided in the converse of Theorem~\ref{thm:mainresultWm}, we can argue that Theorem~\ref{thm:mainresultWm} is an upper bound on $C$ for any $\alpha$ and for any repetition structure. 

We next prove the converse of Theorem~\ref{thm:mainresultW1} (equation \eqref{eq:rowwiseconverse}). We then analytically evaluate this for some $n\in\mathbb{N}$ and we argue that the evaluated upper bounds are strictly tighter than that in Theorem~\ref{thm:mainresultWm}.

\begin{proofconverseW1}
We start with the modified Fano's inequality used in Section~\ref{subsec:converseWm}. Let
\iftoggle{singlecolumn}{
\begin{align}
   P_e&\triangleq \Pr\left(\sigma_n(J)\neq\hat{\sigma}_n(J)\right),\hspace{1em} J\sim\text{Unif}([m_n])
\end{align}
}{
\begin{align}
   P_e&\triangleq \Pr\left(\sigma_n(J)\neq\hat{\sigma}_n(J)\right),\hspace{1em} J\sim\text{Unif}([m_n])
\end{align}
}
Then, we have
\iftoggle{singlecolumn}{
\begin{align}
    H(\sigma_n)\le 1+m_n P_e\log m_n+I&(\sigma_n;\mathbf{X},\mathbf{Y},\mathbf{A})
\end{align}
}{
\begin{align}
    H(\sigma_n)\le 1+m_n P_e\log m_n+I&(\sigma_n;\mathbf{X},\mathbf{Y},\mathbf{A})
\end{align}
}
where
\iftoggle{singlecolumn}{
\begin{align}
I(\sigma_n;\mathbf{X},\mathbf{Y},\mathbf{A})
    &= I(\sigma_n;\mathbf{X}|\mathbf{Y},\mathbf{A})\label{eq:converseW1first}\\
    &\le I(\sigma_n,\mathbf{Y},\mathbf{A};\mathbf{X})\label{eq:converseW1seedsnoinfo}\\
    &=\sum\limits_{i=1}^{m_n} I(X_i^n;Y_{\sigma_n(i)}^{K_n},A_{\sigma_n(i)}^n) \label{eq:converseW1indeprows}\\
    &= m_n  I(X^n;Y^{K_n},A^n)\label{eq:converseW1idrows}
\end{align}
}{
\begin{align}
I(\sigma_n;\mathbf{X},\mathbf{Y},\mathbf{A})
    &= I(\sigma_n;\mathbf{X}|\mathbf{Y},\mathbf{A})\label{eq:converseW1first}\\
    &\le I(\sigma_n,\mathbf{Y},\mathbf{A};\mathbf{X})\label{eq:converseW1seedsnoinfo}\\
    &=\sum\limits_{i=1}^{m_n} I(X_i^n;Y_{\sigma_n(i)}^{K_n},A_{\sigma_n(i)}^n) \label{eq:converseW1indeprows}\\
    &= m_n  I(X^n;Y^{K_n},A^n)\label{eq:converseW1idrows}
\end{align}
}
where \eqref{eq:converseW1indeprows} and \eqref{eq:converseW1idrows} follow from the fact that non-matching rows and their corresponding probabilistic side information on deletion locations are respectively independent and identically distributed.
Following similar steps to Section~\ref{subsec:converseWm}, we obtain
\iftoggle{singlecolumn}{
\begin{align}
    R&\le \lim\limits_{n\to\infty} \frac{I(X^n;Y^{K_n},A^n)}{n}
\end{align}
}{
\begin{align}
    R&\le \lim\limits_{n\to\infty} \frac{I(X^n;Y^{K_n},A^n)}{n}
\end{align}
}
whenever $P_e\to 0$ as $n\to\infty$.

Note that from Fekete's lemma~\cite{fekete1923verteilung}, for any subadditive sequence $\{a_n\}_{n\in\mathbb{N}}$, we have
\iftoggle{singlecolumn}{
\begin{align}
    \lim\limits_{n\to\infty}\frac{a_n}{n} = \inf\limits_{n\ge 1}\frac{a_n}{n}
\end{align}
}{
\begin{align}
    \lim\limits_{n\to\infty}\frac{a_n}{n} = \inf\limits_{n\ge 1}\frac{a_n}{n}
\end{align}
}
Therefore, it is sufficient to prove the subadditivity of $I(X^n;Y^{K_n},A^n)$. 

Choose an arbitrary $r\in[n-1]$ and let $M_r\triangleq \sum_{j=1}^r S_j$ where $S^n$ is the repetition pattern through which $Y^{K_n}$ is obtained from $X^n$. Note that $M_r$ denotes a marker, stating which part of $Y^{K_n}$ depends on the first $r$ elements of $X^n$, denoted by $X_1^r$. Therefore we have a bijective relation between $(Y^{K_n},M_r)$ and $(Y_1^{\sum_{j=1}^r S_j},Y_{\sum_{j=1}^r S_j +1}^{K_n})$ where the subscripts and the superscripts denote the starting and the ending points of the vectors, respectively. Thus,
\iftoggle{singlecolumn}{
\begin{align}
    I(X^n;Y^{K_n},A^n)&\le I(X^n;Y^{K_n},M_r,A^n)\\
    &= I(X^n;Y_1^{\sum_{j=1}^r S_j},Y_{\sum_{j=1}^r S_j +1}^{K_n},A^n)\\
    &= I(X_1^r,X_{r+1}^n;Y_1^{\sum_{j=1}^r S_j},Y_{\sum_{j=1}^r S_j +1}^{K_n},A_1^r,A_{r+1}^n)\\
    &= I(X_1^r;Y_1^{\sum_{j=1}^r S_j},A_1^r) +I(X_{r+1}^n;Y_{\sum_{j=1}^r S_j +1}^{K_n},A_{r+1}^n)\label{eq:subadditivityupperbound}
\end{align}
}{
\begin{align}
    I(&X^n;Y^{K_n},A^n)\notag\\&\le I(X^n;Y^{K_n},M_r,A^n)\\
    &= I(X^n;Y_1^{\sum_{j=1}^r S_j},Y_{\sum_{j=1}^r S_j +1}^{K_n},A^n)\\
    &= I(X_1^r,X_{r+1}^n;Y_1^{\sum_{j=1}^r S_j},Y_{\sum_{j=1}^r S_j +1}^{K_n},A_1^r,A_{r+1}^n)\\
    &= I(X_1^r;Y_1^{\sum_{j=1}^r S_j},A_1^r) +I(X_{r+1}^n;Y_{\sum_{j=1}^r S_j +1}^{K_n},A_{r+1}^n)\label{eq:subadditivityupperbound}
\end{align}
}
where \eqref{eq:subadditivityupperbound} follows from the fact that $X^n$ and $A^n$ have \emph{i.i.d.} entries and the noise $p_{Y|X}$ acts independently on the entries. Thus, $I(X^n;Y^{K_n},A^n)$ is a subadditive sequence.
Hence,
\iftoggle{singlecolumn}{
\begin{align}
    R&\le \inf\limits_{n\ge 1} \frac{I(X^n;Y^{K_n},A^n)}{n}\label{eq:W1infimum}
\end{align}
}{
\begin{align}
    R&\le \inf\limits_{n\ge 1} \frac{I(X^n;Y^{K_n},A^n)}{n}\label{eq:W1infimum}
\end{align}
}
whenever $P_e\to 0$ as $n\to\infty$, concluding the proof.
\end{proofconverseW1}

We note that since the upper bound given in Theorem~\ref{thm:mainresultW1} is the infimum over all $n\ge1$, its evaluation at any $n\in\mathbb{N}$ yields an upper bound on the matching capacity. In Corollaries~\ref{cor:W1conversenoiselessub} and \ref{cor:W1conversenoisybinaryub}, we analytically evaluate this upper bound at $n=2$ under some assumptions on $p_{X,Y}$ when $s_{\max}=1$, \emph{i.e.,} when we only have deletions, and explicitly demonstrate the gap between the upper bounds given in Theorem~\ref{thm:mainresultWm} and Theorem~\ref{thm:mainresultW1}.

First, we consider a noiseless deletion setting with arbitrary database distribution $p_X$ in Corollary~\ref{cor:W1conversenoiselessub}.
\begin{sloppypar}
\begin{cor}{\textbf{(Upper Bound for Noiseless Deletion)}}\label{cor:W1conversenoiselessub}
    Consider a noiseless deletion setting where ${p_{Y|X}(y|x) = \mathbbm{1}_{[x=y]}}$, ${\forall(x,y)\in\mathfrak{X}^2}$ and ${S\sim\text{Bernoulli}(1-\delta)}$. Then for any input distribution $p_X$, we have
\iftoggle{singlecolumn}{
\begin{align}
    C &\le \frac{1}{2} I(X^2;Y^K,A^2)\\&= (1-\delta)H(X)-(1-\alpha)\delta(1-\delta)\left(1-\hat{q}\right)\label{eq:W1noiselessub}
\end{align}
}{
\begin{align}
    C &\le \frac{1}{2} I(X^2;Y^K,A^2)\\
    &= (1-\delta)H(X)-(1-\alpha)\delta(1-\delta)\left(1-\hat{q}\right)\label{eq:W1noiselessub}
\end{align}
}
where $\hat{q} \triangleq \sum_{x\in\mathfrak{X}} p_X(x)^2$.
\end{cor}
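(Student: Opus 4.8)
The starting point is the converse bound of Theorem~\ref{thm:mainresultW1}, namely $C(0,\alpha)\le\inf_{n\ge 1}\frac1n I(X^n;Y^{K_n},A^n)$, which upon specializing to $n=2$ already yields $C(0,\alpha)\le\frac12 I(X^2;Y^K,A^2)$. The task therefore reduces to evaluating $I(X^2;Y^K,A^2)$ in closed form under noiseless deletion. Since $\gamma=0$, the two entries of $X^2$ are i.i.d.\ with $H(X^2)=2H(X)$, so I would write $I(X^2;Y^K,A^2)=2H(X)-H(X^2\mid Y^K,A^2)$ and concentrate entirely on the posterior entropy $H(X^2\mid Y^K,A^2)$.

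To compute this, I would condition on the realization of the deletion pattern $S^2\in\{0,1\}^2$ together with the revealed-deletion vector $A^2$, partitioning the observation space into four qualitatively distinct regimes. When both entries survive ($S^2=(1,1)$, probability $(1-\delta)^2$) we observe $Y^K=(X_1,X_2)$, so $X^2$ is determined and contributes zero. When both are deleted ($S^2=(0,0)$, probability $\delta^2$) we observe $K=0$ and learn nothing about the symbols, contributing $2H(X)$. When exactly one entry is deleted \emph{and that deletion is revealed} through $A$ (total probability $2\delta(1-\delta)\alpha$), we learn the surviving symbol and the deleted position, leaving exactly $H(X)$ of uncertainty in the missing symbol.

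The crux — and the step I expect to be the main obstacle — is the remaining regime $K=1,\ A^2=(0,0)$, in which exactly one entry is deleted but the deletion is \emph{not} revealed, so we cannot tell which of the two positions the single observed symbol $y$ came from. Here I would compute the posterior on $X^2=(X_1,X_2)$ explicitly: by Bayes' rule the two hypotheses $S^2=(1,0)$ and $S^2=(0,1)$ are equally likely, and the posterior places mass $p_X(x)/2$ on each pattern $(y,x)$ with $x\neq y$ and on each $(x,y)$ with $x\neq y$, and mass $p_X(y)$ on $(y,y)$. A direct entropy computation on this posterior collapses to the clean expression $H(X^2\mid Y^K=(y),A^2=(0,0))=H(X)+\bigl(1-p_X(y)\bigr)$; equivalently, one bit of positional ambiguity plus $H(X)$ for the unseen symbol, minus the probability $p_X(y)$ that the ambiguity is spurious because both symbols happen to coincide with $y$.

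Finally I would assemble the four contributions, weighting the ambiguous regime by its probability $2\delta(1-\delta)(1-\alpha)p_X(y)$ and summing over $y$, which produces $2\delta(1-\delta)(1-\alpha)\bigl(H(X)+1-\hat q\bigr)$. Collecting terms, all the $H(X)$ pieces combine to $2\delta H(X)$ independently of $\alpha$, so that $H(X^2\mid Y^K,A^2)=2\delta H(X)+2\delta(1-\delta)(1-\alpha)(1-\hat q)$, and substituting back gives $\frac12 I(X^2;Y^K,A^2)=(1-\delta)H(X)-(1-\alpha)\delta(1-\delta)(1-\hat q)$, as claimed. The only delicate bookkeeping is handling the coincidence $x_1=x_2=y$ correctly in the ambiguous regime; this is precisely what yields the $-\hat q$ correction and separates this bound from the looser single-letter bound of Corollary~\ref{cor:converseWarbitrary}.
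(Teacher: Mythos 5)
Your proof is correct and follows essentially the same route as the paper's: evaluate the subadditive converse bound at $n=2$, condition on the deletion/revelation pattern, and resolve the one ambiguous regime ($K=1$, $A^2=00$) via the same posterior computation yielding $H(X)+1-p_X(y)$, whence the $-\hat q$ correction. The only cosmetic difference is that the paper first strips the revealed-deleted entries to form $\tilde X$ and computes $H(\tilde X\mid A^2)-H(\tilde X\mid Y,A^2)$, while you work directly with $H(X^2)-H(X^2\mid Y^K,A^2)$; these are equivalent because the discarded entries are independent of the observations.
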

\end{sloppypar}
\begin{proof}
See Appendix~\ref{proof:rowwiseconversenoiselessub}.
\end{proof}
Note that for any $\mathfrak{X}$ with $|\mathfrak{X}|\ge 2$ and $\alpha\in[0,1)$ the upper bound given in Corollary~\ref{cor:W1conversenoiselessub} is strictly lower than the one provided in Theorem~\ref{thm:mainresultWm} which is
\iftoggle{singlecolumn}{
\begin{align}
    I(X;Y,S) = (1-\delta) H(X).
\end{align}
}{
\begin{align}
    I(X;Y,S) = (1-\delta) H(X).
\end{align}
}

\begin{figure}[t]
\centerline{\includegraphics[width=0.60\textwidth]{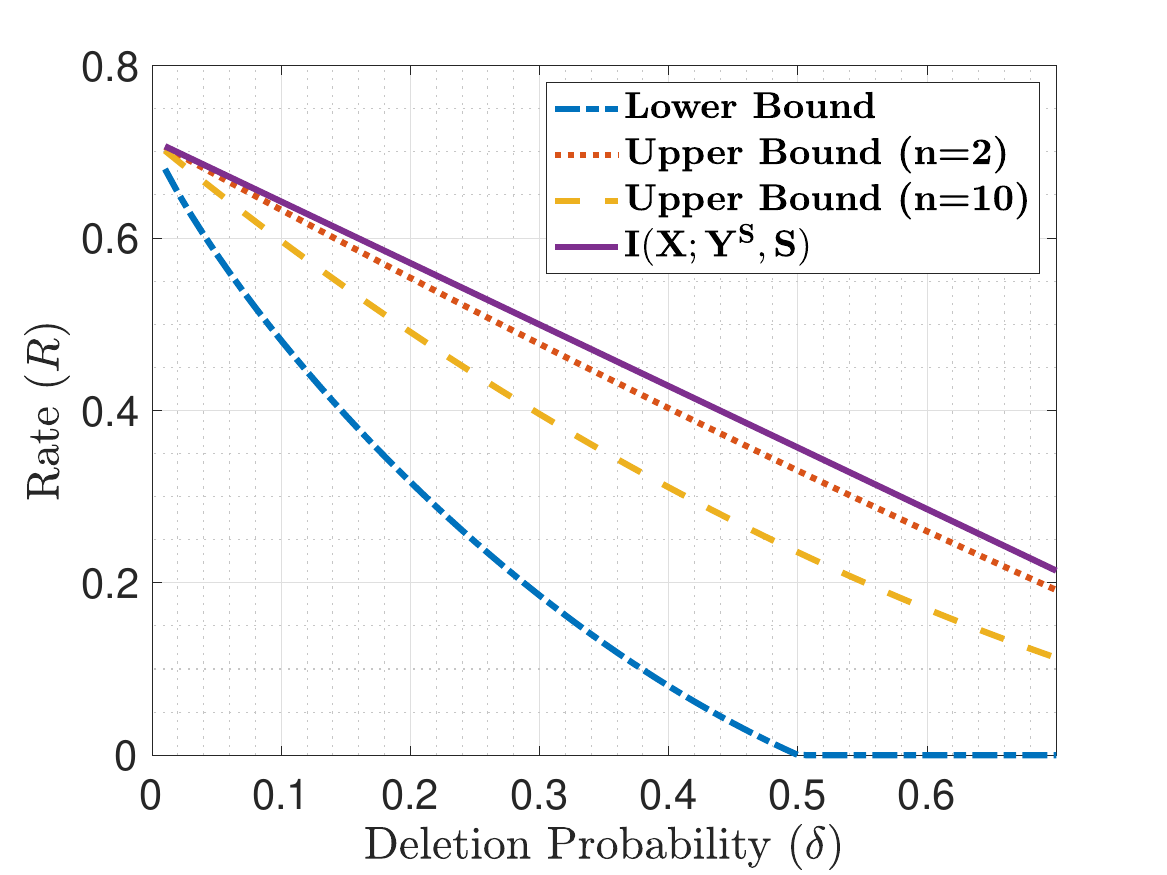}}\caption{The evaluation of the lower and upper bounds on the matching capacity for the binary noisy deletion case with  $p_X\sim$Bernoulli$(\nicefrac{1}{2})$, $p_{S}\sim \text{Bernoulli}(1-\delta)$, $\alpha=0.7$ and $p_{Y|X}\sim \text{BSC}(0.05)$. The dash-dotted (blue) curve is the achievable rate stated in Theorem~\ref{thm:mainresultW1}. The dashed (yellow) and the dotted (red) curves are the evaluations of the upper bound stated in Theorem~\ref{thm:mainresultW1}, at $n=10$ and $n=2$, respectively. The solid (purple) curve shows the loose upper bound given in Theorem~\ref{thm:mainresultWm}. We see that the gap between the lower and the upper bounds shrinks as $n$ increases.}
\label{fig:entryrates}
\end{figure}

Next, we consider a noisy deletion setting with binary $X$ and arbitrary noise $p_{Y|X}$ in Corollary~\ref{cor:W1conversenoisybinaryub}.

\begin{cor}{\textbf{(Upper Bound for Binary Noisy Deletion)}}\label{cor:W1conversenoisybinaryub}
Consider a binary noisy deletion setting where $X\sim\text{Bernoulli(p)}$ and ${S\sim\text{Bernoulli}(1-\delta)}$. Then, for any binary DMC $p_{Y|X}$, we have
\iftoggle{singlecolumn}{
\begin{align}
    C &\le \frac{1}{2} I(X^2;Y^K,A^2) \\&= (1-\delta) I(X;Y)-2 (1-\alpha) \delta(1-\delta)p(1-p) I(U;V) \label{eq:W1noisybinaryub}
\end{align}
}{
\begin{align}
    C &\le \frac{1}{2} I(X^2;Y^K,A^2)\\ & = (1-\delta) I(X;Y)\notag\\&\hspace{0.5em}-2 (1-\alpha) \delta(1-\delta)p(1-p) I(U;V) \label{eq:W1noisybinaryub}
\end{align}
}
where $U$ and $V$ are binary random variables with $U\sim\text{Bernoulli}(\nicefrac{1}{2})$ and ${p_{V|U}=p_{Y|X}}$.
\end{cor}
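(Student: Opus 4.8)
The plan is to specialize the general converse bound of Theorem~\ref{thm:mainresultW1}, namely $C(0,\alpha)\le\inf_{n\ge1}\frac1n I(X^n;Y^{K_n},A^n)$, to the single choice $n=2$, and to evaluate $\frac12 I(X^2;Y^K,A^2)$ exactly in the deletion-only regime $s_{\max}=1$. Since $\mathbf{A}$ is a function of the repetition pattern $S^2$, which is independent of $X^2$, we have $X^2\indep A^2$ and therefore $I(X^2;Y^K,A^2)=I(X^2;Y^K\mid A^2)$. First I would condition on the four possible values $A^2\in\{0,1\}^2$ and compute each conditional mutual information separately, weighting by $\Pr(A^2=a^2)$. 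As the two entries of $A$ are i.i.d. with $\Pr(A_i=1)=\alpha\delta$, these weights are simply products of $\alpha\delta$ and $1-\alpha\delta$.

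For the values of $A^2$ that reveal a deletion, the computation is direct. When $A^2=(1,1)$ both symbols are known to be deleted, so $Y^K$ is the empty string and the conditional mutual information vanishes. When exactly one coordinate is revealed as deleted, say $A^2=(1,0)$, the deleted symbol carries no information and the surviving coordinate is seen through an effective erasure: conditioned on $A_2=0$ it survives with probability $\frac{1-\delta}{1-\alpha\delta}$ and is otherwise silently deleted, so a short calculation yields $I(X^2;Y^K\mid A^2=(1,0))=\frac{1-\delta}{1-\alpha\delta}I(X;Y)$, and symmetrically for $(0,1)$.

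The crux is the case $A^2=(0,0)$, where nothing is revealed and the two coordinates face an effective i.i.d. deletion channel with deletion probability $\tilde\delta\triangleq\frac{\delta(1-\alpha)}{1-\alpha\delta}$. Splitting on $K=S_1+S_2$, the events $K=0$ and $K=2$ contribute cleanly (residual uncertainties $2H_b(p)$ and $2H(X\mid Y)$ respectively), but $K=1$ carries the positional ambiguity that is the whole difficulty of deletion channels: a single noisy symbol $Y$ is observed, equally likely to be the corrupted version of $X_1$ or of $X_2$. To handle this I would introduce the latent indicator $B\in\{1,2\}$ of which coordinate survived and expand $H(X_1,X_2\mid Y,K{=}1)$ via the chain rule on $B$. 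One checks that $Y\indep B$, since the surviving symbol has the marginal law $p_Y$ either way, giving $H(B\mid Y,K{=}1)=1$, and that $H(X_1,X_2\mid B,Y,K{=}1)=H(X\mid Y)+H_b(p)$; the only nontrivial quantity is the posterior confusion $H(B\mid X_1,X_2,Y,K{=}1)$. For $X_1=X_2$ this equals $1$, while for $X_1\neq X_2$ it equals $H(U\mid V)=1-I(U;V)$ with $U\sim\mathrm{Bernoulli}(\tfrac12)$ and $p_{V|U}=p_{Y|X}$, because the posterior on $B$ is precisely the Bayes posterior distinguishing the two channel inputs. Averaging over $X_1,X_2$ then gives $H(B\mid X_1,X_2,Y,K{=}1)=1-2p(1-p)I(U;V)$, and hence $I(X^2;Y^K\mid A^2=(0,0))=2(1-\tilde\delta)I(X;Y)-4\tilde\delta(1-\tilde\delta)p(1-p)I(U;V)$.

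Finally I would reassemble the weighted sum over the four values of $A^2$, using the identities $(1-\alpha\delta)(1-\tilde\delta)=1-\delta$ and $(1-\alpha\delta)\tilde\delta=\delta(1-\alpha)$ to collapse the separate $I(X;Y)$ contributions into a single $2(1-\delta)I(X;Y)$; dividing by $2$ then produces the claimed expression $(1-\delta)I(X;Y)-2(1-\alpha)\delta(1-\delta)p(1-p)I(U;V)$. As a sanity check, in the noiseless limit $I(U;V)=1$ and $I(X;Y)=H(X)$, recovering Corollary~\ref{cor:W1conversenoiselessub} since $1-\hat q=2p(1-p)$. The main obstacle is the $K=1$ term: reducing the positional confusion to the single-letter quantity $I(U;V)$ is what makes the noisy binary case tractable, and verifying both $Y\indep B$ and the Bayes identity $H(B\mid X_1,X_2,Y,K{=}1)=1-2p(1-p)I(U;V)$ is the step where the binary structure is essential.
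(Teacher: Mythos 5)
Your proposal is correct and arrives at exactly the right expression, but the way you handle the crucial single-survivor term differs from the paper's computation, so a comparison is worthwhile. Both arguments specialize the infimum in Theorem~\ref{thm:mainresultW1} to $n=2$ and reduce everything to the case where one symbol is deleted without being revealed; the difference is in how the positional ambiguity is converted into $I(U;V)$. The paper works on the output side: it writes $I(X^2;Y\,\big|\,|Y|=1)=H(Y)-H(Y|X^2,|Y|=1)$ and evaluates $H(Y|X^2,|Y|=1)$ by enumerating the four input pairs, observing that for $X^2=01$ or $10$ the surviving symbol is an equal mixture of the two channel outputs, whose entropy is $H(V)$; the excess $H(V)-H(V|U)=I(U;V)$ then appears with weight $2p(1-p)$. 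You instead work on the input side, introducing the latent survivor indicator $B$ and expanding $H(X_1,X_2,B\mid Y,K{=}1)$ two ways, so that the whole effect is isolated in the Bayes posterior $H(B\mid X_1,X_2,Y,K{=}1)=1-2p(1-p)I(U;V)$; your verifications that $Y\indep B$ given $K=1$, that $H(X_1,X_2\mid B,Y,K{=}1)=H(X|Y)+H_b(p)$, and that the posterior on $B$ for $X_1\neq X_2$ is exactly the posterior $p_{U|V}$ are all sound, and the weighted reassembly over $A^2$ using $(1-\alpha\delta)(1-\tilde\delta)=1-\delta$ and $(1-\alpha\delta)\tilde\delta=\delta(1-\alpha)$ checks out. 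Your route buys a cleaner interpretation --- the capacity loss relative to the genie bound is precisely the information needed to resolve which position survived, which is $I(U;V)$ bits per confusable pair --- and your conditioning order (on $\mathbf{A}$ first, yielding an effective deletion probability $\tilde\delta$) generalizes somewhat more transparently to larger $n$; the paper's output-entropy enumeration is more mechanical but requires no latent variable. Your noiseless sanity check against Corollary~\ref{cor:W1conversenoiselessub} via $1-\hat q=2p(1-p)$ is also correct.
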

\begin{proof}
See Appendix~\ref{proof:W1conversenoisybinaryub}.
\end{proof}
Again, for any $p\in(0,1)$ and $\alpha\in[0,1)$, the upper bound given in Corollary~\ref{cor:W1conversenoisybinaryub} is strictly lower than the one provided in Theorem~\ref{thm:mainresultWm} which is
\iftoggle{singlecolumn}{
\begin{align}
    I(X;Y,S)=(1-\delta)I(X;Y)
\end{align}
}{
\begin{align}
    I(X;Y,S)=(1-\delta)I(X;Y)
\end{align}
}

We note that the tighter upper bounds in Corollaries~\ref{cor:W1conversenoiselessub} and \ref{cor:W1conversenoisybinaryub} become generalizations of the upper bound on the noiseless deletion channel mutual information, given in~\cite[Corollary 1]{drmota6283980}. Specifically,~\cite{drmota6283980} considers noiseless deletion channel with \emph{i.i.d.} Bernoulli inputs. Corollary~\ref{cor:W1conversenoiselessub} extends the results to noiseless deletion channels with arbitrary alphabet sizes. Furthermore, Corollary~\ref{cor:W1conversenoisybinaryub} extends the results to binary noisy deletion channels with arbitrary noise. 

For the binary noisy case considered in Corollary~\ref{cor:W1conversenoisybinaryub}, the numerical comparison of the lower bound and the two upper bounds on the matching capacity is provided in Figure~\ref{fig:entryrates}. Note that the upper bound provided by Corollary~\ref{cor:W1conversenoisybinaryub} is not tight as it can be shown that a larger value of $n$ gives a tighter upper bound, implying that the gap between the lower and the upper bounds in Theorem~\ref{thm:mainresultW1} is smaller than the one shown in Figure~\ref{fig:entryrates}.

\section{Extensions}
\label{sec:discussion}
In this section, we discuss extensions to the system model and results. Specifically, in Section~\ref{subsec:adversarialrepetition}, we investigate the adversarial repetition case instead of random repetitions, where the repetitions are not due to random sampling of the time-indexed data, but due to a constrained privacy mechanism. In Section~\ref{subsec:seedlessWm}, we consider the identical repetition model with no seeds. In Section~\ref{subsec:zerorateWm}, we discuss the zero-rate regime, where the row size $m_n$ is not necessarily exponential in the column size $n$, and derive conditions necessary for the detection algorithms discussed in Section~\ref{sec:matchingcapacityWm} to work.

\subsection{What If Repetitions Are Intentional?}
\label{subsec:adversarialrepetition}
So far, as stated in Definition~\ref{defn:labeleddbidenticalrepetition}, we have assumed that the identical repetitions occur randomly according to a discrete probability distribution $p_S$ with finite integer support. In this subsection, we study the case of an adversary who controls the repetition pattern (under some constraints) to make matching as difficult as possible. This could arise for example where a privacy-preserving mechanism denies the sampling of the geolocation data when that data contains the most information about the users, such as their home addresses. We consider the adversarial setting under identical repetition assumption.

We stress that in the identical repetition setting, the replicas either have no effect on the matching capacity as in the noiseless case (Theorem~\ref{thm:noiselesscapacityWm}) or offer additional information acting as a repetition code of random length, in turn increasing the matching capacity (Theorem~\ref{thm:mainresultWm}). Hence, it is expected that any adversary who tries to hinder the matching process to not allow the replication of entries. Therefore in the adversarial repetition setting, it is natural to focus on the deletion-only case. We assume an adversary with a $\delta$-\emph{deletion budget}, which can delete up to $\delta$ fraction of the columns, to maximize the mismatch probability. For tractability, we focus on the noiseless case with \emph{i.i.d.} database entries. More formally, we assume  $X_i\overset{\text{iid}}{\sim}p_X$ where
\iftoggle{singlecolumn}{
\begin{align}
    p_{Y|X}(y|x) &=\mathbbm{1}_{[y=x]},\hspace{1em}\forall (x,y)\in\mathfrak{X}^2
\end{align}
}{
\begin{align}
    p_{Y|X}(y|x) &=\mathbbm{1}_{[y=x]},\hspace{1em}\forall (x,y)\in\mathfrak{X}^2
\end{align}
}

Under these assumptions, we define the adversarial matching capacity as follows:

\begin{defn}{\textbf{(Adversarial Matching Capacity)}}\label{defn:matchingcapacityadversarial}
The \emph{adversarial matching capacity} $C^{\text{adv}}(\delta)$ is the supremum of the set of all achievable rates corresponding to a database distribution $p_X$ and an adversary with a $\delta$-\emph{deletion budget} when there is identical repetition. More formally,
\iftoggle{singlecolumn}{
\begin{align}
    C^{\text{adv}}(\delta) &\triangleq \sup \{R: \forall I_{\text{del}}=(i_1,\dots,i_{n\delta})\subseteq [n], \Pr(\hat{\sigma}_n(J)\neq \sigma_n(J))\overset{n\to\infty}{\longrightarrow} 0, J\sim\text{Unif}([m_n]))\}
\end{align}
}{
\begin{align}
    C^{\text{adv}}(\delta) &\triangleq \sup \{R: \forall I_{\text{del}}=(i_1,\dots,i_{n\delta})\subseteq [n],\notag\\& \hspace{3em}\Pr(\hat{\sigma}_n(J)\neq \sigma_n(J))\overset{n\to\infty}{\longrightarrow} 0,\notag\\ &\hspace{3em}J\sim\text{Unif}([m_n]))\}
\end{align}
}
where the dependence of the matching scheme $\hat{\sigma}_n$ on the database growth rate $R$ and the column deletion index set $I_{\text{del}}$ is omitted for brevity.
\end{defn}

Note that in this setting, although the deletions are not random, the matching error is still a random variable due to the random natures of $\mathbf{X}$ and $\sigma_n$. 
In the proof of Theorem~\ref{thm:adversarialWm} below (Appendix~\ref{proof:adversarialWm}), we argue that in the adversarial setting, we can still convert deletions into erasures via the histogram-based repetition detection algorithm of Section~\ref{subsec:noiselessWm}. After the detection part, we use the following matching scheme: We first remove deleted columns from $\mathbf{X}$, and then perform exact sequence matching, as described in Algorithm~\ref{alg:adversarialmatching} which has $O(m_n^2 n)$ runtime, similar to Algorithms~\ref{alg:identicalrepetitionmatching}-\ref{alg:rowwisematchingscheme}.

\begin{algorithm}[t]
\caption{Exact Sequence Matching Scheme Under Adversarial Deletions}\label{alg:adversarialmatching}
\Input{$(\mathbf{X},\mathbf{Y})$}
\Output{$\hat{\sigma}_n$}
$(\tilde{\mathbf{X}},\tilde{\mathbf{Y}})\gets$ CollapseDatabases($\mathbf{X},\mathbf{Y}$)\Comment*[r]{Eq.~\eqref{eq:collapsedb}.}
$(\tilde{\mathbf{H}}^{(1)},\tilde{\mathbf{H}}^{(2)})\gets$ ColumnHistograms($\tilde{\mathbf{X}},\tilde{\mathbf{Y}}$)\Comment*[r]{Eq.~\eqref{eq:histogramdefn}.}
\blue{/* Histogram-based repetition detection */}\\
\For{$i=1$ \KwTo \textup{columnSize($\tilde{\mathbf{H}}^{(1)}$)}}{
count$\gets 0$\;
\For{$j=1$ \KwTo \textup{columnSize($\tilde{\mathbf{H}}^{(2)}$)}}{
\If{$\tilde{\mathbf{H}}^{(2)}[:][j]=\tilde{\mathbf{H}}^{(1)}[:][i]$}{
count$\gets$ count + 1\;
}
}
$\hat{S}[i]\gets $ count\;
}
\blue{/* Discard deleted columns */}\\
\For{$j=1$ \KwTo \textup{columnSize(}$\mathbf{X}$\textup{)}}{
\eIf{$\hat{S}[j]=0$}{
$\hat{\mathbf{X}}[:][j]\gets []$\;
}{
$\hat{\mathbf{X}}[:][j]\gets \mathbf{X}[:][j]$\;
}
}

\blue{/* Exact sequence matching (See Appendix~\ref{proof:adversarialWm}.) */}\\
\For{$i = 1$ \KwTo  \textup{rowSize($\mathbf{X}$)}}{
\textup{count}$\gets 0$\;
\For{$j = 1$ \KwTo  \textup{rowSize($\mathbf{Y}$)}}{
  \If{$\mathbf{Y}[j][:]=\hat{\mathbf{X}}[i][:]$}{
    $\hat{\sigma}_n[i] \gets$ j\;
    count$\gets$ count + 1\;
  }
  
}
\blue{/* count $>$ 1: Collision Error. */}\\
\If{\textup{count} $\neq 1$}{
$\hat{\sigma}_n[i] \gets 0$\Comment*[r]{Matching error.}
}
}

\end{algorithm}

We state our main result on the adversarial matching capacity in the following theorem:
\begin{thm}{\textbf{(Adversarial Matching Capacity)}}\label{thm:adversarialWm}
Consider a database distribution $p_X$ and an adversary with a $\delta$-\emph{deletion budget} when there is identical repetition. Then the adversarial matching capacity is
\iftoggle{singlecolumn}{
\begin{align}
    C^{\text{adv}}(\delta) &= \begin{cases}
    D(\delta\|1-\hat{q}),&\text{if } \delta\le 1-\hat{q}\\
    0, &\text{if } \delta> 1-\hat{q}
    \end{cases}
\end{align}
}{
\begin{align}
    C^{\text{adv}}(\delta) &= \begin{cases}
    D(\delta\|1-\hat{q}),&\text{if } \delta\le 1-\hat{q}\\
    0, &\text{if } \delta> 1-\hat{q}
    \end{cases}
\end{align}
}
where $\hat{q} \triangleq \sum_{x\in\mathfrak{X}} p_X(x)^2$.
\end{thm}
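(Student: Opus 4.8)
The plan is to combine the histogram-based repetition detector of Section~\ref{subsec:noiselessWm} with an exact sequence matching scheme, and then reduce the worst-case error analysis to a question about the pairwise Hamming distances between the rows of $\mathbf{D}^{(1)}$. Note first that since the entries are i.i.d.\ and there is no noise, two distinct rows $X_i^n$ and $X_{i'}^n$ disagree in each coordinate independently with probability $1-\hat q$, so $d_H(X_i^n,X_{i'}^n)\sim\mathrm{Binomial}(n,1-\hat q)$. Deleting a set $I_{\text{del}}$ of $n\delta$ columns makes the projections of $X_i^n$ and $X_{i'}^n$ onto the retained columns coincide \emph{iff} every coordinate in which they differ lies in $I_{\text{del}}$, which is possible only when $d_H(X_i^n,X_{i'}^n)\le n\delta$. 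Hence whether the adversary can render a given pair of users indistinguishable is governed entirely by their Hamming distance, and this observation drives both bounds.

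For achievability ($R<D(\delta\|1-\hat q)$ when $\delta\le 1-\hat q$), I would first run the detector of Section~\ref{subsec:noiselessWm}: because $m_n$ is exponential in $n$, the column histograms are asymptotically unique by Lemma~\ref{lem:histogram}, so every retained column of $\mathbf{D}^{(1)}$ is matched to its copy in $\mathbf{D}^{(2)}$ and the deleted columns are located exactly, converting the deletions into known erasures; this identification is valid for any, even adversarially chosen, deletion set. After discarding the deleted columns I match each row by exact agreement on the retained coordinates. By the reduction above, the matcher can fail on the uniformly drawn row $J$ only if some competing row lies within Hamming distance $n\delta$ of the true row $X_{\boldsymbol{\Theta}_n^{-1}(J)}^n$. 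A union bound over the $m_n-1$ competitors together with the Chernoff/Sanov estimate $\Pr(\mathrm{Binomial}(n,1-\hat q)\le n\delta)\le 2^{-nD(\delta\|1-\hat q)}$, valid since $\delta\le 1-\hat q$, gives $P_e\le m_n 2^{-nD(\delta\|1-\hat q)}=2^{n(R-D(\delta\|1-\hat q)+o(1))}$, which vanishes whenever $R<D(\delta\|1-\hat q)$. Since this bound is uniform over all deletion patterns, it certifies achievability against the worst-case adversary.

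For the converse I would show the adversary keeps the error bounded away from zero once $R>D(\delta\|1-\hat q)$, and for every positive $R$ once $\delta>1-\hat q$. Conditioning on the true row $x=X_{\boldsymbol{\Theta}_n^{-1}(J)}^n$, the remaining $m_n-1$ rows are i.i.d., and each falls within Hamming distance $n\delta$ of $x$ independently with probability $\Pr\bigl(\sum_{j=1}^n \mathrm{Bernoulli}(1-p_X(x_j))\le n\delta\bigr)$. For a typical $x$ the empirical average $\tfrac1n\sum_{j} p_X(x_j)$ concentrates at $\hat q$, so this probability is at least $2^{-n(D(\delta\|1-\hat q)+o(1))}$ by the matching large-deviation lower bound (Stirling applied to the $k=\lfloor n\delta\rfloor$ term). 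Consequently $\Pr(\text{no competitor within }n\delta)\le (1-2^{-n(D+o(1))})^{m_n-1}\le\exp(-2^{n(R-D-o(1))})\to 0$ when $R>D(\delta\|1-\hat q)$, and the adversary then deletes the at most $n\delta$ coordinates separating $x$ from such a competitor, making the two rows identical on the retained columns and forcing any matcher to err with probability at least $\tfrac12$. When $\delta>1-\hat q$ the mean distance $n(1-\hat q)$ lies below $n\delta$, so $\Pr(d_H\le n\delta)\to 1$ and essentially every other row is confusable, giving $C^{\text{adv}}(\delta)=0$; this also meets $D(\delta\|1-\hat q)$ continuously, whose exponent vanishes at $\delta=1-\hat q$.

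The main obstacle is the converse, and within it the two-sided large-deviation control. Achievability needs only the Chernoff upper tail, but the converse requires a matching lower bound on $\Pr(d_H\le n\delta)$ that must hold uniformly over typical target rows $x$, where the disagreement indicators are independent but \emph{not} identically distributed, each having parameter $1-p_X(x_j)$. Establishing that $\tfrac1n\sum_j p_X(x_j)\to\hat q$ with high probability and that the associated non-i.i.d.\ lower tail still carries exponent $D(\delta\|1-\hat q)+o(1)$ is the delicate step; the remaining pieces---independence of the competitors given the target, the reduction of adversarial deletion to the Hamming-distance condition, and the adversarial validity of histogram detection---are comparatively routine.
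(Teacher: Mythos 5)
Your proposal follows essentially the same route as the paper's proof in Appendix~\ref{proof:adversarialWm}: the same reduction of adversarial confusability to the event $d_H(X_1^n,X_i^n)\le n\delta$, the same histogram-detection-plus-exact-matching achievability with a Chernoff/union bound over the $2^{nR}$ competitors, and the same converse via a matching large-deviations lower bound on $\Pr(d_H\le n\delta)$. The only notable difference is that the paper sidesteps the non-identically-distributed lower-tail issue you flag as the delicate step by working with the unconditional pairwise distance, which is exactly $\mathrm{Binomial}(n,1-\hat q)$, and it disposes of the case $\delta>1-\hat q$ by monotonicity of $C^{\text{adv}}(\delta)$ in $\delta$ together with $C^{\text{adv}}(1-\hat q)=0$ rather than by a law-of-large-numbers argument.
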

\begin{proof}
See Appendix~\ref{proof:adversarialWm}.
\end{proof}

\begin{figure}[t]
\centerline{\includegraphics[width=0.6\textwidth]{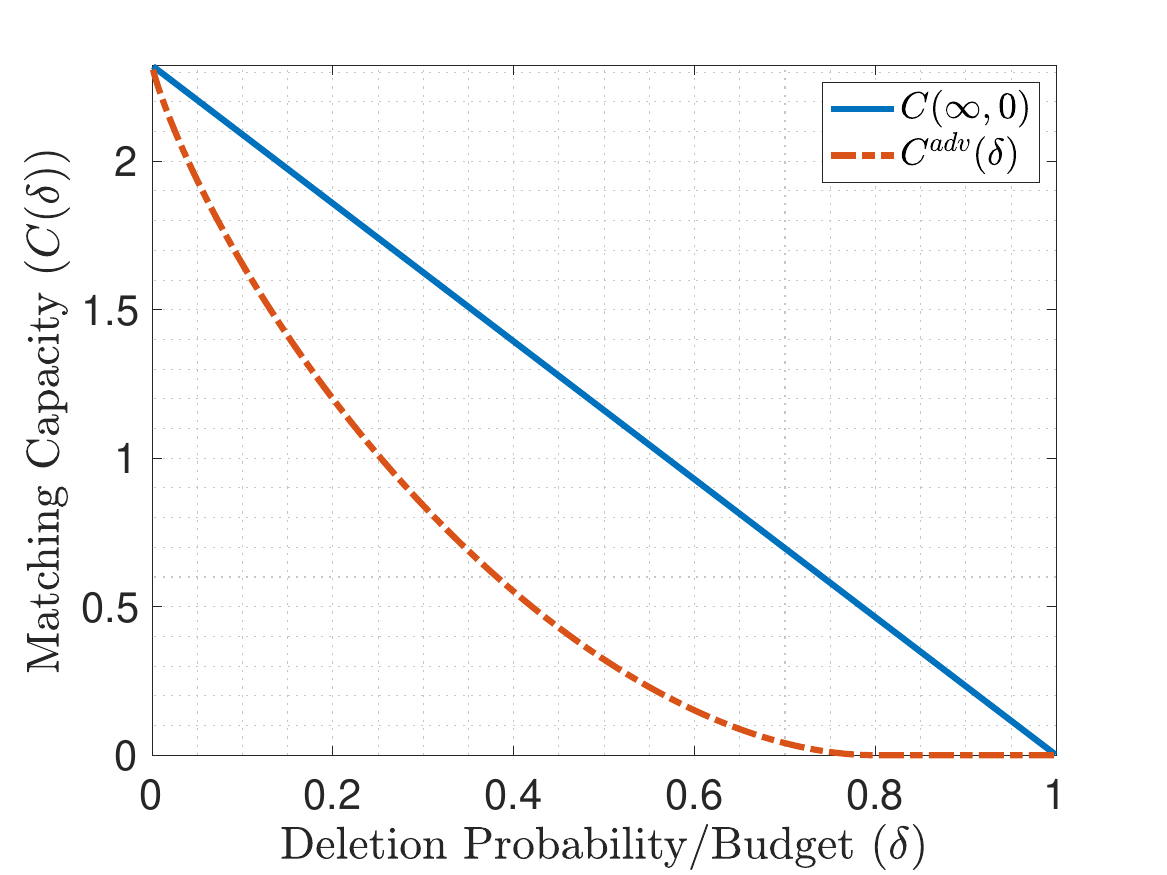}}
\caption{Matching capacities $C$ vs. deletion probability/budget ($\delta$) when $X\sim \text{Unif}(\mathfrak{X})$, $\mathfrak{X}=[5]$. Notice that in this case $\hat{q}=0.2$ and for $\delta>1-\hat{q}=0.8$ the adversarial matching capacity $C^{\text{adv}}(\delta)$ is zero, while the matching capacity with random deletions $C$ is positive.}
\label{fig:adversarialcapacity}
\end{figure}

\begin{sloppypar}
The matching capacities for random and adversarial deletions as a function of the deletion probability/budget are illustrated in Figure~\ref{fig:adversarialcapacity}. Note that for $\delta>1-\hat{q}$, we have $C^{\text{adv}}(\delta)=0$ whereas ${C=(1-\delta) H(X)>0}$. Furthermore, when $\delta\le 1-\hat{q}$ the matching capacity is significantly reduced when the column deletions are intentional rather than random.
\end{sloppypar}

\subsection{What If There Were No Seeds?}
\label{subsec:seedlessWm}
In Section~\ref{sec:matchingcapacityWm}, we assumed the availability of seeds with a seed size $\Lambda_n=\Omega(\log\log m_n)$. Now, we focus on the identical repetition scenario with no seeds. 

Note that the replica detection algorithm of Section~\ref{subsec:replicadetection} does not require any seeds. Therefore in the seedless scenario, we can still detect the replicas with a vanishing probability of error. On the other hand, in the general noisy setting, the deletion detection algorithm of Section~\ref{subsec:seededdeletiondetection} necessitates seeds. Therefore, in the case of no seeds, we cannot perform deletion detection and we need to modify the matching scheme of Section~\ref{subsec:matchingschemeWm} to obtain lower bounds on the matching capacity $C^{\text{seedless}}$. 

For tractability, we focus on the case with \emph{i.i.d.} database entries, \emph{i.e.,} $\gamma=0$. More formally, we assume  $X_i\overset{\text{iid}}{\sim}p_X$.
Under this assumption, we state a lower bound on the unseeded matching capacity with identical repetition in the following theorem via Algorithm~\ref{alg:seedlessmatching} which has $O(m_n^2 n)$ runtime, similar to Algorithms~\ref{alg:identicalrepetitionmatching}-\ref{alg:adversarialmatching}.

\begin{algorithm}[t]
\caption{Seedless Matching Scheme (Identical Repetition)}\label{alg:seedlessmatching}
\Input{$(\mathbf{X},\mathbf{Y},p_X,p_{Y|X},p_S,\epsilon)$}
\Output{$\hat{\sigma}_n$}
\textup{isReplica}$\gets$ Algorithm~\ref{alg:noisyreplicadetection}($\mathbf{Y},p_X,p_{Y|X}$)\;

$\tilde{\mathbf{Y}}\gets$ MarkerAddition($\mathbf{Y}$,isReplica)\;

\For{$i = 1$ \KwTo  \textup{rowSize($\mathbf{X}$)}}{
\textup{count}$\gets 0$\;
\For{$j = 1$ \KwTo  \textup{rowSize($\mathbf{Y}$)}}{
\blue{/* Typical subsequence check (See Appendix~\ref{proof:mainresultseedless}.) */}\\
  \If{\textup{TypicalSubsequenceCheck(}$\mathbf{X}[i][:],\tilde{\mathbf{Y}}[j][:],p_X,p_{Y|X},p_S,\epsilon$\textup{)}}{
    $\hat{\sigma}_n[i] \gets$ j\;
    count$\gets$ count + 1\;
  }
  
}
\blue{/* count = 0: $\tilde{\mathbf{Y}}[j][:]$ is not jointly typical with any subsequence of any row of $\mathbf{X}$. */}\\
\blue{/* count $>$ 1: $\tilde{\mathbf{Y}}[j][:]$ is jointly typical with a subsequence of multiple rows of $\mathbf{X}$. */}\\
\If{\textup{count} $\neq 1$}{
$\hat{\sigma}_n[i] \gets 0$\Comment*[r]{Matching error.}
}
}

\end{algorithm}

\begin{thm}{\textbf{(Seedless Matching Capacity with Identical Repetition})}\label{thm:mainresultseedless}
Consider a database distribution $p_X$, a noise distribution $p_{Y|X}$, a repetition distribution $p_S$ and an identical repetition pattern. Then, in the seedless case, the matching capacity $C^{\text{seedless}}$ satisfies
\iftoggle{singlecolumn}{
\begin{align}
    C^{\text{seedless}}&\ge \left[I(X;Y^S,S)-H_b(\delta)\right]^+\label{eq:seedlessachievable}\\
    C^{\text{seedless}} &\le I(X;Y^S,S)\label{eq:seedlessconverse}
\end{align}
}{
\begin{align}
    C^{\text{seedless}}&\ge \left[I(X;Y^S,S)-H_b(\delta)\right]^+\label{eq:seedlessachievable}\\
    C^{\text{seedless}} &\le I(X;Y^S,S)\label{eq:seedlessconverse}
\end{align}
}
where $\delta\triangleq p_S(0)$ is the deletion probability, $S\sim p_S$ and ${Y^S}$ has the following distribution conditioned on $X$ such that
\iftoggle{singlecolumn}{
\begin{align}
    \Pr(Y^{S}=y^{S}|X=x)&=\begin{cases}
    \prod\limits_{j=1}^{S} p_{Y|X}(y_j|x)&\text{if }S>0\\
    \mathbbm{1}_{[y^{s} = E]} &\text{if }S=0
    \end{cases}
\end{align}
}{
\begin{align}
    \Pr(Y^{S}=y^{S}|X=x)&=\begin{cases}
    \prod\limits_{j=1}^{S} p_{Y|X}(y_j|x)&\text{if }S>0\\
    \mathbbm{1}_{[y^{s} = E]} &\text{if }S=0
    \end{cases}
\end{align}
}
where $E$ denotes the empty string.

Furthermore, for repetition distributions with $\delta\le 1-\nicefrac{1}{|\mathfrak{X}|}$, the lower bound can be tightened as
\iftoggle{singlecolumn}{
\begin{align}
    C^{\text{seedless}}&\ge \left[I(X;Y^S,S)+\delta[H(X)-\log(|\mathfrak{X}|-1)]^+-H_b(\delta)\right]^+
\end{align}
}{
\begin{align}
    C^{\text{seedless}}&\ge [I(X;Y^S,S)-H_b(\delta)\notag\\&\hspace{0.5em}+\delta[H(X)-\log(|\mathfrak{X}|-1)]^+]^+
\end{align}
}
\end{thm}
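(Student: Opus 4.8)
The converse bound~\eqref{eq:seedlessconverse} requires no new work. The genie-aided argument used in the converse of Theorem~\ref{thm:mainresultWm}, summarized in Corollary~\ref{cor:converseWarbitrary}, reveals the repetition pattern $S^n$ to the matching scheme and therefore holds for \emph{any} seed size, in particular $\Lambda_n=0$. Specializing to the \emph{i.i.d.} case $\gamma=0$ gives $\frac{1}{n}I(X^n;Y^{K_n},S^n)=I(X;Y^S,S)$ for every $n$, so that $C^{\text{seedless}}(\infty,0)\le I(X;Y^S,S)$, which is exactly~\eqref{eq:seedlessconverse}.

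For the achievability~\eqref{eq:seedlessachievable}, the plan is to reuse as much of the identical-repetition machinery as possible. The replica detection algorithm of Section~\ref{subsec:replicadetection} operates only on the columns of $\mathbf{D}^{(2)}$ and never uses seeds, so Lemma~\ref{lem:noisyreplicadetection} still applies: with probability tending to one we recover the replica grouping, i.e.\ the values and the order of the nonzero entries of $S^n$. What remains hidden without seeds is the \emph{location} of the deleted columns (the zeros of $S^n$) among the $n$ original columns. I would therefore first run replica detection and retain the full group of noisy copies of each surviving column, so as to exploit the repetition-code gain embodied in $I(X;Y^S,S)$ rather than the weaker $I(X;Y)$, thereby reducing the problem to a deletion-only matching with an unknown deletion pattern.

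The matching rule is joint-typicality decoding over all admissible deletion patterns: set $\hat{\boldsymbol{\Theta}}_n(i)=\ell$ if there is a \emph{unique} row $X^n_i$ of $\mathbf{D}^{(1)}$ for which some length-$n$ pattern $s^n$ consistent with the detected replica counts makes $(X^n_i,Y^{K_n}_\ell,s^n)$ jointly $\epsilon$-typical under $p_{X^n,Y^{K_n},S^n}$. For the true row and its true pattern, typicality holds with high probability by the generalized AEP. For a wrong row, $X'^n$ is \emph{i.i.d.}\ and independent of $Y^{K_n}$, so for each fixed $s^n$ the probability of joint typicality is at most $2^{-n(I(X;Y^S,S)-\epsilon)}$; since the number of deletion patterns is $2^{nH_b(\delta)+o(n)}$ by concentration of the deletion count, a union bound over patterns and over the $2^{nR}$ rows gives total error $\lesssim 2^{n(R-I(X;Y^S,S)+H_b(\delta)+\epsilon)}$, which vanishes whenever $R<I(X;Y^S,S)-H_b(\delta)$. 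Together with the nonnegativity of the rate this yields~\eqref{eq:seedlessachievable}. I expect the error analysis to be the main obstacle, specifically justifying that the joint-typicality probability is $2^{-nI(X;Y^S,S)}$ \emph{uniformly} over all deletion patterns $s^n$, so that the pattern count contributes only the clean factor $2^{nH_b(\delta)}$; this is precisely the alignment difficulty familiar from decoding in noisy synchronization channels.

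Finally, the tightened bound for $\delta\le 1-\nicefrac{1}{|\mathfrak{X}|}$ comes from sharpening the loose step above, namely the union bound that treats all $2^{nH_b(\delta)}$ deletion patterns as independently confusable. Paralleling the refinement that produces~\eqref{eq:rowwiseachievable2} in the independent-repetition achievability and the noiseless-deletion analysis of~\cite{bakirtas2021database}, the point is that a \emph{typical} candidate row admits far fewer consistent alignments than the crude supersequence count suggests: up to run effects, reinserting a deleted symbol into a typical row offers only $|\mathfrak{X}|-1$ genuinely new values rather than a full $H(X)$ bits, so the count of confusable configurations shrinks by a factor $2^{-n\delta[H(X)-\log(|\mathfrak{X}|-1)]^+}$. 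Carrying this refined count through the same union bound replaces the penalty $H_b(\delta)$ by $H_b(\delta)-\delta[H(X)-\log(|\mathfrak{X}|-1)]^+$, giving the stated tightened lower bound, with the hypothesis $\delta\le 1-\nicefrac{1}{|\mathfrak{X}|}$ being exactly the regime in which this run-based saving is valid.
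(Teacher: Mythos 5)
Your proposal is correct and follows essentially the same route as the paper's proof in Appendix~\ref{proof:mainresultseedless}: the converse by monotonicity of the capacity in the seed size, and the achievability by replica detection followed by typicality matching in which the unknown deletion locations are handled by a union bound over the roughly $2^{nH_b(\delta)}$ admissible patterns (the paper organizes this equivalently as a count of typical supersequences $|T(z^{\hat{k}})|$ of each conditionally typical subsequence). Your sharpening via the $(|\mathfrak{X}|-1)^{n\delta}$ insertion count is exactly the paper's use of the Chv\'atal supersequence bound $F(n,\hat{k},|\mathfrak{X}|)$, with $\delta\le 1-\nicefrac{1}{|\mathfrak{X}|}$ being the condition under which that bound's binomial form applies.
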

\begin{proof}
See Appendix~\ref{proof:mainresultseedless}.
\end{proof}

We note that although the converse results of Theorems~\ref{thm:mainresultWm} and \ref{thm:mainresultseedless} match, the achievable rates differ by $H_b(\delta)$. In other words, Theorem~\ref{thm:mainresultseedless} implies that the gap between the lower and the upper bounds on the seedless matching capacity is at most $H_b(\delta)$. We note that this gap is due to our inability to detect deletions in the achievability part. Hence, we conjecture that the lower bound in Theorem~\ref{thm:mainresultseedless} is loose while the converse is tight. This is because in the noiseless setting, as discussed in Section~\ref{subsec:noiselessWm}, deletion detection can be performed without seeds and the achievability bound is indeed improved and tight.

\subsection{Zero-Rate Regime}
\label{subsec:zerorateWm}
In Section~\ref{sec:matchingcapacityWm}, we considered at the matching capacity $C$ for $\Lambda_n=\Omega(\log\log m_n)$ when the database growth rate $R$ is positive. In other words, so far, we have assumed
\iftoggle{singlecolumn}{
\begin{align}
    \lim\limits_{n\to\infty} \frac{1}{n}\log m_n &>0
\end{align}
}{
\begin{align}
    \lim\limits_{n\to\infty} \frac{1}{n}\log m_n &>0
\end{align}
}
The detection algorithms we presented in Sections~\ref{subsec:replicadetection} through \ref{subsec:noiselessWm} depended on the row size $m_n$ being large compared to the column size $n$. In this section, we further investigate these algorithms to derive the sufficient and/or necessary conditions on the relation between $m_n$ and $n$ in order for them to work in the zero-rate regime where
\iftoggle{singlecolumn}{
\begin{align}
    \lim\limits_{n\to\infty} \frac{1}{n}\log m_n &=0.
\end{align}
}{
\begin{align}
    \lim\limits_{n\to\infty} \frac{1}{n}\log m_n &=0.
\end{align}
}
Since $R=0$, we define the non-asymptotic database growth rate $R_n$ as 
\iftoggle{singlecolumn}{
\begin{align}
    R_n &\triangleq \frac{1}{n}\log m_n.
\end{align}
}{
\begin{align}
    R_n &\triangleq \frac{1}{n}\log m_n.
\end{align}
}
Here, $R=0$ trivially implies $R_n\to0$ as $n\to\infty$. Below we investigate the sufficient conditions on $R_n$ such that the results of Sections~\ref{sec:matchingcapacityWm} and \ref{sec:matchingcapacityW1} hold.

\subsubsection{Noisy Replica Detection}
We consider the replica detection algorithm discussed in Section~\ref{subsec:replicadetection}. Note that the RHS of equation~\eqref{eq:replicadetectionlast} of Appendix~\ref{proof:noisyreplicadetection} has $2K-2\le 2 n s_{\max} = O(n)$ additive terms, each decaying exponentially in $m_n$. Thus, for a given average Hamming distance threshold $\tau\in(p_1,p_0)$ which is chosen based based on $\mathbf{P}$ and $p_{Y|X}$ and in turn constant with respect to $n$
\iftoggle{singlecolumn}{
\begin{align}
    m_n &\ge \frac{\log(n s_{\max})}{\min\{D(\tau\|p_0),D(1-\tau\|1-p_1)\}}=\Theta(\log n)
\end{align}
}{
\begin{align}
    m_n &\ge \frac{\log(n s_{\max})}{\min\{D(\tau\|p_0),D(1-\tau\|1-p_1)\}}\notag\\&=\Theta(\log n)
\end{align}
}
is enough to ensure a vanishing replica detection error probability. In other words, as long as $m_n =\Omega(\log n)$ and in turn
\iftoggle{singlecolumn}{
\begin{align}
    R_n &= \Omega\left(\frac{\log \log n}{n}\right)
\end{align}
}{
\begin{align}
    R_n &= \Omega\left(\frac{\log \log n}{n}\right)
\end{align}
}
our replica detection algorithm works.

\subsubsection{Seeded Deletion Detection}
We study the seeded deletion detection algorithm discussed in Section~\ref{subsec:seededdeletiondetection}. Note that we only run the deletion detection algorithm on the seeds $(\mathbf{G}^{(1)},\mathbf{G}^{(2)})$ and not on the database pair $(\mathbf{X},\mathbf{Y})$ directly, the relationship between $m_n$ and $n$ does not affect the success of the deletion detection. Thus, as long as the seed size $\Lambda_n=\Omega(\log n)$ our deletion detection algorithm works for any database growth rate, including the zero-rate regime. This in turn implies that $m_n\ge \Lambda_n=\Omega(\log n)$ and 
\iftoggle{singlecolumn}{
\begin{align}
    R_n &= \Omega\left(\frac{\log \log n}{n}\right).
\end{align}
}{
\begin{align}
    R_n &= \Omega\left(\frac{\log \log n}{n}\right).
\end{align}
}

\subsubsection{Noiseless Joint Deletion-Replication Detection}
We investigate the histogram-based joint deletion-replication detection algorithm introduced in Section~\ref{subsec:noiselessWm} for the noiseless scenario. By Lemma~\ref{lem:histogram}, $m_n=\omega(n^4)$ is sufficient. Thus, as long as $\log m_n \ge 4 \log n$, the histogram-based detection can be performed with a performance guarantee. In turn, for any
\iftoggle{singlecolumn}{
\begin{align}
    R_n&= \Omega\left(\frac{\log n}{n}\right)
\end{align}
}{
\begin{align}
    R_n&= \Omega\left(\frac{\log n}{n}\right)
\end{align}
}
the histogram-based detection algorithm has a vanishing probability of error.

Therefore, in the noiseless setting, database growth rate $\smash{R_n=\Omega\left(\nicefrac{\log n}{n}\right)}$ provides enough granularity on the column histograms and we can perform detection with a decaying probability of error which then leads to asymptotically-zero mismatch probability.

Note that, for tractability, so far we have collapsed the databases into binary-valued ones. Further, in Lemma~\ref{lem:histogram}, we showed that for the collapsed databases $m_n=\omega(n^4)$ is enough for the asymptotic uniqueness of the column histograms.
We now tighten this order relation for the special case where $\gamma=0$ results in an \emph{i.i.d.} database distribution $X_{i,j}\overset{\text{i.i.d.}}{\sim}p_X$ with support $\mathfrak{X}$. 

\begin{lem}{\textbf{(Asymptotic Uniqueness of the Uncollapsed Histograms)}}\label{lem:histogramuncollapsed}
Consider an \emph{i.i.d.} database distribution $p_X$. Let ${H}^{(1)}_j$ denote the histogram of the $j$\textsuperscript{th} column of $\mathbf{X}$.
Then,
\iftoggle{singlecolumn}{
\begin{align}
    \Pr\left(\exists i,j\in [n],\: i\neq j,H^{(1)}_i=H^{(1)}_j\right)\to 0 \text{ as }n\to \infty
\end{align}
}{
\begin{align}
    \Pr\left(\exists i,j\in [n],\: i\neq j,H^{(1)}_i=H^{(1)}_j\right)\to 0 \text{ as }n\to \infty
\end{align}
}
if $m_n=\omega(n^\frac{4}{|\mathfrak{X}|-1})$.
\end{lem}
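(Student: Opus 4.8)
The plan is to exploit that under $\gamma=0$ the entries $X_{i,j}$ are i.i.d.\ $\sim p_X$, so that each column histogram $H^{(1)}_j=(N_{j,1},\dots,N_{j,|\mathfrak{X}|})$, with $N_{j,k}=\sum_{i=1}^{m_n}\mathbbm{1}_{[X_{i,j}=k]}$, is a multinomial random vector with parameters $(m_n;u_1,\dots,u_{|\mathfrak{X}|})$, and histograms of distinct columns are \emph{independent} and identically distributed. Because the constraint $\sum_k N_{j,k}=m_n$ is deterministic, the histogram effectively lives on a lattice of dimension $|\mathfrak{X}|-1$, which is why the admissible order relation for $m_n$ improves with the alphabet size.

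First I would reduce a single fixed-pair collision to a maximal point mass. Writing $p_h=\Pr(H^{(1)}_1=h)$ for the common law, independence of columns $i\neq j$ gives
\begin{align}
\Pr\left(H^{(1)}_i=H^{(1)}_j\right)=\sum_h p_h^2\le \max_h p_h,
\end{align}
so it suffices to control the anti-concentration of a single multinomial histogram.

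Next I would establish the estimate $\max_h p_h=O\!\left(m_n^{-(|\mathfrak{X}|-1)/2}\right)$, where the implied constant depends only on $p_X$ and $|\mathfrak{X}|$ (both constant in $n$). This is a multidimensional local central limit statement: the histogram concentrates in an $O(\sqrt{m_n})$ window around its mean $(m_n u_1,\dots,m_n u_{|\mathfrak{X}|})$ in each of its $|\mathfrak{X}|-1$ free coordinates, so its mass is spread over $\Theta\!\left(m_n^{(|\mathfrak{X}|-1)/2}\right)$ lattice points, forcing the stated decay; equivalently one may apply Stirling's approximation directly to the multinomial coefficient evaluated at the mode, which keeps the argument self-contained. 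Here the assumption $u_k>0$ for all $k$ (Definition~\ref{defn:markovdb}) is essential, as it yields a nondegenerate covariance so that no coordinate concentrates faster than $\sqrt{m_n}$. I expect this anti-concentration bound to be the main technical obstacle, exactly as the binomial local limit theorem was for the collapsed Lemma~\ref{lem:histogram}.

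Finally I would apply the union bound over all $\binom{n}{2}=O(n^2)$ unordered pairs of columns,
\begin{align}
\Pr\left(\exists\, i\neq j:\: H^{(1)}_i=H^{(1)}_j\right)\le \binom{n}{2}\,\max_h p_h=O\!\left(n^2\, m_n^{-(|\mathfrak{X}|-1)/2}\right),
\end{align}
which tends to $0$ precisely when $n^2=o\!\left(m_n^{(|\mathfrak{X}|-1)/2}\right)$, i.e.\ when $m_n=\omega\!\left(n^{4/(|\mathfrak{X}|-1)}\right)$. This is the claimed condition, and it recovers the collapsed threshold $m_n=\omega(n^4)$ of Lemma~\ref{lem:histogram} in the binary case $|\mathfrak{X}|=2$.
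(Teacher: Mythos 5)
Your proposal is correct, and its skeleton (union bound over $O(n^2)$ pairs, reduction to the collision probability $\sum_h p_h^2$ of two independent multinomial histograms, and the rate $m_n^{-(|\mathfrak{X}|-1)/2}$) matches the paper's. Where you diverge is in how that collision probability is bounded: the paper keeps the full sum $\sum_h p_h^2$, applies Stirling to each squared multinomial coefficient together with the type identity $\prod_i p_X(i)^{2m_i}=2^{-2m_n(H(\tilde p)+D(\tilde p\|p_X))}$, and then splits the sum by the KL divergence of the type from $p_X$ — Pinsker's inequality bounds the number of near types by $O\bigl((m_n\epsilon_n)^{|\mathfrak{X}|-1}\bigr)$ while the far types decay exponentially. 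You instead use the one-line inequality $\sum_h p_h^2\le\max_h p_h$ and reduce everything to a single anti-concentration estimate, $\max_h p_h=O\bigl(m_n^{-(|\mathfrak{X}|-1)/2}\bigr)$. This is a legitimate and arguably cleaner route: your bound loses nothing in order (the paper's Proposition~\ref{prop:uniformhistogram} shows the collision probability is $\Theta(m_n^{(1-|\mathfrak{X}|)/2})$ in the uniform case, so $\sum_h p_h^2$ and $\max_h p_h$ are of the same order here), and it isolates the real technical content in one standard local-limit-type fact about the multinomial mode. The cost is that you still owe a proof of that mode bound, and making it rigorous for histograms far from the mean (where $\Pi_{\tilde p}^{-1}$ in the Stirling estimate can blow up polynomially) requires essentially the same observation the paper uses — that $D(\tilde p\|p_X)$ is then bounded away from zero, so the exponential factor dominates; your appeal to $u_k>0$ for nondegeneracy is exactly the right hypothesis for this. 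So the two arguments buy comparable rigor with comparable effort, yours being more modular and the paper's more self-contained.
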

\begin{proof}
See Appendix~\ref{proof:histogramuncollapsed}.
\end{proof}
Note that in the binary setting the results of Lemmas~\ref{lem:histogram} and \ref{lem:histogramuncollapsed} agree. 

Lemma~\ref{lem:histogramuncollapsed} implies that we only need a row size $m_n$ polynomial in $n$ to guarantee enough granularity for the uniqueness of $H^{(1)}_i$ and that the degree of the polynomial scales inversely with the alphabet size $|\mathfrak{X}|$. Furthermore, to demonstrate the tightness of this requirement of having ${m_n=\omega(n^\frac{4}{|\mathfrak{X}|-1}})$, we consider the special case where $p_X$ is uniform over $\mathfrak{X}$. This leads to the following proposition:
\begin{prop}\label{prop:uniformhistogram}
Let ${H}^{(1)}_j$ denote the histogram of the $j$\textsuperscript{th} column of $\mathbf{X}$. If ${p_X(x)=\frac{1}{|\mathfrak{X}|},\:\forall x\in\mathfrak{X}}$, then 
\iftoggle{singlecolumn}{
\begin{align}
    \Pr\left(\exists i,j\in [n],\: i\neq j,H^{(1)}_i=H^{(1)}_j\right)&=n^2 m_n^{\frac{1-|\mathfrak{X}|}{2}} C_{|\mathfrak{X}|} (1+o_n(1))
\end{align}
}{
\begin{align}
    \Pr\Big(\exists i,j\in [n],\: &i\neq j,H^{(1)}_i=H^{(1)}_j\Big)\notag\\&=n^2 m_n^{\frac{1-|\mathfrak{X}|}{2}} C_{|\mathfrak{X}|} (1+o_n(1))
\end{align}
}
where ${C_{|\mathfrak{X}|}=\left(4\pi\right)^{\frac{1-|\mathfrak{X}|}{2}} |\mathfrak{X}|^{\frac{|\mathfrak{X}|}{2}}}$.
\end{prop}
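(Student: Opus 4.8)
The plan is to reduce the statement to a sharp estimate of the pairwise histogram collision probability and then to a first/second-moment argument over the $\binom{n}{2}$ column pairs. Write $k\triangleq|\mathfrak{X}|$. Since $p_X$ is uniform and the columns are i.i.d., each column histogram $H^{(1)}_j$ is distributed as $\mathrm{Multinomial}(m_n,(1/k,\dots,1/k))$, and distinct columns are independent. Define the pairwise collision probability $p_{\mathrm{coll}}\triangleq\Pr\!\left(H^{(1)}_1=H^{(1)}_2\right)$; by independence this is exactly the $\ell_2$ energy of the multinomial law,
\[ p_{\mathrm{coll}} = \sum_{\mathbf{h}}\Pr\!\left(H^{(1)}_1=\mathbf{h}\right)^2 = k^{-2m_n}\sum_{\mathbf{h}:\,\sum_i h_i=m_n}\left(\frac{m_n!}{h_1!\cdots h_k!}\right)^{2}, \]
where $\mathbf{h}$ ranges over histograms. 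For $k=2$ this collapses to $\binom{2m_n}{m_n}4^{-m_n}$ by Vandermonde, which by Stirling is $(\pi m_n)^{-1/2}(1+o_n(1))$, a convenient sanity check for the general constant below.

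The core of the argument is an asymptotic evaluation of $p_{\mathrm{coll}}$ via a local central limit theorem for the multinomial. Parametrizing a histogram by its first $k-1$ coordinates identifies the support with the integer lattice $\mathbb{Z}^{k-1}$ at unit spacing, on which the reduced multinomial law is, uniformly in the bulk, approximated by the Gaussian density $g$ with mean $(m_n/k,\dots,m_n/k)$ and covariance $\Sigma$ given by $\Sigma_{ii}=m_n(k-1)/k^2$ and $\Sigma_{ij}=-m_n/k^2$. The lattice sum then converts to $\int g^2 = \left(2^{k-1}\pi^{(k-1)/2}\sqrt{\det\Sigma}\right)^{-1}$. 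Writing $\Sigma=(m_n/k^2)\left(kI_{k-1}-J_{k-1}\right)$ and using that $kI_{k-1}-J_{k-1}$ has eigenvalues $1$ and $k$ (with multiplicity $k-2$) gives $\det\Sigma=(m_n/k^2)^{k-1}k^{k-2}$, hence $\sqrt{\det\Sigma}=m_n^{(k-1)/2}k^{-k/2}$. Substituting and using $2^{k-1}\pi^{(k-1)/2}=(4\pi)^{(k-1)/2}$ yields $p_{\mathrm{coll}}=C_{|\mathfrak{X}|}\,m_n^{(1-|\mathfrak{X}|)/2}(1+o_n(1))$ with $C_{|\mathfrak{X}|}=(4\pi)^{(1-|\mathfrak{X}|)/2}|\mathfrak{X}|^{|\mathfrak{X}|/2}$, exactly the constant in the statement.

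It remains to pass from this pairwise estimate to the probability of some collision. The expected number of colliding column pairs is $\binom{n}{2}\,p_{\mathrm{coll}}=\Theta\!\left(n^2 m_n^{(1-|\mathfrak{X}|)/2}\right)$, which produces the asserted order with leading constant $C_{|\mathfrak{X}|}$ coming from $p_{\mathrm{coll}}$. To upgrade this first moment to the exact asymptotic for $\Pr\!\left(\exists\,i\neq j:\,H^{(1)}_i=H^{(1)}_j\right)$, I would invoke Bonferroni/inclusion--exclusion: the first moment is an upper bound, and the corrections are controlled by two- and three-column overlap probabilities. Disjoint pairs collide independently and contribute $O(n^4 p_{\mathrm{coll}}^2)$, while triple collisions $\Pr(H_i=H_j=H_l)=\sum_{\mathbf{h}}\Pr(\mathbf{h})^3=O\!\left(m_n^{-(k-1)}\right)$ contribute $O\!\left(n^3 m_n^{-(k-1)}\right)$. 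Both are $o\!\left(n^2 p_{\mathrm{coll}}\right)$ precisely when $n^2 m_n^{(1-k)/2}\to 0$, i.e.\ in the regime $m_n=\omega\!\left(n^{4/(|\mathfrak{X}|-1)}\right)$ singled out by Lemma~\ref{lem:histogramuncollapsed}; there the probability coincides with the first moment up to $(1+o_n(1))$.

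I expect the main obstacle to be making the local limit theorem quantitative with the correct leading constant: one must control the Gaussian approximation uniformly enough over the lattice that both the tails outside the bulk and the lattice-to-integral discretization error are genuinely $o\!\left(m_n^{(1-k)/2}\right)$, not merely order-correct. The determinant computation and the Gaussian $\ell_2$ integral are routine once the reduced covariance $\Sigma$ is correctly identified; the delicate points are the uniformity of the multinomial LLT near the mean and, for the exactness claim, the matching second-moment bound that confines the result to the regime of Lemma~\ref{lem:histogramuncollapsed}.
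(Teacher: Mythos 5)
Your proposal is correct and arrives at the right constant, but it takes a genuinely different route from the paper on the key estimate. The paper's proof is essentially a reduction plus a citation: it writes the target as $n(n-1)\Pr(H^{(1)}_1=H^{(1)}_2)$, expands the pairwise collision probability as $|\mathfrak{X}|^{-2m_n}\sum\binom{m_n}{m_1,\dots,m_{|\mathfrak{X}|}}^2$, and then invokes \cite[Theorem 4]{richmond2008counting} for the asymptotics of this central multinomial square sum, obtaining $|\mathfrak{X}|^{|\mathfrak{X}|/2}(4\pi m_n)^{(1-|\mathfrak{X}|)/2}(1+o(1))$. You instead derive that asymptotic from scratch via a multivariate local limit theorem: your reduced covariance $\Sigma=(m_n/|\mathfrak{X}|^2)(|\mathfrak{X}|I-J)$ with $J$ the all-ones matrix, its determinant $m_n^{|\mathfrak{X}|-1}|\mathfrak{X}|^{-|\mathfrak{X}|}$, and the Gaussian energy $\int g^2=\bigl((4\pi)^{(|\mathfrak{X}|-1)/2}\sqrt{\det\Sigma}\bigr)^{-1}$ are all correct and reproduce exactly the cited constant (your $k=2$ check against $\binom{2m_n}{m_n}4^{-m_n}$ confirms it). What your route buys is self-containedness and a transparent origin for $C_{|\mathfrak{X}|}$; what it costs is the uniform, quantitative LLT that you rightly flag as the technical crux and that the citation dispenses with. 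You are also more explicit than the paper about passing from the pairwise estimate to the existence probability: the paper asserts the equality with a bare $(1-o_n(1))$ factor, whereas you spell out the Bonferroni corrections (disjoint pairs contributing $O(n^4p_{\mathrm{coll}}^2)$, triples contributing $O(n^3m_n^{-(|\mathfrak{X}|-1)})$) and correctly identify $m_n=\omega(n^{4/(|\mathfrak{X}|-1)})$ as the regime in which they are negligible, which is indeed the regime in which the statement is meaningful.

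One discrepancy you should reconcile: your first moment is $\binom{n}{2}p_{\mathrm{coll}}\sim\frac{n^2}{2}C_{|\mathfrak{X}|}m_n^{(1-|\mathfrak{X}|)/2}$, which is half of the stated $n^2C_{|\mathfrak{X}|}m_n^{(1-|\mathfrak{X}|)/2}$, yet you claim it "produces the asserted leading constant." The paper reaches the stated constant only because it counts ordered pairs, writing $n(n-1)\Pr(H_1=H_2)$; since the union is over unordered pairs, the Bonferroni-exact leading term is the $\binom{n}{2}$ one, so your (more careful) accounting actually exposes a factor-of-two slack in the proposition's constant rather than a flaw in your argument. This does not affect how the proposition is used, since only the order $n^2m_n^{(1-|\mathfrak{X}|)/2}$ matters for the necessity of $m_n=\omega(n^{4/(|\mathfrak{X}|-1)})$, but you should either adopt the ordered-pair normalization to match the statement or note the correction explicitly.
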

\begin{proof}
See Appendix~\ref{proof:uniformhistogram}.
\end{proof}
Proposition~\ref{prop:uniformhistogram} states that in the setting with \emph{i.i.d.} uniform database distribution, for the asymptotic uniqueness of the column histograms $m_n=\omega(n^{\frac{4}{|\mathfrak{X}|-1}})$ is not only sufficient but also necessary.

\subsubsection{Independent Repetition Row  Matching Scheme}
In the independent repetition scenario, we have no detection algorithms which depend on the large-$m_n$ assumption. Therefore, so long as the RHS of \eqref{eq:rowwiseachievable} is positive, any $R_n=o_n(1)$ is achievable. We stress that this observation trivially applies to the identical repetition case as well since one can simply ignore any underlying structure and perform the matching scheme given in Section~\ref{subsec:achievabilityW1}.

\section{Conclusion}
\label{sec:conclusion}

In this work, we have presented a unified information-theoretic foundation for database matching under noise and synchronization errors. We have showed that when the repetition pattern is constant across rows, the running Hamming distances between the consecutive columns of the correlated repeated database can be used to detect replicas. In addition, given seeds whose size grows double-logarithmic with the number of rows, a Hamming distance-based threshold testing, after an adequate remapping of database entries, can be used to infer the locations of the deletions. Using the proposed detection algorithms, and a joint typicality-based rowwise matching scheme, we have derived an achievable database growth rate, which we prove is tight. Therefore, we have completely characterized the database matching capacity under noisy column repetitions. Furthermore, we have derived achievable database growth rates proposing a typicality-based matching scheme and a converse result for the setting where the repetition takes place entrywise, where we build analogy between database matching and synchronization channel decoding. We have also discussed some extensions, such as the adversarial column deletion setting rather then the random one. 

Other natural extensions beyond those studied in this paper include the finite column size regime, where tools from finite-blocklength information theory could be useful, and practical algorithms with theoretical guarantees. An extensive analysis of the parallels between database matching under synchronization errors and two-dimensional synchronization channels~\cite{yaakobitit,yaakobiisit} and the construction of codes tailored to correct the error patterns investigated in this paper could be an interesting line of future work. Finally, one can extend our adversarial setting into a noisy one where the privacy-preserving mechanism not only deletes columns but also introduces intentional noise on the microdata, and investigate the adversarial matching capacity through a worst-case analysis.

\bibliographystyle{IEEEtran}
\bibliography{references}
\begin{appendices}
    \section{Proof of Lemma~\ref{lem:noisyreplicadetection}}
\label{proof:noisyreplicadetection}
Observe that since the rows of $\mathbf{Y}$ are \emph{i.i.d.} conditioned on the column repetition pattern $S^n$, the Hamming distance $H_j$ between consecutive columns $C^{m_n}_j$ and $C^{m_n}_{j+1}$ follows a Binomial distribution whose success parameter depends on whether $C^{m_n}_j$ and $C^{m_n}_{j+1}$ are noisy replicas or not. More formally, if $C_j^{m_n}$ and $C_{j+1}^{m_n}$ being noisy replicas, then there exist an $i \in [m_n]$ such that
    \begin{align}
        \Pr(Y_{t,j},Y_{t,j+1}=y_1,y_2|X_{\sigma_n^{-1}(t),i}=x) &= p_{Y|X}(y_1|x) p_{Y|X}(y_2|x),\hspace{1em} \forall t\in[m_n]
    \end{align}
\begin{sloppypar}
More specifically, let $C^{m_n}_j$ and $C^{m_n}_{j+1}$ correspond to the $j_1$\textsuperscript{st} and $j_2$\textsuperscript{nd} columns of $\mathbf{X}$ and let $r\triangleq j_2-j_1-1$ denote the number of deleted columns between $C^{m_n}_j$ and $C^{m_n}_{j+1}$. Note that $r=-1$ denotes the case when $C^{m_n}_j$ and $C^{m_n}_{j+1}$ are noisy replicas. Then we have ${H_j\sim \text{Binom}(m_n,p_1)}$ if $C^{m_n}_j$ and $C^{m_n}_{j+1}$ are noisy replicas and ${H_j\sim \text{Binom}\smash{(m_n,p_0^{(r)})}}$ otherwise.
Thus, proving that $\smash{p_0^{(r)}}$ and $p_1$ are bounded away from one another for any $r\ge0$ will allow us to use the running Hamming distance based threshold test discussed in Section~\ref{subsec:replicadetection}.
\end{sloppypar}

\begingroup
\allowdisplaybreaks
Our goal is to prove that $p_0^{(r)}>p_1$ for any $r\ge0$. First, we can formally rewrite $p_0$ as
\iftoggle{singlecolumn}{
\begin{align}
    p_0^{(r)}&= \sum\limits_{x_1\in\mathfrak{X}}\sum\limits_{x_2\in\mathfrak{X}}\sum\limits_{y\in\mathfrak{X}} \Pr(X_{1,j_1}=x_1)\Pr(X_{1,j_2}=x_2|X_{1,j_1}=x_1)\notag\\
    &\hspace{10em}\Pr(Y_{\sigma_n(1),j}=y|X_{j_1}=x_1)\Pr(Y_{\sigma_n(1),j+1}\neq y|X_{1,j_2}=x_2)\\
    &=\sum\limits_{x_1\in\mathfrak{X}}\sum\limits_{x_2\in\mathfrak{X}}\sum\limits_{y\in\mathfrak{X}} u_{x_1}\Pr(X_{1,j_2}=x_2|X_{1,j_1}=x_1) p_{Y|X}(y|x_1)[1-p_{Y|X}(y|x_2)]\\
    &= \sum\limits_{i=1}^{|\mathfrak{X}|} \sum\limits_{j=1}^{|\mathfrak{X}|} \sum\limits_{k=1}^{|\mathfrak{X}|}
    u_i\: (\mathbf{P}^r)_{i,j}\: p_{Y|X}(k|i)\: [1-p_{Y|X}(k|j)]\\
    &=\sum\limits_{i=1}^{|\mathfrak{X}|} \sum\limits_{j=1}^{|\mathfrak{X}|} \sum\limits_{k=1}^{|\mathfrak{X}|}
    u_i\: [(1-\gamma^r)u_j+\gamma^r \delta_{i j}]\: p_{Y|X}(k|i) \:[1-p_{Y|X}(k|j)]\\
    &= \sum\limits_{i=1}^{|\mathfrak{X}|}  \sum\limits_{k=1}^{|\mathfrak{X}|}
    u_i\: [(1-\gamma^r)u_i+\gamma^r]\: p_{Y|X}(k|i) \:[1-p_{Y|X}(k|i)]\notag\\
    &\hspace{5em}+ \sum\limits_{i=1}^{|\mathfrak{X}|} \sum\limits_{j\neq i} \sum\limits_{k=1}^{|\mathfrak{X}|}
    u_i\: [(1-\gamma)u_j]\: p_{Y|X}(k|i) \:[1-p_{Y|X}(k|j)]\\
    &= (1-\gamma^r) \sum\limits_{i=1}^{|\mathfrak{X}|} \sum\limits_{j=1}^{|\mathfrak{X}|} \sum\limits_{k=1}^{|\mathfrak{X}|} u_i\: u_j\: p_{Y|X}(k|i) \:[1-p_{Y|X}(k|j)] \notag\\
    &\hspace{6em}+ \gamma^r \sum\limits_{i=1}^{|\mathfrak{X}|}  \sum\limits_{k=1}^{|\mathfrak{X}|}
    u_i\: p_{Y|X}(k|i) \:[1-p_{Y|X}(k|i)]\\
    &= (1-\gamma^r) p_0^{\prime} + \gamma^r p_1^{\prime}
\end{align}}
{
\begin{align}
    p_0^{(r)}&= \sum\limits_{x_1\in\mathfrak{X}}\sum\limits_{x_2\in\mathfrak{X}}\sum\limits_{y\in\mathfrak{X}} \Pr(X_{1,j_1}=x_1)\notag\\
    &\hspace{6em}\Pr(X_{1,j_2}=x_2|X_{1,j_1}=x_1)\notag\\
    &\hspace{6em}\Pr(Y_{\sigma_n(1),j}=y|X_{j_1}=x_1)\notag\\
    &\hspace{6em}\Pr(Y_{\sigma_n(1),j+1}\neq y|X_{1,j_2}=x_2)\\
    &=\sum\limits_{x_1\in\mathfrak{X}}\sum\limits_{x_2\in\mathfrak{X}}\sum\limits_{y\in\mathfrak{X}} \Pr(X_{1,j_1}=x_1)\notag\\
    &\hspace{6em}\Pr(X_{1,j_2}=x_2|X_{1,j_1}=x_1) \notag\\
    &\hspace{6em}p_{Y|X}(y|x_1)[1-p_{Y|X}(y|x_2)]\\
    &= \sum\limits_{i=1}^{|\mathfrak{X}|} \sum\limits_{j=1}^{|\mathfrak{X}|} \sum\limits_{k=1}^{|\mathfrak{X}|}
    u_i\: (\mathbf{P}^r)_{i,j}\: p_{Y|X}(k|i)\: [1-p_{Y|X}(k|j)]\\
    &=\sum\limits_{i=1}^{|\mathfrak{X}|} \sum\limits_{j=1}^{|\mathfrak{X}|} \sum\limits_{k=1}^{|\mathfrak{X}|}
    u_i\: [(1-\gamma^r)u_j+\gamma^r \delta_{i j}]\notag\\
    &\hspace{6em}p_{Y|X}(k|i) \:[1-p_{Y|X}(k|j)]\\
    &= \sum\limits_{i=1}^{|\mathfrak{X}|}  \sum\limits_{k=1}^{|\mathfrak{X}|}
    u_i\: [(1-\gamma^r)u_i+\gamma^r]\notag\\
    &\hspace{6em} p_{Y|X}(k|i) \:[1-p_{Y|X}(k|i)]\notag\\
    &\hspace{1em}+ \sum\limits_{i=1}^{|\mathfrak{X}|} \sum\limits_{j\neq i} \sum\limits_{k=1}^{|\mathfrak{X}|}
    u_i\: [(1-\gamma)u_j]\notag\\
    &\hspace{6em}p_{Y|X}(k|i) \:[1-p_{Y|X}(k|j)]\\
    &= (1-\gamma^r) \sum\limits_{i=1}^{|\mathfrak{X}|} \sum\limits_{j=1}^{|\mathfrak{X}|} \sum\limits_{k=1}^{|\mathfrak{X}|} u_i\: u_j\notag\\
    &\hspace{6em} p_{Y|X}(k|i) \:[1-p_{Y|X}(k|j)]\notag\\
    &\hspace{1em}+ \gamma^r \sum\limits_{i=1}^{|\mathfrak{X}|}  \sum\limits_{k=1}^{|\mathfrak{X}|}
    u_i\: p_{Y|X}(k|i) \:[1-p_{Y|X}(k|i)]\\
    &= (1-\gamma^r) p_0^{\prime} + \gamma^r p_1^{\prime}
\end{align}
}
\endgroup
where
\iftoggle{singlecolumn}{
\begin{align}
    p_0^{\prime}&\triangleq \sum\limits_{i=1}^{|\mathfrak{X}|} \sum\limits_{j=1}^{|\mathfrak{X}|} \sum\limits_{k=1}^{|\mathfrak{X}|} u_i\: u_j\: p_{Y|X}(k|i) \:[1-p_{Y|X}(k|j)]\\
    p_1^{\prime} &\triangleq \sum\limits_{i=1}^{|\mathfrak{X}|}  \sum\limits_{k=1}^{|\mathfrak{X}|}
    u_i\: p_{Y|X}(k|i) \:[1-p_{Y|X}(k|i)]
\end{align}
}{
\begin{align}
    p_0^{\prime}&\triangleq \sum\limits_{i=1}^{|\mathfrak{X}|} \sum\limits_{j=1}^{|\mathfrak{X}|} \sum\limits_{k=1}^{|\mathfrak{X}|} u_i\: u_j\: p_{Y|X}(k|i) \:[1-p_{Y|X}(k|j)]\\
    p_1^{\prime} &\triangleq \sum\limits_{i=1}^{|\mathfrak{X}|}  \sum\limits_{k=1}^{|\mathfrak{X}|}
    u_i\: p_{Y|X}(k|i) \:[1-p_{Y|X}(k|i)]
\end{align}
}
Similarly, we rewrite $p_1$ as
\iftoggle{singlecolumn}{
\begin{align}
    p_1 &= \sum\limits_{x\in\mathfrak{X}}\sum\limits_{y\in\mathfrak{X}} \Pr(X_{1,j_1}=x) \Pr(Y_{\sigma_n(1),j}=y|X_{1,j_1}=x) \Pr(Y_{\sigma_n(1),j+1}\neq y|X_{1,j_1}=x) \\
    &=\sum\limits_{i=1}^{|\mathfrak{X}|}  \sum\limits_{k=1}^{|\mathfrak{X}|}
    u_i\: p_{Y|X}(k|i) \:[1-p_{Y|X}(k|i)]\\
    &= p_1^{\prime}
\end{align}
}{
\begin{align}
    p_1 &= \sum\limits_{x\in\mathfrak{X}}\sum\limits_{y\in\mathfrak{X}} \Pr(X=x)\notag\\&\hspace{2em} \Pr(Y_1=y|X=x) \Pr(Y_2\neq y|X=x) \\
    &=\sum\limits_{i=1}^{|\mathfrak{X}|}  \sum\limits_{k=1}^{|\mathfrak{X}|}
    u_i\: p_{Y|X}(k|i) \:[1-p_{Y|X}(k|i)]\\
    &= p_1^{\prime}
\end{align}
}
Thus, for any $\gamma\in[0,1)$ and $r\ge 0$, we have
\iftoggle{singlecolumn}{
\begin{align}
    p_0^{(r)}>p_1 \iff p_0^{\prime}>p_1^{\prime} \label{eq:nrdequivalencetoiid}
\end{align}
}{
\begin{align}
    p_0>p_1 \iff p_0^{\prime}>p_1^{\prime}\label{eq:nrdequivalencetoiid}
\end{align}
}

Note that $p_0^{\prime}$ and $p_1^{\prime}$ would correspond to
\iftoggle{singlecolumn}{
\begin{align}
    p_0^{\prime}&= \Pr(Y_{\sigma_n(1),j}\neq Y_{\sigma_n(1),j+1}|r\ge 0)\\
    p_1^{\prime}&= \Pr(Y_{\sigma_n(1),j}\neq Y_{\sigma_n(1),j+1}|r=-1)
\end{align}
}{
\begin{align}
    p_0^{\prime}&= \Pr(Y_{\sigma_n(1),j}\neq Y_{\sigma_n(1),j+1}|r\ge 0)\\
    p_1^{\prime}&= \Pr(Y_{\sigma_n(1),j}\neq Y_{\sigma_n(1),j+1}|r=-1)
\end{align}
}
if the entries $X_{i,j}$ of $\mathbf{X}$ were drawn \emph{i.i.d.} from the stationary distribution $\pi$ of $\mathbf{P}$, instead of a Markov process. Thus, to consider the \emph{i.i.d.} database entries case, we introduce the discrete random variable $W$ with 
\iftoggle{singlecolumn}{
\begin{align}
    p_W(i) &= u_i,\:\forall i\in\mathfrak{X}\label{eq:iidW1}\\ 
    p_{Y|W}(y|w)&=p_{Y|X}(y|w),\:\forall (w,y)\in\mathfrak{X}^2 \label{eq:iidW2}
\end{align}
}{
\begin{align}
    p_W(i) &= u_i,\:\forall i\in\mathfrak{X}\label{eq:iidW1}\\ 
    p_{Y|W}(y|w)&=p_{Y|X}(y|w),\:\forall (w,y)\in\mathfrak{X}^2 \label{eq:iidW2}
\end{align}
}
We note that this equivalence induced by~\eqref{eq:nrdequivalencetoiid} is due to the specific Markov structure we adopted in Definition~\ref{defn:markovdb}.

Let $p_Y(y)\triangleq \sum\limits_{w\in \mathfrak{X}} p_{W,Y}(w,y)$ $\forall y\in\mathfrak{X}$. Then, we can rewrite $p_0^{\prime}$ and $p_1^{\prime}$ as
\iftoggle{singlecolumn}{
\begin{align}
    p_0^{\prime} &=\sum\limits_{w_1\in\mathfrak{X}}\sum\limits_{w_2\in\mathfrak{X}}\sum\limits_{y\in\mathfrak{X}} p_W(w_1) p_W(w_2) p_{Y|W}(y|w_1) \left[1-p_{Y|W}(y|w_2)\right]\\
    &=\sum\limits_{w_1\in\mathfrak{X}}\sum\limits_{y\in\mathfrak{X}} p_W(w_1)  p_{Y|W}(y|w_1)\sum\limits_{w_2\in\mathfrak{X}}p_W(w_2) \left[1-p_{Y|W}(y|w_2)\right]\\
    &=\sum\limits_{w\in\mathfrak{X}}\sum\limits_{y\in\mathfrak{X}} p_W(w)  p_{Y|W}(y|w)\left[1-p_Y(y)\right]\\
    p_1^{\prime} &=\sum\limits_{w\in\mathfrak{X}}\sum\limits_{y\in\mathfrak{X}} p_W(w) p_{Y|W}(y|w) \left[1-p_{Y|W}(y|w)\right]
\end{align}
}{
\begin{align}
    p_0^{\prime} &=\sum\limits_{w_1\in\mathfrak{X}}\sum\limits_{w_2\in\mathfrak{X}}\sum\limits_{y\in\mathfrak{X}} p_W(w_1) p_W(w_2)\notag\\&\hspace{5em} p_{Y|W}(y|w_1) \left[1-p_{Y|W}(y|w_2)\right]\\
    &=\sum\limits_{w_1\in\mathfrak{X}}\sum\limits_{y\in\mathfrak{X}} p_W(w_1)  p_{Y|W}(y|w_1)\notag\\&\hspace{4em}\sum\limits_{w_2\in\mathfrak{X}}p_W(w_2) \left[1-p_{Y|W}(y|w_2)\right]\\
    &=\sum\limits_{w\in\mathfrak{X}}\sum\limits_{y\in\mathfrak{X}} p_W(w)  p_{Y|W}(y|w)\left[1-p_Y(y)\right]\\
    p_1^{\prime} &=\sum\limits_{w\in\mathfrak{X}}\sum\limits_{y\in\mathfrak{X}} p_W(w) p_{Y|W}(y|w) \left[1-p_{Y|W}(y|w)\right]
\end{align}
}
Thus, we have 
\iftoggle{singlecolumn}{
\begin{align}
    p_0^{\prime} - p_1^{\prime} &=\sum\limits_{w\in\mathfrak{X}}\sum\limits_{y\in\mathfrak{X}} p_{W,Y}(w,y)\left[p_{Y|W}(y|w)-p_Y(y)\right].
\end{align}
}{
\begin{align}
    p_0^{\prime} - p_1^{\prime} &=\sum\limits_{w\in\mathfrak{X}}\sum\limits_{y\in\mathfrak{X}} p_{W,Y}(w,y)\left[p_{Y|W}(y|w)-p_Y(y)\right].
\end{align}
}
For every $y\in\mathfrak{X}$, let
\iftoggle{singlecolumn}{
\begin{align}
    \psi(y) &\triangleq \sum\limits_{w\in\mathfrak{X}}p_W(w)\left[p_{Y|W}(y|w)-p_Y(y)\right]^2\\
    &=\sum\limits_{w\in\mathfrak{X}}p_W(w)\left[p_{Y|W}(y|w)-\sum\limits_{z\in\mathfrak{X}}p_{Y|W}(y|z)p_W(z)\right]^2\\
    &\ge 0 \label{eq:psinonnegative}
\end{align}
}{
\begin{align}
    \psi(y) &\triangleq \sum\limits_{w\in\mathfrak{X}}p_W(w)\left[p_{Y|W}(y|w)-p_Y(y)\right]^2\\
    &=\sum\limits_{w\in\mathfrak{X}}p_W(w)\notag\\&\hspace{1.1em}\left[p_{Y|W}(y|w)-\sum\limits_{z\in\mathfrak{X}}p_{Y|W}(y|z)p_W(z)\right]^2\\
    &\ge 0 \label{eq:psinonnegative}
\end{align}
}
where \eqref{eq:psinonnegative} follows from the non-negativity of the square term in the summation. It must be noted that $\psi(y)=0$ only if $p_{Y|W}(y|w)=p_Y(y),\: \forall w\in\mathfrak{X}$ with $p_W(w)=u_w>0$.

Expanding the square term, we obtain
\iftoggle{singlecolumn}{
\begin{align}
    \psi(y)&= \sum\limits_{w\in\mathfrak{X}}p_W(w) p_{Y|W}(y|w)^2-2 p_Y(y) \sum\limits_{w\in\mathfrak{X}}p_W(w) p_{Y|W}(y|w)+\sum\limits_{w\in\mathfrak{X}}p_W(w) p_{Y}(y)^2\\
    &= \sum\limits_{w\in\mathfrak{X}}p_W(w) p_{Y|W}(y|w)^2 - 2 p_Y(y)^2 +p_Y(y)^2\\
    &= \sum\limits_{w\in\mathfrak{X}}p_W(w) p_{Y|W}(y|w)^2 - p_Y(y)^2
\end{align}
}{
\begin{align}
    \psi(y)&= \sum\limits_{w\in\mathfrak{X}}p_W(w) p_{Y|W}(y|w)^2\notag\\&\hspace{3em}-2 p_Y(y) \sum\limits_{w\in\mathfrak{X}}p_W(w) p_{Y|W}(y|w)\notag\\&\hspace{3em}+\sum\limits_{w\in\mathfrak{X}}p_W(w) p_{Y}(y)^2\\
    &= \sum\limits_{w\in\mathfrak{X}}p_W(w) p_{Y|W}(y|w)^2 - 2 p_Y(y)^2 +p_Y(y)^2\\
    &= \sum\limits_{w\in\mathfrak{X}}p_W(w) p_{Y|W}(y|w)^2 - p_Y(y)^2
\end{align}
}
Next, we rewrite $p_0^\prime-p_1^\prime$ as 
\iftoggle{singlecolumn}{
\begin{align}
    p_0^\prime-p_1^\prime&=\sum\limits_{y\in\mathfrak{X}}\sum\limits_{w\in\mathfrak{X}} p_{W,Y}(w,y)\left[p_{Y|W}(y|w)-p_Y(y)\right]\\
    &= \sum\limits_{y\in\mathfrak{X}}\left[\left(\sum\limits_{w\in\mathfrak{X}} p_W(w) p_{Y|W}(y|w)^2\right)-p_Y(y)^2\right]\\
    &= \sum\limits_{y\in\mathfrak{X}} \psi(y)\\
    &\ge 0
\end{align}
}{
\begin{align}
    p_0^\prime-p_1^\prime&=\sum\limits_{y\in\mathfrak{X}}\sum\limits_{w\in\mathfrak{X}} p_{W,Y}(w,y)\left[p_{Y|W}(y|w)-p_Y(y)\right]\\
    &= \sum\limits_{y\in\mathfrak{X}}\left[\left(\sum\limits_{w\in\mathfrak{X}} p_W(w) p_{Y|W}(y|w)^2\right)-p_Y(y)^2\right]\\
    &= \sum\limits_{y\in\mathfrak{X}} \psi(y)\\
    &\ge 0
\end{align}
}
with $p_0^\prime-p_1^\prime=0$ only when $p_{Y|W}(y|w)=p_Y(y)$, $\forall (w,y)\in\mathfrak{X}^2$. In other words, $p_0^\prime>p_1^\prime$ and in turn $p_0^{(r)}>p_1$ as long as the two databases are not independent.

We next choose any $\smash{\tau\in(p_1,p_0^{(0)})}$ bounded away from both $\smash{p_0^{(0)}}$ and $p_1$. Let $A_j$ denote the event that $C^{m_n}_{j}$ and $C^{m_n}_{j+1}$ are noisy replicas and $B_j$ denote the event that the algorithm declares $C^{m_n}_{j}$ and $C^{m_n}_{j+1}$ as replicas. Via the union bound, we can upper bound the total probability of replica detection error as
\iftoggle{singlecolumn}{
\begin{align}
    \Pr(\bigcup\limits_{j=1}^{{K_n}-1} F_j)&\le \sum\limits_{j=1}^{{K_n}-1} \Pr(A_j ^c) \Pr(B_j|A_j ^c)+ \Pr(A_j)  \Pr(B_j^c|A_j)\label{eq:replicadetectionbound}
\end{align}
}{
\begin{align}
    \Pr(\bigcup\limits_{j=1}^{{K_n}-1} F_j)&\le \sum\limits_{j=1}^{{K_n}-1} \Pr(A_j ^c) \Pr(B_j|A_j ^c) \notag\\&\hspace{5em}+\Pr(A_j)  \Pr(B_j^c|A_j)\label{eq:replicadetectionbound}
\end{align}
}
\begin{sloppypar}
Note that conditioned on $A_j^c$, we have $H_j\sim\text{Binom}\smash{(m_n,p_0^{(r)})}$ and conditioned on $A_j$, we have $H_j\sim\text{Binom}(m_n,p_1)$. Then, from the Chernoff bound~\cite[Lemma 4.7.2]{ash2012information}, we get
\iftoggle{singlecolumn}{
\begin{align}
    \Pr(B_j|A_j ^c)&\le 2^{-m_n D\left(\tau\|\smash{p_0^{(r)}}\right)}\label{eq:chernoff1}\\
  &\le 2^{-m_n D\left(\tau\|\smash{p_0^{(0)}}\right)}\label{eq:chernoff1.5}\\
    \Pr(B_j^c|A_j)&\le 2^{-m_n D\left(1-\tau \|1-p_1\right)}\label{eq:chernoff2}
\end{align}
}{
\begin{align}
    \Pr(B_j|A_j ^c)&\le 2^{-m_n D\left(\tau\|\smash{p_0^{(r)}}\right)}\label{eq:chernoff1}\\
&\le 2^{-m_n D\left(\tau\|\smash{p_0^{(0)}}\right)}\label{eq:chernoff1.5}\\
    \Pr(B_j^c|A_j)&\le 2^{-m_n D\left(1-\tau \|1-p_1\right)}\label{eq:chernoff2}
\end{align}
}
where \eqref{eq:chernoff1.5} follows from the fact that $D(\tau\|p)$ is an increasing function of $p$ for $p>\tau$.

Thus, we get
\iftoggle{singlecolumn}{
\begin{align}
    \Pr(\bigcup\limits_{j=1}^{{K_n}-1} F_j)&\le ({K_n}-1)\left[ 2^{-m_n D\left(\tau\|\smash{p_0^{(0)}}\right)}+ 2^{-m_n D\left(1-\tau\|1-p_1\right)}\right]\label{eq:replicadetectionlast}
\end{align}
}{
\begin{align}
    \Pr(\bigcup\limits_{j=1}^{{K_n}-1} E_j)&\le K_n\left[ 2^{-m_n D\left(\tau\|\smash{p_0^{(0)}}\right)}+ 2^{-m_n D\left(1-\tau\|1-p_1\right)}\right]\label{eq:replicadetectionlast}
\end{align}
}

Observe that since the RHS of \eqref{eq:replicadetectionlast} has $2{K_n}=O(n)$ terms decaying exponentially in~$m_n$, for any $m_n=\omega(\log n)$ we have 
\iftoggle{singlecolumn}{
\begin{align}
    \Pr(\bigcup\limits_{j=1}^{{K_n}-1} F_j) \to 0 \:\text{ as } n\to\infty.
\end{align}
}{
\begin{align}
    \Pr(\bigcup\limits_{j=1}^{{K_n}-1} F_j) \to 0 \:\text{ as } n\to\infty.
\end{align}
}
Finally observing that $n\sim\log m_n$ concludes the proof.\qed
\end{sloppypar}

\section{Proof of Lemma~\ref{lem:seededdeletiondetection}}
\label{proof:seededdeletiondetection}
Let ${(\tilde{X}_{i,j},\tilde{Y}_{i,j})}$ be a pair of matching entries. Since the database distribution is stationary, WLOG, we can assume $(i,j)=(1,1)$. Now, given ${(\tilde{X}_{1,1},\tilde{Y}_{1,1})}$, and the non-matching pair ${(\tilde{X}_{1,j},\tilde{Y}_{1,1})}$ with $j-1=r\neq 0$, 
we first prove the existence of such a bijective mapping $\Phi$ such that for any $r\in[n-1]$
\iftoggle{singlecolumn}{
\begin{align}
    \Pr(\Phi(\tilde{Y}_{1,1})\neq \tilde{X}_{1,1})<\Pr(\Phi(\tilde{Y}_{1,1})\neq \tilde{X}_{1,r+1}).
\end{align}
}{
\begin{align}
    \Pr(\Phi(\tilde{Y}_{1,1})\neq \tilde{X}_{1,1})<\Pr(\Phi(\tilde{Y}_{1,1})\neq \tilde{X}_{1,r+1}).
\end{align}
}
For given $\Phi$ and $r\in[n-1]$ let
\iftoggle{singlecolumn}{
\begin{align}
    q_{0,\Phi}^{(r)}&\triangleq\Pr(\Phi(\tilde{Y}_{1,1})\neq \tilde{X}_{1,r+1})\\
    q_{1,\Phi}&\triangleq\Pr(\Phi(\tilde{Y}_{1,1})\neq \tilde{X}_{1,1})
\end{align}
}{
\begin{align}
    q_{0,\Phi}^{(r)}&\triangleq\Pr(\Phi(\tilde{Y}_{1,1})\neq \tilde{X}_{1,r+1})\\
    q_{1,\Phi}&\triangleq\Pr(\Phi(\tilde{Y}_{1,1})\neq \tilde{X}_{1,1})
\end{align}
}
Here, our goal is to show that there exists at least one $\Phi$ satisfying
\iftoggle{singlecolumn}{
\begin{align}
    q_{0,\Phi}^{(r)}>q_{1,\Phi}\label{eq:q0biggerthanq1},\:\forall r\in[n-1].
\end{align}
}{
\begin{align}
    q_{0,\Phi}^{(r)}>q_{1,\Phi}\label{eq:q0biggerthanq1},\:\forall r\in[n-1].
\end{align}
}
We can rewrite $q_{0,\Phi}^{(r)}$ as
\iftoggle{singlecolumn}{
\begin{align}
    q_{0,\Phi}^{(r)}&=\sum\limits_{x_1\in\mathfrak{X}}\sum\limits_{x_2\in\mathfrak{X}} \Pr(\tilde{X}_{1,1}=x_1) \Pr(\tilde{X}_{1,r+1}=x_2|\tilde{X}_{1,1}=x_1) \Pr(\Phi(\tilde{Y}_{1,1})\neq x_2|\tilde{X}_{1,1} = x_1) \\
    &= \sum\limits_{i=1}^{|\mathfrak{X}|}\sum\limits_{j=1}^{|\mathfrak{X}|} u_i\: (\mathbf{P}^r)_{i,j}\: [1-p_{Y|X}(\Phi^{-1}(j)|i)]\\
    &= \sum\limits_{i=1}^{|\mathfrak{X}|}\sum\limits_{j=1}^{|\mathfrak{X}|} u_i\: [(1-\gamma^r)u_j+\gamma^r \delta_{i j}]\: [1-p_{Y|X}(\Phi^{-1}(j)|i)]\\
    &=(1-\gamma^r) \sum\limits_{i=1}^{|\mathfrak{X}|}\sum\limits_{j=1}^{|\mathfrak{X}|} u_i\: u_j\: [1-p_{Y|X}(\Phi^{-1}(j)|i)] + \gamma^r \sum\limits_{i=1}^{|\mathfrak{X}|} u_i\:  [1-p_{Y|X}(\Phi^{-1}(i)|i)]\\
    &= (1-\gamma^r) q_{0,\Phi}^\prime+ \gamma^r q_{1,\Phi}^\prime
\end{align}
}{
\begin{align}
    q_{0,\Phi}^{(r)}&=\sum\limits_{x_1\in\mathfrak{X}}\sum\limits_{x_2\in\mathfrak{X}} \Pr(\tilde{X}_{1,1}=x_1) \notag\\&\hspace{4em}\Pr(\tilde{X}_{1,r+1}=x_2|\tilde{X}_{1,1}=x_1)\notag\\&\hspace{4em} \Pr(\Phi(\tilde{Y}_{1,1})\neq x_2|\tilde{X}_{1,1} = x_1) \\
    &= \sum\limits_{i=1}^{|\mathfrak{X}|}\sum\limits_{j=1}^{|\mathfrak{X}|} u_i\: (\mathbf{P}^r)_{i,j}\: [1-p_{Y|X}(\Phi^{-1}(j)|i)]\\
    &= \sum\limits_{i=1}^{|\mathfrak{X}|}\sum\limits_{j=1}^{|\mathfrak{X}|} u_i\: [(1-\gamma^r)u_j+\gamma^r \delta_{i j}]\notag\\&\hspace{6em} [1-p_{Y|X}(\Phi^{-1}(j)|i)]\\
    &=(1-\gamma^r) \sum\limits_{i=1}^{|\mathfrak{X}|}\sum\limits_{j=1}^{|\mathfrak{X}|} u_i\: u_j\: [1-p_{Y|X}(\Phi^{-1}(j)|i)] \notag\\ &\hspace{3.5em}+ \gamma^r \sum\limits_{i=1}^{|\mathfrak{X}|} u_i\:  [1-p_{Y|X}(\Phi^{-1}(i)|i)]\\
    &= (1-\gamma^r) q_{0,\Phi}^\prime+ \gamma^r q_{1,\Phi}^\prime
\end{align}
}
where 
\iftoggle{singlecolumn}{
\begin{align}
    q_{0,\Phi}^\prime&\triangleq \sum\limits_{i=1}^{|\mathfrak{X}|}\sum\limits_{j=1}^{|\mathfrak{X}|} u_i\: u_j\: [1-p_{Y|X}(\Phi^{-1}(j)|i)]\\
    q_{1,\Phi}^\prime &\triangleq \sum\limits_{i=1}^{|\mathfrak{X}|} u_i\:  [1-p_{Y|X}(\Phi^{-1}(i)|i)]
\end{align}
}{
\begin{align}
    q_{0,\Phi}^\prime&\triangleq \sum\limits_{i=1}^{|\mathfrak{X}|}\sum\limits_{j=1}^{|\mathfrak{X}|} u_i\: u_j\: [1-p_{Y|X}(\Phi^{-1}(j)|i)]\\
    q_{1,\Phi}^\prime &\triangleq \sum\limits_{i=1}^{|\mathfrak{X}|} u_i\:  [1-p_{Y|X}(\Phi^{-1}(i)|i)]
\end{align}
}
Similarly, we rewrite $q_{1,\Phi}$ as
\iftoggle{singlecolumn}{
\begin{align}
    q_{1,\Phi}&=\sum\limits_{x\in\mathfrak{X}} \Pr(\tilde{X}_{1,1}=x) \Pr(\Phi(\tilde{Y}_{1,1})\neq x|\tilde{X}_{1,1} = x)  \\
    &=\sum\limits_{i=1}^{|\mathfrak{X}|} u_i [1-p_{Y|X}(\Phi^{-1}(i)|i)]\\
    &= q_{1,\Phi}^\prime
\end{align}
}{
\begin{align}
    q_{1,\Phi}&=\sum\limits_{x\in\mathfrak{X}} \Pr(\tilde{X}_{1,1}=x) \Pr(\Phi(\tilde{Y}_{1,1})\neq x|\tilde{X}_{1,1} = x)  \\
    &=\sum\limits_{i=1}^{|\mathfrak{X}|} u_i [1-p_{Y|X}(\Phi^{-1}(i)|i)]\\
    &= q_{1,\Phi}^\prime
\end{align}
}
Thus, for any $\gamma\in[0,1)$, we have
\iftoggle{singlecolumn}{
\begin{align}
    \exists \Phi,\: \forall r\in[n-1], \: q_{0,\Phi}^{(r)}>q_{1,\Phi} \iff \exists \Phi,\: q_{0,\Phi}^\prime>q_{1,\Phi}^\prime \label{eq:sddequivalencetoiid}
\end{align}
}{
\begin{align}
    \exists \Phi,\: \forall r\in[n-1], \: q_{0,\Phi}^{(r)}>q_{1,\Phi} \iff \exists \Phi,\: q_{0,\Phi}^\prime>q_{1,\Phi}^\prime \label{eq:sddequivalencetoiid}
\end{align}
}
Note that $q_{0,\Phi}^\prime$ and $q_{1,\Phi}^\prime$ correspond to
\iftoggle{singlecolumn}{
\begin{align}
    q_{0,\Phi}^\prime&=\Pr(\Phi(\tilde{Y}_{1,1})\neq \tilde{X}_{1,j}),\hspace{1em} j\neq 1\\
    q_{1,\Phi}^\prime&=\Pr(\Phi(\tilde{Y}_{1,1})\neq \tilde{X}_{1,1})
\end{align}
}{
\begin{align}
    q_{0,\Phi}^\prime&=\Pr(\Phi(\tilde{Y}_{1,1})\neq \tilde{X}_{1,j}),\hspace{1em} j\neq 1\\
    q_{1,\Phi}^\prime&=\Pr(\Phi(\tilde{Y}_{1,1})\neq \tilde{X}_{1,1})
\end{align}
}
if the entries $\tilde{X}_{i,j}$ of $\mathbf{G}^{(1)}$ were drawn \emph{i.i.d.} from the distribution $\pi=[u_1,\dots,u_{|\mathfrak{X}|}]$, instead of a Markov process. Thus, we recall the discrete random variable $W$, defined in equations~\eqref{eq:iidW1}-\eqref{eq:iidW2}, with 
\iftoggle{singlecolumn}{
\begin{align}
    p_W(i) &= u_i,\:\forall i\in\mathfrak{X}\\
    p_{Y|W}(y|w)&=p_{Y|X}(y|w),\:\forall (w,y)\in\mathfrak{X}^2
\end{align}
}{
\begin{align}
    p_W(i) &= u_i,\:\forall i\in\mathfrak{X}\\
    p_{Y|W}(y|w)&=p_{Y|X}(y|w),\:\forall (w,y)\in\mathfrak{X}^2
\end{align}
}
We note that similar to Appendix~\ref{proof:noisyreplicadetection}, this equivalence induced by~\eqref{eq:sddequivalencetoiid} is due to the specific Markov structure we adopted in Definition~\ref{defn:markovdb}.

Then, we can rewrite $q_{0,\Phi}^\prime$ and $q_{1,\Phi}^\prime$ as
\iftoggle{singlecolumn}{
\begin{align}
    q_{0,\Phi}^\prime&= \sum\limits_{w_1\in\mathfrak{X}}\sum\limits_{w_2\in\mathfrak{X}}p_W(w_1) p_W(w_2)[1-p_{Y|X}(\Phi^{-1}(w_2)|w_1)]\\
    q_{1,\Phi}^\prime
    &= \sum\limits_{w\in\mathfrak{X}}p_W(w) [1-p_{Y|W}(\Phi^{-1}(w)|w)]
\end{align}
}{
\begin{align}
    q_{0,\Phi}^\prime&= \sum\limits_{w_1\in\mathfrak{X}}\sum\limits_{w_2\in\mathfrak{X}}p_W(w_1) p_W(w_2)\notag\\ &\hspace{5em}[1-p_{Y|X}(\Phi^{-1}(w_2)|w_1)]\\
    q_{1,\Phi}^\prime
    &= \sum\limits_{w\in\mathfrak{X}}p_W(w) [1-p_{Y|W}(\Phi^{-1}(w)|w)]
\end{align}
}

We first prove the following:
\iftoggle{singlecolumn}{
\begin{align}
    \sum\limits_{\Phi} q_{0,\Phi}^\prime-q_{1,\Phi}^\prime=0\label{eq:q0q1overphizero}
\end{align}
}{
\begin{align}
    \sum\limits_{\Phi} q_{0,\Phi}^\prime-q_{1,\Phi}^\prime=0\label{eq:q0q1overphizero}
\end{align}
}
where the summation is over all permutations of $\mathfrak{X}$. For brevity, let 
\iftoggle{singlecolumn}{
\begin{align}
    Q_{i,j}\triangleq p_{Y|W}(j|i)\quad \forall i,j\in\mathfrak{X} \label{eq:definep}
\end{align}
}{
\begin{align}
    Q_{i,j}\triangleq p_{Y|W}(j|i)\quad \forall i,j\in\mathfrak{X} \label{eq:definep}
\end{align}
}

Note that from \eqref{eq:definep}, we have
\iftoggle{singlecolumn}{
\begin{align}
    \sum\limits_{j=1}^{|\mathfrak{X}|}& Q_{i,j}=1\quad \forall i\in \mathfrak{X}\label{eq:pijsumto1}\\
    \sum\limits_{i=1}^{|\mathfrak{X}|}&\sum\limits_{j=1}^{|\mathfrak{X}|} Q_{i,j}=|\mathfrak{X}|\label{eq:pijsumtoX}
\end{align}
}{
\begin{align}
    \sum\limits_{j=1}^{|\mathfrak{X}|}& Q_{i,j}=1\quad \forall i\in \mathfrak{X}\label{eq:pijsumto1}\\
    \sum\limits_{i=1}^{|\mathfrak{X}|}&\sum\limits_{j=1}^{|\mathfrak{X}|} Q_{i,j}=|\mathfrak{X}|\label{eq:pijsumtoX}
\end{align}
}
Taking the sum over all $\Phi$, we obtain
\iftoggle{singlecolumn}{
\begin{align}
    \sum\limits_{\Phi} q_{0,\Phi}^\prime-q_{1,\Phi}^\prime
    &= \sum\limits_{\Phi}\sum\limits_{i=1}^{|\mathfrak{X}|}\sum\limits_{j=1}^{|\mathfrak{X}|} p_W(i) p_W(j) Q_{i,\Phi^{-1}(j)}-\sum\limits_{\Phi}\sum\limits_{i=1}^{|\mathfrak{X}|} p_W(i) Q_{i,\Phi^{-1}(i)}\label{eq:q0q1sumoverphi}
\end{align}
}{
\begin{align}
    \sum\limits_{\Phi} q_{0,\Phi}^\prime-q_{1,\Phi}^\prime
    &= \sum\limits_{\Phi}\sum\limits_{i=1}^{|\mathfrak{X}|}\sum\limits_{j=1}^{|\mathfrak{X}|} p_W(i) p_W(j) Q_{i,\Phi^{-1}(j)}\notag\\&\hspace{3em}-\sum\limits_{\Phi}\sum\limits_{i=1}^{|\mathfrak{X}|} p_W(i) Q_{i,\Phi^{-1}(i)}\label{eq:q0q1sumoverphi}
\end{align}
}
Combining \eqref{eq:pijsumto1}-\eqref{eq:q0q1sumoverphi}, we now show that both terms on the RHS of \eqref{eq:q0q1sumoverphi} are equal to $(|\mathfrak{X}|-1)!$.
We first look at the second term on the RHS of \eqref{eq:q0q1sumoverphi}.
\iftoggle{singlecolumn}{
\begin{align}
    \sum\limits_{\Phi}\sum\limits_{i=1}^{|\mathfrak{X}|} p_W(i) Q_{i,\Phi^{-1}(i)}&=\sum\limits_{i=1}^{|\mathfrak{X}|} p_W(i) \sum\limits_{\Phi} Q_{i,\Phi^{-1}(i)}\\
    &=(|\mathfrak{X}|-1)!\sum\limits_{j=1}^{|\mathfrak{X}|}\sum\limits_{i=1}^{|\mathfrak{X}|} p_W(i) Q_{i,j}\label{eq:permlhs}\\
    &= (|\mathfrak{X}|-1)!\sum\limits_{i=1}^{|\mathfrak{X}|}\sum\limits_{j=1}^{|\mathfrak{X}|} p_{W,Y}(i,j)\\
    &= (|\mathfrak{X}|-1)!\label{eq:philhs}
\end{align}
}{
\begin{align}
    \sum\limits_{\Phi}&\sum\limits_{i=1}^{|\mathfrak{X}|} p_W(i) Q_{i,\Phi^{-1}(i)}\notag\\&=\sum\limits_{i=1}^{|\mathfrak{X}|} p_W(i) \sum\limits_{\Phi} Q_{i,\Phi^{-1}(i)}\\
    &=(|\mathfrak{X}|-1)!\sum\limits_{j=1}^{|\mathfrak{X}|}\sum\limits_{i=1}^{|\mathfrak{X}|} p_W(i) Q_{i,j}\label{eq:permlhs}\\
    &= (|\mathfrak{X}|-1)!\sum\limits_{i=1}^{|\mathfrak{X}|}\sum\limits_{j=1}^{|\mathfrak{X}|} p_{W,Y}(i,j)\\
    &= (|\mathfrak{X}|-1)!\label{eq:philhs}
\end{align}
}
where \eqref{eq:permlhs} follows from the fact that for any $j\in\mathfrak{X}$, we have exactly $(|\mathfrak{X}|-1)!$ permutations assigning $j$ to $i$ (or equivalently $\Phi^{-1}(i)=j$). 

Now we look at the first term.
\iftoggle{singlecolumn}{
\begin{align}
    \sum\limits_{\Phi}\sum\limits_{i=1}^{|\mathfrak{X}|}\sum\limits_{j=1}^{|\mathfrak{X}|} p_W(i)p_W(j) Q_{i,\Phi^{-1}(j)}&= \sum\limits_{i=1}^{|\mathfrak{X}|}\sum\limits_{j=1}^{|\mathfrak{X}|} p_W(i) p_W(j)  \sum\limits_{\Phi} Q_{i,\Phi^{-1}(j)}\\
    &=\sum\limits_{i=1}^{|\mathfrak{X}|}\sum\limits_{j=1}^{|\mathfrak{X}|} p_W(i) p_W(j) (|\mathfrak{X}|-1)!\sum\limits_{k=1}^{|\mathfrak{X}|}Q_{i,k}\label{eq:permrhs}\\
    &= (|\mathfrak{X}|-1)!\label{eq:phirhs}
\end{align}
}{
\begin{align}
    \sum\limits_{\Phi}&\sum\limits_{i=1}^{|\mathfrak{X}|}\sum\limits_{j=1}^{|\mathfrak{X}|} p_W(i)p_W(j) Q_{i,\Phi^{-1}(j)}\notag\\&= \sum\limits_{i=1}^{|\mathfrak{X}|}\sum\limits_{j=1}^{|\mathfrak{X}|} p_W(i) p_W(j)  \sum\limits_{\Phi} Q_{i,\Phi^{-1}(j)}\\
    &=\sum\limits_{i=1}^{|\mathfrak{X}|}\sum\limits_{j=1}^{|\mathfrak{X}|} p_W(i) p_W(j) (|\mathfrak{X}|-1)!\sum\limits_{k=1}^{|\mathfrak{X}|}Q_{i,k}\label{eq:permrhs}\\
    &= (|\mathfrak{X}|-1)!\label{eq:phirhs}
\end{align}
}
Again, \eqref{eq:permrhs} follows from the fact that for each $k\in\mathfrak{X}$, there are exactly $(|\mathfrak{X}|-1)!$ permutations $\Phi$ which map $k$ to $j$ (or equivalently $\Phi^{-1}(j)=k$).

Thus, we have shown that both terms on the RHS of \eqref{eq:q0q1sumoverphi} are equal to $(|\mathfrak{X}|-1)!$, proving~\eqref{eq:q0q1overphizero}. Now, we only need to show that
\iftoggle{singlecolumn}{
\begin{align}
    \exists \Phi\quad q_{0,\Phi}^\prime-q_{1,\Phi}^\prime\neq 0. \label{eq:phifinal}
\end{align}
}{
\begin{align}
    \exists \Phi\quad q_{0,\Phi}^\prime-q_{1,\Phi}^\prime\neq 0. \label{eq:phifinal}
\end{align}
}
This is because unless $q_{0,\Phi}^\prime-q_{1,\Phi}^\prime= 0$ $\forall \Phi$, due to~\eqref{eq:q0q1overphizero}, we automatically have a $\Phi$ such that this difference is strictly positive. This follows from the fact if $\exists\Phi$ $q_{0,\Phi}^\prime-q_{1,\Phi}^\prime\neq 0$, we have either
\begin{itemize}
    \item $q_{0,\Phi}^\prime-q_{1,\Phi}^\prime> 0$, which is the desired result, or
    \item $q_{0,\Phi}^\prime-q_{1,\Phi}^\prime<0$, which from~\eqref{eq:q0q1sumoverphi} requires the existence of another permutation $\tilde{\Phi}$ with $q_{0,\tilde{\Phi}}^\prime-q_{1,\tilde{\Phi}}^\prime> 0$.
\end{itemize}

We will prove~\eqref{eq:phifinal} by arguing that 
\iftoggle{singlecolumn}{
\begin{align}
    q_{0,\Phi}^\prime-q_{1,\Phi}^\prime= 0\quad \forall \Phi \iff p_{Y|W}(y|w)=p_Y(y)\quad \forall (w,y)\in\mathfrak{X}^2 \label{eq:phicondition}
\end{align}
}{
\begin{align}
    q_{0,\Phi}^\prime&-q_{1,\Phi}^\prime= 0\hspace{0.5em} \forall \Phi \notag\\&\iff p_{Y|W}(y|w)=p_Y(y)\hspace{0.5em} \forall (w,y)\in\mathfrak{X}^2 \label{eq:phicondition}
\end{align}
}
which contradicts our $p_{Y|X}\neq p_Y$ assumption.

We first prove the \say{only if} part. Suppose ${p_{Y|W}(y|w)=p_Y(y)}$, $\forall (w,y)\in\mathfrak{X}^2$. In other words, $Q_{i,k}=Q_{j,k}$, $\forall{(i,j,k)\in\mathfrak{X}^3}$. Then for any $\Phi$, we have
\iftoggle{singlecolumn}{
\begin{align}
    q_{0,\Phi}^\prime&=\sum\limits_{i=1}^{|\mathfrak{X}|}\sum\limits_{j=1}^{|\mathfrak{X}|} p_W(i) p_W(j) Q_{i,\Phi^{-1}(j)}\\&= \sum\limits_{i=1}^{|\mathfrak{X}|}\sum\limits_{j=1}^{|\mathfrak{X}|} p_W(i) p_W(j) Q_{k,\Phi^{-1}(j)}, \hspace{1em} k\neq i\\
    &=\sum\limits_{j=1}^{|\mathfrak{X}|}  p_W(j) Q_{k,\Phi^{-1}(j)}\\
    &=\sum\limits_{j=1}^{|\mathfrak{X}|}  p_W(j) Q_{j,\Phi^{-1}(j)}\\
    &= q_{1,\Phi}^\prime
\end{align}
}{
\begin{align}
    q_{0,\Phi}^\prime&=\sum\limits_{i=1}^{|\mathfrak{X}|}\sum\limits_{j=1}^{|\mathfrak{X}|} p_W(i) p_W(j) Q_{i,\Phi^{-1}(j)}\\&= \sum\limits_{i=1}^{|\mathfrak{X}|}\sum\limits_{j=1}^{|\mathfrak{X}|} p_W(i) p_W(j) Q_{k,\Phi^{-1}(j)}, \hspace{1em} k\neq i\\
    &=\sum\limits_{j=1}^{|\mathfrak{X}|}  p_W(j) Q_{k,\Phi^{-1}(j)}\\
    &=\sum\limits_{j=1}^{|\mathfrak{X}|}  p_W(j) Q_{j,\Phi^{-1}(j)}\\
    &= q_{1,\Phi}^\prime
\end{align}
}
finishing the proof of the \say{only if} part.

Now, we prove the \say{if} part. Suppose the LHS of \eqref{eq:phicondition} holds. In other words, for any $\Phi$
\iftoggle{singlecolumn}{
\begin{align}
 \sum\limits_{i=1}^{|\mathfrak{X}|} p_W(i) Q_{i,\Phi^{-1}(i)} = \sum\limits_{i=1}^{|\mathfrak{X}|}\sum\limits_{j=1}^{|\mathfrak{X}|} p_W(i) p_W(j) Q_{i,\Phi^{-1}(j)} \label{eq:lhsequivalence}
\end{align}
}{
\begin{align}
 \sum\limits_{i=1}^{|\mathfrak{X}|} p_W(i) Q_{i,\Phi^{-1}(i)} = \sum\limits_{i=1}^{|\mathfrak{X}|}\sum\limits_{j=1}^{|\mathfrak{X}|} p_W(i) p_W(j) Q_{i,\Phi^{-1}(j)} \label{eq:lhsequivalence}
\end{align}
}

First, we look at the binary case $\mathfrak{X}=\{1,2\}$. In this case, we obtain
\iftoggle{singlecolumn}{
\begin{align}
    p_W(1) Q_{1,1}+p_W(2) Q_{2,2} &= p_W(1)^2 Q_{1,1}+ p_W(1) p_W(2) Q_{1,2}\notag\\
    &\hspace{4em}+ p_W(2) p_W(1) Q_{2,1}+ p_W(2)^2 Q_{2,2}\\
    Q_{1,1}+Q_{2,2}&=Q_{1,2}+Q_{2,1}\\
    Q_{1,1}+Q_{2,2}&=1-Q_{1,1}+1-Q_{2,2}\\
    Q_{1,1}+Q_{2,2}&= 1
\end{align}
}{
\begin{align}
    p_W(1)& Q_{1,1}+p_W(2) Q_{2,2} \notag\\&= p_W(1)^2 Q_{1,1}+ p_W(1) p_W(2) Q_{1,2}\notag\\
    &+ p_W(2) p_W(1) Q_{2,1}+ p_W(2)^2 Q_{2,2}
\end{align}
\begin{align}
       Q_{1,1}+Q_{2,2}&=Q_{1,2}+Q_{2,1}\\
    Q_{1,1}+Q_{2,2}&=1-Q_{1,1}+1-Q_{2,2}\\
    Q_{1,1}+Q_{2,2}&= 1
\end{align}
}
for the identity permutation. This implies that  $Q_{1,1}=Q_{2,1}$ and $Q_{1,2}=Q_{2,2}$ and this in turn implies ${p_{Y|W}(y|w)=p_Y(y)}$ ${\forall (w,y)\in\mathfrak{X}^2}$, concluding the proof for the binary case.

Now, we investigate the larger alphabet sizes ($|\mathfrak{X}|\ge 3$). Since the equality holds for all $\Phi$, we now carefully select some one-cycle permutations $\Phi$ to construct a system of linear equations. 

Let $\Phi_{\text{id}}$ be the identity permutation and $\Phi_{i-j},\Phi_{i-k},\Phi_{i-j-k}$ denote the one-cycle permutations with the respective cycles $(i\: j)$, $(i\: k)$ and $(i\: j\: k)$ for some distinct $(i,j,k)$ triplet. For the rest of this proof, we will jointly solve the system of equations put forward by these permutations.

Recall that $p_W(l)=u_l$, $\forall l\in\mathfrak{X}$. Then, $\Phi_{\text{id}}$ leads to
\iftoggle{singlecolumn}{
\begin{align}
u_i Q_{i,i}+u_j Q_{j,j} 
+u_k Q_{k,k}+\sum\limits_{l\neq i,j,k}u_l Q_{l,l}  
&=u_i \sum\limits_{t=1}^{|\mathfrak{X}|} u_t Q_{t,i}+ u_j \sum\limits_{t=1}^{|\mathfrak{X}|} u_t Q_{t,j}  + u_k \sum\limits_{t=1}^{|\mathfrak{X}|} u_t Q_{t,k} \notag\\ &\hspace{2em}+ \sum\limits_{l\neq i,j,k} u_l \sum\limits_{t=1}^{|\mathfrak{X}|} u_t Q_{t,l}\label{eq:phiid}
\end{align}
}{
\begin{align}
u_i Q_{i,i}&+u_j Q_{j,j} 
+u_k Q_{k,k}+\sum\limits_{l\neq i,j,k}u_l Q_{l,l}  \notag\\
&=u_i \sum\limits_{t=1}^{|\mathfrak{X}|} u_t Q_{t,i}+ u_j \sum\limits_{t=1}^{|\mathfrak{X}|} u_t Q_{t,j}  \notag\\&\hspace{2em}+ u_k \sum\limits_{t=1}^{|\mathfrak{X}|} u_t Q_{t,k} + \sum\limits_{l\neq i,j,k} u_l \sum\limits_{t=1}^{|\mathfrak{X}|} u_t Q_{t,l}\label{eq:phiid}
\end{align}
}

Similarly, $\Phi_{i-j}$ leads to
\iftoggle{singlecolumn}{
\begin{align}
    u_i Q_{i,j}+u_j Q_{j,i}+u_k Q_{k,k}+\sum\limits_{l\neq i,j,k}u_l Q_{l,l}
    &= u_i \sum\limits_{t=1}^{|\mathfrak{X}|} u_t Q_{t,j}+ u_j \sum\limits_{t=1}^{|\mathfrak{X}|} u_t Q_{t,i}
    + u_k \sum\limits_{t=1}^{|\mathfrak{X}|} u_t Q_{t,k} \notag\\ &\hspace{2em}+ \sum\limits_{l\neq i,j,k} u_l \sum\limits_{t=1}^{|\mathfrak{X}|} u_t Q_{t,l}\label{eq:phiij}
\end{align}
}{
\begin{align}
    u_i Q_{i,j}&+u_j Q_{j,i}+u_k Q_{k,k}+\sum\limits_{l\neq i,j,k}u_l Q_{l,l}\notag\\
    &= u_i \sum\limits_{t=1}^{|\mathfrak{X}|} u_t Q_{t,j}+ u_j \sum\limits_{t=1}^{|\mathfrak{X}|} u_t Q_{t,i}
   \notag\\&\hspace{2em} + u_k \sum\limits_{t=1}^{|\mathfrak{X}|} u_t Q_{t,k} + \sum\limits_{l\neq i,j,k} u_l \sum\limits_{t=1}^{|\mathfrak{X}|} u_t Q_{t,l}\label{eq:phiij}
\end{align}
}
When we subtract \eqref{eq:phiij} from \eqref{eq:phiid}, we obtain
\iftoggle{singlecolumn}{
\begin{align}
    u_i (Q_{i,i}-Q_{i,j})-u_j (Q_{j,i}-Q_{j,j}) &= (u_i-u_j) \sum\limits_{t=1}^{|\mathfrak{X}|} u_t (Q_{t,i}-Q_{t,j})
\end{align}
}{
\begin{align}
    u_i (Q_{i,i}-Q_{i,j})&-u_j (Q_{j,i}-Q_{j,j}) \notag\\&= (u_i-u_j) \sum\limits_{t=1}^{|\mathfrak{X}|} u_t (Q_{t,i}-Q_{t,j})
\end{align}
}
Equivalently, we have
\iftoggle{singlecolumn}{
\begin{align}
    p_{W,Y}(i,i)-p_{W,Y}(i,j)-p_{W,Y}(j,i)+p_{W,Y}(j,j)&=p_W(i) p_Y(i)-p_W(i) p_Y(j)\notag\\ &\hspace{2em}-p_W(j) p_Y(i)+p_W(j) p_Y(j)
\end{align}
}{
\begin{align}
    p_{W,Y}(i,i)&-p_{W,Y}(i,j)-p_{W,Y}(j,i)+p_{W,Y}(j,j)\notag\\ &=p_W(i) p_Y(i)-p_W(i) p_Y(j)\notag\\&\hspace{2em}-p_W(j) p_Y(i)+p_W(j) p_Y(j)
\end{align}
}

Following the same steps, from $\Phi_{i-k}$ we get
\iftoggle{singlecolumn}{
\begin{align}
    p_{W,Y}(i,i)-p_{W,Y}(i,k)-p_{W,Y}(k,i)+p_{W,Y}(k,k)&=p_W(i) p_Y(i)-p_W(i) p_Y(k)\notag\\ &\hspace{2em}-p_W(k) p_Y(i)+p_W(k) p_Y(k)\label{eq:phiikrearrange}
\end{align}
}{
\begin{align}
    p_{W,Y}(i,i)&-p_{W,Y}(i,k)-p_{W,Y}(k,i)+p_{W,Y}(k,k)\notag\\ &=p_W(i) p_Y(i)-p_W(i) p_Y(k)\notag\\&\hspace{2em}-p_W(k) p_Y(i)+p_W(k) p_Y(k)\label{eq:phiikrearrange}
\end{align}
}

We can rearrange the terms in \eqref{eq:phiikrearrange} to obtain
\iftoggle{singlecolumn}{
\begin{align}
    p_{W,Y}(i,k)&= p_{W,Y}(i,i)-p_{W,Y}(k,i)+p_{W,Y}(k,k)-p_W(i) p_Y(i)\notag\\ &\hspace{2em}+p_W(i) p_Y(k)+p_W(k) p_Y(i)-p_W(k) p_Y(k) \label{eq:pxyik}
\end{align}
}{
\begin{align}
    p_{W,Y}(i,k)&= p_{W,Y}(i,i)-p_{W,Y}(k,i)+p_{W,Y}(k,k)\notag\\&\hspace{2em}-p_W(i) p_Y(i)+p_W(i) p_Y(k)\notag\\&\hspace{2em}+p_W(k) p_Y(i)-p_W(k) p_Y(k) \label{eq:pxyik}
\end{align}
}

Furthermore, $\Phi_{i-j-k}$ gives us
\iftoggle{singlecolumn}{
\begin{align}
    u_i Q_{i,k}+u_j Q_{j,i}+u_k Q_{k,j}+\sum\limits_{l\neq i,j,k}u_l Q_{l,l}&= u_i \sum\limits_{t=1}^{|\mathfrak{X}|} u_t Q_{t,k}+ u_j \sum\limits_{t=1}^{|\mathfrak{X}|} u_t Q_{t,i}\notag\\&\hspace{2em}+ u_k \sum\limits_{t=1}^{|\mathfrak{X}|} u_t Q_{t,j} + \sum\limits_{l\neq i,j,k} u_l \sum\limits_{t=1}^{|\mathfrak{X}|} u_t Q_{t,l}\label{eq:phiijk}
\end{align}
}{
\begin{align}
    u_i Q_{i,k}&+u_j Q_{j,i}+u_k Q_{k,j}+\sum\limits_{l\neq i,j,k}u_l Q_{l,l}\notag\\&= u_i \sum\limits_{t=1}^{|\mathfrak{X}|} u_t Q_{t,k}+ u_j \sum\limits_{t=1}^{|\mathfrak{X}|} u_t Q_{t,i}
    \notag\\&\hspace{2em}+ u_k \sum\limits_{t=1}^{|\mathfrak{X}|} u_t Q_{t,j} + \sum\limits_{l\neq i,j,k} u_l \sum\limits_{t=1}^{|\mathfrak{X}|} u_t Q_{t,l}\label{eq:phiijk}
\end{align}
}

Subtracting \eqref{eq:phiijk} from \eqref{eq:phiij} yields
\iftoggle{singlecolumn}{
\begin{align}
    u_i(Q_{i,j}-Q_{i,k})+u_k(Q_{k,k}-Q_{k,j})&= (u_i-u_k)\left[ \sum\limits_{t=1}^{|\mathfrak{X}|} u_t Q_{t,j} - \sum\limits_{t=1}^{|\mathfrak{X}|} u_t Q_{t,k}\right]
\end{align}
}{
\begin{align}
    u_i(Q_{i,j}&-Q_{i,k})+u_k(Q_{k,k}-Q_{k,j})\notag\\&= (u_i-u_k)\left[ \sum\limits_{t=1}^{|\mathfrak{X}|} u_t Q_{t,j} - \sum\limits_{t=1}^{|\mathfrak{X}|} u_t Q_{t,k}\right]
\end{align}
}

Equivalently,
\iftoggle{singlecolumn}{
\begin{align}
    p_{W,Y}(i,j)-p_{W,Y}(i,k)-p_{W,Y}(k,j)+p_{W,Y}(k,k) &=p_W(i) p_Y(j)-p_W(i) p_Y(k)\notag\\ &\hspace{2em}-p_W(k) p_Y(j)+p_W(k) p_Y(k)\label{eq:pxyiksub}
\end{align}
}{
\begin{align}
    p_{W,Y}(i,j)&-p_{W,Y}(i,k)-p_{W,Y}(k,j)+p_{W,Y}(k,k)\notag\\ &=p_W(i) p_Y(j)-p_W(i) p_Y(k)\notag\\&\hspace{2em}-p_W(k) p_Y(j)+p_W(k) p_Y(k)\label{eq:pxyiksub}
\end{align}
}

Plugging $p_{W,Y}(i,k)$ from \eqref{eq:pxyik} into \eqref{eq:pxyiksub} yields
\iftoggle{singlecolumn}{
\begin{align}
    p_{W,Y}(i,j)-p_{W,Y}(i,i)-p_{W,Y}(k,j)+p_{W,Y}(k,i) &=p_W(i) p_Y(j)-p_W(i) p_Y(i)\notag\\ &\hspace{2em}-p_W(k) p_Y(j)+p_W(k) p_Y(i) \label{eq:phisumk}
\end{align}
}{
\begin{align}
    p_{W,Y}(i,j)&-p_{W,Y}(i,i)-p_{W,Y}(k,j)+p_{W,Y}(k,i)\notag\\ &=p_W(i) p_Y(j)-p_W(i) p_Y(i)\notag\\&\hspace{2em}-p_W(k) p_Y(j)+p_W(k) p_Y(i) \label{eq:phisumk}
\end{align}
}

Taking a summation over $k$ in \eqref{eq:phisumk} gives us
\iftoggle{singlecolumn}{
\begin{align}
    |\mathfrak{X}| p_{W,Y}(i,j)-|\mathfrak{X}| p_{W,Y}(i,i)-p_{Y}(j)+p_{Y}(i) &=|\mathfrak{X}| p_W(i) p_Y(j)-|\mathfrak{X}| p_W(i) p_Y(i)\notag\\ &\hspace{2em}- p_Y(j)+ p_Y(i)\\
    p_{W,Y}(i,j)- p_{W,Y}(i,i)&=p_W(i) p_Y(j)- p_W(i) p_Y(i)\label{eq:pxyiisub}
\end{align}
}{
\begin{align}
    |\mathfrak{X}| p_{W,Y}(i,j)&-|\mathfrak{X}| p_{W,Y}(i,i)-p_{Y}(j)+p_{Y}(i)\notag\\ &=|\mathfrak{X}| p_W(i) p_Y(j)-|\mathfrak{X}| p_W(i) p_Y(i)\notag\\&\hspace{2em}- p_Y(j)+ p_Y(i)\\
    p_{W,Y}(i,j)&- p_{W,Y}(i,i)\notag\\&=p_W(i) p_Y(j)- p_W(i) p_Y(i)\label{eq:pxyiisub}
\end{align}
}

Similarly, taking a summation over $j$ in \eqref{eq:pxyiisub} yields
\iftoggle{singlecolumn}{
\begin{align}
    p_{W}(i)-|\mathfrak{X}| p_{W,Y}(i,i)&=p_W(i) -|\mathfrak{X}| p_W(i) p_Y(i)\\
    p_{W,Y}(i,i) &= p_W(i) p_Y(i)\label{eq:pxyii}
\end{align}
}{
\begin{align}
    p_{W}(i)-|\mathfrak{X}| p_{W,Y}(i,i)&=p_W(i) -|\mathfrak{X}| p_W(i) p_Y(i)\\
    p_{W,Y}(i,i) &= p_W(i) p_Y(i)\label{eq:pxyii}
\end{align}
}

Plugging \eqref{eq:pxyii} into \eqref{eq:pxyiisub} yields
\iftoggle{singlecolumn}{
\begin{align}
    p_{W,Y}(i,j)- p_{W,Y}(i,i)&=p_W(i) p_Y(j)- p_W(i) p_Y(i)\\
    p_{W,Y}(i,j)&=p_W(i) p_Y(j)
\end{align}
}{
\begin{align}
    p_{W,Y}(i,j)- p_{W,Y}(i,i)&=p_W(i) p_Y(j)- p_W(i) p_Y(i)\\
    p_{W,Y}(i,j)&=p_W(i) p_Y(j)
\end{align}
}

\begin{sloppypar}
    Note that $i$ and $j$ are chosen arbitrarily. Therefore the condition given in \eqref{eq:lhsequivalence} implies that ${p_{Y|W}(y|w)=p_Y(y)}$,  ${\forall(w,y)\in\mathfrak{X}^2}$, concluding the proof of the \say{if} part. 
\end{sloppypar}

Hence, we have proved~\eqref{eq:phifinal}. Thus, there exists a deterministic bijective mapping $\Phi$ satisfying
$q_{0,\Phi}^\prime>q_{1,\Phi}^\prime$
and in turn $q_{0,\Phi}^{(r)}>q_{1,\Phi}^\prime$, $\forall r\in [n-1]$.

Now choose such a mapping $\Phi$ and note that for any $\gamma\in[0,1)$
\iftoggle{singlecolumn}{
\begin{align}
    q_{0,\Phi}^{(r)}-q_{1,\Phi}^\prime&= (1-\gamma^r) [q_0^\prime(\Phi)-q_1^\prime(\Phi)]\\
    &\ge (1-\gamma) [q_0^\prime(\Phi)-q_1^\prime(\Phi)],\: \forall r\in [n-1]\label{eq:reducetoiid}\\
    &> 0 ,\: \forall r\in [n-1]
\end{align}
}{
\begin{align}
    q_{0,\Phi}^{(r)}-q_{1,\Phi}^\prime&= (1-\gamma^r) [q_0^\prime(\Phi)-q_1^\prime(\Phi)]\\
    &\ge (1-\gamma) [q_0^\prime(\Phi)-q_1^\prime(\Phi)],\: \forall r\in [n-1]\label{eq:reducetoiid}\\
    &> 0 ,\: \forall r\in [n-1]
\end{align}
}
Next, define
\iftoggle{singlecolumn}{
\begin{align}
    q_{0,\Phi}^{\text{min}}&\triangleq (1-\gamma) q_{0,\Phi}^{\prime}+\gamma q_{1,\Phi}^{\prime}
\end{align}
}{
\begin{align}
    q_{0,\Phi}^{\text{min}}&\triangleq (1-\gamma) q_{0,\Phi}^{\prime}+\gamma q_{1,\Phi}^{\prime}
\end{align}
}
and choose a $\bar{\tau}\in\left(q_{1,\Phi}^{\prime},q_{0,\Phi}^{\text{min}}\right)$ bounded away from both ends of the interval.

Let $\hat{K}_n\triangleq n-\sum_{j=1}^n I_j$ and $L_j$ denote the $j$\textsuperscript{th} $0$ in $I^n$, $j=1,\dots,\hat{K}_n$. In other words, $L_j$ holds the index of the $j$\textsuperscript{th} retained column $C_j^{(2)}(\Phi)$ of $\tilde{\mathbf{G}}_{\Phi}^{(2)}$ in $\mathbf{G}^{(1)}$. Similarly, for $i$ with $I_i=0$, let $R_i\triangleq i- \sum_{l=1}^i I_l$ store the index of $C_i^{(1)}$ in $\tilde{\mathbf{G}}_{\Phi}^{(2)}$. 

\begin{sloppypar}
    Now note that when we have $I_i=1$, ${d_H(C_i^{(1)},C_j^{(2)}(\Phi))\sim\text{Binom}(\Lambda_n,q_{0,\Phi}^{(|i-L_j|)})}$ and when $I_i=0$, ${d_H(C_i^{(1)},C_{R_i}^{(2)}(\Phi))\sim\text{Binom}(\Lambda_n,q_{1,\Phi}^{\prime})}$. 
\end{sloppypar}

Next, we write the misdetection probability $P_{e,i}$ of $C^{(1)}_i$ as
\iftoggle{singlecolumn}{
\begin{align}
    P_{e,i} &= \Pr\left(\exists j\in[\hat{K}_n]: \Delta_{i,j}(\Phi) \le \Lambda_n \bar{\tau}, I_i=1\right)\notag\\
    &\hspace{6em}+\Pr\left(\forall j\in[\hat{K}_n]: \Delta_{i,j}(\Phi) > \Lambda_n \bar{\tau}, I_i=0\right)\\
    &\le \Pr\left(\exists j\in[\hat{K}_n]: \Delta_{i,j}(\Phi) \le \Lambda_n \bar{\tau}, I_i=1\right)+\Pr\left( \Delta_{i,R_i}(\Phi) > \Lambda_n \bar{\tau}, I_i=0\right)
\end{align}
}{
\begin{align}
    P_{e,i} &= \Pr\left(\exists j\in[\hat{K}_n]: \Delta_{i,j}(\Phi) \le \Lambda_n \bar{\tau}, I_i=1\right)\notag\\
    &\hspace{0.6em}+\Pr\left(\forall j\in[\hat{K}_n]: \Delta_{i,j}(\Phi) > \Lambda_n \bar{\tau}, I_i=0\right)\\
    &\le \Pr\left(\exists j\in[\hat{K}_n]: \Delta_{i,j}(\Phi) \le \Lambda_n \bar{\tau}, I_i=1\right)\notag\\
    &\hspace{0.6em}+\Pr\left( \Delta_{i,R_i}(\Phi) > \Lambda_n \bar{\tau}, I_i=0\right)
\end{align}
}
where
\iftoggle{singlecolumn}{
\begin{align}
    \Delta_{i,j}(\Phi) &\triangleq d_H(C_i^{(1)},C_j^{(2)}(\Phi)).
\end{align}
}{
\begin{align}
    \Delta_{i,j}(\Phi) &\triangleq d_H(C_i^{(1)},C_j^{(2)}(\Phi)).
\end{align}
}

From the union bound and Chernoff bound~\cite[Lemma 4.7.2]{ash2012information}, we obtain
\iftoggle{singlecolumn}{
\begin{align}
    P_{e,i}&\le \sum\limits_{j=1}^{\hat{K}_n} \Pr\left(\Delta_{i,j}(\Phi) \le \Lambda_n \bar{\tau}, I_i=1\right)+\Pr\left( \Delta_{i,R_i}(\Phi) > \Lambda_n \bar{\tau}, I_i=0\right)\\
    &\le \sum\limits_{j=1}^{\hat{K}_n}  2^{-\Lambda_n D(\bar{\tau}\|q_{0,\Phi}^{(|i-L_j|)})}+  2^{-\Lambda_n D(1-\bar{\tau}\|1-q_{1,\Phi}^{\prime})}
\end{align}
}{
\begin{align}
    P_{e,i}&\le \sum\limits_{j=1}^{\hat{K}_n} \Pr\left(\Delta_{i,j}(\Phi) \le \Lambda_n \bar{\tau}, I_i=1\right)\notag\\
    &\hspace{3em}+\Pr\left( \Delta_{i,R_i}(\Phi) > \Lambda_n \bar{\tau}, I_i=0\right)\\
    &\le \sum\limits_{j=1}^{\hat{K}_n}  2^{-\Lambda_n D(\bar{\tau}\|q_{0,\Phi}^{(|i-L_j|)})}+  2^{-\Lambda_n D(1-\bar{\tau}\|1-q_{1,\Phi}^{\prime})}
\end{align}
}
It is straightforward to show that $D(\bar{\tau}\|p)$ is an increasing function of $p$ for $p>\bar{\tau}$. Thus $\forall i\in[n], j\in[\hat{K}_n]$, we have
\iftoggle{singlecolumn}{
\begin{align}
    q_{0,\Phi}^{(|i-L_j|)} &\ge q_{0,\Phi}^{\prime}\\
    D(\bar{\tau}\|q_{0,\Phi}^{(|i-L_j|)}) &\ge D(\bar{\tau}\|q_{0,\Phi}^{\text{min}})\\
    2^{-\Lambda_n D(\bar{\tau}\|q_{0,\Phi}^{(|i-L_j|)})} &\le 2^{-\Lambda_n D(\bar{\tau}\|q_{0,\Phi}^{\text{min}})}
\end{align}
}{
\begin{align}
    q_{0,\Phi}^{(|i-L_j|)} &\ge q_{0,\Phi}^{\prime}\\
    D(\bar{\tau}\|q_{0,\Phi}^{(|i-L_j|)}) &\ge D(\bar{\tau}\|q_{0,\Phi}^{\text{min}})\\
    2^{-\Lambda_n D(\bar{\tau}\|q_{0,\Phi}^{(|i-L_j|)})} &\le 2^{-\Lambda_n D(\bar{\tau}\|q_{0,\Phi}^{\text{min}})}
\end{align}
}
Thus, we have
\iftoggle{singlecolumn}{
\begin{align}
    P_{e,i} &\le \sum\limits_{j=1}^{\hat{K}_n} 2^{-\Lambda_n D(\bar{\tau}\|q_{0,\Phi}^{(|i-L_j|)})} + 2^{-\Lambda_n D(1-\bar{\tau}\|1-q_{1,\Phi}^{\prime})}\\
    &\le \sum\limits_{j=1}^{\hat{K}_n} 2^{-\Lambda_n D(\tau\|q_{0,\Phi}^{\text{min}})} +  2^{-\Lambda_n D(1-\tau\|1-q_{1,\Phi}^{\prime})}\\
    &=  \hat{K}_n 2^{-\Lambda_n D(\tau\|q_{0,\Phi}^{\text{min}})} +  2^{-\Lambda_n D(1-\tau\|1-q_{1,\Phi}^{\prime})}
\end{align}
}{
\begin{align}
    P_{e,i} &\le \sum\limits_{j=1}^{\hat{K}_n} 2^{-\Lambda_n D(\bar{\tau}\|q_{0,\Phi}^{(|i-L_j|)})} + 2^{-\Lambda_n D(1-\bar{\tau}\|1-q_{1,\Phi}^{\prime})}\\
    &\le \sum\limits_{j=1}^{\hat{K}_n} 2^{-\Lambda_n D(\tau\|q_{0,\Phi}^{\text{min}})} +  2^{-\Lambda_n D(1-\tau\|1-q_{1,\Phi}^{\prime})}\\
    &=  \hat{K}_n 2^{-\Lambda_n D(\tau\|q_{0,\Phi}^{\text{min}})} +  2^{-\Lambda_n D(1-\tau\|1-q_{1,\Phi}^{\prime})}
\end{align}
}
Thus, by simple union bound the total misdetection probability $P_{e,total}$ can be bounded as
\iftoggle{singlecolumn}{
\begin{align}
    P_{e,total} &\le \sum\limits_{i=1}^n P_{e,i}\\
    &\le \sum\limits_{i=1}^n \hat{K}_n 2^{-\Lambda_n D(\bar{\tau}\|q_{0,\Phi}^{\text{min}})} +  2^{-\Lambda_n D(1-\bar{\tau}\|1-q_{1,\Phi}^{\prime})}\\
    &= n \hat{K}_n 2^{-\Lambda_n D(\bar{\tau}\|q_{0,\Phi}^{\text{min}})} + n  2^{-\Lambda_n D(1-\bar{\tau}\|1-q_{1,\Phi}^{\prime})}\\
    &\le  n^2 2^{-\Lambda_n D(\bar{\tau}\|q_{0,\Phi}^{\text{min}})} + n  2^{-\Lambda_n D(1-\bar{\tau}\|1-q_{1,\Phi}^{\prime})}
\end{align}
}{
\begin{align}
    P_{e,total} &\le \sum\limits_{i=1}^n P_{e,i}\\
    &\le \sum\limits_{i=1}^n \hat{K}_n 2^{-\Lambda_n D(\bar{\tau}\|q_{0,\Phi}^{\text{min}})} +  2^{-\Lambda_n D(1-\bar{\tau}\|1-q_{1,\Phi}^{\prime})}\\
    &= n \hat{K}_n 2^{-\Lambda_n D(\bar{\tau}\|q_{0,\Phi}^{\text{min}})} + n  2^{-\Lambda_n D(1-\bar{\tau}\|1-q_{1,\Phi}^{\prime})}\\
    &\le  n^2 2^{-\Lambda_n D(\bar{\tau}\|q_{0,\Phi}^{\text{min}})} + n  2^{-\Lambda_n D(1-\bar{\tau}\|1-q_{1,\Phi}^{\prime})}
\end{align}
}
Hence, $P_{e,total}\to 0$ as $n\to\infty$ if the seed size $\Lambda_n$ satisfies 
\iftoggle{singlecolumn}{
\begin{align}
    \Lambda_n D(\bar{\tau}\|q_{0,\Phi}^{\text{min}})- 2\log n &>0\\
    \Lambda_n D(1-\bar{\tau}\|1-q_{1,\Phi}^{\prime}) - \log n&>0
\end{align}
}{
\begin{align}
    \Lambda_n D(\bar{\tau}\|q_{0,\Phi}^{\text{min}})- 2\log n &>0\\
    \Lambda_n D(1-\bar{\tau}\|1-q_{1,\Phi}^{\prime}) - \log n&>0
\end{align}
}
Thus any seed size $\Lambda_n$ satisfying 
\iftoggle{singlecolumn}{
\begin{align}
    \Lambda_n &>\frac{\log n}{\min \left\{\frac{1}{2}D(\bar{\tau}\|q_{0,\Phi}^{\text{min}}), D(1-\bar{\tau}\|1-q_{1,\Phi}^{\prime})\right\}}
\end{align}
}{
\begin{align}
    \Lambda_n &>\frac{\log n}{\min \left\{\frac{1}{2}D(\bar{\tau}\|q_{0,\Phi}^{\text{min}}), D(1-\bar{\tau}\|1-q_{1,\Phi}^{\prime})\right\}}
\end{align}
}
is sufficient to drive $P_{e,total}$ to 0. Thus a seed size $\Lambda_n=\Omega(\log n)$ is enough for successful deletion detection.
\qed

\section{Proof of Lemma~\ref{lem:histogram}}
\label{proof:histogram}

First, observe that from~\cite[Theorem 3]{burke1958markovian} and the specific Markov structure given in Definition~\ref{defn:markovdb}, the rows of the collapsed database $\tilde{\mathbf{X}}$ become \emph{i.i.d.} first-order stationary binary Markov chains, with the following probability transition matrix and stationary distribution:
\iftoggle{singlecolumn}{
\begin{align}
    \tilde{\mathbf{P}}&=\begin{bmatrix}\gamma+(1-\gamma) u_1 & (1-\gamma)(1-u_1)\\
    (1-\gamma)u_1 & 1-(1-\gamma)u_1
    \end{bmatrix}\\
    \tilde{\pi}&=\begin{bmatrix}
    u_1 & 1-u_1
    \end{bmatrix}
\end{align}
}{
\begin{align}
    \tilde{\mathbf{P}}&=\begin{bmatrix}\gamma+(1-\gamma) u_1 & (1-\gamma)(1-u_1)\\
    (1-\gamma)u_1 & 1-(1-\gamma)u_1
    \end{bmatrix}\\
    \tilde{\pi}&=\begin{bmatrix}
    u_1 & 1-u_1
    \end{bmatrix}
\end{align}
}

For brevity, we let ${\mu_n\triangleq \Pr(\exists i,j\in [n],\: i\neq j,\tilde{{H}}^{(1)}_{i}=\tilde{{H}}^{(1)}_j)}$. Next, from the union bound, we obtain
\iftoggle{singlecolumn}{
\begin{align}
    \mu_n&\le \sum\limits_{{(i,j)\in[n]^2:i<j}} \Pr(\tilde{{H}}^{(1)}_{i}=\tilde{{H}}^{(1)}_{j})\\
    &\le n^2 \max\limits_{{(i,j)\in[n]^2:i<j}} \Pr(\tilde{{H}}^{(1)}_{i}=\tilde{{H}}^{(1)}_{j})
\end{align}
}{
\begin{align}
    \mu_n&\le \sum\limits_{{(i,j)\in[n]^2:i<j}} \Pr(\tilde{{H}}^{(1)}_{i}=\tilde{{H}}^{(1)}_{j})\\
    &\le n^2 \max\limits_{{(i,j)\in[n]^2:i<j}} \Pr(\tilde{{H}}^{(1)}_{i}=\tilde{{H}}^{(1)}_{j})
\end{align}
}
Due to stationarity of $\tilde{\mathbf{P}}$, this maximum is equal to $\Pr(\tilde{{H}}^{(1)}_1=\tilde{{H}}^{(1)}_{s+1})$ for some $s$. For brevity, let $\mathbf{Q}\triangleq\tilde{\mathbf{P}}^s$ and $q\triangleq \Pr(\tilde{{H}}^{(1)}_1=\tilde{{H}}^{(1)}_{s+1})$. Observe that 
$\tilde{{H}}^{(1)}_1$ and $\tilde{{H}}^{(1)}_{s+1}$ are correlated Binom($m_n,1-u_1$) random variables and for any $s$, $\mathbf{Q}$ has positive values, \emph{i.e.,} the collapsed Markov chain is irreducible for any $s$. Now, we have
\iftoggle{singlecolumn}{
\vspace{-1em}
\begin{adjustwidth}{-0.25cm}{0pt}
\begin{align}
     q&= \sum\limits_{r=0}^{m_n} \Pr(\tilde{{H}}^{(1)}_1=r) \Pr(\tilde{{H}}^{(1)}_{s+1}=r|\tilde{{H}}^{(1)}_1=r)\\
    &= \sum\limits_{r=0}^{m_n} \binom{m}{r}(1-u_1)^r u_1^{m_n-r} \Pr(\tilde{{H}}^{(1)}_{s+1}=r|\tilde{{H}}^{(1)}_1=r)
\end{align}
\end{adjustwidth}
}{
\vspace{-1em}
\begin{adjustwidth}{-0.25cm}{0pt}
\begin{align}
     q&= \sum\limits_{r=0}^{m_n} \Pr(\tilde{{H}}^{(1)}_1=r) \Pr(\tilde{{H}}^{(1)}_{s+1}=r|\tilde{{H}}^{(1)}_1=r)\\
    &= \sum\limits_{r=0}^{m_n} \binom{m}{r}(1-u_1)^r u_1^{m_n-r} \Pr(\tilde{{H}}^{(1)}_{s+1}=r|\tilde{{H}}^{(1)}_1=r)
\end{align}
\end{adjustwidth}
}
Note that since the rows of $\tilde{\mathbf{X}}$ are \emph{i.i.d.}, we have 
\iftoggle{singlecolumn}{
\begin{align}
    \Pr(\tilde{{H}}^{(1)}_{s+1}=r|\tilde{{H}}^{(1)}_1=r) = \Pr(M+N=r)
\end{align}
}{
\begin{align}
    \Pr(\tilde{{H}}^{(1)}_{s+1}=r|\tilde{{H}}^{(1)}_1=r) = \Pr(M+N=r)
\end{align}
}
where $M\sim\text{Binom}(r,Q_{2,2})$ and $N\sim\text{Binom}(m_n-r,Q_{1,2})$ are independent. Note that there are two ways leading to state 2 in the collapsed column after $s$ steps. The first one is the state $2$ staying in the same state after $s$ steps, and the second one is state $1$ being converted to state $2$ after $s$ steps. Here the Binomial random variables $E$ and $F$ keep counts of the former and the latter ways, respectively.

Then, from Stirling's approximation~\cite[Chapter 3.2]{cormen2022introduction} on the factorial terms in the Binomial coefficient and~\cite[Theorem 11.1.2]{cover2006elements}, we get
\iftoggle{singlecolumn}{
\begin{align}
    q
    &= \sum\limits_{r=0}^{m_n} \binom{{m_n}}{r}(1-u_1)^r u_1^{{m_n}-r} \Pr(M+N=r)\\
    &\le \frac{e}{\sqrt{2\pi}} {m_n}^{-1/2} \sum\limits_{r=0}^{m_n} \Pi_r^{-1} 2^{-{m_n} D(\frac{r}{{m_n}}\|(1-u_1))}\Pr(M+N=r) 
\end{align}
}{
\begin{align}
    q
    &= \sum\limits_{r=0}^{m_n} \binom{{m_n}}{r}(1-u_1)^r u_1^{{m_n}-r} \Pr(M+N=r)\\
    &\le \frac{e}{\sqrt{2\pi}} {m_n}^{-1/2} \sum\limits_{r=0}^{m_n} \Pi_r^{-1} 2^{-{m_n} D(\frac{r}{{m_n}}\|(1-u_1))}\notag\\&\hspace{11em}\Pr(M+N=r) 
\end{align}
}
where $\Pi_r=\frac{r}{{m_n}}(1-\frac{r}{{m_n}})$. Let
\iftoggle{singlecolumn}{
\begin{align}
    T &= \sum\limits_{r=0}^{m_n} \Pi_r^{-1} 2^{-{m_n} D(\frac{r}{{m_n}}\|(1-u_1))}\Pr(M+N=r)\\ &= T_1+T_2
\end{align}
}{
\begin{align}
    T &= \sum\limits_{r=0}^{m_n} \Pi_r^{-1} 2^{-{m_n} D(\frac{r}{{m_n}}\|(1-u_1))}\Pr(M+N=r)\\ &= T_1+T_2
\end{align}
}
where
\iftoggle{singlecolumn}{
\begin{align}
    T_1 &= \sum_{\mathclap{\hspace{4em} r:D(\frac{r}{{m_n}}\|1-u_1)> \frac{\epsilon_n^2}{2\log_e 2}}} \hspace{2em}\Pi_r^{-1} 2^{-{m_n} D(\frac{r}{{m_n}}\|(1-u_1))}\Pr(M+N=r)\label{eq:T1}\\
    T_2 &= \sum_{\mathclap{\hspace{4em} r:D(\frac{r}{{m_n}}\|1-u_1)\le \frac{\epsilon_n^2}{2\log_e 2}}} \hspace{2em} \Pi_r^{-1} 2^{-{m_n} D(\frac{r}{{m_n}}\|(1-u_1))}\Pr(M+N=r),\label{eq:T2}
\end{align}
}{
\begin{align}
    T_1 &= \sum_{\mathclap{\hspace{4em} r:D(\frac{r}{{m_n}}\|1-u_1)> \frac{\epsilon_n^2}{2\log_e 2}}} \hspace{2em}\Pi_r^{-1} 2^{-{m_n} D(\frac{r}{{m_n}}\|(1-u_1))}\Pr(M+N=r)\label{eq:T1}\\
    T_2 &= \sum_{\mathclap{\hspace{4em} r:D(\frac{r}{{m_n}}\|1-u_1)\le \frac{\epsilon_n^2}{2\log_e 2}}} \hspace{2em} \Pi_r^{-1} 2^{-{m_n} D(\frac{r}{{m_n}}\|(1-u_1))}\Pr(M+N=r),\label{eq:T2}
\end{align}
}
$\epsilon_n>0$, which is described below in more detail, is such that $\epsilon_n\to0$ as $n\to\infty$.

First, we look at $T_1$. Note that for any $r\in\mathbb{N}$, we have ${\Pi_r\le {m_n}^2}$, suggesting the multiplicative term in the summation in~\eqref{eq:T1} is polynomial with ${m_n}$. Note that we can simply separate the cases $r=0$, $r={m_n}$ whose probabilities vanish exponentially in ${m_n}$. Therefore, as long as ${{m_n} \epsilon_n^2\to\infty}$, $T_1$ has a polynomial number of elements that decay exponentially with ${m_n}$. Thus
\iftoggle{singlecolumn}{
\begin{align}
    T_1\to0\text{ as }n\to\infty\label{eq:t1}
\end{align}
}{
\begin{align}
    T_1\to0\text{ as }n\to\infty\label{eq:t1}
\end{align}
}
as long as ${{m_n} \epsilon_n^2\to\infty}$.

Now, we focus on $T_2$. From Pinsker's inequality~\cite[Lemma 11.6.1]{cover2006elements}, we have
\iftoggle{singlecolumn}{
\begin{align}
    D\left(\frac{r}{{m_n}}\Big\|1-u_1\right)\le \frac{\epsilon_n^2}{2\log_e 2}\Rightarrow \mathbb{V}\left(\frac{r}{{m_n}}, 1-u_1\right)\le \epsilon_n
\end{align}
}{
\begin{align}
    D\Big(\frac{r}{{m_n}}\Big\|1-u_1\Big)&\le \frac{\epsilon_n^2}{2\log_e 2}\notag\\&\Longrightarrow \mathbb{V}\left(\frac{r}{{m_n}}, 1-u_1\right)\le \epsilon_n
\end{align}
}
where $\mathbb{V}$ denotes the (unnormalized) total variation distance between the Bernoulli distributions with given parameters. Therefore
\iftoggle{singlecolumn}{
\begin{align}
    \Big|\{r:D\Big(\frac{r}{{m_n}}\Big\|1-u_1\Big)\le \frac{\epsilon_n^2}{2\log_e 2}\}\Big|
    &\le \Big|\{r:\mathbb{V}\Big(\frac{r}{{m_n}}, 1-u_1\Big)\le \epsilon_n\}\Big| \\
    &= O({m_n}\epsilon_n)
\end{align}
}{
\begin{align}
    \Big|\{r:D\Big(\frac{r}{{m_n}}&\Big\|1-u_1\Big)\le \frac{\epsilon_n^2}{2\log_e 2}\}\Big|\notag\\
    &\le \Big|\{r:\mathbb{V}\Big(\frac{r}{{m_n}}, 1-u_1\Big)\le \epsilon_n\}\Big| \\
    &= O({m_n}\epsilon_n)
\end{align}
}
for small $\epsilon_n$. Furthermore, if $\mathbb{V}\left(\frac{r}{{m_n}}, 1-u_1\right)\le \epsilon_n$, we have 
\iftoggle{singlecolumn}{
\begin{align}
    \Pi_{r}^{-1} &\le \frac{1}{(1-u_1) u_1}
\end{align}
}{
\begin{align}
    \Pi_{r}^{-1} &\le \frac{1}{(1-u_1) u_1}
\end{align}
}
Now, we investigate $\Pr(M+N=r)$ for the values of $r$ in the interval $m_n(1-u_1)\pm m_n\epsilon_n$.
\iftoggle{singlecolumn}{
\begin{align}
    \Pr(M+N=r)&=\sum\limits_{i=1}^r \Pr(M=r-i) \Pr(N=i)+ \Pr(M=r)\Pr(N=0)\\
    &= \sum\limits_{i=1}^r \binom{r}{i} Q_{2,2}^{r-i} (1-Q_{2,2})^i \binom{{m_n}-r}{i} Q_{1,2}^i (1-Q_{1,2})^{{m_n}-r-i}\notag\\&\hspace{2em}+Q_{2,2}^r Q_{1,1}^{{m_n}-r}\label{eq:binom2}
\end{align}
}{
\begin{align}
    \Pr(M+N=r)&=\sum\limits_{i=1}^r \Pr(M=r-i) \Pr(N=i)\notag\\&\hspace{2em}+ \Pr(M=r)\Pr(N=0)\\
    &=Q_{2,2}^r Q_{1,1}^{{m_n}-r}\notag\\&\hspace{0.5em}+ \sum\limits_{i=1}^r \binom{r}{i} Q_{2,2}^{r-i} (1-Q_{2,2})^i \notag\\&\hspace{2em}\binom{{m_n}-r}{i} Q_{1,2}^i (1-Q_{1,2})^{{m_n}-r-i}\label{eq:binom2}
\end{align}
}

Again, from Stirling's approximation~\cite[Chapter 3.2]{cormen2022introduction} on the factorial terms in the Binomial coefficient in \eqref{eq:binom2} and from~\cite[Theorem 11.1.2]{cover2006elements}, we have
\iftoggle{singlecolumn}{
\begin{align}
    \Pr(M+N=r)&\le Q_{2,2}^r Q_{1,1}^{{m_n}-r} + \frac{e^2}{2\pi}[r({m_n}-r)]^{-1/2} U
\end{align}
}{
\begin{align}
    \Pr(M+N=r)&\le Q_{2,2}^r Q_{1,1}^{{m_n}-r} + \frac{e^2}{2\pi}[r({m_n}-r)]^{-1/2} U
\end{align}
}
where 
\iftoggle{singlecolumn}{
\begin{align}
    U = \sum\limits_{i=1}^r \Pi^{-1}_{i/r} \Pi^{-1}_{i/{m_n}-r} 2^{-r D(1-\frac{i}{r}\|Q_{2,2}) -({m_n}-r) D(\frac{i}{{m_n}-r}\|Q_{1,2})}
\end{align}
}{
\begin{align}
    U = \sum\limits_{i=1}^r \Pi^{-1}_{i/r} \Pi^{-1}_{i/{m_n}-r} 2^{-r D(1-\frac{i}{r}\|Q_{2,2}) -({m_n}-r) D(\frac{i}{{m_n}-r}\|Q_{1,2})}
\end{align}
}
Then, from $r\in[{{m_n}(1-u_1-\epsilon_n)},{{m_n}(1-u_1+\epsilon_n)}]$ we obtain
\iftoggle{singlecolumn}{
\begin{align}
    \Pr(M+N=r)&\le Q_{2,2}^r Q_{1,1}^{{m_n}-r} + \frac{e^2}{2\pi}\frac{{m_n}^{-1}}{\sqrt{(1-u_1-\epsilon_n)(u_1-\epsilon_n)}}  U
\end{align}
}{
\begin{align}
    \Pr(M+N=r)&\le Q_{2,2}^r Q_{1,1}^{{m_n}-r} \notag\\&\hspace{1em}+ \frac{e^2}{2\pi}\frac{{m_n}^{-1}}{\sqrt{(1-u_1-\epsilon_n)(u_1-\epsilon_n)}}  U
\end{align}
}
and
\iftoggle{singlecolumn}{
\begin{align}
    U&\le \sum\limits_{i=1}^r \Pi^{-1}_{i/r} \Pi^{-1}_{i/{m_n}-r} 2^{-{m_n} (1-u_1-\epsilon_n) D(1-\frac{i}{r}\|Q_{2,2})}2^{-{m_n} (u_1-\epsilon_n) D(\frac{i}{{m_n}-r}\|Q_{1,2})}\\
    &= \sum\limits_{i\notin \mathcal{R}(\epsilon_n)} \Pi^{-1}_{i/r} \Pi^{-1}_{i/{m_n}-r} 2^{-{m_n} (1-u_1-\epsilon_n) D(1-\frac{i}{r}\|Q_{2,2})} 2^{-{m_n} (u_1-\epsilon_n) D(\frac{i}{{m_n}-r}\|Q_{1,2})}\notag\\
    &\hspace{2em}+ \sum\limits_{i\in \mathcal{R}(\epsilon_n)} \Pi^{-1}_{i/r} \Pi^{-1}_{i/{m_n}-r} 2^{-{m_n} (1-u_1-\epsilon_n) D(1-\frac{i}{r}\|Q_{2,2})} 2^{-{m_n} (u_1-\epsilon_n) D(\frac{i}{{m_n}-r}\|Q_{1,2})}\label{eq:U}
\end{align}
}{
\begin{align}
    U&\le \sum\limits_{i=1}^r \Pi^{-1}_{i/r} \Pi^{-1}_{i/{m_n}-r} 2^{-{m_n} (1-u_1-\epsilon_n) D(1-\frac{i}{r}\|Q_{2,2})}\notag\\&\hspace{8em}2^{-{m_n} (u_1-\epsilon_n) D(\frac{i}{{m_n}-r}\|Q_{1,2})}\\
    &= \sum\limits_{i\notin \mathcal{R}(\epsilon_n)} \Pi^{-1}_{i/r} \Pi^{-1}_{i/{m_n}-r} 2^{-{m_n} (1-u_1-\epsilon_n) D(1-\frac{i}{r}\|Q_{2,2})} \notag\\&\hspace{8em}2^{-{m_n} (u_1-\epsilon_n) D(\frac{i}{{m_n}-r}\|Q_{1,2})}\notag\\
    &\hspace{1em}+ \sum\limits_{i\in \mathcal{R}(\epsilon_n)} \Pi^{-1}_{i/r} \Pi^{-1}_{i/{m_n}-r} 2^{-{m_n} (1-u_1-\epsilon_n) D(1-\frac{i}{r}\|Q_{2,2})} \notag\\&\hspace{8em}2^{-{m_n} (u_1-\epsilon_n) D(\frac{i}{{m_n}-r}\|Q_{1,2})}\label{eq:U}
\end{align}
}
where we define the set $\mathcal{R}(\epsilon_n)$ as
\iftoggle{singlecolumn}{
\begin{align}
    \mathcal{R}(\epsilon_n) &\triangleq\Big\{i\in[r]:D\Big(1-\frac{i}{r}\Big\|Q_{2,2}\Big),D\Big(\frac{i}{{m_n}-r}\Big\|Q_{1,2}\Big)\le \frac{\epsilon_n^2}{2\log_e 2}\Big\}
\end{align}
}{
\begin{align}
    \mathcal{R}(\epsilon_n) &\triangleq\Big\{i\in[r]:D\Big(1-\frac{i}{r}\Big\|Q_{2,2}\Big)\le \frac{\epsilon_n^2}{2\log_e 2},\notag\\
    &\hspace{3em}D\Big(\frac{i}{{m_n}-r}\Big\|Q_{1,2}\Big)\le \frac{\epsilon_n^2}{2\log_e 2}\Big\}
\end{align}
}
Note that similar to $T_1$, the first summation in \eqref{eq:U} vanishes exponentially in ${m_n}$ whenever ${m_n}\epsilon_n^2\to\infty$, and using Pinsker's inequality once more, the second term can be upper bounded by
\iftoggle{singlecolumn}{
\begin{align}
    O(|\mathcal{R}(\epsilon_n)|)=O({m_n}\epsilon_n)
\end{align}
}{
\begin{align}
    O(|\mathcal{R}(\epsilon_n)|)=O({m_n}\epsilon_n)
\end{align}
}
Now, we choose ${\epsilon_n={m_n}^{-\frac{1}{2}} V_n}$ for some $V_n$ satisfying ${V_n=\omega(1)}$ and ${V_n=o(m_n^{1/2})}$. Thus, $T_1$ vanishes exponentially fast since ${m_n\epsilon_n^2=V_n^2\to\infty}$ and 
\iftoggle{singlecolumn}{
\begin{gather}
\Pr(M+N=r) = O(\epsilon_n)\\
    T  = O({m_n} \epsilon_n^2)=O(V_n^2)\\
    \mu_n = O(n^2 {m_n}^{-1/2} V_n^2)
\end{gather}
}{
\begin{gather}
\Pr(M+N=r) = O(\epsilon_n)\\
    T  = O({m_n} \epsilon_n^2)=O(V_n^2)\\
    \mu_n = O(n^2 {m_n}^{-1/2} V_n^2)
\end{gather}
}
By the assumption ${m_n=\omega(n^4)}$, we have ${m_n=n^4 Z_n}$ for some $Z_n$ satisfying  ${\lim\limits_{n\to\infty} Z_n=\infty}$. Now, taking ${V_n=o(Z_n^{1/4})}$ (e.g. ${V_n=Z_n^{1/6}}$), we get
\iftoggle{singlecolumn}{
\begin{align}
    \mu_n&\le O( Z_n^{-1/2} V_n^2)
    = o(1)
\end{align}
}{
\begin{align}
    \mu_n&\le O( Z_n^{-1/2} V_n^2)
    = o(1)
\end{align}
}
Thus $m_n=\omega(n^4)$ is sufficient to have $\mu_n\to0$ as $n\to\infty$, concluding the proof. \qed


\section{Proof of Achievability of Theorem~\ref{thm:mainresultW1}}
\label{proof:achievabilityW1}

The proof of the achievability part follows from successive union bounds exploiting the following: 
\begin{itemize}
    \item For any typical row $Y^{K_n}$ of $\mathbf{Y}$, there are approximately $2^{{K_n} H(X|Y)}$ jointly typical sequences with respect to $p_{X,Y}$.
    \item If the output of the synchronization channel has length ${K_n}$ then there are at least ${k_{\min}=\left\lceil \frac{{K_n}}{s_{\max}}\right\rceil}$ retained (not deleted) elements.
    \item For the number of columns $n$, the number of deletion patterns with $k_{\min}$ retained elements is 
    \iftoggle{singlecolumn}{
    \begin{align}
        \binom{n}{k_{\min}}&\le 2^{n H_b(k_{\min}/n)}
    \end{align}
    }{
    \begin{align}
        \binom{n}{k_{\min}}&\le 2^{n H_b(k_{\min}/n)}
    \end{align}
    }
    \item Any stretched row has the same probability as the original row.
    \item If the original length-$n$ sequence and the retained length-$k_{\min}$ sequence after the deletion channel are $\epsilon$-typical with respect to $p_X$, then the complementary length-$(n-k_{\min})$ subsequence is $\tilde{\epsilon}$-typical with respect to $p_X$, where $\tilde{\epsilon}=\frac{n+k_{\min}}{n-k_{\min}}$.
    \item The cardinality of the set of $\tilde{\epsilon}$-typical sequences of length $n-k_{\min}$ with respect to $p_X$ is approximately $2^{(n-k_{\min}) H(X)}$.
\end{itemize}
We need to show that for a given pair of matching rows, WLOG, $X^n_1$ of $\mathbf{X}$ and $Y^{K_n}_t$ of $\mathbf{Y}$ with $\sigma_n(1)=t$, the probability of error ${P_e\triangleq\Pr(\hat{\sigma}_n (1)\neq t)}$ of the following matching scheme can be made arbitrarily small asymptotically where ${K_n}=\sum_{j=1}^n S_{1,j}$ is the random variable corresponding to the length of $Y^{K_n}_t$. The matching scheme we propose follows these steps:
\begin{enumerate}[label=\textbf{\arabic*)},leftmargin=1.3\parindent]
\item For all $j\in[n]$, discard the $j$\textsuperscript{th} column of $\mathbf{X}$ if $A_j=1$ to obtain $\bar{\mathbf{X}}$ whose column size is $n-A$ where $A=\sum_{j=1}^n A_j$.
\item Stretch each row $\bar{X}^{n-A}_i=\bar{X}_{i,1},\dots,\bar{X}_{i,n-A}$ of $\bar{\mathbf{X}}$ into $\tilde{X}_{i}^{(n-A) s_{\max}}$, by repeating each element of $\bar{X}^{n-A}_i$ $s_{\max}$ times as follows
\iftoggle{singlecolumn}{
\begin{align}
    \tilde{X}_{i}^{(n-A) s_{\max}}&= 1^{s_{\max}}\otimes \bar{X}_{i,1},\dots,1^{s_{\max}}\otimes \bar{X}_{i,n-A}
\end{align}
}{
\begin{align}
    \tilde{X}_{i}^{(n-A) s_{\max}}&= 1^{s_{\max}}\otimes \bar{X}_{i,1},\dots,1^{s_{\max}}\otimes \bar{X}_{i,n-A}
\end{align}
}
where $1^{s_{\max}}$ is an all-one row vector of length $s_{\max}$ and $\otimes$ denotes the Kronecker product.
\item Fix $\epsilon>0$. If $K_n<k\triangleq n(\mathbb{E}[S]-\epsilon)$ declare error, whose probability is denoted by $\kappa_n$ where $k$ is assumed to be an integer for computational simplicity. Otherwise, proceed with the next step.
\item If $A<a=n(\alpha\delta-\epsilon)$ declare error, whose probability is denoted by $\mu_n$. Otherwise, proceed with the next step.
\item Match the $t$\textsuperscript{th} row $Y^{K_n}_t$ of $\mathbf{Y}$ $X^n_{1}$ of $\mathbf{X}$, assigning $\hat{\sigma}_n(1)=t$, if $i=1$ is the only index in $[m_n]$ such that \emph{i)} $\bar{X}^{n-A}_i$ is  $\epsilon$-typical and \emph{ii)} $\tilde{X}_i^{(n-A) s_{\max}}$ contains a subsequence jointly $\epsilon$-typical with $Y^{K_n}_t$ with respect to $p_{X,Y}$. Otherwise, declare a \emph{collision} error.
\end{enumerate}

Since additional columns in $\mathbf{Y}$ and additional detected deleted columns in $\mathbf{X}$ would decrease the collision probability, we have
\iftoggle{singlecolumn}{
\begin{align}
    \Pr(\text{collision between 1 and }i|K_n\ge k,A\ge a)\le \Pr(\text{collision between 1 and }i||K_n=k,A=a)
\end{align}
}{
\begin{align}
    \Pr&(\text{collision between 1 and }i|K_n\ge k,A\ge a)\notag\\&\le \Pr(\text{collision between 1 and }i||K_n=k,A=a)
\end{align}
}
for any $i\in[m_n]\setminus\{1\}$. Thus, we can focus on the case $K_n=k$, $A=a$, as it yields an upper bound on the error probability of our matching scheme.

Let $A_\epsilon^{(n-a)}(X)$ denote the set of $\epsilon$-typical (with respect to $p_X$) sequences of length $n-a$ and $A_\epsilon(X^k|Y^k_t)$ denote the set of sequences of length $k$ jointly $\epsilon$-typical (with respect to $p_{X,Y}$) with $Y^k_t$. For the matching rows $X^n_1$, $Y^k_t$ of $\mathbf{X}$ and $\mathbf{Y}$, define the pairwise collision probability between $X^n_1$ and $X^n_i$ for any $i\in[m_n]\setminus\{1\}$ as
\iftoggle{singlecolumn}{
\begin{align}
    P_{\text{col,i}}\triangleq \Pr(\exists {z}^k:  {z}^k\in A_\epsilon({X}^k|Y^k_t) \text{ and }{z}^k \text{ is a subsequence of } \tilde{X}_{i}^{(n-a) s_{\max}}.).
\end{align}
}{
\begin{align}
    P_{\text{col,i}}&\triangleq \Pr(\exists {z}^k:  {z}^k\in A_\epsilon({X}^k|Y^k_t) \text{ and }{z}^k \text{ is a} \notag\\&\hspace{4em} \text{subsequence of } \tilde{X}_{i}^{(n-a) s_{\max}}.).
\end{align}
}

Therefore given the correct labeling for $Y^k_t\in\mathbf{Y}$ is $X^n_1\in\mathbf{X}$, the probability of error $P_e$ can be bounded as
\iftoggle{singlecolumn}{
\begin{align}
    P_e 
    &\le \Pr(\nexists {z}^k:  {z}^k\in A_\epsilon({X}^k|Y^k_t) \text{ and }{z}^k \text{ is a subsequence of } \tilde{X}_{1}^{(n-a) s_{\max}}.)\notag\\
    &\qquad +\Pr(X^n_1\notin A_\epsilon^{(n)}(X))+\sum\limits_{i=2}^{2^{n R}} P_{\text{col,i}}+ \kappa_n+\mu_n\\
    &\le 2\epsilon+\sum\limits_{i=2}^{2^{n R}} P_{\text{col,i}}+\kappa_n+\mu_n\\
    &\le 2\epsilon+ 2^{n R} P_{\text{col,2}}+\kappa_n+\mu_n \label{eq:Perowwise}
\end{align}
}{
\begin{align}
    P_e 
    &\le \Pr(\nexists {z}^k:  {z}^k\in A_\epsilon({X}^k|Y^k_t) \text{ and }{z}^k \text{ is a} \notag\\&\hspace{4em}\text{subsequence of } \tilde{X}_{1}^{(n-a) s_{\max}}.)\notag\\
    &\hspace{2em} +\Pr(X^n_1\notin A_\epsilon^{(n)}(X))\notag\\
    &\hspace{2em}+\sum\limits_{i=2}^{2^{n R}} P_{\text{col,i}}+ \kappa_n+\mu_n\\
    &\le 2\epsilon+\sum\limits_{i=2}^{2^{n R}} P_{\text{col,i}}+\kappa_n+\mu_n\\
    &\le 2\epsilon+ 2^{n R} P_{\text{col,2}}+\kappa_n+\mu_n \label{eq:Perowwise}
\end{align}
}
where \eqref{eq:Perowwise} follows from the fact the the rows are \emph{i.i.d.} and thus $P_{\text{col,i}}=P_{\text{col,2}}$, $\forall i\in[m_n]\setminus\{1\}$.

We now upper bound $P_{\text{col,2}}$. 
First, we investigate repetition distributions with $\frac{1}{s_{\max}}\mathbb{E}[S]\ge \frac{1-\alpha\delta}{|\mathfrak{X}|}$. Let $F(n,k,|\mathfrak{X}|)$ denote the number of $|\mathfrak{X}|$-ary sequences of length $n$, which contain a fixed $|\mathfrak{X}|$-ary sequence of length $k$. We note that this $F(n,k,|\mathfrak{X}|)$ is constant for any $|\mathfrak{X}|$-ary sequence of length $k$ \cite[Lemma 1]{chvatal1975longest}. Now we define $G_{z^k}(n s_{\max},k,|\mathfrak{X}|)$ as the number of $s_{\max}$ times stretched sequences of length $n s_{\max}$, containing a $|\mathfrak{X}|$-ary sequence $z^k$ of length $k$. We stress that this counting function $G_{z^k}$ will not be independent of $z^k$ as is the case for the counting function $F$. For example, let $s_{\max}=2$, $\mathfrak{X}=\{0,1\}$, $n=2$, $k=2$, $z^k_1=01$ and $z^k_2=00$. Then we have $G_{z^k_1}(n s_{\max},k,|\mathfrak{X}|)=1$ since only $0011$ contains $z^k_1=01$, whereas $G_{z^k_2}(n s_{\max},k,|\mathfrak{X}|)=3$ since $0000$, $0011$ and $1100$ all contain $z^k_2=00$.

Observe that the maximum value of $G_{z^k}(n s_{\max},k,|\mathfrak{X}|)$ is attained when $z^k$ consists only of one symbol repeated $k$ times, as this grouping of elements in $z^k$ yields the maximum number of possible elementwise replicated sequences. WLOG, let $z^k=00\dots0$. Then, to count $G_{z^k}(n s_{\max},k,|\mathfrak{X}|)$, we group the consecutive $s_{\max}$ 0's in $z^k$ together, allowing the last group to have possibly fewer than $s_{\max}$ elements. It is clear that there are $\left\lceil \frac{k}{s_{\max}}\right\rceil$ of such groups of 0's. Since we put a stretching constraint on the sequences of length $n s_{\max}$ when we count $G_{z^k}(n s_{\max},k,|\mathfrak{X}|)$, we are looking for sequences of length $n$, containing a subsequence of length $\left\lceil \frac{k}{s_{\max}}\right\rceil$. Thus, counting this number will be the same as counting $F\left(n,\left\lceil \frac{k}{s_{\max}}\right\rceil,|\mathfrak{X}|\right)$. Thus we have 
\iftoggle{singlecolumn}{
\begin{align}
    G_{z^k}(n s_{\max},k,|\mathfrak{X}|)\le F\left(n,\left\lceil \nicefrac{k}{s_{\max}}\right\rceil,|\mathfrak{X}|\right),\quad \forall z^k\in \mathfrak{X}^k\label{eqn:ineqFG}
\end{align}
}{
\begin{align}
    G_{z^k}(n s_{\max},k,|\mathfrak{X}|)\le F\left(n,\left\lceil \nicefrac{k}{s_{\max}}\right\rceil,|\mathfrak{X}|\right),\quad \forall z^k\in \mathfrak{X}^k\label{eqn:ineqFG}
\end{align}
}
We note that the inequality given in \eqref{eqn:ineqFG} is the tightest upper bound independent of $z^k$, equality being achieved when $z^k$ is a constant (\textit{e.g.,} all-zeros) sequence.

Now, let 
\iftoggle{singlecolumn}{
\begin{align}
    T(z^k,A^n)&\triangleq \{x^n\in\mathfrak{X}^n:\bar{x}^{(n-a)}\in A_\epsilon^{(n-a)}(X) \text{ and }\tilde{x}^{(n-a) s_{\max}}  \text{ contains } z^k. \}
\end{align}
}{
\begin{align}
    T(z^k,A^n)&\triangleq \{x^n\in\mathfrak{X}^n:\bar{x}^{(n-a)}\in A_\epsilon^{(n-a)}(X)\notag\\&\hspace{2em} \text{ and }\tilde{x}^{(n-a) s_{\max}}  \text{ contains } z^k.\}
\end{align}
}
Then, we obtain
\iftoggle{singlecolumn}{
\begin{align}
    |T(z^k,A^n)|&\le G_{z^k}((n-a) s_{\max},k,|\mathfrak{X}|)\\
    &\le F\left(n-a,\left\lceil\nicefrac{k}{s_{\max}}\right\rceil,|\mathfrak{X}|\right)\label{eq:tset}
\end{align}
}{
\begin{align}
    |T(z^k,A^n)|&\le G_{z^k}((n-a) s_{\max},k,|\mathfrak{X}|)\\
    &\le F\left(n-a,\left\lceil\nicefrac{k}{s_{\max}}\right\rceil,|\mathfrak{X}|\right)\label{eq:tset}
\end{align}
}
For the sake of computational simplicity, suppose $\frac{k}{s_{\max}}$ is an integer. Since ${\frac{1}{s_{\max}}\mathbb{E}[S]\ge \frac{1-\alpha\delta}{|\mathfrak{X}|}}$, from~\cite{chvatal1975longest} and~\cite[Chapter 11]{cover2006elements} we have the following upper bound:
\iftoggle{singlecolumn}{
\begin{align}
    F\left(n-a,\nicefrac{k}{s_{\max}},|\mathfrak{X}|\right)&\le (n-a) 2^{(n-a) H_b\left(\frac{k}{s_{\max}(n-a)}\right)} (|\mathfrak{X}|-1)^{\left( n-a-\frac{k}{s_{max}}\right)}
\end{align}
}{
\begin{align}
    F\left(n-a,\nicefrac{k}{s_{\max}},|\mathfrak{X}|\right)&\le (n-a) 2^{(n-a) H_b\left(\frac{k}{s_{\max}(n-a)}\right)} \notag\\&\hspace{1em}(|\mathfrak{X}|-1)^{\left( n-a-\frac{k}{s_{max}}\right)}
\end{align}
}
Furthermore, for any $x^n\in T(z^k,A^n)$, since $T(z^k,A^n)\subseteq A_\epsilon^{(n-a)}(X)$, we have
\iftoggle{singlecolumn}{
\begin{align}
    p_{X^n}(x^n)\le 2^{-(n-a) (H(X)-\epsilon)}
\end{align}
}{
\begin{align}
    p_{X^n}(x^n)\le 2^{-(n-a) (H(X)-\epsilon)}
\end{align}
}
and since the rows $X^n_i$ of $\mathbf{X}$ are \emph{i.i.d.}, we have 
\iftoggle{singlecolumn}{
\begin{align}
    \Pr(X^n_2 \in T(z^k,A^n)|X^n_1 \in T(z^k,A^n))=\Pr(X^n_2 \in T(z^k,A^n))
\end{align}
}{
\begin{align}
    \Pr(X^n_2 \in T(z^k,A^n)|X^n_1 &\in T(z^k,A^n))\notag\\&=\Pr(X^n_2 \in T(z^k,A^n))
\end{align}
}
Finally, we have
\iftoggle{singlecolumn}{
\begin{align}
    |A_\epsilon(X^k|Y^k_t)| &\le 2^{k(H(X|Y)+\epsilon)}\label{eq:condtypicalset}
\end{align}
}{
\begin{align}
    |A_\epsilon(X^k|Y^k_t)| &\le 2^{k(H(X|Y)+\epsilon)}\label{eq:condtypicalset}
\end{align}
}

Combining \eqref{eq:tset}-\eqref{eq:condtypicalset}, we can upper bound $P_{\text{col,2}}$ as
\iftoggle{singlecolumn}{
\begin{align}
    P_{\text{col,2}} &\le \sum\limits_{z^k\in A_\epsilon(X^k|Y^k_t)} \Pr(X^n_2 \in T(z^k,A^n))\\
    &=\sum\limits_{z^k\in A_\epsilon(X^k|Y^k_t)} \sum\limits_{x^n\in T(z^k,A^n)} p_{X^n}(x^n)\\
    &\le \sum\limits_{z^k\in A_\epsilon(X^k|Y^k_t)} \sum\limits_{x^n\in T(z^k,A^n)} 2^{-(n-a)(H(X)-\epsilon)}\\
    &= \sum\limits_{z^k\in A_\epsilon(X^k|Y^k_t)} |T(z^k,A^n)| 2^{-(n-a)(H(X)-\epsilon)}\label{eq:tsetseparation}\\
    &\le  \sum\limits_{z^k\in A_\epsilon(X^k|Y^k_t)} 2^{-(n-a)(H(X)-\epsilon)} F\left(n-a,\nicefrac{k}{s_{\max}},|\mathfrak{X}|\right)\\
    &= |A_\epsilon(X^k|Y^k_t)| 2^{-(n-a)(H(X)-\epsilon)} F\left(n-a,\nicefrac{k}{s_{\max}},|\mathfrak{X}|\right)\\
    &\le |A_\epsilon(X^k|Y^k_t)| (n-a) 2^{-(n-a)\left[H(X)-\epsilon-H_b(\frac{k}{s_{\max}(n-a)})\right]} (|\mathfrak{X}|-1)^{(n-a-\frac{k}{s_{\max}})}\\
    &\le 2^{k(H(X|Y)+\epsilon)} (n-a) 2^{-(n-a) \left[H(X)-\epsilon-H_b(\frac{k}{s_{\max}(n-a)})\right]} (|\mathfrak{X}|-1)^{(n-a-\frac{k}{s_{max}})}
\end{align}
}{
\begin{align}
    P_{\text{col,2}} &\le \sum\limits_{z^k\in A_\epsilon(X^k|Y^k_t)} \Pr(X^n_2 \in T(z^k,A^n))\\
    &=\sum\limits_{z^k\in A_\epsilon(X^k|Y^k_t)} \sum\limits_{x^n\in T(z^k,A^n)} p_{X^n}(x^n)\\
    &\le \sum\limits_{z^k\in A_\epsilon(X^k|Y^k_t)} \sum\limits_{x^n\in T(z^k,A^n)} 2^{-(n-a)(H(X)-\epsilon)}\\
    &= \sum\limits_{z^k\in A_\epsilon(X^k|Y^k_t)} |T(z^k,A^n)| 2^{-(n-a)(H(X)-\epsilon)}\label{eq:tsetseparation}\\
    &\le  \sum\limits_{z^k\in A_\epsilon(X^k|Y^k_t)} 2^{-(n-a)(H(X)-\epsilon)}\notag\\&\hspace{4em} F\left(n-a,\nicefrac{k}{s_{\max}},|\mathfrak{X}|\right)\\
    &= |A_\epsilon(X^k|Y^k_t)| 2^{-(n-a)(H(X)-\epsilon)} \notag\\&\hspace{4em}F\left(n-a,\nicefrac{k}{s_{\max}},|\mathfrak{X}|\right)\\
    &\le |A_\epsilon(X^k|Y^k_t)| (n-a) (|\mathfrak{X}|-1)^{(n-a-\frac{k}{s_{\max}})}\notag\\&\hspace{4em}2^{-(n-a)\left[H(X)-\epsilon-H_b(\frac{k}{s_{\max}(n-a)})\right]}\\
    &\le 2^{k(H(X|Y)+\epsilon)} (n-a)  (|\mathfrak{X}|-1)^{(n-a-\frac{k}{s_{max}})}\notag\\&\hspace{4em}2^{-(n-a) \left[H(X)-\epsilon-H_b(\frac{k}{s_{\max}(n-a)})\right]}
\end{align}
}
Thus, we have the following upper bound on the error probability
\iftoggle{singlecolumn}{
\begin{align}
    P_e &\le 2\epsilon+ 2^{n R} 2^{k(H(X|Y)+\epsilon)} (n-a) 2^{-(n-a) \left[H(X)-\epsilon-H_b(\frac{k}{s_{\max}(n-a)})\right]} (|\mathfrak{X}|-1)^{(n-a-\frac{k}{s_{max}})} +\kappa_n+\mu_n
\end{align}
}{
\begin{align}
    P_e &\le 2\epsilon+\kappa_n+\mu_n\notag\\
    &\hspace{1em}+ 2^{n R} 2^{k(H(X|Y)+\epsilon)} (n-a)(|\mathfrak{X}|-1)^{(n-a-\frac{k}{s_{max}})} \notag\\&\hspace{4em} 2^{-(n-a) \left[H(X)-\epsilon-H_b(\frac{k}{s_{\max}(n-a)})\right]} 
\end{align}
}
By LLN, we have $\kappa_n\to0$ and $\mu_n\to0$ as $n\to\infty$. Hence, we can argue that any database growth rate $R$ satisfying 
\iftoggle{singlecolumn}{
\begin{align}
    R&<\Big[(1-\alpha\delta)\left(H(X)-H_b\left( \frac{\mathbb{E}[S]}{(1-\alpha\delta)s_{\max}}\right)\right) \notag\\
    &\hspace{6em}-\left(1-\alpha\delta-\frac{\mathbb{E}[S]}{s_{\max}} \right)\log\left(|\mathfrak{X}|-1\right)-\mathbb{E}[S]H(X|Y)\Big]^+ \label{eq:rowwiseachievablerate}
\end{align}
}{
\begin{align}
    R&<\Big[(1-\alpha\delta)\left(H(X)-H_b\left( \frac{\mathbb{E}[S]}{(1-\alpha\delta)s_{\max}}\right)\right) \notag\\&-\left(1-\alpha\delta-\frac{\mathbb{E}[S]}{s_{\max}} \right)\log\left(|\mathfrak{X}|-1\right)-\mathbb{E}[S]H(X|Y)\Big]^+ \label{eq:rowwiseachievablerate}
\end{align}
}
is achievable, by taking $\epsilon$ small enough.

Now, we focus on general repetition distributions. For any subsequence $z^k$ of $s_{\max}$-times stretched sequence of length $(n-a)s_{\max}$, let $r(z^k)$ be the number of runs in $z^k$ with at most $s_{\max}$ elements and note that $r(z^k)\le n-a$. Then, let $\tilde{z}^{r(z^k)}$ be the sequence storing the values of each run in $z^k$. Observe that for any $z^{k}\in A_\epsilon(X^{k}|Y^k_t)$, we have $\tilde{z}^{r(z^k)}\in A_\epsilon^{(r(z^k))}(X)$.

\begin{sloppypar}
For any such grouping of $r(z^k)$ runs, the $\epsilon$-typicality of $x^n=(x_1,\dots,x_n)\in T(z^k,A^n)$ and $\tilde{z}^{r(z^k)}$ with respect to $p_X$ implies the $\tilde{\epsilon}$-typicality of the remaining sequence of length ${n-a-r(z^k)}$ obtained after discarding $\tilde{z}^{r(z^k)}$ from $\bar{x}^{n-a}$, where $\tilde{\epsilon}=\frac{n-a+r(z^k)}{n-a-r(z^k)}\epsilon$. Furthermore, by a similar argument made above, we stress that $T(z^k,A^n)$ attains its maximum value when $r(z^k)$ is the minimum, which is $k_{\min}\triangleq\lceil \frac{k}{s_{\max}}\rceil$, attained when $z^k$ is a $s_{\max}$ times stretched sequence itself. Therefore for any $z^{k}\in A_\epsilon(X^{k}|Y^k_t)$, taking the union bound over all possible groupings with $r(z^k)$ runs, the cardinality of $T(z^{k},A^n)$ can be upper bounded as
\iftoggle{singlecolumn}{
\begin{align}
    |T(z^{k},A^n)| &\le \binom{n-a}{k_{\min}} |A_{\tilde{\epsilon}}^{(n-a-k_{\min})}(X)|\\
    &\le 2^{(n-a) H_b\left(\frac{k_{\min}}{n-a}\right)} |A_{\tilde{\epsilon}}^{(n-a-\hat{k})}(X)|\\
    &\le 2^{(n-a) H_b\left(\frac{k_{\min}}{n-a}\right)} 2^{(n-a-k_{\min})(H(X)+\tilde{\epsilon})}\\
    &=  2^{n \left[(1-\frac{a}{n})H_b\left(\frac{k_{\min}}{n-a}\right)+(1-\frac{a}{n}-\frac{k_{\min}}{n})(H(X)+\tilde{\epsilon})\right]}\label{eq:tsettypicalbound2}
\end{align}
}{
\begin{align}
    |T(z^{k},A^n)| &\le \binom{n-a}{k_{\min}} |A_{\tilde{\epsilon}}^{(n-a-k_{\min})}(X)|\\
    &\le 2^{(n-a) H_b\left(\frac{k_{\min}}{n-a}\right)} |A_{\tilde{\epsilon}}^{(n-a-\hat{k})}(X)|\\
    &\le 2^{(n-a) H_b\left(\frac{k_{\min}}{n-a}\right)} 2^{(n-a-k_{\min})(H(X)+\tilde{\epsilon})}\\
    &=  2^{n \left[(1-\frac{a}{n})H_b\left(\frac{k_{\min}}{n-a}\right)+(1-\frac{a}{n}-\frac{k_{\min}}{n})(H(X)+\tilde{\epsilon})\right]}\label{eq:tsettypicalbound2}
\end{align}
}
\end{sloppypar}
\begin{sloppypar}
Plugging \eqref{eq:tsettypicalbound2} into \eqref{eq:tsetseparation} and following the same steps, one can show that any rate $R$ satisfying 
\iftoggle{singlecolumn}{
\begin{align}
    R&<\left[\frac{\mathbb{E}[S]}{s_{\max}} H(X)-(1-\alpha\delta)H_b\left( \frac{\mathbb{E}[S]}{(1-\alpha\delta)s_{\max}}\right)-\mathbb{E}[S]H(X|Y)\right]^+\label{eq:rowwiseachievablerate2}
\end{align}
}{
\begin{align}
    R&<\Big[\frac{\mathbb{E}[S]}{s_{\max}} H(X)-\mathbb{E}[S]H(X|Y)\notag\\&\hspace{3em}-(1-\alpha\delta)H_b\left( \frac{\mathbb{E}[S]}{(1-\alpha\delta)s_{\max}}\right)\Big]^+\label{eq:rowwiseachievablerate2}
\end{align}
}
is achievable. Simply taking the maximum of the two proven achievable rates (\eqref{eq:rowwiseachievablerate} and \eqref{eq:rowwiseachievablerate2}) when ${\frac{1}{s_{\max}}\mathbb{E}[S]\ge \frac{1-\alpha\delta}{|\mathfrak{X}|}}$ yields \eqref{eq:rowwiseachievable2}. This concludes the proof. \qed
\end{sloppypar}

\section{Proof of Corollary~\ref{cor:W1conversenoiselessub}}
\label{proof:rowwiseconversenoiselessub}
Let $E$ denote the empty string and $\Tilde{X}$ denote the sequence obtained after discarding the detected deleted entries from $X^2$. The dependence of $\Tilde{X}$ on $X^2$ and $A^2$ and that of $Y$ on $X^2$ and $S^2$ are omitted for brevity. 

We start with the fact that since the entries of $X^2$ are independent, the deleted entries do not offer any information. Thus, we can discard them without any information loss. Thus, we have
\iftoggle{singlecolumn}{
\begin{align}
    I(X^2;Y,A^2)&= I(\Tilde{X};Y|A^2)\\
    &=H(\Tilde{X}|A^2)-H(\Tilde{X}|Y,{A}^2)
\end{align}
}{
\begin{align}
    I(X^2;Y,A^2)&= I(\Tilde{X};Y|A^2)\\
    &=H(\Tilde{X}|A^2)-H(\Tilde{X}|Y,{A}^2)
\end{align}
}

\iftoggle{singlecolumn}{
We have
\begin{align}
    H(\Tilde{X}|{A}^2)&=\sum\limits_{{a}^2\in\{0,1\}^2}\Pr({A}^2={a}^2) H(\Tilde{{X}}|{A}^2={a}^2)\\
    &=\Pr({A}^2=00)H(\Tilde{{X}}|{A}^2=00)+\Pr({A}^2=01)H(\Tilde{{X}}|{A}^2=01)\notag\\
    &\hspace{3em}+ \Pr({A}^2=10)H(\Tilde{{X}}|{A}^2=10)+\Pr({A}^2=11)H(\Tilde{{X}}|{A}^2=11)\\
    &= (1-\alpha\delta)^2 2H(X)+\alpha\delta(1-\alpha\delta)H(X) + (1-\alpha\delta)\alpha\delta H(X)+0\\
    &=2(1-\alpha\delta)H(X)\label{eq:HXtildeA}
\end{align}
}{
\begingroup
\allowdisplaybreaks
We have
\begin{align}
    H(\Tilde{X}|{A}^2)&=\sum\limits_{{a}^2\in\{0,1\}^2}\Pr({A}^2={a}^2) H(\Tilde{{X}}|{A}^2={a}^2)\\
    &=\Pr({A}^2=00)H(\Tilde{{X}}|{A}^2=00) \notag\\&\hspace{1em} +\Pr({A}^2=01)H(\Tilde{{X}}|{A}^2=01)\notag\\
    &\hspace{1em}+ \Pr({A}^2=10)H(\Tilde{{X}}|{A}^2=10) \notag\\&\hspace{1em} +\Pr({A}^2=11)H(\Tilde{{X}}|{A}^2=11)\\
    &= (1-\alpha\delta)^2 2H(X) \notag\\&\hspace{1em} +\alpha\delta(1-\alpha\delta)H(X)  \notag\\&\hspace{1em} + (1-\alpha\delta)\alpha\delta H(X) \notag\\&\hspace{1em} +0\\
    &=2(1-\alpha\delta)H(X)\label{eq:HXtildeA}
\end{align}
\endgroup
}
Furthermore, we have
\iftoggle{singlecolumn}{
\begin{align}
    H(\Tilde{{X}}|{Y},{A}^2)&=\sum\limits_{{y},{a}^2} \Pr({Y}={y},{A}^2={a}^2) H(\Tilde{{X}}|{Y}={y},{A}^2={a}^2)\\
    &= \Pr({Y}=E,{A}^2=00)H(\Tilde{{X}}|{Y}=E,{A}^2=00)\notag\\
    &\hspace{2em} +\Pr({Y}=E,{A}^2=01)H(\Tilde{{X}}|{Y}=E,{A}^2=01)\notag\\
    &\hspace{2em} +\Pr({Y}=E,{A}^2=10)H(\Tilde{{X}}|{Y}=E,{A}^2=10)\notag\\
    &\hspace{2em} +\sum\limits_{x\in\mathfrak{X}} \Pr({Y}=x,{A}^2=00)H(\Tilde{{X}}|{Y}=x,{A}^2=00)\notag\\
    &\hspace{2em} +\sum\limits_{x\in\mathfrak{X}} \Pr({Y}=x,{A}^2=01)H(\Tilde{{X}}|{Y}=x,{A}^2=01)\notag\\
    &\hspace{2em} +\sum\limits_{x\in\mathfrak{X}} \Pr({Y}=x,{A}^2=10)H(\Tilde{{X}}|{Y}=x,{A}^2=10)\label{eq:HXtildeYA}
\end{align}
}{
\begin{align}
    H&(\Tilde{{X}}|{Y},{A}^2)=\sum\limits_{{y},{a}^2} \Pr({Y}={y},{A}^2={a}^2) \notag\\&\hspace{7em}  H(\Tilde{{X}}|{Y}={y},{A}^2={a}^2)\\
    &= \Pr({Y}=E,{A}^2=00)H(\Tilde{{X}}|{Y}=E,{A}^2=00)\notag\\
    &\hspace{1em} +\Pr({Y}=E,{A}^2=01)H(\Tilde{{X}}|{Y}=E,{A}^2=01)\notag\\
    &\hspace{1em} +\Pr({Y}=E,{A}^2=10)H(\Tilde{{X}}|{Y}=E,{A}^2=10)\notag\\
    &\hspace{1em} +\sum\limits_{x\in\mathfrak{X}} \Pr({Y}=x,{A}^2=00)H(\Tilde{{X}}|{Y}=x,{A}^2=00)\notag\\
    &\hspace{1em} +\sum\limits_{x\in\mathfrak{X}} \Pr({Y}=x,{A}^2=01)H(\Tilde{{X}}|{Y}=x,{A}^2=01)\notag\\
    &\hspace{1em} +\sum\limits_{x\in\mathfrak{X}} \Pr({Y}=x,{A}^2=10)H(\Tilde{{X}}|{Y}=x,{A}^2=10)\label{eq:HXtildeYA}
\end{align}
}
Note that in \eqref{eq:HXtildeYA}, we discarded the terms with ${A}^2=11$ for $|{Y}|\ge 1$, since in that case we have ${\Pr(|{Y}|\ge 1,{A}^2=11)=0}$. We can further discard the terms with $|{Y}|=n=2$, since in that case we have no deletion and ${Y}={Y}^2={X}^2$. Finally, we can also discard the last two terms in \eqref{eq:HXtildeYA} since for any $x\in\mathfrak{X}$ we have 
\iftoggle{singlecolumn}{
\begin{align}
    H(\Tilde{{X}}|{Y}=x,{A}^2=01)=H(\Tilde{{X}}|{Y}=x,{A}^2=10)=0
\end{align}
}{
\begin{align}
    H(\Tilde{{X}}|{Y}=x,{A}^2=01)=H(\Tilde{{X}}|{Y}=x,{A}^2=10)=0
\end{align}
}

\iftoggle{singlecolumn}{
Thus, we have
\begin{align}
    H(\Tilde{{X}}|{Y},{A}^2)&=\delta^2(1-\alpha)^2 2H(X)+\delta^2(1-\alpha)\alpha H(X)+\delta^2\alpha(1-\alpha)H(X)\notag\\&\hspace{2em}+\sum\limits_{x\in\mathfrak{X}} \Pr({Y}=x,{A}^2=00)H(\Tilde{{X}}|{Y}=x,{A}^2=00)\\
    &= 2\delta^2(1-\alpha)H(X)+\sum\limits_{x\in\mathfrak{X}} \Pr({Y}=x,{A}^2=00)H(\Tilde{{X}}|{Y}=x,{A}^2=00)\label{eq:HXtilde1}
\end{align}
}{
\begingroup
Thus, we have
\begin{align}
    H(\Tilde{{X}}|{Y},{A}^2)&=\delta^2(1-\alpha)^2 2H(X) \notag\\&\hspace{1em} +\delta^2(1-\alpha)\alpha H(X) \notag\\&\hspace{1em} +\delta^2\alpha(1-\alpha)H(X)\notag\\&\hspace{1em}+\sum\limits_{x\in\mathfrak{X}} \Pr({Y}=x,{A}^2=00) \notag\\&\hspace{4em} H(\Tilde{{X}}|{Y}=x,{A}^2=00)\\
    &= 2\delta^2(1-\alpha)H(X) \notag\\&\hspace{1em} +\sum\limits_{x\in\mathfrak{X}} \Pr({Y}=x,{A}^2=00) \notag\\&\hspace{4em} H(\Tilde{{X}}|{Y}=x,{A}^2=00)\label{eq:HXtilde1}
\end{align}
\endgroup
}

We first compute $\Pr({Y}=x,{A}^2=00)$. For any $x\in\mathfrak{X}$, we have
\iftoggle{singlecolumn}{
\begin{align}
    \Pr({Y}=x,{A}^2=00) &= \sum\limits_{{x}^2\in\mathfrak{X}^2}\Pr({Y}=x,{A}^2=00,{X}^2={x}^2)\\
    &= \Pr({Y}=x,{A}^2=00,{X}^2=xx) \notag\\
    &\hspace{2em}+ 2\sum\limits_{y\neq x}\Pr({Y}=x,{A}^2=00,{X}^2=xy)
    \\
    &= p_X(x)^2 2\delta(1-\delta)(1-\alpha) + 2\sum\limits_{y\neq x} p_X(x) p_X(y) \delta(1-\delta)(1-\alpha)
    \\
    &= 2\delta(1-\delta)(1-\alpha)p_X(x)\sum\limits_{y\in\mathfrak{X}} p_X(y)\\
    &= 2\delta(1-\delta)(1-\alpha)p_X(x)\label{eq:HXtilde2}
\end{align}
}{
\begin{align}
    \Pr&({Y}=x,{A}^2=00)\notag \\&= \sum\limits_{{x}^2\in\mathfrak{X}^2}\Pr({Y}=x,{A}^2=00,{X}^2={x}^2)\\
    &= \Pr({Y}=x,{A}^2=00,{X}^2=xx) \notag\\&\hspace{1em}  + 2\sum\limits_{y\neq x}\Pr({Y}=x,{A}^2=00,{X}^2=xy)
    \\
    &= p_X(x)^2 2\delta(1-\delta)(1-\alpha)  \notag\\&\hspace{1em} + 2\sum\limits_{y\neq x} p_X(x) p_X(y) \delta(1-\delta)(1-\alpha)
    \\
    &= 2\delta(1-\delta)(1-\alpha)p_X(x)\sum\limits_{y\in\mathfrak{X}} p_X(y)\\
    &= 2\delta(1-\delta)(1-\alpha)p_X(x)\label{eq:HXtilde2}
\end{align}
}

Now, we compute $H(\Tilde{{X}}|{Y}=x,{A}^2=00)$. For any $x\in\mathfrak{X}$ we have $2|\mathfrak{X}|-1$ possible patterns for $\Tilde{{X}}$, given that ${Y}=x$. $2|\mathfrak{X}|-2$ of these patterns have probabilities proportional to $p_X(x) p_X(y)$ $y\in \mathfrak{X}\setminus\{x\}$ and the remaining pattern has probability proportional to $2 p_X(x)^2$. Thus we have
\iftoggle{singlecolumn}{
\begin{align}
    H(\Tilde{{X}}|{Y}=x,{A}^2=00)= H\Big(\frac{p_X(1) p_X(x)}{c},&\frac{p_X(x) p_X(1)}{c},\dots,\frac{2 (p_X(x))^2}{c},\notag\\&\dots,\frac{p_X(|\mathfrak{X}|) p_X(x)}{c},\frac{p_X(x) p_X(|\mathfrak{X}|)}{c}\Big)
\end{align}
}{
\begin{align}
    H&(\Tilde{{X}}|{Y}=x,{A}^2=00)\notag\\&= H\Big(\frac{p_X(1) p_X(x)}{c},\frac{p_X(x) p_X(1)}{c},\notag\\&\hspace{3em}\dots,\frac{2 p_X(x)^2}{c},\dots,\notag\\&\hspace{4em}\frac{p_X(|\mathfrak{X}|) p_X(x)}{c},\frac{p_X(x) p_X(|\mathfrak{X}|)}{c}\Big)
\end{align}
}
where the normalization constant $c$ is  $c=2p_X(x)$. Thus,
\iftoggle{singlecolumn}{
\begin{align}
    H(\Tilde{{X}}|{Y}=x,{A}^2=00)&= H\Big(\frac{p_X(1) }{2},\frac{ p_X(1)}{2},\dots, p_X(x),\dots,\frac{p_X(|\mathfrak{X}|)}{2},\frac{ p_X(|\mathfrak{X}|)}{2}\Big)\\
    &= H(X)+1-p_X(x)\label{eq:HXtilde3}
\end{align}
}{
\begin{align}
    H(\Tilde{{X}}|{Y}=x,&{A}^2=00)\notag\\&= H\Big(\frac{p_X(1) }{2},\frac{ p_X(1)}{2},\notag\\&\hspace{3em}\dots, p_X(x),\dots,\notag\\&\hspace{4em}\frac{p_X(|\mathfrak{X}|)}{2},\frac{ p_X(|\mathfrak{X}|)}{2}\Big)\\
    &= H(X)+1-p_X(x)\label{eq:HXtilde3}
\end{align}
}

\iftoggle{singlecolumn}{
Combining \eqref{eq:HXtilde1}-\eqref{eq:HXtilde3}, we can compute $H(\Tilde{{X}}|{Y},{A}^2)$ as 
\begin{align}
    H(\Tilde{{X}}|{Y},{A}^2)&=2\delta^2(1-\alpha)H(X)+\sum\limits_{x\in\mathfrak{X}} 2\delta(1-\delta)(1-\alpha)p_X(x) [H(X)+1-p_X(x)]\\
    &= 2\delta^2(1-\alpha)H(X)+2\delta(1-\delta)(1-\alpha) \left(H(X)+1-\hat{q}\right)\\
    &= 2\delta(1-\alpha)H(X)+2\delta(1-\delta)(1-\alpha)\left(1-\hat{q}\right)\label{eq:HXtildeYAfinal}
\end{align}
}{
\begingroup
\allowdisplaybreaks
Combining \eqref{eq:HXtilde1}-\eqref{eq:HXtilde3}, we can compute $H(\Tilde{{X}}|{Y},{A}^2)$ as 
\begin{align}
    H(\Tilde{{X}}|{Y},{A}^2)&=2\delta^2(1-\alpha)H(X)\notag\\&\hspace{1em}+\sum\limits_{x\in\mathfrak{X}} 2\delta(1-\delta)(1-\alpha)p_X(x)\notag\\&\hspace{5em} [H(X)+1-p_X(x)]\\
    &= 2\delta^2(1-\alpha)H(X)\notag\\&\hspace{1em}+2\delta(1-\delta)(1-\alpha) \notag\\&\hspace{3em}\left(H(X)+1-\hat{q}\right)\\
    &= 2\delta(1-\alpha)H(X)\notag\\&\hspace{1em}+2\delta(1-\delta)(1-\alpha)(1-\hat{q})\label{eq:HXtildeYAfinal}
\end{align}
\endgroup
}
Finally, combining \eqref{eq:HXtildeA} and \eqref{eq:HXtildeYAfinal}, we obtain
\iftoggle{singlecolumn}{
\begin{align}
    I(\Tilde{{X}};{Y}^K|{A}^2)
    &=H(\Tilde{{X}}|{A}^2)-H(\Tilde{{X}}|{Y}({X}^2),{A}^2)\\
    &= 2(1-\alpha\delta)H(X)-2\delta(1-\alpha)H(X)-2\delta(1-\delta)(1-\alpha)\left(1-\hat{q}\right)\\
    &=2(1-\delta)H(X)-2\delta(1-\delta)(1-\alpha)\left(1-\hat{q}\right)\label{eqn:MI2}
\end{align}
}{
\begin{align}
    I(\Tilde{{X}};{Y}^K|{A}^2)
    &=H(\Tilde{{X}}|{A}^2)-H(\Tilde{{X}}|{Y}({X}^2),{A}^2)\\
    &= 2(1-\alpha\delta)H(X)-2\delta(1-\alpha)H(X)\notag\\
    &\hspace{1em}-2\delta(1-\delta)(1-\alpha)\left(1-\hat{q}\right)\\
    &=2(1-\delta)H(X)\notag\\&\hspace{1em}-2\delta(1-\delta)(1-\alpha)\left(1-\hat{q}\right)\label{eqn:MI2}
\end{align}
}
Thus, we have
\iftoggle{singlecolumn}{
\begin{align}
    \frac{1}{2}I({X}^2;{Y}^K,{A}^2)
    &=(1-\delta)H(X)-\delta(1-\delta)(1-\alpha)\left(1-\hat{q}\right)
\end{align}
}{
\begin{align}
    \frac{1}{2}I({X}^2;{Y}^K,{A}^2)
    &=(1-\delta)H(X)\notag\\&\hspace{1em}-\delta(1-\delta)(1-\alpha)\left(1-\hat{q}\right)
\end{align}
}
concluding the proof.
\qed


\section{Proof of Corollary~\ref{cor:W1conversenoisybinaryub}}
\label{proof:W1conversenoisybinaryub}
We start by observing that
\iftoggle{singlecolumn}{
\begin{align}
    I({X}^2;{Y},{A}^2)&= I({X}^2;{Y},|{Y}|,{A}^2)\\
    &= H({X}^2)-H({X}^2|{Y},|{Y}|,{A}^2)\\
    &= 2H(X)-H({X}^2|{Y},|{Y}|,{A}^2)\label{eq:MIXYAnoisy1}
\end{align}
}{
\begin{align}
    I({X}^2;{Y},{A}^2)&= I({X}^2;{Y},|{Y}|,{A}^2)\\
    &= H({X}^2)-H({X}^2|{Y},|{Y}|,{A}^2)\\
    &= 2H(X)-H({X}^2|{Y},|{Y}|,{A}^2)\label{eq:MIXYAnoisy1}
\end{align}
}
\iftoggle{singlecolumn}{
Furthermore, we have
\begin{align}
    H({X}^2|{Y},|{Y}|,{A}^2)&= \sum\limits_{i=0}^2 \Pr(|{Y}|=i) H({X}^2|{Y},|{Y}|=i,{A}^2)\\
    &=\delta^2 H({X}^2|{Y},|{Y}|=0,{A}^2) \notag\\
    &\hspace{2em}+ 2\delta(1-\delta) H({X}^2|{Y},|{Y}|=1,{A}^2)\notag\\
    &\hspace{2em}+ (1-\delta)^2 H({X}^2|{Y},|{Y}|=2,{A}^2)\\
    &= \delta^2 2 H(X) \notag\\
    &\hspace{2em}+ 2\delta(1-\delta) H({X}^2|{Y},|{Y}|=1,{A}^2)\notag\\
    &\hspace{2em}+ (1-\delta)^2 2 H(X|Y)\\
    &= \delta^2 2 H(X) \notag\\
    &\hspace{2em}+ 2\delta(1-\delta)\alpha [H(X)+H(X|Y)]\notag\\
    &\hspace{2em}+ 2\delta(1-\delta)(1-\alpha) H({X}^2|{Y},|{Y}|=1,{A}^2=00)\notag\\
    &\hspace{2em}+ (1-\delta)^2 2 H(X|Y)
\end{align}
}{
\begingroup

Furthermore, we have
\begin{align}
    H&({X}^2|{Y},|{Y}|,{A}^2)\notag\\&= \sum\limits_{i=0}^2 \Pr(|{Y}|=i) H({X}^2|{Y},|{Y}|=i,{A}^2)\\
    &=\delta^2 H({X}^2|{Y},|{Y}|=0,{A}^2) \notag\\
    &\hspace{1em}+ 2\delta(1-\delta) H({X}^2|{Y},|{Y}|=1,{A}^2)\notag\\
    &\hspace{1em}+ (1-\delta)^2 H({X}^2|{Y},|{Y}|=2,{A}^2)\\
    &= \delta^2 2 H(X) \notag\\
    &\hspace{1em}+ 2\delta(1-\delta) H({X}^2|{Y},|{Y}|=1,{A}^2)\notag\\
    &\hspace{1em}+ (1-\delta)^2 2 H(X|Y)\\
    &= \delta^2 2 H(X) \notag\\
    &\hspace{1em}+ 2\delta(1-\delta)\alpha [H(X)+H(X|Y)]\notag\\
    &\hspace{1em}+ 2\delta(1-\delta)(1-\alpha) H({X}^2|{Y},|{Y}|=1,{A}^2=00)\notag\\
    &\hspace{1em}+ (1-\delta)^2 2 H(X|Y)
\end{align}
\endgroup
}
Note that we can rewrite $H({X}^2|{Y},|{Y}|=1,{A}^2=00)$ as
\iftoggle{singlecolumn}{
\begin{align}
    H({X}^2|{Y},|{Y}|=1,{A}^2=00)&=H({X}^2|{Y},|{Y}|=1)\\
    &= 2H(X)- I({X}^2;{Y}||{Y}|=1)\\
    &= 2H(X) - [H(Y)-H({Y}|{X}^2,|{Y}|=1)]
\end{align}
}{
\begin{align}
    H&({X}^2|{Y},|{Y}|=1,{A}^2=00)\notag\\&\hspace{1em}=H({X}^2|{Y},|{Y}|=1)\\
    &\hspace{1em}= 2H(X)- I({X}^2;{Y}||{Y}|=1)\\
    &\hspace{1em}= 2H(X) - [H(Y)-H({Y}|{X}^2,|{Y}|=1)]
\end{align}
}
where we have
\iftoggle{singlecolumn}{
\begin{align}
    H({Y}|{X}^2,|{Y}|=1)&= \sum\limits_{{x}^2\in\mathfrak{X}^2} \Pr({X}^2={x}^2) H({Y}|{X}^2={x}^2,|{Y}|=1)\label{eq:HYgivenXlengthY1}
\end{align}
}{
\begin{align}
    H&({Y}|{X}^2,|{Y}|=1)\notag\\
    \hspace{1em}&= \sum\limits_{{x}^2\in\mathfrak{X}^2} \Pr({X}^2={x}^2)\notag\\[-1em]&\hspace{6em} H({Y}|{X}^2={x}^2,|{Y}|=1)\label{eq:HYgivenXlengthY1}
\end{align}
}
Writing the sum in \eqref{eq:HYgivenXlengthY1} explicitly, we obtain
\iftoggle{singlecolumn}{
\begin{align}
    H({Y}|{X}^2,|{Y}|=1)&= (1-p)^2 H({Y}|{X}^2=00,|{Y}|=1) + p^2 H({Y}|{X}^2=11,|{Y}|=1)\notag\\
    &\hspace{2em} +p(1-p) H({Y}|{X}^2=01,|{Y}|=1) \notag\\
    &\hspace{2em}+ p(1-p) H({Y}|{X}^2=10,|{Y}|=1)
\end{align}
}{
\begin{align}
    H({Y}|&{X}^2,|{Y}|=1)\notag\\
    &= (1-p)^2 H({Y}|{X}^2=00,|{Y}|=1) \notag\\&\hspace{1em}+ p^2 H({Y}|{X}^2=11,|{Y}|=1)\notag\\
    &\hspace{1em} +p(1-p) H({Y}|{X}^2=01,|{Y}|=1) \notag\\&\hspace{1em}+ p(1-p) H({Y}|{X}^2=10,|{Y}|=1)
\end{align}
}
Observing the following,
\iftoggle{singlecolumn}{
\begin{align}
    H({Y}|{X}^2=00,|{Y}|=1)&= H(Y|X=0)\\
    H({Y}|{X}^2=11,|{Y}|=1)&=H(Y|X=1)\\
    H({Y}|{X}^2=01,|{Y}|=1)&= H(V)\\
    H({Y}|{X}^2=10,|{Y}|=1)&= H(V)\\
    H(Y|X=0) + H(Y|X=1)&= 2\left[\frac{1}{2} H(Y|X=0) + \frac{1}{2} H(Y|X=1)\right] \\
    &= 2 H(V|U)
\end{align}
}{
\begin{align}
    H({Y}|{X}^2=00,|{Y}|=1)&= H(Y|X=0)\\
    H({Y}|{X}^2=11,|{Y}|=1)&=H(Y|X=1)\\
    H({Y}|{X}^2=01,|{Y}|=1)&= H(V)\\
    H({Y}|{X}^2=10,|{Y}|=1)&= H(V)
\end{align}
\begin{align}
    H(&Y|X=0) + H(Y|X=1)\notag\\
    &= 2\left[\frac{1}{2} H(Y|X=0) + \frac{1}{2} H(Y|X=1)\right] \\
    &= 2 H(V|U)
\end{align}
}
we obtain
\iftoggle{singlecolumn}{
\begin{align}
    H({Y}|{X}^2,|{Y}|=1)&= (1-p) H(Y|X=0)-p(1-p)H(Y|X=0)\notag\\
    &\hspace{1em}+ p H(Y|X=1)-p(1-p)H(Y|X=1)\notag\\
    &\hspace{1em} +2p(1-p) H(V)\\
    &= H(Y|X)+2p(1-p) I(U;V)
\end{align}
}{
\begin{align}
    H({Y}|{X}^2,|{Y}|=1)&= (1-p) H(Y|X=0)\notag\\&\hspace{1em}-p(1-p)H(Y|X=0)\notag\\
    &\hspace{1em}+ p H(Y|X=1)\notag\\&\hspace{1em}-p(1-p)H(Y|X=1)\notag\\
    &\hspace{1em} +2p(1-p) H(V)\\
    &= H(Y|X)+2p(1-p) I(U;V)
\end{align}
}
Hence, we have 
\iftoggle{singlecolumn}{
\begin{align}
    H({X}^2|{Y},|{Y}|=1,{A}^2=00)&=2H(X)-I(X;Y) + 2p(1-p) I(U;V) \label{eq:MIXYAnoisyn}
\end{align}
}{
\begin{align}
    H({X}^2|{Y},|{Y}|=1,{A}^2=00)&=2H(X)-I(X;Y) \notag\\&\hspace{1em}+ 2p(1-p) I(U;V) \label{eq:MIXYAnoisyn}
\end{align}
}
Combining \eqref{eq:MIXYAnoisy1}-\eqref{eq:MIXYAnoisyn}, we have
\iftoggle{singlecolumn}{
\begin{align}
    \frac{1}{2} I({X}^2;{Y}^{K},{A}^2) = (1-\delta) I(X;Y)-2 \delta(1-\delta) (1-\alpha)p(1-p) I(U;V)
\end{align}
}{
\begin{align}
    \frac{1}{2} I({X}^2;{Y}^{K},{A}^2) &= (1-\delta) I(X;Y)\notag\\&\hspace{0.4em}-2 \delta(1-\delta) (1-\alpha)p(1-p) I(U;V)
\end{align}
}
concluding the proof.\qed

\section{Proof of Theorem~\ref{thm:adversarialWm}}
\label{proof:adversarialWm}

First, we focus on $\delta\le 1-\hat{q}$ and prove the achievability part. For a given pair of matching rows, WLOG, $X_1^n$ of $\mathbf{X}$ and $Y_t^{K_n}$ of $\mathbf{Y}$ with $\sigma_n(1)=t$, let $P_e\triangleq \Pr(\hat{\sigma}_n(1)\neq t)$ be the probability of error of the following matching scheme:
\begin{enumerate}[label=\textbf{\arabic*)},leftmargin=1.3\parindent]
\item Construct the collapsed histogram vectors $\tilde{{H}}_j^{(1),n}$ and $\tilde{{H}}_j^{(2),K_n}$ as in~\eqref{eq:histogramdefn}-\eqref{eq:histogramdefn2}.

\item Check the uniqueness of the entries $\tilde{H}^{(1)}_j$ $j\in[n]$ of $\tilde{{H}}^{(1),n}$. If there are at least two that are identical, declare a \emph{detection error} whose probability is denoted by $\mu_n$. Otherwise, proceed with Step~3.
\item $\forall i\in[n]$ if $ \nexists j\in[K_n]$, $\tilde{H}_i^{(1)}=\tilde{H}_j^{(2)}$, declare the $i$\textsuperscript{th} column of $\mathbf{X}$ deleted, assigning $i\in \hat{I}_{\text{del}}$. Note that conditioned on Step~2, this step is error-free.
\item Match the $t$\textsuperscript{th} row $Y^{K_n}_t$ of $\mathbf{Y}$ with the $1$\textsuperscript{st} row $X^n_1$ of $\mathbf{X}$, assigning $\hat{\sigma}_n(1)=t$ if the $1$\textsuperscript{st} row $\hat{X}_1^{K_n}(\hat{I}_{\text{del}})$ of $\hat{\mathbf{X}}$ is the only row of $\hat{\mathbf{X}}$ equal to $Y^{K_n}_t$ where $\hat{X}_i^{K_n}(\hat{I}_{\text{del}})$ is obtained by discarding the elements of $X_i^n$ whose indices lie in $\hat{I}_{\text{del}}$. Otherwise, declare a \emph{collision error}.
\end{enumerate}
Let $I(\delta)$ be the set of all deletion patterns with up to $n\delta$ deletions. For the matching rows $X^n_1$, $Y^k_t$ of $\mathbf{X}$ and $\mathbf{Y}$, define the pairwise adversarial collision probability between $X^n_1$ and $X^n_i$ for any $i\in[m_n]\setminus\{1\}$ as
\iftoggle{singlecolumn}{
\begin{align}
    P_{\text{col,i}}&\triangleq \Pr(\exists \hat{I}_{\text{del}} \in I(\delta):\: \hat{X}_i^{K_n}(\hat{I}_{\text{del}})=Y_t^{K_n})\\
    &=\Pr(\exists \hat{I}_{\text{del}} \in I(\delta):\: \hat{X}_i^{K_n}(\hat{I}_{\text{del}})=\hat{X}_1^{K_n}(\hat{I}_{\text{del}})).
\end{align}
}{
\begin{align}
    P_{\text{col,i}}&\triangleq \Pr(\exists \hat{I}_{\text{del}} \in I(\delta):\: \hat{X}_i^{K_n}(\hat{I}_{\text{del}})=Y_t^{K_n})\\
    &=\Pr(\exists \hat{I}_{\text{del}} \in I(\delta):\: \hat{X}_i^{K_n}(\hat{I}_{\text{del}})=\hat{X}_1^{K_n}(\hat{I}_{\text{del}})).
\end{align}
}
Note that the statement $\exists \hat{I}_{\text{del}} \in I(\delta):\: \hat{X}_i^{K_n}(\hat{I}_{\text{del}})=\hat{X}_1^{K_n}(\hat{I}_{\text{del}})$ is equivalent to the case when the Hamming distance between $X^n_i$ and $X^n_1$ being upper bounded by $n\delta$. In other words,
\iftoggle{singlecolumn}{
\begin{align}
    P_{\text{col,i}} &= \Pr(d_H(X_1^n,X_i^n)\le n\delta)
\end{align}
}{
\begin{align}
    P_{\text{col,i}} &= \Pr(d_H(X_1^n,X_i^n)\le n\delta)
\end{align}
}
where 
\iftoggle{singlecolumn}{
\begin{align}
    d_H(X_1^n,X_i^n) &= \sum\limits_{j=1}^n \mathbbm{1}_{[X_{1,j}\neq X_{i,j}]}
\end{align}
}{
\begin{align}
    d_H(X_1^n,X_i^n) &= \sum\limits_{j=1}^n \mathbbm{1}_{[X_{1,j}\neq X_{i,j}]}
\end{align}
}
Note that due to the \emph{i.i.d.} nature of the database elements, $d_H(X_1^n,X_i^n)\sim \text{Binom}(n,1-\hat{q})$. Thus, for any $\delta\le 1-\hat{q}$, using Chernoff bound~\cite[Lemma 4.7.2]{ash2012information}, we have
\iftoggle{singlecolumn}{
\begin{align}
    P_{\text{col,i}} &= \Pr(d_H(X_1^n,X_i^n)\le n\delta)\\
    &\le 2^{-n D(\delta\|1-\hat{q})}\label{eq:advchernoff}
\end{align}
}{
\begin{align}
    P_{\text{col,i}} &= \Pr(d_H(X_1^n,X_i^n)\le n\delta)\\
    &\le 2^{-n D(\delta\|1-\hat{q})}\label{eq:advchernoff}
\end{align}
}

Therefore given the correct labeling for $Y^k_t\in\mathbf{Y}$ is $X^n_1\in\mathbf{X}$, the probability of error $P_e$ can be bounded as
\iftoggle{singlecolumn}{
\begin{align}
    P_e  &\le \Pr(\exists i\in[m_n]\setminus\{1\}: \hat{X}_i^{K_n}=\hat{X}_1^{K_n})\\
    &\le \sum\limits_{i=2}^{2^{n R}} P_{col,i}+ \kappa_n\\
    &\le  2^{n R} P_{\text{col,2}}+\kappa_n\label{eq:Perowwiseadv}
\end{align}
}{
\begin{align}
    P_e  &\le \Pr(\exists i\in[m_n]\setminus\{1\}: \hat{X}_i^{K_n}=\hat{X}_1^{K_n})\\
    &\le \sum\limits_{i=2}^{2^{n R}} P_{col,i}+ \kappa_n\\
    &\le  2^{n R} P_{\text{col,2}}+\kappa_n\label{eq:Perowwiseadv}
\end{align}
}
where \eqref{eq:Perowwiseadv} follows from the fact the the rows are \emph{i.i.d.} and thus $P_{\text{col,i}}=P_{\text{col,2}},\:\forall i\in[m_n]\setminus\{1\}$. Combining \eqref{eq:advchernoff}-\eqref{eq:Perowwiseadv}, we get
\iftoggle{singlecolumn}{
\begin{align}
    P_e &\le  2^{n R} \Pr(d_H(X_1^n,X_i^n)\le n\delta) + \kappa_n\\
    &\le 2^{n R} 2^{-n D(\delta\|1-\hat{q})}+ \kappa_n\\
    &= 2^{-n \left[D(\delta\|1-\hat{q})-R\right]}+\kappa_n
\end{align}
}{
\begin{align}
    P_e &\le  2^{n R} \Pr(d_H(X_1^n,X_i^n)\le n\delta) + \kappa_n\\
    &\le 2^{n R} 2^{-n D(\delta\|1-\hat{q})}+ \kappa_n\\
    &= 2^{-n \left[D(\delta\|1-\hat{q})-R\right]}+\kappa_n
\end{align}
}
By Lemma~\ref{lem:histogram}, $\kappa_n\to0$ as $n\to \infty$. Thus, we argue that any rate $R$ satisfying 
\iftoggle{singlecolumn}{
\begin{align}
    R&<D(\delta\|1-\hat{q})
\end{align}
}{
\begin{align}
    R&<D(\delta\|1-\hat{q})
\end{align}
}
is achievable. 

Now we prove the converse part. Suppose $P_e\to0$. Then, we have
\iftoggle{singlecolumn}{
\begin{align}
    P_e &= \Pr(\exists i\in[m_n]\setminus\{1\}: d_H(X_1^n,X_i^n)\le n\delta)\\
    &= 1 - \Pr(\forall i\in[m_n]\setminus\{1\}: d_H(X_1^n,X_i^n)> n\delta)\label{eq:advconverse1}\\
    &= 1 -\prod\limits_{i=2}^{m_n} \Pr(d_H(X_1^n,X_i^n)> n\delta)\\
    &=  1 -\prod\limits_{i=2}^{m_n} [1-\Pr(d_H(X_1^n,X_i^n)\le n\delta)]\\
    &= 1-[1-\Pr(d_H(X_1^n,X_2^n)\le n\delta)]^{m_n-1} \label{eq:advconverse2}
\end{align}
}{
\begin{align}
    P_e &= \Pr(\exists i\in[m_n]\setminus\{1\}: d_H(X_1^n,X_i^n)\le n\delta)\\
    &= 1 - \Pr(\forall i\in[m_n]\setminus\{1\}: d_H(X_1^n,X_i^n)> n\delta)\label{eq:advconverse1}\\
    &= 1 -\prod\limits_{i=2}^{m_n} \Pr(d_H(X_1^n,X_i^n)> n\delta)\\
    &=  1 -\prod\limits_{i=2}^{m_n} [1-\Pr(d_H(X_1^n,X_i^n)\le n\delta)]\\
    &= 1-[1-\Pr(d_H(X_1^n,X_2^n)\le n\delta)]^{m_n-1} \label{eq:advconverse2}
\end{align}
}
where \eqref{eq:advconverse1}-\eqref{eq:advconverse2} follow from the \emph{i.i.d.}ness of the rows of $\mathbf{X}$. Since $D_{n,2}\sim\text{Binom}(n,1-\hat{q})$, for ${\delta\le 1-\hat{q}}$, from~\cite[Lemma 4.7.2]{ash2012information}, we obtain
\iftoggle{singlecolumn}{
\begin{align}
    \Pr(D_{n,2}\le n\delta) &\ge \frac{2^{-n D(\delta\|1-\hat{q})}}{\sqrt{2n}}\label{eq:advChernoffLB}
\end{align}
}{
\begin{align}
    \Pr(D_{n,2}\le n\delta) &\ge \frac{2^{-n D(\delta\|1-\hat{q})}}{\sqrt{2n}}\label{eq:advChernoffLB}
\end{align}
}
Plugging \eqref{eq:advChernoffLB} into \eqref{eq:advconverse2}, we get
\iftoggle{singlecolumn}{
\begin{align}
    P_e&\ge 1-\left[1-\frac{2^{-n D(\delta\|1-\hat{q})}}{\sqrt{2n}}\right]^{m_n-1}
\end{align}
}{
\begin{align}
    P_e&\ge 1-\left[1-\frac{2^{-n D(\delta\|1-\hat{q})}}{\sqrt{2n}}\right]^{m_n-1}
\end{align}
}

Now let $y=-\frac{2^{-n D(\delta\|1-\hat{q})}}{\sqrt{2n}}\in(-1,0)$. Then, we get
\iftoggle{singlecolumn}{
\begin{align}
    P_e&\ge 1-(1+y)^{m_n-1}
\end{align}
}{
\begin{align}
    P_e&\ge 1-(1+y)^{m_n-1}
\end{align}
}
Since $y\ge -1$, and $m_n\in\mathbb{N}$, we have
\iftoggle{singlecolumn}{
\begin{align}
    1+y(m_n-1)&\le (1+y)^{m_n-1}\le e^{y (m_n-1)}\label{eq:bernoulli}
\end{align}
}{
\begin{align}
    1+y(m_n-1)&\le (1+y)^{m_n-1}\le e^{y (m_n-1)}\label{eq:bernoulli}
\end{align}
}
where the LHS of \eqref{eq:bernoulli} follows from Bernoulli's inequality~\cite[Theorem 1]{brannan2006first} and the RHS of \eqref{eq:bernoulli} follows from the fact that
\iftoggle{singlecolumn}{
\begin{align}
    \forall x\in\mathbb{R},\hspace{1em} \forall r\in\mathbb{R}_{\ge0} \hspace{1em} (1+x)^r &\le e^{x r}
\end{align}
}{
\begin{align}
    \forall x\in\mathbb{R},\hspace{1em} \forall r\in\mathbb{R}_{\ge0} \hspace{1em} (1+x)^r &\le e^{x r}
\end{align}
}
Thus, we get
\begin{align}
    P_e&\ge 1-(1+y)^{m_n-1}\\
    &\ge 1-e^{y (m_n-1)}\\
    &\ge 0
\end{align}
since $y<0$, $m_n-1>0$. Note that since $P_e\to 0$, by the Squeeze Theorem~\cite[Theorem 2]{brannan2006first}, we have
\iftoggle{singlecolumn}{
\begin{align}
    \lim\limits_{n\to\infty} 1-e^{y (m_n-1)}&\to 0.
\end{align}
}{
\begin{align}
    \lim\limits_{n\to\infty} 1-e^{y (m_n-1)}&\to 0.
\end{align}
}
This, in turn, implies $y m_n\to0$ since the exponential function is continuous everywhere. In other words,
\iftoggle{singlecolumn}{
\begin{align}
    \lim\limits_{n\to\infty}& -\frac{2^{-n D(\delta\|1-\hat{q})}}{\sqrt{2n}} m_n 
    \to0.
\end{align}
}{
\begin{align}
    \lim\limits_{n\to\infty}& -\frac{2^{-n D(\delta\|1-\hat{q})}}{\sqrt{2n}} m_n 
    \to0.
\end{align}
}
Equivalently, from the continuity of the logarithm function, we get
\iftoggle{singlecolumn}{
\begin{align}
    \lim\limits_{n\to\infty}&-n D(\delta\|1-\hat{q})+\log m_n - \frac{1}{2}\log (2 n)\to - \infty\\
    \lim\limits_{n\to\infty}&-n \left[D(\delta\|1-\hat{q})-\frac{1}{n}\log m_n+\frac{\log (2 n)}{2n}\right]\to-\infty\\
    \lim\limits_{n\to\infty}& \left[D(\delta\|1-\hat{q})-\frac{1}{n}\log m_n+\frac{\log (2 n)}{2n}\right]\ge 0
\end{align}
}{
\begin{align}
    \lim\limits_{n\to\infty}&-n D(\delta\|1-\hat{q})+\log m_n - \frac{1}{2}\log (2 n)\to - \infty\\
    \lim\limits_{n\to\infty}&-n \left[D(\delta\|1-\hat{q})-\frac{1}{n}\log m_n+\frac{\log (2 n)}{2n}\right]\to-\infty\\
    \lim\limits_{n\to\infty}& \left[D(\delta\|1-\hat{q})-\frac{1}{n}\log m_n+\frac{\log (2 n)}{2n}\right]\ge 0
\end{align}
}
This implies
\iftoggle{singlecolumn}{
\begin{align}
        D(\delta\|1-\hat{q})&\ge \lim\limits_{n\to\infty} \frac{1}{n}\log m_n\\
        &= R
\end{align}
}{
\begin{align}
        D(\delta\|1-\hat{q})&\ge \lim\limits_{n\to\infty} \frac{1}{n}\log m_n\\
        &= R
\end{align}
}
finishing the proof for $\delta\le 1-\hat{q}$. Thus, we have shown that
\iftoggle{singlecolumn}{
\begin{align}
    C^{\text{adv}}(\delta)&=D(\delta\|1-\hat{q})
\end{align}
}{
\begin{align}
    C^{\text{adv}}(\delta)&=D(\delta\|1-\hat{q})
\end{align}
}
for $\delta\le 1-\hat{q}$. 

We argue that for $\delta>1-\hat{q}$, the adversarial matching capacity is zero, by using two facts: \emph{i)} Since the adversarial deletion budget is an upper bound on deletions, the adversarial matching capacity satisfies
\iftoggle{singlecolumn}{
\begin{align}
    C^{\text{adv}}(\delta)&\le C^{\text{adv}}(\delta^\prime),\hspace{1em}\forall \delta^\prime\le \delta
\end{align}
}{
\begin{align}
    C^{\text{adv}}(\delta)&\le C^{\text{adv}}(\delta^\prime),\hspace{1em}\forall \delta^\prime\le \delta
\end{align}
}
and \emph{ii)} $C^{\text{adv}}(1-\hat{q})=0$. Thus, $\forall \delta>1-\hat{q}$, ${C^{\text{adv}}(\delta)=0}$. This concludes the proof. \qed
\section{Proof of Theorem~\ref{thm:mainresultseedless}}
\label{proof:mainresultseedless}
First, note that the converse part of Theorem~\ref{thm:mainresultseedless} (equation~\eqref{eq:seedlessconverse})  is trivially true since $C(0)$ is a non-decreasing function of the seed size $\Lambda_n$. Hence it is sufficient to prove the achievability part of Theorem~\ref{thm:mainresultseedless} (equation~\eqref{eq:seedlessachievable}). 

For the achievability, we use a matching scheme which \emph{i)} utilizes replica detection and marker addition as done in Section~\ref{subsec:matchingschemeWm} and \emph{ii)} checks the existence of jointly typical subsequences as done in Section~\ref{subsec:achievabilityW1}. The matching scheme we propose is as follows:
\begin{enumerate}[label=\textbf{ \arabic*)},leftmargin=1.3\parindent]
\item Perform replica detection as in Section~\ref{subsec:replicadetection}. The probability of error of this step is denoted by $\rho_n$.
\item Based on the replica detection step, place markers between the noisy replica runs of different columns to obtain $\tilde{\mathbf{Y}}$. Note that at this step we cannot detect runs of length 0 as done in Section~\ref{subsec:matchingschemeWm}. Therefore conditioned on the success of the replica detection we have $\tilde{K}_n=\sum_{j=1}^n \mathbbm{1}_{[S_j\neq0]}$ runs separated with markers.
\item Fix $\epsilon>0$. If $K_n<k\triangleq n(\mathbb{E}[S]-\epsilon)$ or $\hat{K}_n<\hat{k}\triangleq n(1-\delta-\epsilon)$ declare error, whose probability is denoted by $\kappa_n$ where $k$ and $\hat{k}$ are assumed to be integers for computational simplicity. Otherwise, proceed with the next step.
\item Match the $t$\textsuperscript{th} row $Y^{K_n}_t$ of $\mathbf{Y}$ $X^n_{i}$ of $\mathbf{X}$, assigning $\hat{\sigma}_n(i)=t$, if $i$ is the only index in $[m_n]$ such that \emph{i)} $X^n_i$ is  $\epsilon$-typical with respect to $p_X$ and \emph{ii)} $X_i^{n}$ contains a subsequence of length $\tilde{K}_n$, jointly $\epsilon$-typical with $\tilde{Y}^K_t$ with respect to $p_{X,Y,\hat{S}}$ where $\hat{S}\sim p_{\hat{S}}$ with
\iftoggle{singlecolumn}{
\begin{align}
    p_{\hat{S}}(s) &=\begin{cases}
    \frac{p_S(s)}{1-\delta}&\text{if }s\in \{1,\dots,s_{\max}\}\\
    0 & \text{otherwise}
    \end{cases}
\end{align}
}{
\begin{align}
    p_{\hat{S}}(s) &=\begin{cases}
    \frac{p_S(s)}{1-\delta}&\text{if }s\in \{1,\dots,s_{\max}\}\\
    0 & \text{otherwise}
    \end{cases}
\end{align}
}
and
\iftoggle{singlecolumn}{
\begin{align}
    \Pr(Y^S=y^S|X=x,\hat{S}=s)&= \prod\limits_{j=1}^{s} p_{Y|X}(y_j|x).
\end{align}
}{
\begin{align}
    \Pr(Y^S=y^S|X=x,\hat{S}=s)&= \prod\limits_{j=1}^{s} p_{Y|X}(y_j|x).
\end{align}
}
Otherwise, declare a \emph{collision} error.
\end{enumerate}
Since additional runs in $\mathbf{Y}$ and additional columns in each run would decrease the collision probability, we have
\iftoggle{singlecolumn}{
\begin{align}
    \Pr(\text{collision between 1 and }i|K_n&\ge k, \tilde{K}_n\ge \tilde{k})\notag\\&\le \Pr(\text{collision between 1 and }i||K_n=k,\tilde{K}_n= \tilde{k})
\end{align}
}{
\begin{align}
    \Pr&(\text{collision between 1 and }i|K_n\ge k, \tilde{K}_n\ge \tilde{k})\notag\\&\le \Pr(\text{collision between 1 and }i||K_n=k,\tilde{K}_n= \tilde{k})
\end{align}
}
for any $i\in[m_n]\setminus\{1\}$. Thus, for the sake of simplicity, we can focus on the case $K=k$ as it yields an upper bound on the error probability of our matching scheme.

Let $A_\epsilon^{(n)}(X)$ denote the set of $\epsilon$-typical (with respect to $p_X$) sequences of length $n$ and $A_\epsilon(X^{\hat{k}}|Y^k_t,\hat{S}^{\hat{k}})$ denote the set of sequences of length $\hat{k}$ jointly $\epsilon$-typical (with respect to $p_{X,Y,\hat{S}}$) with $Y^k_t$ conditioned on $\hat{S}^n$. For the matching rows $X^n_1$, $Y^k_t$ of $\mathbf{X}$ and $\mathbf{Y}$, define the pairwise collision probability between $X^n_1$ and $X^n_i$ where $i\neq 1$ as
\iftoggle{singlecolumn}{
\begin{align}
    P_{\text{col,i}}\triangleq \Pr(X_{i}^{n}\in A_{\epsilon}^{(n)}(X) \text{ and }\exists {z}^{\hat{k}}\in A_\epsilon(X^{\hat{k}}|Y^k_t,\hat{S}^{\hat{k}}) \text{ which is a subsequence of } X_{i}^{n}.).
\end{align}
}{
\begin{align}
    P_{\text{col,i}}&\triangleq \Pr(X_{i}^{n}\in A_{\epsilon}^{(n)}(X) \text{ and }\exists {z}^{\hat{k}}\in A_\epsilon(X^{\hat{k}}|Y^k_t,\hat{S}^{\hat{k}})\notag\\&\hspace{2.3em} \text{ which is a subsequence of } X_{i}^{n}.).
\end{align}
}
Therefore given the correct labeling for $Y^k_t\in\mathbf{Y}$ is $X^n_1\in\mathbf{X}$, the probability of error $P_e$ can be bounded as
\iftoggle{singlecolumn}{
\begin{align}
    P_e 
    &\le \Pr(\nexists {z}^{\hat{k}}:  {z}^{\hat{k}}\in A_\epsilon(X^{\hat{k}}|Y^k_t,\hat{S}^{\hat{k}}) \text{ and }{z}^{\hat{k}} \text{ is a subsequence of } X_{1}^{n}.)\notag\\
    &\qquad +\Pr(X^n_1\notin A_\epsilon^{(n)}(X))+\sum\limits_{i=2}^{2^{n R}} P_{\text{col,i}}+ \kappa_n+\rho_n\\
    &\le 2\epsilon+\sum\limits_{i=2}^{2^{n R}} P_{\text{col,i}}+\kappa_n+\rho_n\\
    &\le 2\epsilon+ 2^{n R} P_{\text{col,2}}+\kappa_n+\rho_n \label{eq:Perowwiseseedless}
\end{align}
}{
\begin{align}
    P_e 
    &\le \Pr(\nexists {z}^{\hat{k}}:  {z}^{\hat{k}}\in A_\epsilon(X^{\hat{k}}|Y^k_t,\hat{S}^{\hat{k}}) \text{ and }{z}^{\hat{k}}\notag\\&\hspace{7em} \text{ is a subsequence of } X_{1}^{n}.)\notag\\
    &\hspace{0.5em} +\Pr(X^n_1\notin A_\epsilon^{(n)}(X))+\sum\limits_{i=2}^{2^{n R}} P_{\text{col,i}}+ \kappa_n+\rho_n\\
    &\le 2\epsilon+\sum\limits_{i=2}^{2^{n R}} P_{\text{col,i}}+\kappa_n+\rho_n\\
    &\le 2\epsilon+ 2^{n R} P_{\text{col,2}}+\kappa_n+\rho_n \label{eq:Perowwiseseedless}
\end{align}
}
where \eqref{eq:Perowwiseseedless} follows from the fact the the rows are \emph{i.i.d.} and thus $P_{\text{col,i}}=P_{\text{col,2}}$, $\forall i\in[m_n]\setminus\{1\}$.

We now upper bound $P_{\text{col,2}}$. For any $z^{\hat{k}}$ define
\iftoggle{singlecolumn}{
\begin{align}
    T(z^{\hat{k}})&\triangleq \{x^n\in\mathfrak{X}^n:x^{n}\in A_\epsilon^{(n)}(X) \text{, }x^{n}  \text{ contains } z^{\hat{k}}.\}.
\end{align}
}{
\begin{align}
    T(z^{\hat{k}})&\triangleq \{x^n\in\mathfrak{X}^n:x^{n}\in A_\epsilon^{(n)}(X) \text{, }x^{n}  \text{ contains } z^{\hat{k}}.\}.
\end{align}
}
Observe that for any $z^{\hat{k}}\in A_\epsilon(X^{\hat{k}}|Y^k_t,\hat{S}^{\hat{k}})$, we have $z^{\hat{k}}\in A_\epsilon^{(\hat{k})}(X)$. Furthermore, for a given deletion pattern with $n-\hat{k}=\Theta(n)$ deletions, WLOG $(\hat{k}+1,\dots,n)$, the $\epsilon$-typicality of $x^n=(x_1,\dots,x_n)$ and $z^{\hat{k}}=(x_1,\dots,x_{\hat{k}})$ with respect to $p_X$ implies the $\tilde{\epsilon}$-typicality of $(x_{\hat{k}+1},\dots,x_n)$, where $\tilde{\epsilon}=\frac{2-\delta-\epsilon}{\delta+\epsilon}\epsilon$. Therefore for any $z^{\hat{k}}\in A_\epsilon(X^{\hat{k}}|Y^k_t,\hat{S}^{\hat{k}})$, taking the union bound over all possible deletion patterns with $n-\hat{k}$ deletions, the cardinality of $T(z^{\hat{k}})$ can be upper bounded as
\iftoggle{singlecolumn}{
\begin{align}
    |T(z^{\hat{k}})| &\le \binom{n}{\hat{k}} |A_{\tilde{\epsilon}}^{(n-\hat{k})}(X)|\\
    &\le 2^{n H_b(\frac{\hat{k}}{n})} |A_{\tilde{\epsilon}}^{(n-\hat{k})}(X)|\\
    &\le 2^{n H_b(\frac{\hat{k}}{n})} 2^{(n-\hat{k})(H(X)+\tilde{\epsilon})}\\
    &=  2^{n \left[H_b(\frac{\hat{k}}{n})+(1-\frac{\hat{k}}{n})(H(X)+\tilde{\epsilon})\right]}\label{eq:tsettypicalbound}
\end{align}
}{
\begin{align}
    |T(z^{\hat{k}})| &\le \binom{n}{\hat{k}} |A_{\tilde{\epsilon}}^{(n-\hat{k})}(X)|\\
    &\le 2^{n H_b(\frac{\hat{k}}{n})} |A_{\tilde{\epsilon}}^{(n-\hat{k})}(X)|\\
    &\le 2^{n H_b(\frac{\hat{k}}{n})} 2^{(n-\hat{k})(H(X)+\tilde{\epsilon})}\\
    &=  2^{n \left[H_b(\frac{\hat{k}}{n})+(1-\frac{\hat{k}}{n})(H(X)+\tilde{\epsilon})\right]}\label{eq:tsettypicalbound}
\end{align}
}

Furthermore, for any $x^n\in T(z^{\hat{k}})$, since ${T(z^{\hat{k}})\subseteq A_\epsilon^{(n)}(X)}$, we have
\iftoggle{singlecolumn}{
\begin{align}
    p_{X^n}(x^n)\le 2^{-n(H(X)-\epsilon)}
\end{align}
}{
\begin{align}
    p_{X^n}(x^n)\le 2^{-n(H(X)-\epsilon)}
\end{align}
}
and since the rows $X^n_i$ of $\mathbf{X}$ are \emph{i.i.d.}, we have 
\iftoggle{singlecolumn}{
\begin{align}
    \Pr(X^n_2 \in T(z^{\hat{k}})|X^n_1 \in T(z^{\hat{k}}))=\Pr(X^n_2 \in T(z^{\hat{k}})).
\end{align}
}{
\begin{align}
    \Pr(X^n_2 \in T(z^{\hat{k}})|X^n_1 \in T(z^{\hat{k}}))=\Pr(X^n_2 \in T(z^{\hat{k}})).
\end{align}
}
Finally, we note that
\iftoggle{singlecolumn}{
\begin{align}
    |A_\epsilon(X^{\hat{k}}|Y^k_t,\hat{S}^{\hat{k}})| &\le 2^{\hat{k}(H(X|Y^{\hat{S}},\hat{S})+\epsilon)}
\end{align}
}{
\begin{align}
    |A_\epsilon(X^{\hat{k}}|Y^k_t,\hat{S}^{\hat{k}})| &\le 2^{\hat{k}(H(X|Y^{\hat{S}},\hat{S})+\epsilon)}
\end{align}
}
and 
\iftoggle{singlecolumn}{
\begin{align}
    H(X|Y^{\hat{S}},\hat{S})&=\sum\limits_{s=1}^{s_{\max}} p_{\hat{S}}(s) H(X|Y^{\hat{S}},\hat{S}=s)\\
    &=\frac{1}{1-\delta} \sum\limits_{s=1}^{s_{\max}} p_{S}(s) H(X|Y^{\hat{S}},\hat{S}=s)\\
    &=\frac{1}{1-\delta}\left[\sum\limits_{s=0}^{s_{\max}} p_{S}(s) H(X|Y^{\hat{S}},\hat{S}=s)-\delta H(X|Y^{S},S=0)\right]\\
    &=\frac{1}{1-\delta} \left[H(X|Y^S,S)-\delta H(X)\right] \\
    &=\frac{1}{1-\delta} \left[(1-\delta) H(X)-I(X;Y^S,S)\right]\\
    &= H(X)-\frac{I(X;Y^S,S)}{1-\delta}
\end{align}
}{
\begin{align}
    H(X|Y^{\hat{S}},\hat{S})&=\sum\limits_{s=1}^{s_{\max}} p_{\hat{S}}(s) H(X|Y^{\hat{S}},\hat{S}=s)\\
    &=\frac{1}{1-\delta} \sum\limits_{s=1}^{s_{\max}} p_{S}(s) H(X|Y^{\hat{S}},\hat{S}=s)\\
    &=\frac{1}{1-\delta}\Big[\sum\limits_{s=0}^{s_{\max}} p_{S}(s) H(X|Y^{\hat{S}},\hat{S}=s)\notag\\&\hspace{4em}-\delta H(X|Y^{S},S=0)\Big]\\
    &=\frac{1}{1-\delta} \left[H(X|Y^S,S)-\delta H(X)\right] \\
    &=\frac{1}{1-\delta} \left[(1-\delta) H(X)-I(X;Y^S,S)\right]\\
    &= H(X)-\frac{I(X;Y^S,S)}{1-\delta}
\end{align}
}
Thus, we get
\iftoggle{singlecolumn}{
\begin{align}
    |A_\epsilon(X^{\hat{k}}|Y^k_t,\hat{S}^{\hat{k}})|&\le 2^{\hat{k}\left[H(X)-\frac{I(X;Y^S,S)}{1-\delta} +\epsilon\right]}.\label{eq:condtypicalset2}
\end{align}
}{
\begin{align}
    |A_\epsilon(X^{\hat{k}}|Y^k_t,\hat{S}^{\hat{k}})|&\le 2^{\hat{k}\left[H(X)-\frac{I(X;Y^S,S)}{1-\delta} +\epsilon\right]}.\label{eq:condtypicalset2}
\end{align}
}
\begingroup
\allowdisplaybreaks
Combining \eqref{eq:tsettypicalbound}-\eqref{eq:condtypicalset2}, we can upper bound $P_{\text{col,2}}$ as
\iftoggle{singlecolumn}{
\begin{align}
    P_{\text{col,2}} &\le \sum\limits_{z^{\hat{k}}\in A_\epsilon(X^{\hat{k}}|Y^k_t,\hat{S}^{\hat{k}})} \Pr(X^n_2 \in T(z^{\hat{k}}))\\
    &=\sum\limits_{z^{\hat{k}}\in A_\epsilon(X^{\hat{k}}|Y^k_t,\hat{S}^{\hat{k}})} \sum\limits_{x^n\in T(z^{\hat{k}})} p_{X^n}(x^n)\\
    &\le \sum\limits_{z^{\hat{k}}\in A_\epsilon(X^{\hat{k}}|Y^k_t,\hat{S}^{\hat{k}})} \sum\limits_{x^n\in T(z^{\hat{k}})} 2^{-n(H(X)-\epsilon)}\\
    &= \sum\limits_{z^{\hat{k}}\in A_\epsilon(X^{\hat{k}}|Y^k_t,\hat{S}^{\hat{k}})} |T(z^{\hat{k}})| 2^{-n(H(X)-\epsilon)}\label{eq:tsetseparation2}\\
    &\le  \sum\limits_{z^{\hat{k}}\in A_\epsilon(X^{\hat{k}}|Y^k_t,\hat{S}^{\hat{k}})} 2^{-n(H(X)-\epsilon)} 2^{n \left[H_b(\frac{\hat{k}}{n})+(1-\frac{\hat{k}}{n})(H(X)+\tilde{\epsilon})\right]}\\
    &= |A_\epsilon(X^{\hat{k}}|Y^k_t,\hat{S}^{\hat{k}})| 2^{-\left[\hat{k}H(X)- n \epsilon-H_b(\frac{\hat{k}}{n})-(n-\hat{k})\tilde{\epsilon}\right]} \\
    &\le 2^{\hat{k}\left[H(X)-\frac{I(X;Y^S,S)}{1-\delta} +\epsilon\right]} 2^{-\left[\hat{k}H(X)- n \epsilon-n H_b(\frac{\hat{k}}{n})-(n-\hat{k})\tilde{\epsilon}\right]}\\
    &= 2^{-n\left[\frac{1-\delta-\epsilon}{1-\delta} I(X;Y^S,S)-H_b(\delta+\epsilon)-(\delta+\epsilon)(\epsilon+\tilde{\epsilon}) \right]}\\
    &=2^{-n\left[\frac{1-\delta-\epsilon}{1-\delta} I(X;Y^S,S)-H_b(\delta+\epsilon)-2\epsilon \right]}
\end{align}
}{
\begin{align}
    P_{\text{col,2}} &\le \sum\limits_{z^{\hat{k}}\in A_\epsilon(Z^{\hat{k}}|Y^k_t,\hat{S}^{\hat{k}})} \Pr(X^n_2 \in T(z^{\hat{k}}))\\
    &=\sum\limits_{z^{\hat{k}}\in A_\epsilon(Z^{\hat{k}}|Y^k_t,\hat{S}^{\hat{k}})} \sum\limits_{x^n\in T(z^{\hat{k}})} p_{X^n}(x^n)\\
    &\le \sum\limits_{z^{\hat{k}}\in A_\epsilon(Z^{\hat{k}}|Y^k_t,\hat{S}^{\hat{k}})} \sum\limits_{x^n\in T(z^{\hat{k}})} 2^{-n(H(X)-\epsilon)}\\
    &= \sum\limits_{z^{\hat{k}}\in A_\epsilon(Z^{\hat{k}}|Y^k_t,\hat{S}^{\hat{k}})} |T(z^{\hat{k}})| 2^{-n(H(X)-\epsilon)}\label{eq:tsetseparation2}\\
    &\le  \sum\limits_{z^{\hat{k}}\in A_\epsilon(Z^{\hat{k}}|Y^k_t,\hat{S}^{\hat{k}})}  2^{-n(H(X)-\epsilon)} \notag\\&\hspace{6.2em} 2^{n \left[H_b(\frac{\hat{k}}{n})+(1-\frac{\hat{k}}{n})(H(X)+\tilde{\epsilon})\right]}\\
    &= |A_\epsilon(Z^{\hat{k}}|Y^k_t,\hat{S}^{\hat{k}})| 2^{-\left[\hat{k}H(X)- n \epsilon-H_b(\frac{\hat{k}}{n})-(n-\hat{k})\tilde{\epsilon}\right]} \\
    &\le 2^{\hat{k}\left[H(X)-\frac{I(X;Y^S,S)}{1-\delta} +\epsilon\right]}\notag\\&\hspace{4em} 2^{-\left[\hat{k}H(X)- n \epsilon-n H_b(\frac{\hat{k}}{n})-(n-\hat{k})\tilde{\epsilon}\right]}\\
    &= 2^{-n\left[\frac{1-\delta-\epsilon}{1-\delta} I(X;Y^S,S)-H_b(\delta+\epsilon)-(\delta+\epsilon)(\epsilon+\tilde{\epsilon}) \right]}\\
    &=2^{-n\left[\frac{1-\delta-\epsilon}{1-\delta} I(X;Y^S,S)-H_b(\delta+\epsilon)-2\epsilon \right]}
\end{align}
}
\endgroup
Thus, we have the following upper bound on the error probability
\iftoggle{singlecolumn}{
\begin{align}
    P_e &\le 2\epsilon+ 2^{n R} 2^{-n\left[\frac{1-\delta-\epsilon}{1-\delta} I(X;Y^S,S)-H_b(\delta+\epsilon)-2\epsilon \right]} +\kappa_n+\rho_n
\end{align}
}{
\begin{align}
    P_e &\le 2\epsilon+ 2^{n R} 2^{-n\left[\frac{1-\delta-\epsilon}{1-\delta} I(X;Y^S,S)-H_b(\delta+\epsilon)-2\epsilon \right]} \notag\\&\hspace{1em}+\kappa_n+\rho_n
\end{align}
}
By LLN, we have $\kappa_n\to0$ and from Lemma~\ref{lem:noisyreplicadetection}, we have $ \rho_n\to0$ as $n\to\infty$. Hence, we can argue that any database growth rate $R$ satisfying 
\iftoggle{singlecolumn}{
\begin{align}
    R<I(X;Y^S,S)-H_b(\delta)
\end{align}
}{
\begin{align}
    R<I(X;Y^S,S)-H_b(\delta)
\end{align}
}
is achievable by taking $\epsilon$ small enough.

Now, we investigate repetition distributions with $\delta\le 1-\frac{1}{|\mathfrak{X}|}$. Recall from Appendix~\ref{proof:achievabilityW1} the counting function $F(n,\hat{k},|\mathfrak{X}|)$ denoting the number of $|\mathfrak{X}|$-ary sequences of length $n$, which contain a fixed $|\mathfrak{X}|$-ary sequence of length $\hat{k}$ as a subsequence. From~\cite{chvatal1975longest,diggavi1603788}, we have
\iftoggle{singlecolumn}{
\begin{align}
    F(n,\hat{k},|\mathfrak{X}|)&\le n 2^{n\left[ H_b\left(\frac{\hat{k}}{n}\right)+(1-\frac{\hat{k}}{n})\log(|\mathfrak{X}|-1))\right]}.
\end{align}
}{
\begin{align}
    F(n,\hat{k},|\mathfrak{X}|)&\le n 2^{n\left[ H_b\left(\frac{\hat{k}}{n}\right)+(1-\frac{\hat{k}}{n})\log(|\mathfrak{X}|-1))\right]}.
\end{align}
}
Furthermore, disregarding the typicality constraint, we can trivially bound the cardinality of $T(z^{\hat{k}})$ as
\iftoggle{singlecolumn}{
\begin{align}
    |T(z^{\hat{k}})|&\le |\{x^n\in\mathfrak{X}^n: x^{n}  \text{ contains } z^{\hat{k}} \}|\\
    &\le F(n,\hat{k},|\mathfrak{X}|)\\
    &\le n 2^{n\left[ H_b\left(\frac{\hat{k}}{n}\right)+(1-\frac{\hat{k}}{n}\log(|\mathfrak{X}|-1))\right]}\label{eq:tsettrivialbound2}
\end{align}
}{
\begin{align}
    |T(z^{\hat{k}})|&\le |\{x^n\in\mathfrak{X}^n: x^{n}  \text{ contains } z^{\hat{k}} \}|\\
    &\le F(n,\hat{k},|\mathfrak{X}|)\\
    &\le n 2^{n\left[ H_b\left(\frac{\hat{k}}{n}\right)+(1-\frac{\hat{k}}{n}\log(|\mathfrak{X}|-1))\right]}\label{eq:tsettrivialbound2}
\end{align}
}
Plugging \eqref{eq:tsettrivialbound2} into \eqref{eq:tsetseparation2} and following the same steps, one can show that any rate $R$ satisfying 
\iftoggle{singlecolumn}{
\begin{align}
    R&<\left[I(X;Y^S,S)+\delta(H(X)-\log(|\mathfrak{X}|-1))-H_b(\delta)\right]^+
\end{align}
}{
\begin{align}
    R&<\Big[I(X;Y^S,S)-H_b(\delta)\notag\\&\hspace{5em}+\delta(H(X)-\log(|\mathfrak{X}|-1))\Big]^+
\end{align}
}
is achievable. Simply taking the maximum of the two proven achievable rates when $\delta\le 1-\nicefrac{1}{|\mathfrak{X}|}$ yields the desired achievability result. This concludes the proof. \qed


\section{Proof of Lemma~\ref{lem:histogramuncollapsed}}\label{proof:histogramuncollapsed}
For brevity, we let $\mu_n$ denote ${\Pr(\exists i,j\in [n],\: i\neq j,H^{(1)}_i=H^{(1)}_j)}$. Notice that since the entries of $\mathbf{X}$ are \emph{i.i.d.}, $H^{(1)}_i$ are \emph{i.i.d.} Multinomial$(m_n,p_X)$ random variables. Then,
\iftoggle{singlecolumn}{
\begin{align}
    \mu_n&\le n^2 \Pr(H^{(1)}_1=H^{(1)}_2)\\
        &=n^2 \sum\limits_{h^{|\mathfrak{X}|}} \Pr(H^{(1)}_1=h^{|\mathfrak{X}|})^2
\end{align}
}{
\begin{align}
    \mu_n&\le n^2 \Pr(H^{(1)}_1=H^{(1)}_2)\\
        &=n^2 \sum\limits_{h^{|\mathfrak{X}|}} \Pr(H^{(1)}_1=h^{|\mathfrak{X}|})^2
\end{align}
}
where the sum is over all vectors of length $|\mathfrak{X}|$, summing up to $m_n$. Let $m_i\triangleq h(i)$, $\forall i\in\mathfrak{X}$. Then,
\iftoggle{singlecolumn}{
\begin{align}
    \Pr(H^{(1)}_1=h^{|\mathfrak{X}|})&= \binom{m_n}{m_1,m_2,\dots,m_{|\mathfrak{X}|}} \prod\limits_{i=1}^{|\mathfrak{X}|} p_X(i)^{m_i}
\end{align}
}{
\begin{align}
    \Pr(H^{(1)}_1=h^{|\mathfrak{X}|})&= \binom{m_n}{m_1,m_2,\dots,m_{|\mathfrak{X}|}} \prod\limits_{i=1}^{|\mathfrak{X}|} p_X(i)^{m_i}
\end{align}
}
Hence, we have
\iftoggle{singlecolumn}{
\begin{align}
    \mu_n &\le n^2\sum\limits_{m_1+\dots+m_{|\mathfrak{X}|}=m_n} \binom{m_n}{m_1,m_2,\dots,m_{|\mathfrak{X}|}}^2 \prod\limits_{i=1}^{|\mathfrak{X}|} p_X(i)^{2 m_i}\label{eq:multinomial}
\end{align}
}{
\begin{align}
    \mu_n &\le n^2\sum\limits_{m_1+\dots+m_{|\mathfrak{X}|}=m_n} \binom{m_n}{m_1,m_2,\dots,m_{|\mathfrak{X}|}}^2 \notag\\&\hspace{8em}\prod\limits_{i=1}^{|\mathfrak{X}|} p_X(i)^{2 m_i}\label{eq:multinomial}
\end{align}
}
where $\smash{\binom{m_n}{m_1,m_2,\dots,m_{|\mathfrak{X}|}}}$ is the multinomial coefficient corresponding to the $|\mathfrak{X}|$-tuple $(m_1,\dots,m_{|\mathfrak{X}|})$ and the summation is over all possible non-negative indices $m_1,\dots,m_{|\mathfrak{X}|}$ which add up to $m_n$.

From~\cite[Theorem 11.1.2]{cover2006elements}, we have
\iftoggle{singlecolumn}{
\begin{align}
    \prod\limits_{i=1}^{|\mathfrak{X}|} p_X(i)^{2 m_i}=2^{-2m_n(H(\Tilde{p})+D(\Tilde{p}\|p_X))}\label{eq:covertype}
\end{align}
}{
\begin{align}
    \prod\limits_{i=1}^{|\mathfrak{X}|} p_X(i)^{2 m_i}=2^{-2m_n(H(\Tilde{p})+D(\Tilde{p}\|p_X))}\label{eq:covertype}
\end{align}
}
where $\Tilde{p}$ is the type corresponding to $|\mathfrak{X}|$-tuple ${(m_1,\dots,m_{|\mathfrak{X}|})}$: 
\iftoggle{singlecolumn}{
\begin{align}
    \Tilde{p}&=\left(\frac{m_1}{m_n},\dots,\frac{m_{|\mathfrak{X}|}}{m_n}\right).
\end{align}
}{
\begin{align}
    \Tilde{p}&=\left(\frac{m_1}{m_n},\dots,\frac{m_{|\mathfrak{X}|}}{m_n}\right).
\end{align}
}
From Stirling's approximation~\cite[Chapter 3.2]{cormen2022introduction}, we get
\iftoggle{singlecolumn}{
\begin{align}
    \binom{m_n}{m_1,m_2,\dots,m_{|\mathfrak{X}|}}^2\le \frac{e^2} {(2\pi)^{|\mathfrak{X}|}} m_n^{1-|\mathfrak{X}|} \Pi_{\Tilde{p}}^{-1} 2^{2m_n H(\Tilde{p})}\label{eq:stirling}
\end{align}
}{
\begin{align}
    \binom{m_n}{m_1,m_2,\dots,m_{|\mathfrak{X}|}}^2\le \frac{e^2} {(2\pi)^{|\mathfrak{X}|}} m_n^{1-|\mathfrak{X}|} \Pi_{\Tilde{p}}^{-1} 2^{2m_n H(\Tilde{p})}\label{eq:stirling}
\end{align}
}
where $\Pi_{\Tilde{p}}=\prod_{i=1}^{|\mathfrak{X}|} \Tilde{p}(i)$.

Combining \eqref{eq:multinomial}-\eqref{eq:stirling}, we get
\iftoggle{singlecolumn}{
\begin{align}
    \mu_n\le \frac{e^2} {(2\pi)^{|\mathfrak{X}|}} n^2 m_n^{1-|\mathfrak{X}|} \sum\limits_{\Tilde{p}} \Pi_{\Tilde{p}}^{-1} 2^{-2m_n D_{KL}(\Tilde{p}\|p_X)}
\end{align}
}{
\begin{align}
    \mu_n\le \frac{e^2} {(2\pi)^{|\mathfrak{X}|}} n^2 m_n^{1-|\mathfrak{X}|} \sum\limits_{\Tilde{p}} \Pi_{\Tilde{p}}^{-1} 2^{-2m_n D_{KL}(\Tilde{p}\|p_X)}
\end{align}
}

Let 
\iftoggle{singlecolumn}{
\begin{align}
    \tilde{T}=\sum\limits_{\Tilde{p}} \Pi_{\Tilde{p}}^{-1} 2^{-2m_n D_{KL}(\Tilde{p}\|p_X)} = \tilde{T}_1 + \tilde{T}_2
\end{align} where
\begin{align}
    \tilde{T}_1&=\sum\limits_{\Tilde{p}:D_{KL}(\Tilde{p}\|p_X)> \frac{\epsilon_n^2}{2\log_e 2}} \Pi_{\Tilde{p}}^{-1} 2^{-2m_n D_{KL}(\Tilde{p}\|p_X)}\label{eq:T1iid}\\
    \tilde{T}_2&=\sum\limits_{\Tilde{p}:D_{KL}(\Tilde{p}\|p_X)\le\frac{\epsilon_n^2}{2\log_e 2}} \Pi_{\Tilde{p}}^{-1} 2^{-2m_n D_{KL}(\Tilde{p}\|p_X)},\label{eq:T2iid}
\end{align}
}{
\begin{align}
    \tilde{T}=\sum\limits_{\Tilde{p}} \Pi_{\Tilde{p}}^{-1} 2^{-2m_n D_{KL}(\Tilde{p}\|p_X)} = \tilde{T}_1 + \tilde{T}_2
\end{align} where
\begin{align}
    \tilde{T}_1&=\sum\limits_{\Tilde{p}:D_{KL}(\Tilde{p}\|p_X)> \frac{\epsilon_n^2}{2\log_e 2}} \Pi_{\Tilde{p}}^{-1} 2^{-2m_n D_{KL}(\Tilde{p}\|p_X)}\label{eq:T1iid}\\
    \tilde{T}_2&=\sum\limits_{\Tilde{p}:D_{KL}(\Tilde{p}\|p_X)\le\frac{\epsilon_n^2}{2\log_e 2}} \Pi_{\Tilde{p}}^{-1} 2^{-2m_n D_{KL}(\Tilde{p}\|p_X)},\label{eq:T2iid}
\end{align}
}
$\epsilon_n$, which is described below in more detail, is a small positive number decaying with $n$.

First, we look at $\tilde{T}_2$. From Pinsker's inequality~\cite[Lemma 11.6.1]{cover2006elements}, we have
\iftoggle{singlecolumn}{
\begin{align}
    D_{KL}(\Tilde{p}\|p_X)\le \frac{\epsilon_n^2}{2\log_e 2}\Rightarrow \mathbb{V}(\Tilde{p},p_X)\le \epsilon_n
\end{align}
}{
\begin{align}
    D_{KL}(\Tilde{p}\|p_X)\le \frac{\epsilon_n^2}{2\log_e 2}\Longrightarrow \mathbb{V}(\Tilde{p},p_X)\le \epsilon_n
\end{align}
}
where $\mathbb{V}$ denotes the (unnormalized) total variation distance. Therefore
\iftoggle{singlecolumn}{
\begin{align}
    \left|\{\Tilde{p}:D_{KL}(\Tilde{p}\|p_X)\le \frac{\epsilon_n^2}{2\log_e 2}\}\right|&\le |\{\Tilde{p}:\mathbb{V}(\Tilde{p},p_X)\le \epsilon_n\}| \\
    &= O(m_n^{|\mathfrak{X}|-1}\epsilon_n^{|\mathfrak{X}|-1})
\end{align}
}{
\begin{align}
    \Big|\{\Tilde{p}:D_{KL}(\Tilde{p}\|p_X)\le &\frac{\epsilon_n^2}{2\log_e 2}\}\Big|\notag\\&\le |\{\Tilde{p}:\mathbb{V}(\Tilde{p},p_X)\le \epsilon_n\}| \\
    &= O(m_n^{|\mathfrak{X}|-1}\epsilon_n^{|\mathfrak{X}|-1})
\end{align}
}
where the last equality follows from the fact in a type we have $|\mathfrak{X}|-1$ degrees of freedom, since the sum of the $|\mathfrak{X}|$-tuple $(m_1,\dots,m_{|\mathfrak{X}|})$ is fixed. 
Furthermore, when $\mathbb{V}(\Tilde{p},p_X)\le \epsilon_n$, we have 
\iftoggle{singlecolumn}{
\begin{align}
    \Pi_{\Tilde{p}} &\ge \prod\limits_{i=1}^{|\mathfrak{X}|} (p_X(i)-\epsilon_n)\ge \Pi_{p_X}-\epsilon_n \sum\limits_{i=1}^{|\mathfrak{X}|} \prod\limits_{j\neq i} p_X(j)
\end{align}
}{
\begin{align}
    \Pi_{\Tilde{p}} &\ge \prod\limits_{i=1}^{|\mathfrak{X}|} (p_X(i)-\epsilon_n)\ge \Pi_{p_X}-\epsilon_n \sum\limits_{i=1}^{|\mathfrak{X}|} \prod\limits_{j\neq i} p_X(j)
\end{align}
}
Hence
\iftoggle{singlecolumn}{
\begin{align}
    \Pi_{\Tilde{p}}^{-1}&\le \frac{1}{\Pi_{p_X}-\epsilon_n \sum\limits_{i=1}^{|\mathfrak{X}|} \prod\limits_{j\neq i} p_X(j)}
\end{align}
}{
\begin{align}
    \Pi_{\Tilde{p}}^{-1}&\le \frac{1}{\Pi_{p_X}-\epsilon_n \sum\limits_{i=1}^{|\mathfrak{X}|} \prod\limits_{j\neq i} p_X(j)}
\end{align}
}
and 
\iftoggle{singlecolumn}{
\begin{align}
    \tilde{T}_2 &\le \frac{1}{\Pi_{p_X}-\epsilon_n \sum\limits_{i=1}^{|\mathfrak{X}|} \prod\limits_{j\neq i} p_X(j)} O(m_n^{|\mathfrak{X}|-1}\epsilon_n ^{|\mathfrak{X}|-1})\\
    &= O(m_n^{|\mathfrak{X}|-1}\epsilon_n^{|\mathfrak{X}|-1})
\end{align}
}{
\begin{align}
    \tilde{T}_2 &\le \frac{1}{\Pi_{p_X}-\epsilon_n \sum\limits_{i=1}^{|\mathfrak{X}|} \prod\limits_{j\neq i} p_X(j)} O(m_n^{|\mathfrak{X}|-1}\epsilon_n ^{|\mathfrak{X}|-1})\\
    &= O(m_n^{|\mathfrak{X}|-1}\epsilon_n^{|\mathfrak{X}|-1})
\end{align}
}
for small $\epsilon_n$.
    
Now, we look at $\tilde{T}_1$. Note that since $m_i\in \mathbb{Z}_+$, we have ${\Pi_{\Tilde{p}}\le m_n^{|\mathfrak{X}|}}$, suggesting the multiplicative term in the summation in~\eqref{eq:T1iid} is polynomial with $m_n$. If $m_i=0$ we can simply discard it and return to Stirling's approximation with the reduced number of categories. Furthermore, from~\cite[Theorem 11.1.1]{cover2006elements}, we have
\iftoggle{singlecolumn}{
\begin{align}
    \left|\{\Tilde{p}:D_{KL}(\Tilde{p}\|p_X)> \frac{\epsilon_n^2}{2\log_e 2}\}\right|&\le |\{\Tilde{p}\}|\\&\le (m_n+1)^{|\mathfrak{X}|}
\end{align}
}{
\begin{align}
    \left|\{\Tilde{p}:D_{KL}(\Tilde{p}\|p_X)> \frac{\epsilon_n^2}{2\log_e 2}\}\right|&\le |\{\Tilde{p}\}|\\&\le (m_n+1)^{|\mathfrak{X}|}
\end{align}
}
suggesting the number of terms which we take the summation over in~\eqref{eq:T1iid} is polynomial with $m_n$ as well. Therefore, as long as ${m_n \epsilon_n^2\to\infty}$, $\tilde{T}_1$ has a polynomial number of elements that decay exponentially with $m_n$. Thus
\iftoggle{singlecolumn}{
\begin{align}
    \tilde{T}_1\to0\text{ as }n\to\infty.\label{eq:t1iid}
\end{align}
}{
\begin{align}
    \tilde{T}_1\to0\text{ as }n\to\infty.\label{eq:t1iid}
\end{align}
}
    
Define 
\iftoggle{singlecolumn}{
\begin{align}
    U_i&=e^2 (2\pi)^{-|\mathfrak{X}|} m_n^{1-|\mathfrak{X}|} \tilde{T}_i,\quad i=1,2\label{eq:ui}
\end{align}
}{
\begin{align}
    U_i&=e^2 (2\pi)^{-|\mathfrak{X}|} m_n^{1-|\mathfrak{X}|} \tilde{T}_i,\quad i=1,2\label{eq:ui}
\end{align}
}
and choose ${\epsilon_n=m_n^{-\frac{1}{2}} V_n}$ for some $V_n$ satisfying ${V_n=\omega(1)}$ and ${V_n=o(m_n^{1/2})}$. Thus, $U_1$ vanishes exponentially fast since ${m_n\epsilon_n^2=V_n^2\to\infty}$ and \iftoggle{singlecolumn}{
\begin{align}
    U_2&=O(\epsilon_n^{|\mathfrak{X}|-1})=O(m_n^{(1-|\mathfrak{X}|)/2} V_n^{(|\mathfrak{X}|-1)}).\label{eq:u2}
\end{align}
}{
\begin{align}
    U_2&=O(\epsilon_n^{|\mathfrak{X}|-1})=O(m_n^{(1-|\mathfrak{X}|)/2} V_n^{(|\mathfrak{X}|-1)}).\label{eq:u2}
\end{align}
}
Combining \eqref{eq:t1iid}-\eqref{eq:u2}, we have 
\iftoggle{singlecolumn}{
\begin{align}
    U=U_1+U_2=O(m_n^{(1-|\mathfrak{X}|)/2} V_n^{(|\mathfrak{X}|-1)})
\end{align}
}{
\begin{align}
    U=U_1+U_2=O(m_n^{(1-|\mathfrak{X}|)/2} V_n^{(|\mathfrak{X}|-1)})
\end{align}
}
and we get 
\iftoggle{singlecolumn}{
\begin{align}
    \mu_n\le n^2 O(m_n^{(1-|\mathfrak{X}|)/2} V_n^{(|\mathfrak{X}|-1)})
\end{align}
}{
\begin{align}
    \mu_n\le n^2 O(m_n^{(1-|\mathfrak{X}|)/2} V_n^{(|\mathfrak{X}|-1)})
\end{align}
}
By the assumption ${m=\omega(n^\frac{4}{|\mathfrak{X}|-1})}$, we have ${m_n=n^\frac{4}{|\mathfrak{X}|-1} Z_n}$ for some $Z_n$ satisfying  ${\lim\limits_{n\to\infty} Z_n=\infty}$. Now, taking ${V_n=o(Z_n^{1/2})}$ (e.g.~$V_n=Z_n^{1/3}$), we get
\iftoggle{singlecolumn}{
\begin{align}
    \mu_n&\le O(n^2 n^{-2} Z_n^{(1-|\mathfrak{X}|)/2} V_n^{(|\mathfrak{X}|-1)})
    = o(1)
\end{align}
}{
\begin{align}
    \mu_n&\le O(n^2 n^{-2} Z_n^{(1-|\mathfrak{X}|)/2} V_n^{(|\mathfrak{X}|-1)})
    = o(1).
\end{align}
}
Thus $m_n=\omega(n^\frac{4}{|\mathfrak{X}|-1})$ is enough to have $\mu_n\to0$ as $n\to\infty$. \qed
\section{Proof of Proposition~\ref{prop:uniformhistogram}}\label{proof:uniformhistogram}
For brevity, we let $\mu_n$ denote ${\Pr(\exists i,j\in [n],\: i\neq j,H^{(1)}_i=H^{(1)}_j)}$. Then,
\iftoggle{singlecolumn}{
\begin{align}
    \mu_n&= n(n-1)\Pr(H^{(1)}_1=H^{(1)}_2)\\
    &= n(n-1)\sum\limits_{h^{|\mathfrak{X}|}} \Pr(H^{(1)}_1=h^{|\mathfrak{X}|})^2\\
    &=n(n-1)\sum\limits_{m_1+\dots+m_{|\mathfrak{X}|}=m_n} \binom{m_n}{m_1,\dots,m_{|\mathfrak{X}|}}^2 |\mathfrak{X}|^{-2m_n}\\
    &= n(n-1) |\mathfrak{X}|^{-2m_n} \sum\limits_{m_1+\dots+m_{|\mathfrak{X}|}=m_n} \binom{m_n}{m_1,\dots,m_{|\mathfrak{X}|}}^2\\
    &= n(n-1) |\mathfrak{X}|^{|\mathfrak{X}|/2}(4\pi m_n)^{(1-|\mathfrak{X}|)/2}(1+o_{m_n}(1))(1-o_n(1)) \label{eqn:multinomialuniform}\\
    &= n^2 m_n^{\frac{1-|\mathfrak{X}|}{2}} (4\pi)^{(1-|\mathfrak{X}|)/2} |\mathfrak{X}|^{|\mathfrak{X}|/2} (1+o_{m_n}(1))(1-o_n(1))
\end{align}
}{
\begin{align}
    \mu_n&= n(n-1)\Pr(H^{(1)}_1=H^{(1)}_2)\\
    &= n(n-1)\sum\limits_{h^{|\mathfrak{X}|}} \Pr(H^{(1)}_1=h^{|\mathfrak{X}|})^2\\
    &=n(n-1) \hspace{-1em} \sum\limits_{m_1+\dots+m_{|\mathfrak{X}|}=m_n} \binom{m_n}{m_1,\dots,m_{|\mathfrak{X}|}}^2 |\mathfrak{X}|^{-2m_n}\\
    &= n(n-1) |\mathfrak{X}|^{-2m_n} \hspace{-1em} \sum\limits_{m_1+\dots+m_{|\mathfrak{X}|}=m_n} \binom{m_n}{m_1,\dots,m_{|\mathfrak{X}|}}^2\\
    &= n(n-1) |\mathfrak{X}|^{|\mathfrak{X}|/2}(4\pi m_n)^{(1-|\mathfrak{X}|)/2}\notag\\&\hspace{7em}(1+o_{m_n}(1))(1-o_n(1)) \label{eqn:multinomialuniform}\\
    &= n^2 m_n^{\frac{1-|\mathfrak{X}|}{2}} (4\pi)^{(1-|\mathfrak{X}|)/2} |\mathfrak{X}|^{|\mathfrak{X}|/2}\notag\\&\hspace{7em} (1+o_{m_n}(1))(1-o_n(1))
\end{align}
}
where \eqref{eqn:multinomialuniform} follows from \cite[Theorem 4]{richmond2008counting}.\qed
\end{appendices}

\end{document}